\documentclass[11pt]{amsart}

\usepackage{a4wide}
\usepackage{color}
\usepackage{esint,amssymb}
\usepackage{graphicx}
\usepackage{mathtools}
\usepackage[colorlinks=true, pdfstartview=FitV, linkcolor=blue, citecolor=blue, urlcolor=blue,pagebackref=false]{hyperref}
\usepackage{microtype}

\usepackage{bm}
\usepackage{bbm}
\usepackage{scalerel} 
\usepackage{dsfont}
\usepackage[font={footnotesize}]{caption}
\usepackage{empheq}
\usepackage[T1]{fontenc}
\usepackage{subfiles}

%
%
%
%
%
%
%
%
%
%

\newtheorem{theo}{Theorem}
\newtheorem{prop}{Proposition}[section]
\newtheorem{lem}[prop]{Lemma}
\newtheorem{coro}[prop]{Corollary}
\newtheorem{remark}[prop]{Remark}

\theoremstyle{plain}

\theoremstyle{definition}
\newtheorem{defi}[prop]{Definition}

\numberwithin{equation}{section}

\setcounter{tocdepth}{1}


\newcommand{\R}{\mathbb{R}}
\newcommand{\E}{\mathbb{E}}
\renewcommand{\P}{\mathbb{P}}

\newcommand{\I}{\mathcal{I}}

\newcommand{\g}{\mathbf{g}}
\newcommand{\f}{\mathbf{f}}



\def \be{\begin{equation}}
\def \ee{\end{equation}}

\def \t0{\rightarrow 0} 


\def \supp{\mathrm{supp }} 
\def \div{\mathrm{div} \,} 
\def \1{\mathbf{1}} 

\def \dist{\mathrm{dist}}

\def\({\left(}
\def\){\right)}

\def \PNbeta{\mathbb{P}_{N, \beta}} 

\renewcommand{\subset}{\subseteq}

\renewcommand{\subset}{\subseteq}

\renewcommand{\tilde}{\widetilde}

\renewcommand{\div}{\divg}
\newcommand{\indic}{\mathds{1}}

\renewcommand{\hat}{\widehat}

\def\Xint#1{\mathchoice
   {\XXint\displaystyle\textstyle{#1}}%
   {\XXint\textstyle\scriptstyle{#1}}%
   {\XXint\scriptstyle\scriptscriptstyle{#1}}%
   {\XXint\scriptscriptstyle\scriptscriptstyle{#1}}%
   \!\int}
\def\XXint#1#2#3{{\setbox0=\hbox{$#1{#2#3}{\int}$}
     \vcenter{\hbox{$#2#3$}}\kern-.5\wd0}}
\def\dashint{\Xint-}

\def \XN{{X}_N}

\def\N{{n_\Old}}

\def\n{n}

\def\K{\mathsf{K}}

\def\G{\mathsf{F}}
\def\H{\mathsf{H}}
\def\P{\mathsf{P}}
\def\Q{\mathsf{Q}}

\def\HN{\mathcal{H}_N}
\def\HNV{\mathcal{H}_N^{V}}
\def\HNVt{\mathcal{H}_N^{V_t}}

\def \FN{F_N}

\def \Fluct{\mathrm{Fluct}}
\def \fluct{\mathrm{fluct}}

\def \muv{\mu_V} 

\def\Esp{\mathbb{E}} 
\def \E{\Esp}

\def \ZNbeta{Z_{N,\beta}}

\def \Vt{V_{t}}

\def\I{\mathcal{I}}

\def \muv{\mu_V}
\def \muvt{\mu_{t}}

\def\g{\mathsf{g}}

\def\mn{\mathrm{n}}

\def\muv{\mu_V}

\def \zetat{\zeta_t}

\def\indic{\mathbf{1}}
\def \E{\mathcal{E}}

\def\rr{\mathsf{r}}
\def\rrc{\tilde{\mathsf{r}}}
\def\rrh{\hat{\mathsf{r}}}
\def\div{\mathrm{div}\,}

\def \Escr{E^0}

\def\Old{\mathcal{O}}
\def\New{\mathcal{N}}

\def\GG{\mathsf{h}}

\def\omc{\overset{\circ}{\Omega}}

\def\B{\mathsf{B}}

\begin{document}
\title{Local Laws and a Mesoscopic CLT for $\beta$-ensembles}
\author{Luke Peilen}
\address{Courant Institute of Mathematical Sciences, New York University, 251 Mercer St., New York, NY 10012}
\email{lkp279@cims.nyu.edu}
\date{\today}

\maketitle

\begin{abstract}
We study the statistical mechanics of the log-gas, or $\beta$-ensemble, for general potential and inverse temperature. By means of a bootstrap procedure, we prove local laws on the next order energy that are valid down to microscopic length scales. To our knowledge, this is the first time that this kind of a local quantity has been controlled for the log-gas. Simultaneously, we exhibit a control on fluctuations of linear statistics that is valid at all mesoscales using the expansion introduced in \cite{J98} and the transport approach of \cite{BLS18}. Using these local laws, we are able to exhibit for the first time a CLT at arbitrary mesoscales, improving upon previous results (\cite{BL18}) that were true only for power mesoscales.
\end{abstract}

\tableofcontents

\section{Introduction}
\subsection{General Set-Up}
We are interested in studying the statistical mechanics of log-gases, or $\beta$-ensembles, in $1$-dimension. These are collections of point masses $\XN=(x_1,x_2,\dots,x_n) \in (\R)^N$ that interact via the Hamiltonian 
\begin{equation}\label{Hamiltonian}
\HN^V(\XN):=\frac{1}{2}\sum_{i \ne j}\g( x_i-x_j)+N\sum_{i=1}^N V(x_i),
\end{equation}
where $V$ is a confinement potential with sufficiently rapid growth at infinity and $\g$ is the \textit{log kernel}
\begin{equation}\label{log kernel}
\g(x)=-\log |x|.
\end{equation}
The distribution of points is then governed by the Gibbs measure
\begin{equation}\label{Gibbs measure}
d\PNbeta:=\frac{1}{\ZNbeta}\int_{\R^N}\exp \left(-\beta \HN(\XN)\right)~d\XN,
\end{equation}
where $\ZNbeta$ is a normalizing constant
\begin{equation}\label{Partition function}
\ZNbeta:=\int_{\R^N}\exp (-\beta \HN(\XN))~d\XN
\end{equation}
referred to as the \textit{partition function}. This is a model that has attracted particular interest in recent years due to its connections with random matrix theory (see \cite{F10} for a detailed reference). At particular values of $\beta$, one recovers the eigenvalue distributions of classical random matrix ensembles: $\beta=1$ corresponds to the Gaussian Orthogonal Ensemble (GOE), $\beta=2$ to the Gaussian Unitary Ensemble (GUE) and $\beta=4$ to the Gaussian Sympectic Ensemble (GSE). These connections allow for simplified analysis due to connections with the theory of orthogonal polynomials. For general $\beta$ and quadratic $V$, these measures also describe the eigenvalue distributions of certain tridiagonal matrices, as described in \cite{DE02}. However, general $\beta$ and $V$ remain interesting from a statistical mechanical perspective. In this setting there are no specific formulae for the correlation functions and much of the study in recent years has been on developing tools for approaching these general situations. 

Our goal here is to continue this study by providing precise controls on a local energy quantity, defined below, down to a minimal length scale for $\beta$-ensembles with general inverse temperature and potential. As a consequence, we will obtain discrepancy estimates and good controls on fluctuations of linear statistics at varying scales; in particular, we will obtain a CLT for fluctuations at all mesoscales. The method we propose for this study we expect to generalize well to the study of higher dimensional Riesz gases, unlike many tools from random matrix theory which are specific to the $1$-d log gas.

\subsection{First-Order Asymptotics}Much is known about the first order asymptotic behavior of these systems (see \cite{ST97} for a more detailed discussion). From Frostman \cite{F35}, if the potential $V$ is lower semicontinuous, bounded below and satisfies the growth condition
\begin{equation}
\lim_{|x|\rightarrow +\infty}\left(V(x)+\log|x|\right)=+\infty
\end{equation}
then the continuous approximation of the Hamiltonian 
\begin{equation}\label{Continuous energy}
\I_V(\mu)=\frac{1}{2}\int \g(x-y)~d\mu(x)d\mu(y)+\int V(x)~d\mu(x)
\end{equation}
has a unique, compactly supported minimizer $\muv$ among the set of probability measures on $\R$, characterized by the Euler-Lagrange equation
\begin{equation}\label{Euler-Langrange equation}
\begin{cases}
h^{\muv}+V=c_V & \text{on }\Sigma_V \\
h^{\muv}+V \geq c_V & \text{otherwise,}
\end{cases}
\end{equation}
where $h^{\muv}=\g\ast \muv$ and $\Sigma_V$ denotes the support of $\muv$. $c_V$ is a fixed constant depending on the potential $V$. In the following, we denote the corresponding \textit{effective potential} by 
\begin{equation}\label{effective potential}
\zeta_V:=h^{\muv}+V-c_V.
\end{equation}
Furthermore, if we assume that $V$ is continuous, then the empirical measures $\mu_N$ given by 
\begin{equation}\label{Empirical measures}
\mu_N:=\frac{1}{N}\sum_{i=1}^N \delta_{x_i}
\end{equation}
converge almost surely to $\muv$ under $\PNbeta$ (see \cite{BG97} for the case of quadratic potential).

\subsection{Next-Order Asymptotics}
Given this leading term understanding of the asymptotic behavior of $\beta$-ensembles, it is a natural question to ask about the next-order behavior. Splitting the Hamiltonian, one can write (see $\S \ref{Section Basic Results}$ below)
\begin{align*}
\HNV(\XN)&=N^2\I_V(\muv)+N\sum_{i=1}^N\zeta_V(x_i)+\FN(\XN,\muv)
\end{align*}
where $\FN(\XN,\muv)$ is a next-order energy defined by 
\begin{equation}\label{next order energy}
\FN(\XN,\muv)=\frac{1}{2}\int_{\Delta^c}\g(x-y) \left(\sum_{i=1}^N \delta_{x_i}-N\muv\right)(x) \left(\sum_{i=1}^N \delta_{x_i}-N\muv\right)(y)
\end{equation}
for a configuration of points $\XN \subset \R^N$ and $\Delta \subset \R^2$ denotes the \textit{diagonal} 
\begin{equation}\label{diagonal}
\Delta:=\{(x,y) \in \R^2:x=y\}.
\end{equation}
Using the expansion of the partition function $\ZNbeta$ proven in \cite{LS15}, one can prove (as in \cite{BLS18}) that this next order energy is well controlled in the sense of exponential moments. Namely,
\begin{equation}\label{Global Energy Control}
\left|\log\Esp_{\PNbeta}\exp \left(-\beta \left(F_N(\XN,\muv)+\frac{1}{2}N\log N\right)\right)\right|\lesssim \beta N.
\end{equation}
This is innately connected to understanding a different next-order observable: \textit{fluctuations of linear statistics}. For sufficiently smooth $\xi$, one can consider the quantity
\begin{equation}\label{Fluctuations}
\Fluct_N(\xi):=\int \xi \left(\sum_{i=1}^N \delta_{x_i}-N\muv\right)=\int \xi ~d\fluct_N
\end{equation}
and ask about its asymptotic behavior. Morally, the control (\ref{Global Energy Control}) can be used to control these fluctuations using the so-called "electrostatic approach" introduced in \cite{SS15-1}.

In two dimensions, $\g$ is the Coulomb kernel and as such the fundamental solution of the Laplacian (up to a constant). As introduced by Caffarelli and Silvestre in \cite{CS07}, this provides a useful way of understanding the log kernel in $1$ dimension. Embedding $\R$ into the plane (using capital letters for coordinates in $\R^2$) we can rewrite $\FN(\XN,\muv)$ as 
\begin{equation*}
\FN(\XN,\muv)=\frac{1}{2}\int_{\R^2\setminus \Delta}\g(X-Y) \left(\sum_{i=1}^N \delta_{(x_i,0)}-N\muv\delta_\R\right)(X) \left(\sum_{i=1}^N \delta_{(x_i,0)}-N\muv\delta_\R\right)(Y),
\end{equation*}
where $\muv\delta_\R$ is a singular measure supported on $\R \times \{0\} \subset \R^2$ satisfying 
\begin{equation}\label{delta measure}
\int_{\R^2}\varphi(x,y)~d\muv\delta_\R(x,y)=\int_{\R}\varphi(x,0)~d\muv
\end{equation}
for sufficiently smooth test functions $\varphi$. We define the \textit{electrostatic potential} $u:\R^2 \rightarrow \R$ as 
\begin{equation}\label{electrostatic potential}
u(\cdot)=\g\ast \left(\sum_{i=1}^N \delta_{(x_i,0)}-N\muv\delta_\R\right)
\end{equation}
and notice that the \textit{electric field} $\nabla u$ is equivalent in norm to the Stieljes transform of the fluctuation measure $\sum_{i=1}^N \delta_{x_i}-N\muv$. Observing that $u$ solves
\begin{equation*}
-\Delta u=2\pi \left(\sum_{i=1}^N \delta_{(x_i,0)}-N\muv\delta_\R\right)
\end{equation*}
in $\R^2$, one can morally write (and formalize with renormalization)
\begin{equation*}
\FN(\XN,\muv)=\frac{1}{4\pi}\int_{\R^2 \setminus \Delta}u(-\Delta u)``="\frac{1}{4\pi}\int_{\R^2}|\nabla u|^2.
\end{equation*}
Letting $\tilde{\xi}$ denote the harmonic extension of $\xi$ away from the axis and using the control of (\ref{Global Energy Control}), one would then like to find the a priori estimate
\begin{equation*}
|\Fluct_N(\xi)|=\left|\frac{1}{2\pi}\int_{\R^2}\nabla \tilde{\xi}\cdot \nabla u\right| \lesssim\|\nabla \tilde{\xi}\|_{L^2}\sqrt{\FN(\XN,\muv)} \lesssim \|\xi\|_{H^{1/2}}\sqrt{N\log N},
\end{equation*}
where we have identified the $L^2$ norm of the gradient of the harmonic extension of a test function $\xi$ with its $H^{1/2}$ norm. This is made rigorous in \cite{BLS18}, and shows how $\FN$ controls fluctuations of linear statistics. Moreover, they show that $\Fluct_N(\xi)$ converges in distribution to a Gaussian random variable with mean $\int \sqrt{-\Delta}\xi \log \muv$ and variance $\frac{2}{\beta}\|\xi\|_{H^{1/2}}^2$, with a quantitative convergence of Laplace transforms.

Similar asymptotics for fluctuations had already been shown (see \cite{J98},\cite{BG13},\cite{BG16}, \cite{Sh13} and \cite{Sh14}), but the above result has the benefit of requiring significant less regularity on the potential $V$ and test function $\xi$; previous results had required analytic potential. Furthermore, the above approach is not dimension specific, and a similar embedding approach has been used to understand Riesz gases in higher dimensions (see \cite{LS15} and \cite{PS17}). While the higher dimensional analysis adds additional complications, we would like the reader to see the following approach to the log-gas as a blueprint for studying similar questions for Riesz gases in such regimes.

\subsection{Multiscale Analysis}
Our discussion up to this point provides a good description of the \textit{macroscopic} (order $1$) behavior of the system.  One would like to also give a more precise description of the local behavior of the system, and provide similarly precise asymptotics at various scales down to the typical distance between points, $\sim \frac{1}{N}$ (the \textit{microscopic scale}). 
%
To do this, we introduce a scaling parameter $L=L(N)$ and consider both a new local energy quantity $F^\Omega(\XN)$ which behaves like $\int_{\Omega \times [-L,L]}|\nabla u|^2$ (defined below, see $\S \ref{Section Basic Results}$) restricted to sets $|\Omega|\sim L$ in the bulk, and fluctuations of test functions of the form $\theta \left(\frac{\cdot-z}{L}\right)$.
\begin{defi}\label{scales}
Let $L:=L(N)$ denote a scaling parameter that depends on $N$. We say that $L$ is
\begin{enumerate}
\item \textbf{macroscopic} if there exists constants $a,b >0$ such that $a \leq L \leq b$;
\item \textbf{mesoscopic} if $L \rightarrow 0$ and $LN \rightarrow \infty$;
\item \textbf{microscopic} if there exists constants $a,b>0$ such that $a \leq LN \leq b$.
\end{enumerate}
\end{defi}
Can one provide similar control on $F^\Omega$ for $L$ down to some minimal length scale, what we call a \textit{local law}? Such a result was obtained for an analogous quantity for the Coulomb gas in any dimension $d \geq 2$ in \cite{AS21},  and \cite{BMP22} recently obtained a control on the Stieljes transform of the log gas using loop equations methods that yields control on the microscopic behavior of the gas.

Can one then also control the fluctuations of $\xi$, and obtain some sort of Central Limit Theorem at scales other than macroscopic? Such a result was obtained in \cite{BL18} using higher order loop equations, but they were only able to demonstrate Gaussian fluctuations for power mesoscales $L=N^{-\alpha}$, $\alpha \in (0,1)$.

These questions are answered affirmatively in the following sections.
\subsection{Statement of Assumptions}
We need to make some additional assumptions on $V$ and the form of the equilibrium measure $\muv$. We list those assumptions in detail here for ease of reference.

\begin{enumerate}
\item[A1](growth at $\infty$): We assume that the potential $V \in C^3(\R)$ grows fast enough at infinity, namely
\begin{equation}\label{Growth condition}
\lim_{|x|\rightarrow +\infty}\left(V(x)+\log|x|\right)=+\infty
\end{equation}

\item[A2](nondegeneracy of the equilibrium measure): We assume that the equilibrium measure is nondegenerate, namely that it is a finite union of nondegenerate closed intervals
\begin{equation}\label{equilibrium measure}
\Sigma_V=\bigcup_{i=1}^n [a_i,b_i].
\end{equation}
We will refer to the \textit{bulk}, $\B$, of $\Sigma_V$ as the set of points that do not approach the boundary in a controlled sense. For ease, we let $d=\frac{1}{4}\min_{i}|b_i-a_i|$, and let 
\begin{equation}\label{bulk}
\B:=\{x \in \Sigma_V:\dist(x, \partial \Sigma_V)\geq d\}.
\end{equation}

\item[A3](boundary decay):  We also assume that $\muv$ has typical square root decay at the boundary. Namely, letting 
\begin{equation}\label{Boundary decay}
\sigma(x):=\prod_{i=1}^n \sqrt{|x-a_i\|x-b_i|}
\end{equation}
for $x \in \Sigma_V$, we assume that $\muv$ takes the form
\begin{equation}\label{Density}
\muv(x)=S(x)\sigma(x)
\end{equation}
for some sufficiently smooth $S(x)>0$ on $\Sigma_V$.

\item[A4](nondegeneracy of effective potential):We also make the nondegeneracy assumption that 
\begin{equation}\label{Nondegeneracy}
\zeta_V >0
\end{equation}
on $\Sigma_V^c$.
\end{enumerate}
These assumptions taken together are often referred to as the \textit{multi-cut, noncritical} regime. As discussed in \cite{BEY12}, these assumptions are not very strong and in fact \cite{KM00} proves that a more restrictive class of assumptions still yields a set of potentials that are in a sense generic. In particular, these assumptions are satisfied at least for all real-analytic potentials, and uniformly convex smooth potentials (\cite{BLS18}). An important benefit of our approach is that we can significantly relax the assumptions on the potential $V$, which is often required to be analytic for other approaches to log-gases. We also expect similar results to hold in the critical setting with appropriate restrictions on the test function in Theorems \ref{Uniform Fluctuations} and \ref{Gaussian asymptotics}, but the approach here is a rather more technically difficult in that situation.  

\subsection{Statement of Results}\label{Statement of Results}
The first result, a control on local energies, is best stated at a ``blown up" scale, where the configuration has been blown up so that the typical distance between particles is of order $1$. $F^\Omega$ denotes a corresponding next order energy with respect to a set $\Omega$ in the bulk of the support of the equilibrium measure, defined in $\S \ref{Section Basic Results}$. It behaves like $\int_{\Omega \times [-L,L]} |\nabla u|^2$, and one should think of it as analogous to $\FN(\XN,\muv)$ but restricted to a set $|\Omega|\sim L$. $\mu$ is the corresponding blown up version of the equilibrium measure, also defined in $\S \ref{Section Basic Results}$. We will denote the blown-up configuration here as $\XN'$, and use $L'$ to denote the blown-up scale $L'=LN$.
%
%
\begin{theo}[Local Law]\label{Local Law}
Let $\Omega \subset \B'$, the blowup of the bulk $\B$, satisfy $|\Omega|\in [L',2L']$ for some $L'>\omega$ where $\omega$ is an order one minimal scale. Let $C_0:=\frac{C}{8\pi}$, where $C$ is the constant of Lemma \ref{local energy control}. Then there exists a constant $\mathcal{C}$, independent of $L'$, and a good event $\mathcal{G}_\Omega$ such that 
\begin{equation}
\G^{\Omega}(\XN', \mu)+C_0\#(\XN' \cap \Omega) \leq \mathcal{C}L'
\end{equation}
on $\mathcal{G}_\Omega$, with 
\begin{equation}
\PNbeta(\mathcal{G}_\Omega^c)\leq C_1e^{-C_2\beta L'}
\end{equation}
for some constants $C_1$ and $C_2$ dependent only on $\mu$.
\end{theo}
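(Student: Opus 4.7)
The plan is to reduce the statement to a Laplace-transform bound and then establish that bound by a bootstrap over dyadic length scales, starting from the macroscopic control \eqref{Global Energy Control} and descending down to the minimal scale $L'\sim\omega$.

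First, I would observe that the claimed concentration inequality is essentially equivalent to the exponential moment estimate
\begin{equation*}
\Esp_{\PNbeta}\exp\!\bigl(\beta(\G^\Omega(\XN',\mu)+C_0\,\#(\XN'\cap\Omega))\bigr)\le C_1' e^{\beta \mathcal{C}' L'},
\end{equation*}
for some $\mathcal{C}'<\mathcal{C}$, since Markov's inequality then yields the tail bound defining $\mathcal{G}_\Omega$. The additive term $C_0\,\#(\XN'\cap\Omega)$ plays a crucial role: the restricted energy $\G^\Omega$ is not bounded from below on its own (boundary contributions can be arbitrarily negative), but the combination $\G^\Omega+C_0\,\#$ admits a uniform lower bound via Lemma~\ref{local energy control}. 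With the choice $C_0=C/(8\pi)$, this combined quantity is precisely what the electrostatic formalism forces us to bound, and it is the quantity that the bootstrap transports cleanly from scale to scale.

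The base case is macroscopic: for $|\Omega|\sim N$ the global control \eqref{Global Energy Control} gives the required exponential moment on $\FN(\XN,\muv)$ directly, and $\G^\Omega$ is comparable to $\FN$ while $\#(\XN'\cap\Omega)\lesssim N$. For the inductive step, assume the local law holds at scale $2L'$, cover the blown-up bulk $\B'$ by a bounded-overlap family of cells $\Omega_i$ of size $L'$, and for each such cell orthogonally split the electric field $\nabla u$ into a piece localized near $\Omega_i\times[-L,L]$ and a complementary piece. The inductive hypothesis at scale $2L'$ supplies discrepancy estimates on the enlarged cell $2\Omega_i$, which are used to control the cross terms and to prescribe the boundary data that the localized piece must match.

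The principal obstacle is the \emph{screening} construction that closes the bootstrap. One must replace the configuration outside $\Omega_i$ by a well-chosen reference configuration whose electric field has the prescribed normal trace on $\partial(\Omega_i\times[-L,L])$ and whose energy is controlled by the inductive estimate at scale $2L'$. Having such a surrogate lets one write the conditional partition function on $\Omega_i$ as a ratio against the global partition function, transferring the global exponential-moment bound \eqref{Global Energy Control} to the local energy, with constants independent of $L'$ as long as $L'\ge\omega$. The difficulty sharpens at smaller scales, because there the boundary contribution to the local energy becomes of the same order as the bulk term, and it is only by packaging $\G^\Omega+C_0\,\#$ together that the error terms produced by the screening remain controllable.

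Finally, a union bound over the $O(|\B'|/L')$ cells in the cover costs at most a polynomial factor, which is absorbed by the exponential gain $e^{-C_2\beta L'}$ as long as $\beta L'\gg\log N$; this constraint determines the minimal scale $\omega$. Iterating the inductive step $O(\log(N/\omega))$ times, and checking that the constants $\mathcal{C}'$, $C_0$, $C_1$, $C_2$ do not deteriorate across the iteration, closes the bootstrap and yields the local law at every admissible scale $L'\ge\omega$.
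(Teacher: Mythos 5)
Your high-level architecture --- reduce the tail bound to an exponential moment estimate, run a bootstrap over dyadic scales from the macroscopic base case, and close the inductive step via a screening construction that replaces the exterior field by a surrogate with prescribed boundary data --- is the same as the paper's. You also correctly identify that $\G^\Omega$ alone is not coercive and that the additive $C_0\,\#(\XN'\cap\Omega)$ is exactly what Lemma~\ref{local energy control} supplies to make the quantity bounded below.

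However, there is a genuine gap at the heart of the inductive step, and it is the part that distinguishes the log-gas from the Coulomb-gas bootstrap of \cite{AS21}. After the Caffarelli--Silvestre extension, the screened field lives in $\Omega\times[-L',L']\subset\R^2$, and the screening error includes a term of the form $L'\int_{\Omega\times\{\pm h\}}|\nabla u|^2$, the energy of the field on the horizontal slab away from the axis. A priori that term is of order $L'$ --- exactly the order one is trying to prove --- so it cannot be absorbed into the inductive hypothesis the way bulk error terms are. The paper resolves this by observing that the two components of $\nabla u$ at height $h$ are themselves fluctuations of the smooth test functions $\kappa_{a,h}$ and $\zeta_{a,h}$ of~\eqref{definition of rho}--\eqref{definition of zeta}, and then invokes a Laplace-transform (CLT-type) bound, Theorem~\ref{Uniform Bound on Fluctuations}, for those fluctuations. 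That theorem is itself proved conditionally on the local law at scales $2^kL'$, so the two statements must be proved simultaneously in a single bootstrap: the local law at scale $2L'$ gives the fluctuation bound at scale $2L'$, which controls the top-of-slab energy and closes the screening estimate at scale $L'$. Your proposal treats the decay of the field away from the axis as something the inductive discrepancy estimates on $2\Omega_i$ handle; they do not, since the crude bound \eqref{energy controls fluctuations} that follows from the energy alone is only square-root strong and leaves a term of order $L'$ uncancelled. Without the fluctuation-side bootstrap, the argument does not close.

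Separately, your determination of the minimal scale $\omega$ by absorbing a union bound over $O(|\B'|/L')$ cells, requiring $\beta L'\gg\log N$, is not how the paper works and would contradict the theorem: $\omega$ in the statement is an $N$-independent, order-one quantity. The theorem is a per-$\Omega$ statement with a per-$\Omega$ good event, so no simultaneous union bound over all cells is required; when the paper does sum over scales, it sums the probabilities $C_1 e^{-C_2\beta 2^kL'}+e^{-\frac{\beta}{4}\mathcal{C}2^kL'}$ over the dyadic tower of nested cells centered at the same point, which telescopes to $O(e^{-c\beta L'})$ with no $\log N$ loss. The actual minimal scale $\omega$ is pinned down in the proof of Proposition~\ref{decay control}, by the requirement that the choices of constants $M$, $K$, $K_2$ in the Chernoff--plus--pointwise-gradient argument for $\int_{2\Omega\times\{\pm h\}}|\nabla u|^2$ be consistent, and this is independent of $N$.
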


$\omega$ is an order one constant which we define in $\S \ref{Section Bootstrap on Scales}$ and can be optimized, but we make no attempt in this article to identify it. Observe that here we do not consider inverse temperatures $\beta$ that depend on $N$, but in principle one can adapt our arguments and estimates to examine these cases. Such regimes are of interest, and the minimal scale would at that point possibly depend on $\beta$ as in \cite{AS21} and \cite{S22}.

$\G^\Omega$ is a new quantity, and provides good control on fluctuations of linear statistics. Coupling Lemma \ref{Local Laws Energy Estimate} and Lemma \ref{local energy control}, one obtains for an arbitrary test function $\zeta$ supported inside of $\Omega$ that 
\begin{align}\label{energy controls fluctuations}
\nonumber \left|\int \zeta~d\left(\sum_{i=1}^N \delta_{x_i}-\mu \right)\right|& \leq \|\zeta'\|_{L^\infty(\Omega)}(|\Omega|+|\Omega|^{1/4}\sqrt{8\pi(\G^{\Omega}(\XN, U)+C_0\#(\XN \cap \Omega))})+ \\
&C\left(\sqrt{\mathfrak{h}}\|\zeta'\|_{L^2(\Omega)}+\frac{1}{\sqrt{\mathfrak{h}}}\|\zeta\|_{L^2(\Omega)}\right)\sqrt{8\pi(\G^{\Omega}(\XN, U)+C_0\#(\XN \cap \Omega))},
\end{align}
for arbitrary $\mathfrak{h}>0$. In addition to controlling local fluctuations, $\G^\Omega(\XN',  U)+C_0\#(\XN' \cap \Omega)$ is a positive quantity (see Lemma \ref{local energy control}) that controls local discrepancies (i.e. the actual number of points observed in $\Omega$; see Lemma \ref{discrepancy estimate}).

Our next result concerns fluctuations of \textit{rescaled test functions}, which we define as any function of the form
\begin{equation}\label{rescaled test function}
\xi_{z,L}(\cdot)=\theta \left(\frac{\cdot-z}{L}\right)
\end{equation}
where $\theta$ is a fixed compactly supported function, and $z \in \R$. It is stated in traditional coordinates to match the existing literature. We recall that the $H^{1/2}$ norm of a suitably regular test function $\theta$ can be defined by the $L^2$ norm of the gradient of its harmonic extension $\tilde{\theta}$ to the upper half plane $\R^2$, and use the following normalization:
\begin{equation}\label{H1/2 definition}
\|\theta\|_{H^{1/2}}^2:=\frac{1}{2\pi}\int_{y>0} |\nabla \tilde{\theta}|^2.
\end{equation}
Observe that it is scale and translation invariant: namely, with $\xi_{z,L}$ as in (\ref{rescaled test function}), $||\xi_{z,L}||_{H^{1/2}}=||\theta||_{H^{1/2}}$. We will often elide the $z,L$ subscript on $\xi_{z,L}$ when it is clear, and note that $L$ may depend on $N$.

\begin{theo}[Bounded Fluctuations]\label{Uniform Fluctuations}

Suppose that $\theta\in C^3$ is a compactly supported test function, and let $\xi_{z,L}$ denote the associated rescaled test function at scale $L > \frac{\omega}{N}$ with $z \in \B$.   Let $N_c$ be as in Definition \ref{critical N}. Then, for every $N\geq N_c$ there is an event $\mathcal{G}_N$ with 
\begin{equation}\label{good event}
\PNbeta( \mathcal{G}_N^c)\leq C_1e^{-C_2\beta L N}
\end{equation}
for some fixed constants $C_1$ and $C_2$ dependent only $\muv$ such that 
\begin{align}\label{uniform bound}
&\left|\log \Esp_{\PNbeta}\left[\exp \left(s\Fluct_N(\xi)\right)\indic_{\mathcal{G}_N}\right] \right|= \frac{s^2}{\beta}\|\xi\|_{H^{1/2}}^2 \\
&\nonumber+O\left(|s|\|\theta\|_{C^3}+|s|\left|1-\frac{1}{\beta}\right|\|\theta\|_{C^3}+\frac{s^2}{\beta}\max \left(\|\theta\|_{C^1},\|\theta\|_{C^3}^2\right)+\frac{|s|^3}{\beta^2 LN}\max\left(\|\theta\|_{C^2}^2,\|\theta\|_{C^3}^3\right)\right).
\end{align}
\end{theo}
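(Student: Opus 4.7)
The strategy I would follow is the transport/perturbation method of \cite{BLS18} (inspired by the expansion of \cite{J98}), using the new local energy control of Theorem \ref{Local Law} in place of the global estimate \eqref{Global Energy Control} to handle the vanishingly small support of $\xi_{z,L}$. First I would rewrite the Laplace transform as a ratio of partition functions by perturbing the potential: with $V_s := V - \tfrac{s}{\beta N}\xi$ one has
\begin{equation*}
\Esp_{\PNbeta}\exp(s\,\Fluct_N(\xi)) = \frac{\ZNbeta^{V_s}}{\ZNbeta^{V}}\exp\Bigl(sN\!\int \xi\,d\muv\Bigr),
\end{equation*}
so the whole problem reduces to controlling the ratio of partition functions.

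To analyze that ratio I construct a near-identity transport $T_s := \id + \tfrac{s}{\beta N}\psi_s$ on $\R$, where $\psi_s$ is chosen (as in \cite{BLS18}) to solve an inverse Schwinger--Dyson equation making $T_s$ an approximate transport from $\muv$ to $\mu_{V_s}$. The harmonic extension $\widetilde{\psi}_s$ to $\R^2$ then satisfies the key electrostatic identity $\tfrac{1}{2\pi}\|\nabla\widetilde{\psi}_s\|_{L^2}^2 = \|\theta\|_{H^{1/2}}^2$ up to lower-order corrections. Performing the change of variables $x_i \mapsto T_s(x_i)$ in $\ZNbeta^{V_s}$ and differentiating in $s$, I would expand $\log(\ZNbeta^{V_s}/\ZNbeta^V)$ as an integral from $0$ to $s$, where the quadratic-order anisotropy/Jacobian term produces exactly the announced Gaussian piece $\tfrac{s^2}{\beta}\|\xi\|_{H^{1/2}}^2$, and the remaining linear-in-$s$ and higher pieces test the fluctuation measure $\fluct_N$ against $\psi_s$, $\psi_s'$, $\psi_s''$, together with a third-order Taylor remainder.

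The hard part, and the reason Theorem \ref{Local Law} is needed, is bounding these error terms uniformly in $N$ at a mesoscale. Because $\xi_{z,L}$ lives on an interval of length $L$, the function $\psi_s$ inherits derivatives that scale like $L^{-k}\|\theta\|_{C^k}$, so applying the global bound \eqref{Global Energy Control} would yield divergent prefactors as $L\to 0$. Instead, I would cover the blown-up support of $\xi$ by a bounded number of windows $\Omega \subset \B'$ with $|\Omega|\in[L',2L']$ and apply the local fluctuation inequality \eqref{energy controls fluctuations} on each, together with the high-probability bound $\G^\Omega(\XN',\mu)+C_0\,\#(\XN'\cap\Omega)\leq \mathcal{C}L'$ from Theorem \ref{Local Law}. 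Choosing the free parameter $\mathfrak{h}$ in \eqref{energy controls fluctuations} optimally produces estimates of the form $|\int \psi_s'\,d\fluct_N|\lesssim \|\theta\|_{C^3}$ and analogous bounds for the Jacobian corrections, which assemble into the error terms of order $|s|\|\theta\|_{C^3}$ and $\tfrac{s^2}{\beta}\max(\|\theta\|_{C^1},\|\theta\|_{C^3}^2)$. The factor $|1-\beta^{-1}|$ enters through the Gibbs correction when $T_s$ is not volume preserving, and the third-order Taylor remainder naturally produces $\tfrac{|s|^3}{\beta^2 LN}$ because each additional differentiation of the transport in $s$ costs a factor $(LN)^{-1}$ in the blown-up picture.

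Finally, the event $\mathcal{G}_N$ on which \eqref{uniform bound} holds is obtained as the intersection of the good events $\mathcal{G}_\Omega$ of Theorem \ref{Local Law} over the order-one cover of $\mathrm{supp}(\xi)$, so by a union bound $\PNbeta(\mathcal{G}_N^c)\leq C_1 e^{-C_2\beta L N}$. The threshold $N_c$ in Definition \ref{critical N} is what ensures that $L'=LN$ exceeds the minimal order one scale $\omega$, so that Theorem \ref{Local Law} can actually be invoked. I expect the main technical obstacle to be making the transport construction and the local fluctuation estimates fit together with sharp enough constants to extract the announced leading $\tfrac{s^2}{\beta}\|\xi\|_{H^{1/2}}^2$ rather than a worse bound, and keeping careful track of the boundary effects of the windows $\Omega$ relative to the exact support of $\theta$.
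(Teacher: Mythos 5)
Your broad strategy matches the paper's: expand the Laplace transform as a ratio of partition functions via Johansson's method, change variables with a near-identity transport $\phi_t = \mathrm{Id} + t\psi$ (with $t = -s/(\beta N)$, $\psi$ solving $\Xi_V[\psi] = \xi + c_\xi$), identify the leading variance $\tfrac{s^2}{\beta}\|\xi\|_{H^{1/2}}^2$ from the quadratic term, and control the remaining error terms by localized energy estimates powered by the local law. Two substantive issues, however.

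First, your plan to handle the remaining errors by ``apply[ing] the local fluctuation inequality \eqref{energy controls fluctuations} on each window, choosing $\mathfrak{h}$ optimally'' fails for the anisotropy term
\begin{equation*}
\textsf{Error}_2 \approx -t\int\int \frac{\psi(x)-\psi(y)}{x-y}\,d\fluct_N(x)\,d\fluct_N(y),
\end{equation*}
which is a \emph{bilinear} functional of the fluctuation measure whose kernel is, as the paper explicitly notes, ``just barely too singular for \eqref{energy controls fluctuations} to provide sufficiently good bounds.'' Applying \eqref{energy controls fluctuations} once in $x$ and then again in $y$ does not close, because $\psi' \sim L^{-1}$ and the iterated bound loses a power of $(LN)^{-1/2}$ per application against a target of order $|s|$. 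The paper instead invokes the commutator estimate of Rosenzweig--Serfaty (Lemma \ref{commutator estimate}, building on \cite{S20,NRS21}), which controls the full bilinear anisotropy directly by $\|\psi'\|_{L^\infty}$ times the local electrostatic energy $\G^\Omega + C_0\#(\XN\cap\Omega)$, and then sums this over a dyadic partition of unity $\{\zeta_k\}$ tracking the tail decay of $\psi$. Without the commutator structure you cannot get the announced $O(|s|\|\theta\|_{C^3})$ contribution from this term.

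Second, you treat Theorem \ref{Local Law} as freely available input. The actual logical dependence is circular and resolved by coupling the two bootstraps: the paper first proves the uniform bound for the \emph{specific} test functions $\kappa_{a,h}$, $\zeta_{a,h}$ under the inductive hypothesis that the local law holds only at dyadic scales $2^k L'$ with $k\geq 1$ (Proposition \ref{Main Bootstrap}); this is precisely what feeds the control of $\int_{\Omega\times\{\pm h\}}|\nabla u|^2$ in Proposition \ref{decay control}, which closes the local-law bootstrap. Only after the local law has been established down to $\omega$ can one prove the uniform bound for a \emph{general} rescaled $\theta$, as in Proposition \ref{easy Error2 bound - rescaled test function}. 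Your proof sketch skips directly to the final conclusion without establishing the intermediate result for $\kappa_{a,h}$, $\zeta_{a,h}$, so the very local-law input you invoke would not yet be available.

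You also elide the separate error terms: $\textsf{Error}_1$ (Jacobian), $\textsf{Error}_3$ (fluctuations of $\tau_t$ and $\log\phi_t'$), and $\textsf{Error}_2$ (anisotropy) each need their own estimates; the $\tau_t$ term in particular requires the energy-difference estimate (Lemma \ref{Energy Difference Estimate}) to get the correct $t^2$-scaling, which is not captured by linear fluctuation bounds on $\psi$ and its derivatives alone.
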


Once we have this control and the Local Law at all mesoscales, we can upgrade the above control to a Central Limit Theorem for mesoscopic fluctuations.

\begin{theo}[Central Limit Theorem]\label{Gaussian asymptotics}
Suppose that $\theta\in C^{13}$ is a compactly supported test function, and let $\xi_{z,L}$ denote the associated rescaled test function at scale $L>\frac{\omega}{N}$ with $z \in \B$. Let $\mathcal{G}_N$ be the good event of Theorem \ref{Uniform Bound on Fluctuations}. Then, for every $N\geq N_c$ there is a good event $\mathcal{G}_N'\subset \mathcal{G}_N$ with 
\begin{equation}
\PNbeta(\mathcal{G}_N\setminus \mathcal{G}_N') \lesssim \frac{1}{(LN)^{1/4}}
\end{equation} 
such that
\begin{align}
\log \mathbb{E}_{\mathbb{P}_{N,\beta}}[\exp(s\Fluct_N(\xi))\indic_{\mathcal{G}_N'}]&=\frac{s^2}{\beta}\|\xi\|_{H^{1/2}}^2 + s\left(1-\frac{1}{\beta}\right)\int\psi'(x)~d\muv(x) \\ \nonumber&+O_\beta\left(\frac{|s|}{(LN)^{1/4}}\|\theta\|_{C^{13}}+\frac{s^2}{(LN)^{1/4}}\|\theta\|_{C^5}^2+\frac{|s|^3}{LN}\|\theta\|_{C^3}^3  \right),
\end{align}
where $\psi$ is a transport map associated to $\xi$, defined in (\ref{definition of transport}). As a consequence (see Section \ref{Section CLT Estimate}), when $\omega <L\ll1$ $\Fluct_N(\xi_{z,L})$ converges in distribution to a centered Gaussian random variable with variance $\frac{2}{\beta}\|\theta\|_{H^{1/2}}^2$. When $L$ is a fixed macroscopic constant, we recover \cite[Theorem 1]{BLS18} in the nonsingular case.
\end{theo}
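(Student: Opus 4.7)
I would follow the transport method of \cite{BLS18}, combined with the partition function expansion of \cite{J98}, and implement it locally via the Local Law \ref{Local Law} and the bounded-fluctuations bound of Theorem \ref{Uniform Fluctuations}. The starting point is the exact identity
\[\Esp_{\PNbeta}\!\left[e^{s\Fluct_N(\xi)}\indic_{\mathcal{G}_N}\right] = e^{-sN\int\xi\,d\muv}\,\frac{\int_{\mathcal{G}_N}e^{-\beta\HN^{V-(s/\beta N)\xi}(\XN)}\,d\XN}{\ZNbeta},\]
which reduces the problem to comparing two partition functions whose potentials differ by the perturbation $\tfrac{s}{\beta N}\xi$. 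The transport map $\psi$ of \eqref{definition of transport} is the key device: it is constructed as the smooth solution on $\Sigma_V$ of a singular integral equation such that pushing $\muv$ forward by $\id + t\psi$ yields, at first order in $t$, a new equilibrium measure whose effective potential \eqref{effective potential} is shifted by $-t\xi$.

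\textbf{Change of variables and main terms.} In the perturbed integral I would substitute $x_i = y_i + \tfrac{s}{\beta N}\psi(y_i)$. The Euler--Lagrange relation \eqref{Euler-Langrange equation} combined with the definition of $\psi$ produces the cancellation of the leading-order $e^{-sN\int\xi\,d\muv}$ prefactor against the linear Taylor terms of the transported Hamiltonian. What remains decomposes as: a quadratic piece equal to $\frac{s^2}{\beta}\|\xi\|_{H^{1/2}}^2$, obtained from the harmonic extension identity \eqref{H1/2 definition}; a Jacobian contribution $\sum_i\log(1+\tfrac{s}{\beta N}\psi'(y_i))$ whose recombination with the transported potential produces the bias $s(1-\tfrac{1}{\beta})\int\psi'\,d\muv$; and a collection of remainders, which are either linear statistics $\Fluct_N(R)$ of smooth functions $R$ built from $\psi,\psi',\psi''$, quadratic energies of the form $\int\!\int\frac{\psi(x)-\psi(y)}{x-y}\,d\fluct_N(x)d\fluct_N(y)$, or cubic-order Taylor terms controlled pointwise by the local particle count.

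\textbf{Error control.} I would handle the linear statistics by applying Theorem \ref{Uniform Fluctuations} to the perturbed Gibbs measure, whose partition function is comparable to $\ZNbeta$ by the Local Law \ref{Local Law}; this gives exponential concentration of size $O((LN)^{-1/4}\|\theta\|_{C^{13}})$. For the quadratic energies, I would tile a neighbourhood of $\Supp\xi$ by $O(1/L)$ boxes of length $L$ and, on each box, apply the energy-fluctuation inequality \eqref{energy controls fluctuations} together with the Local Law \ref{Local Law}, producing a deterministic bound of order $(LN)^{-1/4}\|\theta\|_{C^5}^2$ on an event of overwhelming probability. The cubic Taylor terms are absorbed using the discrepancy bound (Lemma \ref{discrepancy estimate}), giving $O((LN)^{-1}\|\theta\|_{C^3}^3)$. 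Defining $\mathcal{G}_N'$ as the intersection of $\mathcal{G}_N$ with the box-wise Local Law events and a Chebyshev cutoff on the quadratic remainder yields $\PNbeta(\mathcal{G}_N\setminus\mathcal{G}_N')\lesssim(LN)^{-1/4}$.

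\textbf{Main obstacle and CLT.} The principal difficulty is making everything uniform down to $L\sim\omega/N$: the rescaled test function has $C^k$-norm $\sim L^{-k}$ on the original scale, so a naive expansion diverges at the microscopic threshold. The transport $\psi$ absorbs these blow-ups by trading derivatives for powers of $L$ through its harmonic extension, which is why the hypothesis $\theta\in C^{13}$ is necessary; the Local Law \ref{Local Law} is the only input sharp enough to close the bootstrap all the way to $L\sim\omega/N$. Once the expansion holds on $\mathcal{G}_N'$, convergence in distribution in the mesoscopic regime $\omega/N<L\ll 1$ follows from pointwise convergence of the Laplace transform: the bias $\int\psi'\,d\muv=-\int\psi\,\muv'$ is $O(L)$ because $\psi$ is localized at scale $L$ and $\muv$ is smooth, so only the centered quadratic $\tfrac{2}{\beta}\|\theta\|_{H^{1/2}}^2$ survives; when $L$ is a macroscopic constant the bias persists and the identity reproduces \cite[Theorem 1]{BLS18}.
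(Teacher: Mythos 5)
Your overall framework — Johansson's Laplace transform identity, the change of variables $x_i \mapsto x_i + t\psi(x_i)$ with $\psi$ from \eqref{definition of transport}, and the decomposition into a main quadratic term, a Jacobian bias term, and error remainders — is the correct skeleton and matches the paper's Propositions \ref{Expansion of Laplace Transform} and \ref{Main Term Expansion}. The treatment of the bias $\textsf{Error}_1$ and the single-point fluctuations $\textsf{Error}_3$ via transport estimates and the local law is also essentially right. However, your error control for the quadratic anisotropy term
\[
\int\!\!\int\frac{\psi(x)-\psi(y)}{x-y}\,d\fluct_N(x)\,d\fluct_N(y)
\]
has a genuine gap: you propose to tile a neighbourhood of $\supp\xi$ by boxes and apply the energy--fluctuation inequality \eqref{energy controls fluctuations} together with Theorem \ref{Local Law}, claiming a bound of order $(LN)^{-1/4}\|\theta\|_{C^5}^2$. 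This does not work. The paper explicitly flags (see $\S\ref{Outline of the Proof}$) that the anisotropy term is \emph{just barely too singular} for \eqref{energy controls fluctuations}. Iterating the energy estimate on the double integral yields a bound of order $|t|\cdot N \sim \frac{|s|}{\beta}$, i.e.\ only an $O(1)$ bound (this is exactly what Proposition \ref{easy Error2 bound - rescaled test function} proves, after invoking the commutator estimate of Lemma \ref{commutator estimate}, not the energy inequality). To obtain a \emph{vanishing} error of order $(LN)^{-1/4}$ one needs the strictly stronger input developed in Proposition \ref{vanishing Error2 bound}: a Fourier-inversion representation $\frac{\psi(x)-\psi(y)}{x-y}=c\int_0^1\int_{\R}\mathcal{F}(\psi')(\lambda)e^{i\lambda sx}e^{i\lambda(1-s)y}\,d\lambda\,ds$, followed by applying the uniform bound of Theorem \ref{Uniform Bound on Fluctuations} (via the moment Corollary \ref{pseudo-Gaussian moments}) to the oscillatory test functions $e^{i\lambda s\cdot}$ localized by a cutoff, and combining with $L^1$ decay estimates on $\mathcal{F}(\psi_i^{(k)})$ from the transport bounds of Lemma \ref{Rescaled Transport Estimates}. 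This Fourier trick is precisely where the $C^{13}$ requirement enters (to control $\|\mathcal{F}(\psi_i^{(9)})\|_{L^1}$ at macroscale), not the harmonic-extension mechanism you describe.

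A secondary but related issue: you would tile only ``a neighbourhood of $\Supp\xi$,'' but as the paper stresses in the introduction, $\psi$ is not localized to $\supp\xi$ in the mesoscopic regime; it has polynomially decaying global tails $|\psi(x)|\sim L/|x|$. Every error control in $\S\ref{Section Uniform Bound on Fluctuations}$ and $\S\ref{Section CLT Estimate}$ requires a dyadic partition of unity $\{\zeta_k\}$ extending out to macroscopic scale, using Theorem \ref{Local Law} at each scale $2^kL$; a single-scale tiling near $\supp\xi$ misses a contribution that is of the same order as the main term. Both pieces — the Fourier representation of the anisotropy and the multiscale decomposition to handle the tails of $\psi$ — are needed to close the argument.
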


\subsection{Relation to Other Work}
Our method adapts a bootstrap on scales that had been used to prove local laws (i.e. exponential moment controls of $\G^\Omega$) in \cite{AS21} for the Coulomb cases. Similar bootstrap approaches had been used before to study Coulomb gases by Serfaty and other authors (see \cite{NS15}, \cite{L17} and \cite{PR18}), and with the loop equation machinery of random matrices (\cite{BBNY17}). This article is the first step towards generalizing the approach of \cite{AS21} to Riesz gases in all dimensions. Unlike other studies of log-gases that use methods which rely deeply on the specific case of the $1$-d log gas via ordering of particles and rigidity estimates, we expect similar arguments to apply in the study of higher dimensional Riesz gases. Due to the added difficulties in defining the transport map in higher dimensions and the increased singularity of anisotropy type error terms, we restrict ourselves here to the $1$-d log case.

A macroscopic CLT for $\beta$-ensembles using the Laplace transform approach of \cite{J98} that we employ here was proven using a transport approach in \cite{BLS18}. Similar CLT results had also been established before using an expansion of the partition function at orders $\frac{1}{N}$ obtained via the Johansson method; see for instance \cite{BG13},\cite{BG16}, \cite{Sh13} and \cite{Sh14}. All of these results required real analytic potentials, or in the case of \cite{BFG15}, very smooth potentials ($C^{31}$).  Such methods were also used to obtain universality of eigenvalue gaps in \cite{BEY14} and \cite{BEY12} under more general assumptions on the potential $V$, and a transport approach was also used to obtain universality in \cite{BFG15}. A quantitative CLT was also obtained using Stein's method in \cite{LLW19}, with relaxed assumptions on the potential. 

Other novel methods have also recently been successful in studying related one-dimensional models. \cite{Lam21} obtained a mesoscopic CLT using the loop equations machinery for the circular $\beta$-ensembles, and \cite{HL21} used techniques from \cite{LLW19} to establish a CLT in the so-called ``high temperature" regime with inverse temperature that scales as $\beta/N$. Making use of a comparison principle for the Helffer-Sj\"ostrand equation and Stein's method, \cite{B21} provided optimal local laws and a quantitative CLT for $1$-d Riesz gases on the circle.

There has also been recent interest in CLTs for $\beta$-ensembles at various scales. \cite{BL18} used higher order loop equations to prove a mesoscopic CLT at all power mesoscales $L=N^{-\alpha}$ for $\alpha \in (0,1)$. We replace the loop equations approach of \cite{BL18} with the transport method from \cite{BLS18}, which has the benefit of requiring less regularity on the potential. Our result also has the benefit of extending their CLT to arbitrary mesoscales, and also yielding a bounded fluctuations result for microscopic fluctuations. The mesoscopic analysis introduces the additional difficulty that the transport map of \cite{BLS18} is no longer localized to the support of the test function, but instead has global tails. This requires significant additional computational analysis, which we discuss in Appendix C.

One can also ask about the microscopic behavior of $\beta$-ensembles in the thermodynamic limit, which was shown independently in \cite{VV09} and \cite{KS09} to be the $\text{Sine}_\beta$ process. CLTs for this process in the case $\beta=2$ go back to \cite{S87} and \cite{S00} for sufficiently smooth test functions. \cite{L21} recently obtained a CLT using Johansson's method (\cite{J98}) and the transport approach employed by \cite{BLS18} which we make use of here, coupled with the representation of $\text{Sine}_\beta$ via the DLR equations of \cite{DHLM21}. A CLT had been obtained previously in \cite{KVV12} using the SDE description of $\text{Sine}_\beta$, and was also obtained in \cite{Lam21} by using a coupling between the circular $\beta$-ensembles and $\text{Sine}_\beta$ introduced in \cite{VV20}.

CLTs for the Coulomb gas are also of interest, and have been obtain using similar approaches to this paper at macroscales in \cite{LS18} in dimension $2$ and at all scales in \cite{S22} in dimension $2$ and for special cases in dimension $3$. The argument of \cite{S22} uses the local laws of \cite{AS21} as input, but differs from the study here in that it does not require a CLT type estimate to run the bootstrap to obtain the local laws of \cite{AS21}. The difficulty in obtaining such results in higher dimensions is connected to the increased difficulty in obtaining precise expansions of the free energy. 

\cite{BMP22} recently demonstrated an optimal local law for fluctuations of $\beta$-ensembles, phrased in terms of the Stieljes transform of the fluctuations, using loop equations. Since their local law is a control on the $L^q$ norms of the Stieljes transform evaluated at specific points in the bulk of the equilibrium measure and ours is a consequence of exponential moment control of an $L^2$ average of the Stieljes transform, it seems that neither their local law nor ours directly implies the other. Using their local law, they were able to prove optimal rigidity for particle locations and Central Limit Theorems for singular fluctuations. In particular, they consider fluctuations of 
\begin{equation*}
L_N:=-\log |\cdot| \ast \left(\sum_{i=1}^N \delta_{x_i}-N\muv\right),
\end{equation*}
which they show behaves like a log correlated field. As a result, they are able to deduce a Central Limit Theorem for the location of the eigenvalues. It is important to point out that there is some difference between what we call a CLT (namely, one for fluctuations of smooth linear statistics) and what these and other authors sometimes refer to as a CLT (namely, one for fluctuations of the number of points). As with our local laws, there is no clear implication in either direction. Our method is unable to directly handle such singular test functions since a certain smoothness is required for our transport approach. However, we are able to consider the case of much more general potential (we only need mild smoothness conditions on $V$, where they require analyticity) and are also able to consider the multicut case. We expect our method to work in the critical case as well. We do not pursue that here, as it introduces additional difficulties in estimating the transport map.

\subsection{Outline of the Proof}\label{Outline of the Proof}
We would like to comment here on the method of proof and the outline of the paper. The approach is inspired by the bootstrap on scales used in \cite{L17} and \cite{AS21} to prove local laws for the Coulomb Gas, although several additional difficulties are present in our study due to the logarithm here being the solution kernel of a nonlocal operator. In particular, we need to couple the proofs of Theorem \ref{Local Law} with Theorem \ref{Uniform Fluctuations} in the same bootstrap procedure in order to control the electric field away from the axis in $\R^2$. More exactly, we need to show that Theorem \ref{Local Law} at scale $2L'$ implies Theorem \ref{Uniform Fluctuations} at scale $2L'$, so we can use the estimate in Theorem \ref{Uniform Fluctuations} to obtain the control we need to prove Theorem \ref{Local Law} down to scale $L'$. The approach we present has the benefit of translating well to the analysis of higher dimensional Riesz gases.

In $\S \ref{Section Basic Results}$ we introduce the main objects of the paper, including the next order subadditive and superadditive local energies and partition functions. There we describe the extension procedure (\cite{CS07}) that allows us to view the energies in $\Omega$ as connected to solutions of local problems in $\Omega \times [-L,L]$. We prove some basic a priori estimates on these quantities, and cite some known results at macroscopic scales. 

$\S \ref{Section Bootstrap on Scales}$ contains the main bootstrap argument, which assumes Theorem \ref{Local Law} at scales $2^kL'$ for $k \geq 1$ and seeks to prove the same result at scale $|\Omega|\sim L'$ by coupling a Chernoff bound with an estimate of the form 
\begin{equation*}
\Esp \exp \left(\frac{\beta}{2}\left(\G^{\Omega}(\XN', \mu)+C_0 \#(\XN' \cap \Omega)\right)\right)\indic_{\mathcal{G}} \leq \exp(\mathcal{C}L')
\end{equation*}
for some good event $\mathcal{G}$ and constants $\mathcal{C}$ and $C_0$ independent of $L'$. Using the sub and superadditivity of the energies established in $\S \ref{Section Basic Results}$, one can compute on a good event $\mathcal{G}_n$ with $n$ points in $\Omega$
\begin{align*}
&\Esp \left(\exp \left(\frac{\beta}{2}\G^{\Omega}(\XN', \mu)\right)\indic_{\mathcal{G}_{n}}\right) \\
&\leq \frac{1}{N^N\K^{\mathrm{int}}(\R)}{N \choose n}\int_{\Omega^n \cap \mathcal{G}_n}\exp \left(-\frac{\beta}{2}\H^{\mathrm{int}}_{\epsilon,\frac{L}{2}}(X_n,\Omega)+\frac{\beta}{2}C_0n\right)~d\rho^{\otimes n}(X_n) \\
&\times \int_{(\Omega^c)^{N-n}\cap \mathcal{G}_n}\exp \left(-\beta \H^{\mathrm{ext}}_{\epsilon, \frac{3L}{2}}(X_{N-n},\Omega^c)\right)~d\rho^{\otimes N-n}(X_{N-n})
\end{align*}
where $\K^{\mathrm{int}}$ denotes the next-order partition function, $\rho$ is a measure that incorporates the confinement effects of the effective potential (defined in $\S \ref{Section Basic Results}$) and $\H_{\epsilon, h}$ are interior and exterior ``screenable" minimal energies defined in $\S \ref{Section Basic Results}$ with adequate decay from the axis phrased in terms of parameters $\epsilon$ and $h$ that will be specified in $\S \ref{Section Basic Results}$ and $\S \ref{Section Bootstrap on Scales}$. Using a screening procedure modified from \cite{PS17} and optimized as in \cite{AS21}, we can modify these generic good configurations and replace their associated electric fields with ``screened" fields having nice Neumann boundary conditions, effectively splitting the field into two independent components. Replacing the field external to $\Omega$ requires a new approach to the ``inner" screening introduced in \cite{AS21}, which we describe in Appendix B. Then, we can use subadditivity of the local next order partition functions to write
\begin{align*}
\Esp \left(\exp \left(\frac{\beta}{2}\G^{\Omega}(\XN', \mu)\right)\indic_{\mathcal{G}_{n}}\right)\lesssim \frac{\K^{\mathrm{int}}\left(\Omega, \frac{\beta}{2}\right)}{\K^{\mathrm{int}}(\Omega, \beta)}\exp \left(\frac{\beta}{2}C_0n+ \text{ screening errors }\right).
\end{align*}
Many of the screening errors can be easily controlled using the control on next-order energies at scale $2^kL'$, although we run into difficulties controlling the energy away from the axis
\begin{equation*}
L'\int_{\Omega \times \{\pm h\}}|\nabla u|^2,
\end{equation*}
which a priori is of order $L'$ (the same order that we are looking to obtain energy bounds at). To handle this term, we observe that the components of $\nabla u$ are fluctuations of smooth test functions of the forms 
\begin{equation}\label{definition of rho}
\kappa_{a,h}(x)=2\pi \frac{-(a-x)}{(a-x)^2+h^2}
\end{equation}
and 
\begin{equation}\label{definition of zeta}
\zeta_{a,h}(x)=-2\pi h\frac{1}{(a-x)^2+h^2},
\end{equation}
where $a$ and $h$ are scale dependent quantities that are defined in $\S \ref{Section Bootstrap on Scales}$, and can thus be well controlled as soon as we have good estimates for these fluctuations. The control from (\ref{energy controls fluctuations}) (which controls fluctuations at an order which is a square root of the energy) yields insufficiently strong bounds, which forces us to seek a stronger bound like the one in Theorem \ref{Uniform Fluctuations}. We obtain the requisite estimate assuming that control, and postpone its proof to $\S \ref{Section Uniform Bound on Fluctuations}$.

$\S \ref{Section Uniform Bound on Fluctuations}$ proves the boundedness of fluctuations that we require for the bootstrap on scales, in traditional coordinates to match the existing literature. Using the Laplace transform method of \cite{J98} and the transport approach of \cite{BLS18} and \cite{L21}, we can write for sufficiently smooth test functions $\xi$
\begin{align*}
&\nonumber \Esp_{\PNbeta}[\exp(s\Fluct_N(\xi))\indic_{\mathcal{G}}]\\
&=\exp \left(\frac{s^2}{\beta}\|\xi\|_{H^{1/2}}^2+N\textsf{Error}_1\right)\Esp_{\PNbeta}\left[\exp \left(-\frac{\beta}{2}\left(\textsf{Error}_2+\textsf{Error}_3\right)\right)\indic_{\Phi_t^{-1}(\mathcal{G})}\right],
\end{align*}
where $\Phi_t$ is the map on coordinates $x \mapsto x+t\psi(x)$. $\psi$ is chosen in such a way that the push-forward $\tilde{\muvt}=\Phi_t\#\muv$ more or less approximates the equilibrium measure $\muvt$ for the potential $V_t=V+t\xi$. This is accomplished by searching for a transport $\psi$ that satisfies the Euler-Lagrange equation
\begin{equation*}
(h^{\tilde{\muvt}}+V_t)\circ \Phi_t=\text{constant}
\end{equation*}
on $\Sigma_V$ to leading order in $t$. Since $\Phi_t\#\muv \sim \muv-t(\psi \muv)'$, we seek $\psi$ satisfying 
\begin{equation*}
h^{(\psi \muv)'}=\xi+\text{constant}
\end{equation*}
in $\Sigma_V$, which leads us (after an integration by parts) to the definition (\ref{transport}) that is common in the literature (see \cite{BLS18}). 

The terms $\textsf{Error}_i$ are fluctuations involving $\psi$. $\textsf{Error}_1$ and $\textsf{Error}_3$ are easily dealt with using fine estimates on $\psi$ obtained in Lemmas \ref{Rescaled Transport Estimates} and \ref{Local Laws Transport Estimates} coupled with (\ref{energy controls fluctuations}) and Theorem \ref{Local Law}, up to scales $2^kL$. There is some difficulty in obtaining the necessary estimates on $\psi$, since it depends nonlocally on $\xi$; these estimates are postponed to Appendix C. $\textsf{Error}_2$ is more subtle, and involves an anisotropy term
\begin{equation*}
\int \frac{\psi(x)-\psi(y)}{x-y}d\fluct_N (x)d\fluct_N (y)
\end{equation*}
which is just barely too singular for (\ref{energy controls fluctuations}) to provide sufficiently good bounds. This however can be localized and controlled using the functional inequality \cite[Proposition 1.1]{S20} which is rephrased as a commutator estimate in \cite[Proposition 3.1]{NRS21} and localized to the support of the transport in \cite{RS22}. Our estimate is specific to one dimension, and similar more singular anisotropy terms will appear in the analogous study of higher dimensional Riesz gases. We don't expect this to be a large obstacle in that study since commutator estimates are also available in that regime, but it will be necessarily more difficult.

$\S \ref{Section CLT Estimate}$ upgrades the boundedness of $\S \ref{Section Uniform Bound on Fluctuations}$ to prove a CLT for all mesoscopic fluctuations. The proof relies on an upgraded analysis of the error terms using the local law of Theorem \ref{Local Law} at all scales. $\textsf{Error}_1$ and $\textsf{Error}_3$ are easily upgraded, but $\textsf{Error}_2$ requires a finer analysis which involves upgrading the control on anisotropy. We expand upon a Fourier transform trick introduced in \cite{BLS18}, writing
\begin{align*}
\frac{\psi(x)-\psi(y)}{x-y}=\int_0^1 \psi'(sx+(1-s)y)~ds&
=c\int _0^1 \int_{\R}\mathcal{F}(\psi')(\lambda)e^{i\lambda sx}e^{i\lambda(1-s)y}~d\lambda ds
\end{align*}
where $\mathcal{F}$ denotes the Fourier transform. This allows us to estimate anisotropy using fine estimates on $\psi$ from Lemma \ref{Rescaled Transport Estimates} to control $\mathcal{F}(\psi')$, and by using Theorem \ref{Uniform Fluctuations} to provide improved control on the fluctuations of $e^{i\lambda sx}$ and $e^{i\lambda(1-s)y}$ as compared to that given by (\ref{energy controls fluctuations}) and Theorem \ref{Local Law}.

The rest of the paper is devoted to various technical estimates, which we push to the appendices. Appendix A discusses some auxiliary computations on the energies, and proves controls on fluctuations and discrepancies. Appendix B contains the proof of the screening statement Proposition \ref{screening result}, which has been adapted and optimized from \cite{PS17}, and a discussion of our new approach to ``inner" screening. Appendix C contain the technical estimates on the transport $\psi$ and associated maps of importance to the fluctuation estimates, and Appendix D proves a technical moment bound that may exist but we were unable to locate in the literature.
%

\subsection{Acknowledgements} The author would like to thank their Ph.D. advisor Sylvia Serfaty for suggesting the problem, and for helpful discussions and guidance throughout the process. They would also like to thank Thomas Leblé for useful conversations in the development of the argument, and for close readings that improved the paper. This material is based upon work supported by the National Science Foundation under Grant No. DGE1839302.


\section{Basic Results}\label{Section Basic Results}
Our interest is in next order asymptotics at all scales. To proceed then, we split off the first order deterministic asymptotics and ``blow up" so that microscopic behavior is visible at order $1$. This splitting is accomplished by the following lemma introduced in \cite{SS15-2} and for the $\beta$-ensembles in \cite{SS15-1}.
\begin{lem}[Splitting Formula]\label{Splitting Formula}
Given any configuration $\XN \in \R^N$,
\begin{align*}
\HNV(\XN)&=N^2\I_V(\muv)+N\sum_{i=1}^N\zeta_V(x_i)+\FN(\XN,\muv)
\end{align*}
where $\FN(\XN,\muv)$ is defined by 
\begin{equation*}
\FN(\XN,\muv)=\frac{1}{2}\int_{\Delta^c}\g(x-y) \left(\sum_{i=1}^N \delta_{x_i}-N\muv\right)(x) \left(\sum_{i=1}^N \delta_{x_i}-N\muv\right)(y)
\end{equation*}
for a configuration of points $\XN \subset \R^N$.
\end{lem}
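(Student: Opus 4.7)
The lemma is a direct algebraic identity. The plan is to expand $F_N(X_N, \muv)$ by distributing the tensor product of the signed measure $\sum_{i=1}^N \delta_{x_i} - N\muv$ with itself, handle the removal of the diagonal carefully, and then reconcile the constants using the Euler-Lagrange equation satisfied by $\muv$.

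Expanding yields four pieces. For the atomic-atomic part, excluding $\Delta$ kills exactly the $i = j$ contributions and produces $\frac{1}{2}\sum_{i\neq j}\g(x_i - x_j)$. For the two mixed terms, $\muv$ is absolutely continuous by Assumption A3, so $\Delta$ carries no product mass and the restriction to $\Delta^c$ is immaterial; by symmetry of $\g$ these terms combine to $-N\sum_i h^{\muv}(x_i)$, where $h^{\muv} = \g \ast \muv$. The pure $\muv \otimes \muv$ term contributes $\frac{N^2}{2}\int \g(x-y)\,d\muv(x)\,d\muv(y)$.

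Next I would substitute the identity $h^{\muv}(x_i) = \zeta_V(x_i) + c_V - V(x_i)$, coming from the definition of the effective potential, into the cross-term contribution to obtain
\[
-N\sum_i h^{\muv}(x_i) = -N\sum_i \zeta_V(x_i) - N^2 c_V + N\sum_i V(x_i).
\]
The confining potential sum $N\sum_i V(x_i)$ then combines with the atomic piece to recover $\HNV(X_N)$, so that
\[
F_N(X_N,\muv) = \HNV(X_N) - N\sum_i \zeta_V(x_i) + \tfrac{N^2}{2}\int \g(x-y)\,d\muv(x)d\muv(y) - N^2 c_V.
\]

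The final step is to match the remaining constant with $-N^2 \I_V(\muv)$. Integrating the Euler-Lagrange identity $h^{\muv} + V = c_V$ against $d\muv$ on $\Sigma_V$, and using that $\muv$ is a probability measure supported there, yields $\int \g(x-y)\,d\muv(x)\,d\muv(y) + \int V\, d\muv = c_V$. Combined with the definition $\I_V(\muv) = \frac{1}{2}\int \g(x-y)\,d\muv(x)\,d\muv(y) + \int V\, d\muv$, this gives $\frac{1}{2}\int \g(x-y)\,d\muv(x)\,d\muv(y) - c_V = -\I_V(\muv)$, completing the identity. The only subtle point is handling the diagonal in the various integrals, which is immediate from the absolute continuity of $\muv$; I expect no genuine obstacle in this proof.
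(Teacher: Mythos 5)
Your proof is correct and is, at its core, the same bilinear expansion the paper gestures at (the paper writes $\HNV$ in integral form, substitutes $\sum\delta_{x_i} = \fluct_N + N\muv$, and "expands"); you simply run the identical algebra in the reverse direction, starting from $F_N$ and working back to $\HNV$. The one piece of content the paper's sketch elides but you make explicit — integrating the Euler--Lagrange identity $h^{\muv}+V=c_V$ against $d\muv$ to verify $c_V - \tfrac12\int\g\,d\muv\,d\muv = \I_V(\muv)$ — is precisely what is needed to close the constant-matching, and you do it correctly.
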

\begin{proof}
The proof is standard, see for instance \cite{BLS18} for this specific formulation. One writes 
\begin{equation*}
\HNV(\XN)=\frac{1}{2}\int \g(x-y)\sum_{i \ne j}\delta_i(x)\delta_j(y)+N \int V\sum_{i=1}^N \delta_{x_i}(x)
\end{equation*}
and expands $\sum \delta_{x_i}=\left(d\fluct_N\right)+N\muv$.
\end{proof}
We would now like to change coordinates so that the typical distance between particles is of order $1$. Setting $\XN'=N\XN$ and $\muv'(x)=\muv \left(\frac{x}{N}\right)$, we compute directly 
\begin{equation*}
\FN(\XN,\muv)=\frac{1}{2}\int \g(x-y)\left(\sum_{i=1}^N \delta_{x_i'}-\muv'\right)(x)\left(\sum_{i=1}^N \delta_{x_i'}-\muv'\right)(y)-\frac{1}{2}N\log N.
\end{equation*}
Motivated by this computation, for any measure $\mu$ of total mass $N$ we define 
\begin{equation}\label{definition of next order energy}
\G(\XN,\mu)=\frac{1}{2}\int \g(x-y)\left(\sum_{i=1}^N \delta_{x_i}-\mu\right)(x)\left(\sum_{i=1}^N \delta_{x_i}-\mu\right)(y)
\end{equation}
and observe that the above computation yields the identification 
\begin{equation*}
\FN(\XN,\muv)+\frac{1}{2}N\log N=\G(\XN', \muv').
\end{equation*}
A similar computation yields the relation 
\begin{equation*}
N\zeta_V(x_i)=\zeta'_V(x_i'),
\end{equation*}
where 
\begin{equation*}
\zeta'_V(x)=\int \g(x-z)~d\muv'(z)+NV \left(\frac{x}{N}\right)-Nc_V+N\log N.
\end{equation*}
Observe that with this identification and the splitting formula (Lemma \ref{Splitting Formula}), we have
\begin{align*}
\ZNbeta
&=\exp \left(-\beta N^2 \I_V(\muv)+\frac{\beta}{2}N\log N\right)\int_{\R^N}\exp \left(-\beta \G(\XN',\muv')\right)\prod_{i=1}^N\frac{e^{-\beta \zeta_V'(x_i')}}{N}dx_i' \\
&=\exp \left(-\beta N^2 \I_V(\muv)+\frac{\beta}{2}N\log N\right)\K(\muv',\zeta_V')
\end{align*}
where in the last line we have defined the next-order partition function 
\begin{equation}\label{definition of next order partition function}
\K(\mu, \zeta)=\frac{1}{N^N}\int_{\R^N}\exp \left(-\beta \G(\XN,\mu)\right)~d\rho^{\otimes N}
\end{equation}
where $\rho$ is the blow up of the confinement density $e^{-\beta \zeta'(x')}~dx'$ under $\XN'=N\XN$ for a suitably integrable $\zeta \geq 0$. Observe that this measure is identically one off of the support of an effective potential $\zeta$, and decays exponentially fast elsewhere. We then also define the next-order Gibbs measure
\begin{equation*}
d\Q(\mu, \zeta)=\frac{1}{\K(\mu, \zeta)}\exp  \left(-\beta \G(\XN,\mu)\right)~d\rho^{\otimes N}.
\end{equation*}
These are well-defined as soon as $\zeta$ grows fast enough at infinity; in particular, they are well-defined for $\zeta_V'$. From our above expansion of the main partition function, we have 
\begin{equation*}
d\PNbeta=d\Q(\mu,\zeta)
\end{equation*}
for $\mu=\muv'$, $\zeta=\zeta_V'$.

\subsection{The Electrostatic Approach}
A useful tool in the analysis of Coulomb and Riesz gases is the so-called ``electrostatic" approach, introduced in \cite{SS15-2} and \cite{RS16} and used extensively in \cite{AS21}. This method relies on recognizing the log kernel as the fundamental solution of the Laplacian in two dimensions, and makes use of a potential theoretic approach to recognize the above quantity $\G(\XN, \mu)$ as a renormalized electrostatic energy. As detailed in \cite{PS17}, a similar approach can be made to work using the Caffarelli-Silvestre extension procedure \cite{CS07}. For a general measure $\mu$ we define 
\begin{equation}\label{mu potential}
h^\mu(x)=\g \ast \mu
\end{equation}
and observe that we can extend $h^\mu$ to $\R^2$ by 
\begin{equation}\label{mu potential extension}
h^\mu(X)=\int \g( X-(y,0))~d\mu(y)=\g \ast \mu\delta_\R
\end{equation}
with $X=(x_1,x_2)$. It follows that
\begin{equation}\label{laplacian mu potential}
-\Delta h^\mu=2\pi \left(\sum_{i=1}^N \delta_{(x_i,0)}-\mu\delta_\R\right)
\end{equation}
with $\mu\delta_{\R}$ as in (\ref{delta measure}). As observed in \cite{CS07}, in $\R$ this allows us to recover 
\begin{equation}\label{half laplacian mu potential}
\sqrt{-\Delta} h^\mu=\pi \left(\sum_{i=1}^N \delta_{x_i}-\mu\right).
\end{equation}
Ideally, we would now be able to integrate by parts in the plane to identify 
\begin{equation*}
\G(\XN,\mu)=\frac{1}{2}\int\g(x-y)|\left(\sum_{i=1}^N \delta_{(x_i,0)}-\mu\delta_{\R}\right)(x)\left(\sum_{i=1}^N \delta_{(x_i,0)}-\mu\delta_{\R}\right)(y)
\end{equation*}
with $\int |\nabla h^\mu|^2$, but this is impossible due to the blowup of $h^\mu$ at the points $(x_i,0)$. As such we use a renormalization procedure introduced for the log-gas in \cite{SS15-1} with truncation parameter equal to the minimal distance, as in \cite{AS21}.

To do this, for any $\eta>0$ define the truncation
\begin{equation}\label{planar truncation}
\overline{\f}_\eta(X)=(\g(X)-\g(\eta) )_+
\end{equation}
for $X=(x,y)\in \R^2$. Observe that 
\begin{equation}\label{laplacian truncation}
-\Delta \overline{\f}_{\eta}=2\pi \left(\delta_0-\delta_0^{(\eta)}\right),
\end{equation}
where $\delta_0^{\eta}$ is the uniform measure on a sphere of radius $\eta$, which we often refer to as a smeared Dirac. For any truncation vector $\vec{\eta}=(\eta_1,\dots,\eta_N)$ then, we define 
\begin{equation}\label{truncated potential}
h^\mu_{\vec{\eta}}(X)=h^\mu(X)-\sum_{i=1}^N \overline{\f}_{\eta_i}(X-X_i)
\end{equation}
and observe that 
\begin{equation}\label{laplacian truncated potential}
-\Delta h^\mu_{\vec{\eta}}=2\pi \left(\sum_{i=1}^N \delta_{(x_i,0)}^{(\eta_i)}-\mu\delta_\R\right).
\end{equation}
For our computations, we will often make use of the identification
\begin{equation}\label{planar truncation is 1d truncation}
\int_{U \times [-h,h]}\overline{\f}_\eta(x,y)~d\mu\delta_{\R}=\int_U \overline{\f}_\eta(x,0)~d\mu
\end{equation}
for $h>\eta$, where we have defined 
\begin{equation}\label{1d truncation}
\f_{\eta}(x)=(\g(x)-\g(\eta))_+.
\end{equation}
The truncation parameter we will choose is the following minimal distance, as in \cite{AS21}:
\begin{equation}\label{minimal distance}
\rr_i:=\frac{1}{4}\min \left(\min_{j \ne i} |x_i-x_j|, 1\right).
\end{equation}
With this truncation defined, we can now integrate by parts.
\begin{lem}\label{electrical view of next order energy}
Let $\XN$ be a configuration of points in $\R$, and let $\vec{\eta}$ denote a truncation vector with $\eta_i \leq \rr_i$ for all $1 \leq i \leq N$. Then,
\begin{equation*}
\G(\XN,\mu)=\frac{1}{4\pi}\left(\int_{\R^2}|\nabla h_{\vec{\eta}}^\mu|^2-2\pi \sum_{i=1}^N\g(\eta_i)\right)-\sum_{i=1}^N \int_{\R}\f_\eta(x-x_i)~d\mu(x).
\end{equation*}
\end{lem}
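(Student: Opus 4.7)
The plan is to identify $h^\mu_{\vec\eta}$ as the full convolution $\g\ast\nu_{\vec\eta}$, where $\nu_{\vec\eta} := \sum_i \delta_{X_i}^{(\eta_i)} - \mu\delta_\R$, then integrate by parts in $\R^2$ to convert the Dirichlet energy into a double integral against $\nu_{\vec\eta}\otimes\nu_{\vec\eta}$, and finally separate the atomic from the continuous contributions. The hypothesis $\eta_i\le\rr_i$ will enter in exactly one place: it ensures that the smeared Diracs $\delta_{X_i}^{(\eta_i)}$ are pairwise disjoint, so that the mean value property collapses the off-diagonal point--point interactions to the unsmeared logarithms $\g(x_i-x_j)$.

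First I would verify the convolution identity $h^\mu_{\vec\eta} = \g\ast\nu_{\vec\eta}$. Since the planar potential of a smeared Dirac is $\g\ast\delta_0^{(\eta)}(X) = \g(\max(|X|,\eta))$, we have $\g-\overline{\f}_\eta = \g\ast\delta_0^{(\eta)}$; substituting into (\ref{truncated potential}) and using (\ref{mu potential extension}) gives the identity, consistent with (\ref{laplacian truncated potential}). Next, because $\mu$ has compact support and $\nu_{\vec\eta}$ carries zero total charge, $h^\mu_{\vec\eta}(X) = O(|X|^{-1})$ and $\nabla h^\mu_{\vec\eta}(X) = O(|X|^{-2})$ at infinity, so the boundary term vanishes and
\[
\int_{\R^2}|\nabla h^\mu_{\vec\eta}|^2\,dX \;=\; -\int_{\R^2} h^\mu_{\vec\eta}\,\Delta h^\mu_{\vec\eta}\,dX \;=\; 2\pi\int_{\R^2} h^\mu_{\vec\eta}\,d\nu_{\vec\eta}.
\]
This identity is to be read in the distributional sense; the jump of $\nabla h^\mu_{\vec\eta}$ across the real axis is precisely encoded by the singular component $-\mu\delta_\R$ of $\nu_{\vec\eta}$, so no separate boundary contribution from the axis is missed.

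It then remains to expand $2\pi\iint \g(X-Y)\,d\nu_{\vec\eta}(X)\,d\nu_{\vec\eta}(Y)$ and match terms. The point--point piece splits into off-diagonal and diagonal contributions: for $i\neq j$ the assumption $\eta_i+\eta_j\le \tfrac12|x_i-x_j|$ makes the two circles disjoint, and applying the mean value property twice gives $\iint\g\,d\delta_{X_i}^{(\eta_i)}\,d\delta_{X_j}^{(\eta_j)} = \g(x_i-x_j)$, while the $i=j$ self-interaction contributes the renormalization constant $\g(\eta_i)$. The cross term is evaluated using $(\g\ast\delta_{X_i}^{(\eta_i)})(Y) = \g(\max(|X_i-Y|,\eta_i))$ restricted to $Y=(x,0)$, giving $\int_\R \g(x-x_i)\,d\mu(x) - \int_\R \f_{\eta_i}(x-x_i)\,d\mu(x)$ after peeling off the positive part $\f_{\eta_i}$. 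The background--background term is $\iint\g(x-y)\,d\mu(x)\,d\mu(y)$, unchanged off the diagonal since $\mu$ is atomless. Comparing with the expansion $2\G(\XN,\mu) = \sum_{i\ne j}\g(x_i-x_j) - 2\sum_i \int \g(x-x_i)\,d\mu(x) + \iint\g(x-y)\,d\mu(x)\,d\mu(y)$ of (\ref{definition of next order energy}) then yields
\[
\int_{\R^2}|\nabla h^\mu_{\vec\eta}|^2 \;=\; 4\pi\,\G(\XN,\mu) + 2\pi\sum_i \g(\eta_i) + 4\pi\sum_i \int_\R \f_{\eta_i}(x-x_i)\,d\mu(x),
\]
and rearranging gives the claim. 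The whole argument is entirely standard in the Coulomb-gas renormalization literature; the only real bookkeeping obstacle is the isolation of the $i=j$ self-energy $\sum_i \g(\eta_i)$—precisely the divergent piece that the truncation $\overline{\f}_{\eta_i}$ was designed to excise—and the verification that the condition $\eta_i\le\rr_i$ is exactly what is needed for the off-diagonal mean-value reduction.
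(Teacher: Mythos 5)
Your proof is correct and is exactly the "standard integration by parts" argument the paper delegates to \cite[Proposition 1.6]{PS17}: identify $h^\mu_{\vec\eta}$ as $\g\ast\bigl(\sum_i\delta_{(x_i,0)}^{(\eta_i)}-\mu\delta_\R\bigr)$, integrate by parts using the decay at infinity of the field of a neutral, compactly supported charge, then expand $\iint\g\,d\nu_{\vec\eta}\otimes d\nu_{\vec\eta}$ and use disjointness of the smeared Diracs (guaranteed by $\eta_i\le\rr_i$) together with the mean value property of $\g$ to recover the unsmeared off-diagonal interactions, the diagonal renormalization $\g(\eta_i)$, and the $\f_{\eta_i}$ correction in the cross term. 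The bookkeeping matches the claimed identity exactly.
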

\begin{proof}
The proof is a standard integration by parts using the decay of $h_{\vec{\eta}}^\mu$ at infinity, cf. \cite[Proposition 1.6]{PS17}.
\end{proof}
This electrostatic understanding provides a natural way of approaching local configuration energy, which we define in the following subsection.

\subsection{Local Energies and Minimal Distance}
In what follows, we consider $\Omega \subset \R$ with $\Omega=\R$ or a closed and bounded interval, or the complement of such a set. Many of the quantities here are analogous to those defined for the Coulomb gas in \cite{AS21}, and have natural analogs for Riesz gases in higher dimensions. For the rest of this section and $\S 3$, we remove the $'$ from the blown-up scales $L'$ when it is clear and simply denote them as $L$ for ease of notation, since it is evident that we are working at blown up scale.

We will need a minimal distance that respects distance from the boundary of $\Omega$. Let
\begin{equation}\label{local minimal distance}
\rrc_i:=\frac{1}{4}\begin{cases}\min \left(\min_{x_j \in \Omega, j \ne i} |x_i-x_j|,1\right) & \text{if }\dist(x_i,\partial \Omega)\geq \frac{1}{2} \\
1 & \text{otherwise.}
\end{cases}
\end{equation}
Notice that if $x_i \in \Omega ' \subset \Omega$, then
\begin{equation}\label{decreasing minimal distance}
\rrc_i(\Omega')\geq \rrc_i(\Omega).
\end{equation}
Suppose $\mu$ is a measure on $\R$ with $\mu(\R)=N \in \mathbb{N}$ and $X_N$ is a configuration in $\R$. Let $|\Omega|\in[L,2L]$, and denote $\tilde{\Omega}$ by 
\begin{equation}\label{extension set}
\tilde{\Omega}:=\Omega \times [-L,L].
\end{equation} 
We define the \textit{true local energy} $\G^\Omega$ by
\begin{equation}\label{true local energy}
\G^\Omega(\XN,\mu)=\frac{1}{4\pi}\left(\int_{\tilde{\Omega}}|\nabla u_{\rrc}|^2-2\pi \sum_{x_i \in \Omega} \g(\rrc_i)\right)-\sum_{x_i \in \Omega} \int_{\R}\f_{\rrc_i}(x-x_i)~d\mu(x),
\end{equation}
with $u$ the electrostatic potential (\ref{electrostatic potential}). At times we will need to consider local energies of fields defined by other potentials as well. If $w$ is any electrostatic potential solving 
\begin{equation}\label{viable potential}
-\Delta w=2\pi \left(\sum_{i=1}^N \delta_{(x_i,0)}-\mu\delta_{\R}\right),
\end{equation}
in $\tilde{\Omega}$ with $\tilde{\Omega}$ as in (\ref{extension set}), then we will correspondingly define the \textit{interior local energy} $\E^\mathrm{int}(w,\Omega)$ of that potential by 
\begin{equation}\label{general potential local energy}
\E^\mathrm{int}(w, \Omega)=\frac{1}{4\pi}\left(\int_{\tilde{\Omega}}|\nabla w_{\rrc}|^2-2\pi \sum_{x_i \in \Omega} \g(\rrc_i)\right)-\sum_{x_i \in \Omega} \int_{\R}\f_{\rrc_i}(x-x_i)~d\mu(x)
\end{equation}
and relative to $\Omega' \subset \Omega$ the \textit{exterior local energy} $\E^\mathrm{ext}(w,\Omega^c)$ of that potential by 
\begin{equation}\label{general potential local energy external}
\E^\mathrm{ext}(w, \Omega^c)=\frac{1}{4\pi}\left(\int_{\tilde{\Omega}\setminus \tilde{\Omega'})}|\nabla w_{\rrc}|^2-2\pi \sum_{x_i \in \Omega \setminus \Omega'} \g(\rrc_i)\right)-\sum_{x_i \in \Omega \setminus \Omega'} \int_{\R}\f_{\rrc_i}(x-x_i)~d\mu(x),
\end{equation}
with $\tilde{\Omega},$ $\tilde{\Omega'}$ as in (\ref{extension set}).

In our analysis, we will need sub and superadditive approximations of our local energy that are purely local quantities, unlike the above energies which depend on the global configuration. We proceed to define those now.

\subsubsection{Superadditive Approximation}
Our superadditive quantities are analogous to Dirichlet energies. We cannot literally take a Dirichlet energy as our definition, since we will want to ``screen" the electric field corresponding to our superadditive quantity. In a rough sense, screening modifies an electric field to create a Neumann boundary condition that allows us to effectively isolate a region of the field. This process requires an appropriate quantitative decay of the field away from the axis (\ref{screenability}), and the electric field corresponding to a Dirichlet potential may or may not possess adequate decay. To get around this issue, we instead minimize over all screenable configurations that exhibit appropriate decay.

Suppose that $X_{\mn}$ is a configuration in $\Omega$, and $\Omega\subset \R$ is as described at the beginning of this subsection. Let $\tilde{\Omega}$ be as in (\ref{extension set}). Define
\begin{equation}\label{Dirichlet energy}
\H_{a,h}^{\mathrm{int}}(X_{\mn}, \mu,\Omega)=\frac{1}{4\pi}\left(\int_{\tilde{\Omega}}|\nabla w_{\rrc}|^2-2\pi\sum_{i=1}^{\mn} \g(\rrc_i)\right)-\sum_{i=1}^{\mn} \int_{\R}\f_{\rrc_i}(x-x_i)~d\mu(x)
\end{equation}
where $w$ is a potential minimizing
\begin{equation*}
\frac{1}{4\pi}\left(\int_{\tilde{\Omega}}|\nabla w_{\rrc}|^2-2\pi\sum_{i=1}^{\mn} \g(\rrc_i)\right)-\sum_{i=1}^{\mn} \int_{\R}\f_{\rrc_i}(x-x_i)~d\mu(x)
\end{equation*}
such that $w$ solves (\ref{viable potential}) in $\tilde{\Omega}$ while satisfying the screenability condition (\ref{screenability}) with additional decay $\int_{2\Omega \times \{\pm h\}}|\nabla w|^2 \leq a$. If the Dirichlet potential for $\Omega$ is screenable and has the appropriate decay away from the axis, then we can write
\begin{equation*}
\H_{a,h}^{\mathrm{int}}(X_{\mn}, \mu, \Omega):=\frac{1}{4\pi}\left(\int_{\tilde{\Omega}}|\nabla w_{\rrc}|^2-2\pi\sum_{i=1}^{\mn} \g(\rrc_i)\right)-\sum_{i=1}^{\mn} \int_{\R}\f_{\rrc_i}(x-x_i)~d\mu(x),
\end{equation*}
where $w$ solves the Dirichlet problem
\begin{equation*}
\begin{cases}
-\Delta w=2\pi \left(\sum_{i=1}^{\mn} \delta_{x_i}-\mu \delta_{\R}\right)& \text{in }\tilde{\Omega}, \\
w_{\widetilde{r}}=0 & \text{on }\partial \tilde{\Omega}.
\end{cases}
\end{equation*}
Analogously, for $\Omega'\subset \Omega$ with $\tilde{\Omega}$, $\tilde{\Omega'}$ as in (\ref{extension set}) we define an external energy by 
\begin{equation}\label{Dirichlet energy - outside}
\H_{a,h}^{\mathrm{ext}}(X_{\mn}, \mu,\Omega' \setminus \Omega)=\frac{1}{4\pi}\left(\int_{\tilde{\Omega}\setminus \tilde{\Omega'}}|\nabla w_{\rrc}|^2-2\pi\sum_{i=1}^{\mn} \g(\rrc_i)\right)-\sum_{i=1}^{\mn} \int_{\R}\f_{\rrc_i}(x-x_i)~d\mu(x) 
\end{equation}
where $w$ is a potential minimizing 
\begin{equation*}
\frac{1}{4\pi}\left(\int_{\tilde{\Omega}\setminus \tilde{\Omega'}}|\nabla w_{\rrc}|^2-2\pi\sum_{i=1}^{\mn} \g(\rrc_i)\right)-\sum_{i=1}^{\mn} \int_{\R}\f_{\rrc_i}(x-x_i)~d\mu(x)
\end{equation*}
such that $w$ solves (\ref{viable potential}) in $\tilde{\Omega}\setminus \tilde{\Omega'}$ while satisfying the screenability condition (\ref{screenability}) with additional decay $\int_{2\Omega \times \{\pm h\}}|\nabla w|^2 \leq a$.

\subsubsection{Subadditive Approximation}
Our subadditive quantity is analogous to a Neumann energy. For any $\Omega \subset \R$ we first define an increasing minimal distance
\begin{equation}\label{subadditive local minimal distance}
\rrh_i:=\frac{1}{4}\min \left(\min_{x_j \in \Omega, j \ne i} |x_i-x_j|, \dist(x_i, \partial \Omega),1\right)
\end{equation}
and observe that if $x_i \in \Omega' \subset \Omega$,
\begin{equation}\label{increasing minimal distance}
\rrh_i(\Omega')\leq \rrh_i(\Omega).
\end{equation}
At times (see Lemma \ref{global bound on partition functions}), we will also need to add additional energy to point near the boundary in order to maintain a coercive energy. With that in mind, we define
\begin{equation*}
\GG(x_i):=\left(\g\left(\frac{1}{4}\dist(x_i, \partial \Omega)\right)-\g\left(\frac{1}{4}\right)\right)_+.
\end{equation*} 
Now, let $X_{\mn}\subset \Omega$ with $\Omega=\mn$. We define an internal subadditive energy by 
\begin{equation}\label{Neumann energy}
\G^{\mathrm{int}}(X_{\mn},\mu, \Omega):=\frac{1}{4\pi}\left(\int_{\tilde{\Omega}}|\nabla u_{\rrh}|^2-2\pi\sum_{i=1}^{\mn} \g(\rrh_i)\right)-\sum_{i=1}^{\mn} \int_{\R}\f_{\rrh_i}(x-x_i)d\mu(x)+\sum_{i=1}^{\mn} \GG(x_i)
\end{equation}
where $\tilde{\Omega}$ is as in (\ref{extension set}). $u_{\rrh}$ is the truncated potential with $\rrh$ defined relative to $\Omega$, where $u$ is the unique solution to the Neumann problem
\begin{equation}\label{Neumann potential}
\begin{cases}
-\Delta u=2\pi \left(\sum_{i=1}^{\mn} \delta_{x_i}-\mu\delta_{\R}\right) & \text{in }\tilde{\Omega} \\
\nabla u \cdot \widehat{n}=0 &\text{on }\partial \tilde{\Omega}.
\end{cases}
\end{equation}
We note that this equation has a solution due to the assumption that $\mu(\Omega)=\mn$. Notice that when we take $\Omega=\R$, our Neumann condition requires decay at infinity which forces $u=\g\ast \left(\sum_{i=1}^N \delta_{x_i}-\mu\delta_{\R}\right)$, and so $\G^\Omega(\XN,\mu)$ and $\G^{\mathrm{int}}(\XN,\mu,\Omega)$ agree with $\G(\XN,\mu)$. Similarly, relative to $\Omega' \subset \Omega$ with $\mu(\Omega \setminus \Omega')=\mn$ we define an external subadditive energy by 
\begin{equation}\label{Neumann energy - external}
\G^{\mathrm{ext}}(X_{\mn},\mu, \Omega\setminus \Omega'):=\frac{1}{4\pi}\left(\int_{\tilde{\Omega}\setminus \tilde{\Omega'}}|\nabla u_{\rrh}|^2-2\pi\sum_{i=1}^{\mn} \g(\rrh_i)\right)-\sum_{i=1}^{\mn} \int_{\R}\f_{\rrh_i}(x-x_i)d\mu(x)+\sum_{i=1}^{\mn} \GG(x_i),
\end{equation}
where $\Omega$ and $\Omega'$ are as in (\ref{extension set}).

Finally, we define a local version of the next-order partition function and Gibbs measure. Assuming $\mu(\Omega)=\mn$ again, we define the internal partition function 
\begin{equation}\label{local next order partition function}
\K^{\mathrm{int}}(\mu,\zeta, \Omega):=\frac{1}{\mn^{\mn}}\int_{\Omega^{\mn}}\exp\left(-\beta \G^{\mathrm{int}}(X_{\mn},\mu, \Omega)\right)~d\rho^{\otimes \mn}
\end{equation}
where $\rho=e^{-\beta \zeta(x)}~dx$ is a measure that is of order $\mn$ defined in (\ref{definition of next order partition function}) and relative to $\Omega'\subset \Omega$ with $\mu(\Omega \setminus \Omega')=\mn$ the external partition function
\begin{equation}\label{local next order partition function - external}
\K^{\mathrm{ext}}(\mu,\zeta, \Omega\setminus \Omega'):=\frac{1}{\mn^{\mn}}\int_{(\Omega\setminus \Omega')^{\mn}}\exp\left(-\beta \G^{\mathrm{ext}}(X_{\mn},\mu, \Omega\setminus \Omega')\right)~d\rho^{\otimes \mn}.
\end{equation}

When it is clear what reference measure or potential we are working with respect to, we will often elide $\mu$ or $\zeta$ in the notation.

\subsection{Basic Results on Local Energies and Partition Functions}
In this section, we introduce a variety of lemmata that will be of use to us throughout the paper. Some results generalize directly from the analogous propositions for Coulomb gases discussed in \cite{AS21} with little to no proof; we remark and comment upon those here. We postpone any additionally necessary arguments to Appendix A. Various technical bounds on the energies and truncation parameters are also postponed to Appendix A.

The first tells us that we can restrict our analysis to intervals of integer mass. This will be useful to us when we need to screen configurations in ways which are neutralized by the background noise $\mu$. We state it without argument, as it translates exactly from \cite{AS21}. 
\begin{lem}\label{quantization}
Let $U$ be $\R$, a finite collection of closed intervals, or the complement of such a set. Let $\mathcal{Q}_L=\{\text{intervals }Q,\text{ length }\in [L,2L], \: \mu(Q) \in \mathbb{N}\}$. Assume $\mu \geq m>0$ in $U$. Then, $\exists~L_0>0$ such that for all $L>L_0$, there is a collection $\mathcal{K}_L \subset \mathcal{Q}_L$ such that 
\begin{equation*}
U=\bigcup_{K \in \mathcal{K}_L}K.
\end{equation*}
\end{lem}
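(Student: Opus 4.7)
The plan is a greedy construction from left to right using the intermediate value theorem (IVT). The key input is that, by the absolute continuity of $\mu$ (assumption A3) and the density lower bound $\mu \geq m > 0$, the distribution function $t \mapsto \mu([a, a+t])$ is continuous and grows at rate at least $m$. Consequently, on any $t$-window of length at least $1/m$ the function sweeps through an integer value, which is exactly the slack we need to force $\mu(K) \in \mathbb{N}$ while keeping $|K| \in [L, 2L]$.

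First I would reduce to the case of a single closed interval $U = [a,b]$, with $b - a$ large compared to $L$. The finite-union case follows by working component-wise (taking $L_0$ smaller than a fixed fraction of the shortest component), and the complement-of-bounded-set case reduces to working on each (possibly unbounded) component with the same iterative procedure, which simply never terminates when the component is unbounded. Fix $L_0 := 3/m$ and let $L > L_0$. Starting from $a_0 := a$, inductively construct $a_{k+1} := a_k + t_k$ with $t_k \in [L, 2L]$ as follows: define $f_k(t) := \mu([a_k, a_k+t])$, which is continuous in $t$ with $f_k(2L) - f_k(L) \geq m L > 3$. By IVT the image $f_k([L, 2L])$ contains at least three consecutive integer values, so I may pick $t_k \in [L,2L]$ with $f_k(t_k) \in \mathbb{N}$ and set $K_{k+1} := [a_k, a_{k+1}] \in \mathcal{Q}_L$. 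I iterate until the first index $K$ at which $b - a_K \leq 3L$.

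The only nontrivial point is handling the tail segment $[a_K, b]$, whose length is at most $3L$. Here I would exploit the fact that $\mathcal{K}_L$ is required only to \emph{cover} $U$, not to tile it disjointly. Working from the right endpoint, apply the symmetric IVT to $s \mapsto \mu([b-s, b])$ on $s \in [L, 2L]$ (continuous with total increment at least $mL > 3$) to pick $s \in [L, 2L]$ with $\mu([b-s, b]) \in \mathbb{N}$, and add $K^{\mathrm{right}} := [b-s, b]$ to $\mathcal{K}_L$. Since $s \geq L$, one has $b - s \leq a_K + 2L$, so $K^{\mathrm{right}}$ overlaps the left-going chain, and together (allowing at most one further left-going IVT step when $b - a_K \in (2L, 3L]$, which is possible because $b - a_{K} - L > L$ offers enough IVT window) they cover $[a_K, b]$.

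The main obstacle is exactly this endpoint: a disjoint partition with $\mu(K) \in \mathbb{N}$ on every piece is generally impossible because $\mu(U)$ need not be an integer. The permissive cover condition in the statement, together with the freedom of having a full factor of $2$ in the interval length range $[L, 2L]$ and the integer density $mL \gg 1$ guaranteed by $L > L_0$, is precisely what lets the IVT arguments on the left and on the right meet in a single overlapping pair and absorb the defect.
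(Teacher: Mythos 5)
The paper gives no proof of this lemma — it defers to \cite{AS21} with the remark ``We state it without argument, as it translates exactly from [AS21]'' — so there is no in-paper argument to compare against. Your greedy IVT strategy is a reasonable elementary route, and you correctly identify the core difficulty: $\mu(U)$ need not be an integer, so one cannot disjointly tile with integer-mass intervals and must instead exploit the covering freedom together with the $[L,2L]$ length slack.

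The endpoint-matching step, however, contains a genuine gap. The sentence ``Since $s \geq L$, one has $b - s \leq a_K + 2L$, so $K^{\mathrm{right}}$ overlaps the left-going chain'' is a non sequitur: the chain covers $[a, a_K]$ (or $[a, a_{K+1}]$ after your optional extra step), so overlap with $K^{\mathrm{right}}=[b-s,b]$ requires $b-s \leq a_K$ (resp.\ $\leq a_{K+1}$), not merely $b-s \leq a_K + 2L$. After the extra left step, $a_{K+1}$ can land anywhere in $[a_K + L, a_K + 2L]$ and $b-s$ anywhere in $[b-2L, b-L]$, and nothing you write forces $a_{K+1} \geq b - s$; a gap of size up to $2L$ could remain. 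The subcase $b-a_K \in (L,2L]$, where you take no extra step, is not addressed at all and admits a persistent residual gap of order $1/m$ even for the most favorable choice of $s$.

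The gap is fixable, but it needs an observation you haven't made explicit: since $\mu \geq m$, consecutive integer crossings of $t \mapsto \mu([x,x+t])$ are spaced at most $1/m$ apart, so the \emph{largest} admissible step from any point is at least $2L - 1/m$, and likewise $s_{\max} \geq 2L - 1/m$. Taking these extremal choices does close your case $b-a_K \in (2L,3L]$: one gets $a_{K+1} + s_{\max} \geq (a_K+2L-1/m) + (2L-1/m) = a_K + 4L - 2/m > a_K + 3L \geq b$ as soon as $L > 2/m$, which is ensured by $L > L_0 = 3/m$. But the case $b-a_K \in (L,2L]$ still needs a separate device — for instance an additional bridging interval whose left endpoint lies strictly to the left of $a_K$, or a different stopping criterion (e.g.\ advancing the frontier $\sup\bigcup_{j\le n}K_j$ by the maximal admissible step at every stage and anchoring to $b$ as soon as $b - s_{\max} \leq \sup\bigcup_{j\le n}K_j$). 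As written, ``overlap'' is asserted rather than proved.
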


We now turn our attention to the local energies described above, first demonstrating the superadditivity of $\H_{a,h}$ and $\E$ in the $\Omega$ argument. The proof is exactly as in \cite{AS21}.
\begin{lem}\label{superadditivity}
Let $\XN$ be a configuration of points in $\R$, and suppose that $\Omega' \subset \Omega$. Let $w$ be associated to $\Omega$ as in (\ref{viable potential}). Then,
\begin{align*}
\H_{a,h}^{\mathrm{int}}(\XN \vert_{\Omega},\Omega) &\geq \H_{a,h}^{\mathrm{int}}(\XN\vert_{\Omega'}, \Omega')+\H_{a,h}^{\mathrm{ext}}(\XN\vert_{\Omega \setminus \Omega'},\Omega \setminus \Omega') \\
\E^{\mathrm{int}}(w, \Omega)&\geq \E^{\mathrm{int}}(w,\Omega)+\E^{\mathrm{ext}}(w, \Omega \setminus \Omega')
\end{align*}
\end{lem}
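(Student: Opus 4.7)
The plan is to prove the second inequality for the fixed-potential energy $\E$ first, and then deduce the first from it by a standard infimum-over-admissible-fields argument.

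For the $\E$-superadditivity, I would fix any field $w$ solving the Poisson equation \eqref{viable potential} in $\tilde{\Omega}$ and split the integration domain as $\tilde{\Omega} = \tilde{\Omega}' \cup (\tilde{\Omega} \setminus \tilde{\Omega}')$, together with the point sum over $\Omega$ into sums over $\Omega'$ and $\Omega\setminus\Omega'$. Using a common truncation vector $\rrc(\Omega)$ throughout, this splitting is exact and yields $\E^{\mathrm{int}}(w, \Omega) = \widetilde{\E}^{\mathrm{int}}(w, \Omega') + \E^{\mathrm{ext}}(w, \Omega\setminus\Omega')$, where $\widetilde{\E}^{\mathrm{int}}(w, \Omega')$ denotes the interior energy computed with the truncation $\rrc(\Omega)$ rather than the prescribed $\rrc(\Omega')$. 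To compare this auxiliary quantity with $\E^{\mathrm{int}}(w, \Omega')$, I would invoke the standard renormalization invariance/monotonicity of the combined quantity $\tfrac{1}{4\pi}\int|\nabla w_\eta|^2 - \tfrac{1}{2}\sum \g(\eta_i) - \sum \int \f_{\eta_i}(x-x_i)\,d\mu$ under enlarging the truncation radii $\eta_i$. Since $\rrc(\Omega') \geq \rrc(\Omega)$ on $\Omega'$ by \eqref{decreasing minimal distance}, enlarging from $\rrc(\Omega)$ to $\rrc(\Omega')$ yields $\widetilde{\E}^{\mathrm{int}}(w, \Omega') \geq \E^{\mathrm{int}}(w, \Omega')$, and the second inequality of the lemma follows.

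For the $\H$-superadditivity, let $w^*$ realize the infimum in the variational problem defining $\H^{\mathrm{int}}_{a,h}(\XN\vert_\Omega, \Omega)$. Its restrictions to $\tilde{\Omega}'$ and $\tilde{\Omega}\setminus\tilde{\Omega}'$ still solve \eqref{viable potential} on the respective subdomains, and both the screenability constraint and the decay bound $\int_{2\Omega\times\{\pm h\}}|\nabla w|^2 \leq a$ are inherited by each subregion since the integrand is nonnegative. Hence $w^*$ is admissible in both subproblems, giving $\H^{\mathrm{int}}_{a,h}(\XN\vert_{\Omega'}, \Omega') \leq \E^{\mathrm{int}}(w^*, \Omega')$ and $\H^{\mathrm{ext}}_{a,h}(\XN\vert_{\Omega\setminus\Omega'}, \Omega\setminus\Omega') \leq \E^{\mathrm{ext}}(w^*, \Omega\setminus\Omega')$. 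Combining these with the $\E$-superadditivity applied to $w^*$ and the identity $\H^{\mathrm{int}}_{a,h}(\XN\vert_\Omega, \Omega) = \E^{\mathrm{int}}(w^*, \Omega)$ concludes.

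The main technical difficulty I anticipate is justifying the truncation monotonicity step. The definition \eqref{local minimal distance} of $\rrc_i(\Omega')$ reverts to the constant $1/4$ whenever $x_i$ lies within distance $1/2$ of $\partial\Omega'$, even when its pairwise nearest-neighbor distance inside $\Omega'$ is much smaller. The enlarged truncation balls $B(X_i, \rrc_i(\Omega'))$ can therefore straddle the interface between $\tilde{\Omega}'$ and $\tilde{\Omega}\setminus\tilde{\Omega}'$, so the integration-by-parts derivation of the monotonicity picks up boundary terms on $\partial\tilde{\Omega}'$ that must be verified to carry the favorable sign. Once this renormalization step is secured, the splitting of integrals and sums is routine and the passage from the $\E$ inequality to the $\H$ inequality is immediate.
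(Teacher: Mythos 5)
Your proof follows the same route as the paper --- take the minimizer $w$ for $\H^{\mathrm{int}}_{a,h}(\XN\vert_\Omega,\Omega)$, split the domain, increase the truncation via \eqref{decreasing minimal distance} and Lemma~\ref{Monotonicity}, then conclude by minimality --- you merely prove the $\E$-superadditivity first and deduce the $\H$-superadditivity from it. One small slip to fix: the exact splitting at the common truncation $\rrc(\Omega)$ produces a tilde quantity for the exterior piece as well, so the monotonicity step (from $\rrc(\Omega)$ to $\rrc(\Omega\setminus\Omega')$) is needed there too before you may identify it with $\E^{\mathrm{ext}}(w,\Omega\setminus\Omega')$. The concern you raise about enlarged truncation balls near $\partial\Omega'$ straddling $\partial\tilde{\Omega'}$, which would violate the containment hypothesis of Lemma~\ref{Monotonicity}, is a genuine wrinkle --- but the paper's proof applies the same lemma in the same way and also leaves it implicit, so it is not a gap in your argument relative to theirs.
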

\begin{proof}
Let $w$ be minimal in the definition of $\H^{\mathrm{int}}(\XN \vert_\Omega,\Omega)$. Split the defining integral into the contributing pieces over $\Omega'$ and $\Omega \setminus \Omega'$; changing the respective minimal distances to those defined relative to $\Omega'$ and $\Omega \setminus \Omega'$ can only increase the truncation parameter (\ref{decreasing minimal distance}), which in turn decreases the quantities by Lemma \ref{Monotonicity}. Furthermore, if the requisite decay at height $h$ from the axis is satisfied on $\Omega$ then it is also satisfied on $\Omega'$ and $\Omega \setminus \Omega'$, so we are minimizing over a larger class of potentials on the right hand side. We conclude using the minimality in the definition of $\H$.

The second line follows immediately from the fact that $\rrc$ only increases when defined over smaller sets (\ref{decreasing minimal distance}), and the monotonicity of Lemma \ref{Monotonicity}.
\end{proof}

The minimality of $\H$ also tells us that, in particular, it is smaller than $\G^\Omega$ if the true field has the appropriate decay.

\begin{lem}\label{relaxation}
Suppose $\XN$ is a configuration of points in $\R$ with $N=\mu(\R)$. Let $\Omega' \subset \Omega \subset \R$. Let $w$ satisfy (\ref{viable potential}) in $\tilde{\Omega}$ with a screenable electric field (\ref{screenability}) such that $\int_{2\Omega \times \{\pm h\}}|\nabla w|^2\leq a$ for $h>\max_i \rrc_i$. Then, 
\begin{equation*}
\E^{\mathrm{int}}(w, \Omega) \geq \H_{a,h}^{\mathrm{int}}(\XN\vert_\Omega, \mu, \Omega)
\end{equation*}
and
\begin{equation*}
\E^{\mathrm{ext}}(w,\Omega \setminus \Omega') \geq \H_{a,h}^{\mathrm{ext}}(\XN\vert_{\Omega \setminus \Omega'}, \mu, \Omega \setminus \Omega').
\end{equation*}
\end{lem}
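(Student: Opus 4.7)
The plan is to observe that this lemma is essentially a definitional statement asserting that $\mathcal{H}^{\mathrm{int}}_{a,h}$ is a relaxation: it is the infimum of the same functional that defines $\mathcal{E}^{\mathrm{int}}$, taken over a class of admissible competitors, and the hypothesis is precisely that $w$ is in this class. The strategy is therefore simply to unpack the two definitions, verify that $w$ is admissible, and conclude by the variational characterization of an infimum.

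First I would place the two defining expressions side by side. By (\ref{general potential local energy}), the quantity $\mathcal{E}^{\mathrm{int}}(w,\Omega)$ is the value at the specific potential $w$ of the functional
\begin{equation*}
v \mapsto \frac{1}{4\pi}\Bigl(\int_{\tilde{\Omega}}|\nabla v_{\rrc}|^2 - 2\pi \sum_{x_i \in \Omega}\g(\rrc_i)\Bigr) - \sum_{x_i \in \Omega}\int_{\R}\f_{\rrc_i}(x-x_i)\,d\mu(x),
\end{equation*}
while by (\ref{Dirichlet energy}), $\mathcal{H}^{\mathrm{int}}_{a,h}(\XN|_{\Omega},\mu,\Omega)$ is the infimum of this same functional over all $v$ that solve (\ref{viable potential}) in $\tilde{\Omega}$, satisfy the screenability condition (\ref{screenability}), and obey the decay bound $\int_{2\Omega\times\{\pm h\}}|\nabla v|^2\leq a$. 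Crucially, both use the same truncation parameters $\rrc_i$, defined relative to the same set $\Omega$, so the integrands match exactly at $v=w$.

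Next I would check that $w$ is an admissible competitor. By hypothesis, $w$ solves (\ref{viable potential}) in $\tilde{\Omega}$, is screenable, and satisfies the required bound on the trace of its gradient at height $\pm h$. The assumption $h>\max_i\rrc_i$ is used here in order to guarantee that $\nabla w=\nabla w_{\rrc}$ on the slices $\Omega\times\{\pm h\}$, since the truncation modifies $w$ only inside balls of radius $\rrc_i$ around the points $(x_i,0)$, so the decay condition imposed on $w$ is the same decay condition appearing in the definition of $\mathcal{H}^{\mathrm{int}}_{a,h}$. Consequently $w$ lies in the admissible class, and the inequality $\mathcal{E}^{\mathrm{int}}(w,\Omega)\geq \mathcal{H}^{\mathrm{int}}_{a,h}(\XN|_{\Omega},\mu,\Omega)$ is immediate from the definition of infimum. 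The external statement is obtained by the identical argument with $\tilde{\Omega}$ replaced by $\tilde{\Omega}\setminus\tilde{\Omega'}$ and the sums ranging over $x_i\in\Omega\setminus\Omega'$, exactly matching the definition (\ref{Dirichlet energy - outside}).

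There is no substantive obstacle here; the lemma is a structural observation rather than a computation, and the only step requiring a moment's care is confirming that the height condition $h>\max_i\rrc_i$ renders the decay assumption on $w$ literally the same as the one placed on competitors in the definition of $\mathcal{H}_{a,h}$. This lemma will be useful later precisely because it lets us bound a minimum-energy quantity (for which few explicit estimates are available) by the energy of any specific screenable potential $w$ we happen to have in hand, which is the standard way such infimum-defined quantities enter the bootstrap.
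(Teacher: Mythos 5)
Your proof is correct and matches the paper's argument, which simply invokes the minimality in the definition of $\H_{a,h}$; you have merely unpacked that one-line observation. One small quibble: your explanation for the role of $h>\max_i \rrc_i$ is not quite needed here, since both the hypothesis of the lemma and the decay condition in the definition of $\H_{a,h}$ are stated for the \emph{untruncated} gradient $\nabla w$, so they agree verbatim without any appeal to $\nabla w=\nabla w_{\rrc}$ at height $h$.
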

\begin{proof}
This is just a result of the minimality of the definition of $\H$.
\end{proof}

We next would like to show subadditivity of the Neumann energy $\G(\XN,\mu,\Omega)$ in the $\Omega$ argument. This requires the following lemma, which is stated for our case for instance in \cite{SS15-1}, \cite{PS17} and \cite{LS15}. It states that gradients minimize electrostatic energy. It is common in the literature, so we omit its proof.
\begin{lem}\label{projection}
Suppose $\Omega \subset \R$ is a union of finitely many closed intervals, and $E$ and $\nabla u$ are vector fields that both satisfy
\begin{equation*}
\begin{cases}
-\div(A)=c_d\left(\sum_{i=1}^N \delta_{x_i}-\mu\delta_\R\right) & \text{in }\tilde{\Omega} \\
A \cdot \nu=0 & \text{on }\partial \tilde{\Omega},
\end{cases}
\end{equation*}
where $\tilde{\Omega}$ is as in (\ref{extension set}). Let $\vec{\eta}$ be any positive truncation parameter. Then,
\begin{equation*}
\int_{\tilde{\Omega}}|\nabla u_{\vec{\eta}}|^2\leq \int_{\tilde{\Omega}}|E_{\vec{\eta}}|^2,
\end{equation*}
where $E_{\vec{\eta}}$ for a general electric field is defined as $E-\sum_{i=1}^N \nabla \overline{\f}_{\eta_i}(X-(x_i,0))$. The same result holds for $\Omega'\subset \Omega$, with $\tilde{\Omega}$ replaced by $\tilde{\Omega}\setminus \tilde{\Omega'}$.
\end{lem}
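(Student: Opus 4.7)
The plan is to run the standard ``gradient is an $L^2$ projection'' argument for electrostatic fields, adapted to the truncated setting. Write $E_{\vec{\eta}} = \nabla u_{\vec{\eta}} + W$, where $W := E_{\vec{\eta}} - \nabla u_{\vec{\eta}}$, and expand
\begin{equation*}
\int_{\tilde{\Omega}} |E_{\vec{\eta}}|^2 = \int_{\tilde{\Omega}} |\nabla u_{\vec{\eta}}|^2 + 2 \int_{\tilde{\Omega}} \nabla u_{\vec{\eta}} \cdot W + \int_{\tilde{\Omega}} |W|^2.
\end{equation*}
Since the last term is nonnegative, it suffices to show that the cross term vanishes, and the lemma follows.

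To kill the cross term, I would integrate by parts. The integration by parts identity reads
\begin{equation*}
\int_{\tilde{\Omega}} \nabla u_{\vec{\eta}} \cdot W = -\int_{\tilde{\Omega}} u_{\vec{\eta}} \, \mathrm{div}\, W + \int_{\partial \tilde{\Omega}} u_{\vec{\eta}} \, W \cdot \nu,
\end{equation*}
so I need two facts: $\mathrm{div}\, W \equiv 0$ in $\tilde{\Omega}$ and $W \cdot \nu \equiv 0$ on $\partial \tilde{\Omega}$. The first is immediate from the definition of the truncated fields: both $\nabla u_{\vec{\eta}}$ and $E_{\vec{\eta}}$ have been smeared by subtracting $\sum_i \nabla \overline{\f}_{\eta_i}(\cdot - (x_i,0))$, so by (\ref{laplacian truncation}) they both satisfy $-\mathrm{div} = 2\pi \bigl(\sum_i \delta_{(x_i,0)}^{(\eta_i)} - \mu \delta_{\R}\bigr)$ in $\tilde{\Omega}$, hence their difference is divergence-free there.

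The boundary condition requires a little more care, and this is the main subtlety. The hypothesis gives $E \cdot \nu = \nabla u \cdot \nu = 0$ on $\partial \tilde{\Omega}$, which transfers to the truncated fields provided the balls $B(X_i,\eta_i)$ do not touch $\partial \tilde{\Omega}$, since $\overline{\f}_{\eta_i}$ is supported in such balls. In every intended application of the lemma (in particular with $\vec{\eta}=\vec{\rrh}$ as in (\ref{subadditive local minimal distance}) and with $h \gtrsim 1 \ll L$), the truncation radii satisfy $\eta_i < \min(L, \mathrm{dist}(x_i,\partial\Omega))$, so the truncation contributions vanish on $\partial \tilde{\Omega}$ and $W\cdot \nu = 0$ there; I would state this explicitly in the hypotheses or observe that without loss of generality one may assume the $\eta_i$ are this small. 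Granting this, the boundary term vanishes and the cross term is zero, completing the proof.

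The only expected obstacle is the bookkeeping at the boundary just described; the rest is a routine Hilbert-space orthogonality argument (the tangent and normal components of an $L^2$ vector field, or equivalently the decomposition into a gradient with prescribed divergence/boundary conditions plus a divergence-free part satisfying the homogeneous Neumann condition). No new estimates on $\overline{\f}_{\eta}$ or on $\mu$ are needed beyond what has already been established in the excerpt.
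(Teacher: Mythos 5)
Your approach is the standard orthogonality argument, and the paper itself omits the proof, describing it as common in the literature (citing \cite{SS15-1}, \cite{PS17}, \cite{LS15}); so your plan matches the intended argument. However, you have introduced an unnecessary complication in handling the boundary. Observe that the truncation terms cancel identically in the difference:
\begin{equation*}
W := E_{\vec{\eta}} - \nabla u_{\vec{\eta}} = \Bigl(E - \sum_i \nabla \overline{\f}_{\eta_i}(\cdot - X_i)\Bigr) - \Bigl(\nabla u - \sum_i \nabla \overline{\f}_{\eta_i}(\cdot - X_i)\Bigr) = E - \nabla u.
\end{equation*}
Thus $W$ does not depend on $\vec{\eta}$ at all. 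The conditions $\div W = 0$ in $\tilde{\Omega}$ and $W \cdot \nu = 0$ on $\partial \tilde{\Omega}$ follow immediately from the hypotheses on $E$ and $\nabla u$ (both satisfy the same divergence equation and the same Neumann condition), with no recourse to (\ref{laplacian truncation}) or to any constraint that the balls $B(X_i,\eta_i)$ avoid $\partial\tilde{\Omega}$. Your proposal to add a smallness hypothesis on the $\eta_i$, or to note that ``without loss of generality one may assume the $\eta_i$ are this small,'' is therefore not needed and would weaken the statement; the lemma holds for arbitrary positive $\vec{\eta}$ exactly as written. The rest of your argument---expand $\int|E_{\vec{\eta}}|^2 = \int|\nabla u_{\vec{\eta}}|^2 + 2\int \nabla u_{\vec{\eta}}\cdot W + \int|W|^2$, integrate the cross term by parts against the bounded function $u_{\vec{\eta}}$, and use $\int|W|^2 \geq 0$---is correct. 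One can also assume $\int|E_{\vec{\eta}}|^2 < \infty$ (otherwise the inequality is trivial), which gives $W \in L^2(\tilde{\Omega})$ and justifies the integration by parts.
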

This immediately gives us our desired subadditivity.

\begin{lem}\label{subadditivity}
Suppose $\Omega' \subset \Omega$, where $\Omega '$ and $\Omega \setminus \Omega'$ are either $\R$ or closed and bounded intervals. Suppose $X_{N'}$ is a configuration in $\Omega'$ and $Y_{N-N'}$ is a configuration in $\Omega \setminus \Omega'$. Suppose $\mu(\Omega')=N'$, $\mu(\Omega \setminus \Omega')=N-N'$. Then
\begin{equation*}
\G^{\mathrm{int}}(X_{N'} \cup Y_{N-N'}, \mu, \Omega) \leq \G^{\mathrm{int}}(X_{N'}, \mu, \Omega')+\G^{\mathrm{ext}}(Y_{N-N'}, \mu, \Omega \setminus \Omega').
\end{equation*}
\end{lem}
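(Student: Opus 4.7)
\medskip

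\noindent \textbf{Proof Plan.} The strategy is a standard electrostatic gluing argument together with Lemma~\ref{projection}. Let $u$ be the Neumann potential on $\tilde{\Omega}$ generated by the full configuration $X_{N'}\cup Y_{N-N'}$ against the neutralizing background $\mu\delta_{\R}$, and let $u_1,u_2$ be the analogous Neumann potentials on $\tilde{\Omega'}$ and $\tilde{\Omega}\setminus\tilde{\Omega'}$ generated by $X_{N'}$ and $Y_{N-N'}$, respectively. Define the glued field
$$E := \nabla u_1\,\indic_{\tilde{\Omega'}} + \nabla u_2\,\indic_{\tilde{\Omega}\setminus\tilde{\Omega'}}.$$
The homogeneous Neumann boundary condition (\ref{Neumann potential}) holds for both $u_1$ and $u_2$ on the common interface $\partial\tilde{\Omega'}\cap\partial(\tilde{\Omega}\setminus\tilde{\Omega'})$, so no singular contribution appears there when taking $\div E$. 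Thus $E$ and $\nabla u$ both satisfy the same Neumann problem in $\tilde{\Omega}$, which puts us in the setting of Lemma~\ref{projection}.

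\medskip

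\noindent The core estimate is obtained by applying Lemma~\ref{projection} with the common truncation vector $\vec{\eta}:=\rrh(\Omega)$, which yields
$$\int_{\tilde{\Omega}}|\nabla u_{\vec{\eta}}|^2 \leq \int_{\tilde{\Omega'}}|\nabla (u_1)_{\vec{\eta}}|^2 + \int_{\tilde{\Omega}\setminus\tilde{\Omega'}}|\nabla (u_2)_{\vec{\eta}}|^2.$$
Subtracting the log-corrections $2\pi\sum_i\g(\eta_i)$ and the self-potential terms $4\pi\sum_i\int\f_{\eta_i}(x-x_i)\,d\mu$, the left side becomes $4\pi\bigl(\G^{\mathrm{int}}(X_{N'}\cup Y_{N-N'},\mu,\Omega)-\sum_i\GG(x_i;\Omega)\bigr)$ while the right side has the shape of $\G^{\mathrm{int}}(X_{N'},\mu,\Omega')+\G^{\mathrm{ext}}(Y_{N-N'},\mu,\Omega\setminus\Omega')$, except that the truncation is still $\rrh(\Omega)$ rather than the $\rrh(\Omega')$ and $\rrh(\Omega\setminus\Omega')$ called for in the definitions.

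\medskip

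\noindent To conclude, I would replace the truncation on each piece of the right hand side by its proper, smaller value. By (\ref{increasing minimal distance}) we have $\rrh_i(\Omega')\leq\rrh_i(\Omega)$ for $x_i\in\Omega'$ and analogously for $\Omega\setminus\Omega'$; by Lemma~\ref{Monotonicity} the truncated energy functional decreases in the truncation parameter, so shrinking the truncation can only increase the right hand side. The remaining $\GG$ boundary penalty is handled by domain monotonicity: since $\dist(x_i,\partial\Omega')\leq\dist(x_i,\partial\Omega)$ and $\g$ is decreasing, $\GG(x_i;\Omega')\geq\GG(x_i;\Omega)$, which absorbs the $-\sum_i\GG(x_i;\Omega)$ term on the left and delivers the claimed inequality.

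\medskip

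\noindent The main technical point I expect to require care is the bookkeeping of the truncation balls: $\rrh_i(\Omega)$ is bounded by $\dist(x_i,\partial\Omega)/4$ but not by $\dist(x_i,\partial\Omega')/4$, so a smeared Dirac centered at a point of $\Omega\setminus\Omega'$ near the interface could in principle spread into $\tilde{\Omega'}$. This is not a serious obstruction, however, because the cross-terms reassemble cleanly thanks to the Neumann condition on the interface and the Green's identity computation underlying Lemma~\ref{electrical view of next order energy}; in fact, if one is willing to shrink the truncation first to a uniform $\epsilon<\min_i\rrh_i(\Omega'),\rrh_i(\Omega\setminus\Omega')$, apply projection, and then recover the $\rrh$-truncation by monotonicity, the issue disappears entirely.
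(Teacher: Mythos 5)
Your strategy coincides with the paper's: glue the Neumann fields $\nabla u_1$, $\nabla u_2$ into a competitor $E$ for the projection Lemma \ref{projection}, then use monotonicity in the truncation parameter and domain monotonicity of $\GG$ to reconcile the different minimal-distance conventions. However, the order in which you deploy projection and monotonicity creates a gap that your proposed fix does not close.

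The issue is the one you flag yourself: applying Lemma \ref{projection} at $\vec{\eta}=\rrh(\Omega)$ does not produce the right-hand side you claim. For $x_i$ near $\partial\Omega'$, the ball $B(x_i,\rrh_i(\Omega))$ may straddle the interface, so when you split
\begin{equation*}
\int_{\tilde{\Omega}}|E_{\vec{\eta}}|^2=\int_{\tilde{\Omega'}}|E_{\vec{\eta}}|^2+\int_{\tilde{\Omega}\setminus\tilde{\Omega'}}|E_{\vec{\eta}}|^2,
\end{equation*}
the first integral contains residual contributions from $\nabla\overline{\f}_{\eta_i}(\cdot-X_i)$ for $x_i\in\Omega\setminus\Omega'$ (and vice versa) and therefore is not $\int_{\tilde{\Omega'}}|\nabla (u_1)_{\vec{\eta}}|^2$. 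The cross-terms do not ``reassemble cleanly'' merely by the Neumann interface condition; that condition ensures $E$ is a valid competitor but says nothing about the spread of the smeared Diracs. Your fix — shrink to a uniform $\epsilon<\min_i\rrh_i(\Omega'),\rrh_i(\Omega\setminus\Omega')$, project, then recover by monotonicity — makes the split of $\int|E_\epsilon|^2$ valid, but the recovery step fails: from $\G_\Omega(\epsilon)\leq\G_{\Omega'}(\epsilon)+\G_{\Omega\setminus\Omega'}(\epsilon)$, monotonicity gives $\G_\Omega(\rrh(\Omega))\leq\G_\Omega(\epsilon)$ on the left (useful), but on the right monotonicity gives $\G_{\Omega'}(\rrh(\Omega'))\leq\G_{\Omega'}(\epsilon)$, which is the wrong direction — it shows the target right-hand side is \emph{smaller} than the quantity you have controlled, so you cannot conclude.

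The correct resolution, which is the paper's route, is to apply projection at exactly $\vec{\eta}_i=\rrh_i(\Omega')$ for $x_i\in\Omega'$ and $\vec{\eta}_i=\rrh_i(\Omega\setminus\Omega')$ for $x_i\in\Omega\setminus\Omega'$ (not a uniform $\epsilon$). By construction of $\rrh$, these balls never cross the interface, so $\int_{\tilde{\Omega}}|E_{\vec{\eta}}|^2$ splits exactly into the truncated Neumann energies of $u_1$ and $u_2$, and the right-hand side of the lemma appears verbatim. The left-hand side of the projection inequality then has the smaller subset-truncation; since by (\ref{increasing minimal distance}) $\rrh_i(\Omega)\geq\vec{\eta}_i$, Lemma \ref{Monotonicity} applied \emph{only to the left-hand side} gives $\G^{\mathrm{int}}(X_{N'}\cup Y_{N-N'},\mu,\Omega;\rrh(\Omega))\leq\G^{\mathrm{int}}(\cdot,\Omega;\vec{\eta})$, which is now dominated by the desired right-hand side. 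Your handling of the $\GG$ terms is fine and needs no change.
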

\begin{proof}
This is a direct application of Lemma \ref{projection}. Let $u_1$ solve the Neumann problem 
\begin{equation*}
\begin{cases}
-\Delta u=2\pi \left(\sum_{i=1}^{N'} \delta_{(x_i,0)}-\mu\delta_{\R}\right) & \text{in }\tilde{\Omega'}  \\
\nabla u \cdot \widehat{n}=0 &\text{on }\partial \tilde{\Omega'}
\end{cases}
\end{equation*}
where $\tilde{\Omega'}$ is as in (\ref{extension set}), and let $u_2$ solve the analogous in $\tilde{\Omega}\setminus \tilde{\Omega'}$. Set $E_i=\nabla u_i$, and let $E=E_1\indic_{\tilde{\Omega'}}+E_2\indic_{\tilde{\Omega}\setminus \tilde{\Omega'} }$. Then, $E$ satisfies 
\begin{equation*}
\begin{cases}
-\div(A)=c_d\left(\sum_{i=1}^{N} \delta_{(x_i,0)}-\mu\delta_\R\right) & \text{in }\tilde{\Omega} \\
A \cdot \nu=0 & \text{on }\partial \tilde{\Omega}.
\end{cases}
\end{equation*}
since the Neumann condition prevents additional divergence from being created at the interface $\partial(\tilde{\Omega'})$. Now, redefining $\rrh_i$ with respect to $\Omega$ can only increase $\rrh_i$ (\ref{increasing minimal distance}), which by the monotonicity in the truncation parameter of Lemma \ref{Monotonicity} only decreases the left hand side. Finally, $\GG(x_i)$ also only decreases when defined relative to the larger set, so we conclude the desired inequality.
\end{proof}

This gives us subadditivity in the partition functions as well. We state the following simplified version for ease.

\begin{lem}\label{subadditivity for partition functions}
Suppose $\Omega_1 \subset \Omega_2 \subset \cdots \subset \Omega_p=\Omega$, with $\mu(\Omega_1)=N_1$ and $\mu(\Omega_i \setminus \Omega_{i-1})=N_i$ for $i \geq 2$. Then
\begin{equation*}
\K^{\mathrm{int}}(\mu,\zeta,\Omega) \geq \frac{N! N^{-N}}{N_1! \cdots N_p!N_1^{-N_1} \cdots N_p^{-N_p}}\K^{\mathrm{int}}(\mu, \zeta, \Omega_1)\prod_{i=2}^p \K^{\mathrm{ext}}(\mu,\zeta, \Omega_i\setminus \Omega_{i-1}).
\end{equation*}
\end{lem}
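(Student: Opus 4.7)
The plan is to obtain the inequality by restricting the domain of integration in the definition of $\K^{\mathrm{int}}(\mu,\zeta,\Omega)$ to a product region adapted to the decomposition $\Omega=\Omega_1\sqcup(\Omega_2\setminus\Omega_1)\sqcup\cdots\sqcup(\Omega_p\setminus\Omega_{p-1})$, applying an iterated version of the subadditivity of $\G^{\mathrm{int}}$ (Lemma \ref{subadditivity}) to factor the Boltzmann weight, and then collecting combinatorial prefactors.

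First, by the definition \eqref{local next order partition function},
\[
\K^{\mathrm{int}}(\mu,\zeta,\Omega)=\frac{1}{N^N}\int_{\Omega^N}e^{-\beta\G^{\mathrm{int}}(X_N,\mu,\Omega)}\,d\rho^{\otimes N}(X_N).
\]
I restrict the domain to the subset of $\Omega^N$ where a prescribed subset of $N_1$ labels lies in $\Omega_1$, the next $N_2$ labels lie in $\Omega_2\setminus\Omega_1$, and so on; by the symmetry of the integrand under permutation of coordinates this lower bounds the full integral by $\binom{N}{N_1,\ldots,N_p}=\frac{N!}{N_1!\cdots N_p!}$ times the integral over a single such product region.

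Second, I need to upper bound $\G^{\mathrm{int}}(X_N,\mu,\Omega)$ by the sum $\G^{\mathrm{int}}(X_{N_1},\mu,\Omega_1)+\sum_{i=2}^p\G^{\mathrm{ext}}(X_{N_i},\mu,\Omega_i\setminus\Omega_{i-1})$ whenever the particles are arranged as above with $\mu(\Omega_1)=N_1$ and $\mu(\Omega_i\setminus\Omega_{i-1})=N_i$. Applying Lemma \ref{subadditivity} to the nested pair $\Omega_{p-1}\subset\Omega_p=\Omega$ gives
\[
\G^{\mathrm{int}}(X_N,\mu,\Omega)\le \G^{\mathrm{int}}(X_{N_1+\cdots+N_{p-1}},\mu,\Omega_{p-1})+\G^{\mathrm{ext}}(X_{N_p},\mu,\Omega_p\setminus\Omega_{p-1}),
\]
and then iterating on the first summand $(p-1)$ more times produces the desired telescoping bound. (Strictly speaking this requires the analog of Lemma \ref{subadditivity} where the coarser piece is an external annulus rather than a full interval, but this follows by exactly the same Lemma \ref{projection} argument: glue the Neumann minimizers on the annular pieces to build a test field on $\tilde\Omega_{p-1}$ with the required divergence and vanishing normal trace, and observe that the minimal distances $\rrh_i$ only increase under enlargement, cf.\ \eqref{increasing minimal distance}.) Since $\beta>0$, exponentiating gives a lower bound on $e^{-\beta\G^{\mathrm{int}}}$ by the product of the corresponding Boltzmann factors.

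Third, because $\rho^{\otimes N}$ is a product measure and the domain is a Cartesian product, the integral factors:
\begin{align*}
\K^{\mathrm{int}}(\mu,\zeta,\Omega)\ge \frac{1}{N^N}\frac{N!}{N_1!\cdots N_p!}&\left(\int_{\Omega_1^{N_1}}e^{-\beta\G^{\mathrm{int}}(\cdot,\mu,\Omega_1)}d\rho^{\otimes N_1}\right)\\
&\times\prod_{i=2}^p\left(\int_{(\Omega_i\setminus\Omega_{i-1})^{N_i}}e^{-\beta\G^{\mathrm{ext}}(\cdot,\mu,\Omega_i\setminus\Omega_{i-1})}d\rho^{\otimes N_i}\right).
\end{align*}
Finally, recognizing each factor in terms of \eqref{local next order partition function} and \eqref{local next order partition function - external} requires multiplying and dividing by the appropriate $N_i^{N_i}$, yielding the announced prefactor $\frac{N!\,N^{-N}}{N_1!\cdots N_p!\,N_1^{-N_1}\cdots N_p^{-N_p}}$.

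The only non-cosmetic step is the iterated subadditivity, but it is a direct extension of the argument already given for Lemma \ref{subadditivity}; the rest is combinatorial bookkeeping of normalizations.
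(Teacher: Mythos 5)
Your proof is correct and follows the same approach as the paper: restrict to a product region and pick up the multinomial coefficient by symmetry, apply Lemma \ref{subadditivity} iteratively to bound $\G^{\mathrm{int}}(\cdot,\mu,\Omega)$ by the interior energy on $\Omega_1$ plus the annular exterior energies, then factor the integral and absorb the $N_i^{-N_i}$ normalizations. Your parenthetical concern about needing an ``analog'' of Lemma \ref{subadditivity} is unnecessary, since at each step of the iteration you split the full interval $\Omega_k$ into $\Omega_{k-1}$ and the annulus $\Omega_k\setminus\Omega_{k-1}$, which is exactly the setting of Lemma \ref{subadditivity} as stated.
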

\begin{proof}
The proof is combinatorial in nature. Using the fact that the way to place $N$ points in $p$ boxes with $N_i$ points in the $i$th box is given by $\frac{N!}{N_1! \cdots N_p!}$, we compute using Lemma \ref{subadditivity} and symmetry that 
\begin{align*}
\K^{\mathrm{int}}(\mu,\zeta,\Omega)&=\frac{1}{N^N}\int_{\Omega^N}\exp\left(-\beta \left(\G^{\mathrm{int}}(\XN,\mu, \Omega)+\sum_{i=1}^N \zeta(x_i)\right)\right)~d\XN \\
&\geq \frac{1}{N^N}\frac{N!}{N_1! \cdots N_p!} \int_{(\Omega_1)^{N_1}}\exp \left(-\beta(\G^{\mathrm{int}}(X_{N_1},\mu,\Omega_1)\right)\prod_{j=1}^{N_i}e^{-\zeta(x_j)}~dz_j \\
&\times \prod_{i=2}^p \int_{(\Omega_i \setminus \Omega_{i-1})^{N_i}}\exp \left(-\beta(\G^{\mathrm{ext}}(X_{N_i},\mu,\Omega_i \setminus \Omega_{i-1})\right)\prod_{j=1}^{N_i}e^{-\zeta(x_j)}~dz_j \\
&\geq\frac{N! N^{-N}}{N_1! \cdots N_p!N_1^{-N_1} \cdots N_p^{-N_p}}\K^{\mathrm{int}}(\mu, \zeta, \Omega_1)\prod_{i=2}^p \K^{\mathrm{ext}}(\mu,\zeta, \Omega_i\setminus \Omega_{i-1}).
\end{align*}
\end{proof}

Finally, we also have some a priori control on partition functions that will be useful to us. The first is a direct consequence of the expansion of the global partition function given by Corollary 1.1 in \cite{LS15}. As discussed in \cite{BLS18} (Lemma 2.3) it gives us a bound on the exponential moments of the global next order energy.
\begin{lem}\label{global bound on partition functions}
Let $\mu$, $\zeta$ be as above. Then,
\begin{equation*}
|\log \K(\mu, \zeta)|\lesssim_\beta N
\end{equation*}
where $\K(\mu,\zeta)$ is as in (\ref{definition of next order partition function}). As a result,
\begin{equation*}
\Esp_{\PNbeta}\left[\exp \left(-\beta \G(\XN,\mu)\right)\right]\lesssim_\beta N,
\end{equation*}
where $\G(\XN,\mu)$ is as in (\ref{definition of next order energy}).
\end{lem}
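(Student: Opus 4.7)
For the first statement, the plan is to leverage the identity derived in the preamble to this section:
\begin{equation*}
\ZNbeta = \exp\!\left(-\beta N^2 \I_V(\muv) + \tfrac{\beta}{2} N \log N\right) \K(\muv', \zeta_V').
\end{equation*}
Taking logarithms, one has $\log \K(\muv', \zeta_V') = \log \ZNbeta + \beta N^2 \I_V(\muv) - \tfrac{\beta}{2} N \log N$, so the estimate $|\log \K| \lesssim_\beta N$ is equivalent to an expansion of $\log \ZNbeta$ up to order $N$. This is exactly the content of \cite[Corollary~1.1]{LS15}: $\log \ZNbeta = -\beta N^2 \I_V(\muv) + \tfrac{\beta}{2} N \log N + O_\beta(N)$. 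The assumptions A1--A4 are (comfortably) sufficient to apply that expansion, which yields the claim.

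For the exponential moment statement, I would write it as a ratio of partition-function-type quantities. Since $d\PNbeta = d\Q(\muv', \zeta_V')$, a direct computation gives
\begin{equation*}
\Esp_{\PNbeta}\!\left[\exp(-\beta \G(\XN', \muv'))\right] = \frac{\tilde{\K}(\muv', \zeta_V')}{\K(\muv', \zeta_V')}, \qquad \tilde{\K}(\mu, \zeta) := \frac{1}{N^N}\int \exp(-2\beta \G(\XN, \mu))\, d\rho^{\otimes N},
\end{equation*}
where $\rho = e^{-\beta \zeta}\,dx$ keeps the original $\beta$-weighting. The denominator is controlled by the first part of the lemma, so it suffices to prove $|\log \tilde{\K}(\muv', \zeta_V')| \lesssim_\beta N$.

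For the upper bound on $\tilde{\K}$, I would use the electrostatic representation of Lemma~\ref{electrical view of next order energy} with truncation $\eta_i = \rr_i$ to obtain a deterministic lower bound $\G(\XN', \muv') \geq -CN$: the Dirichlet term $\int|\nabla h^{\muv'}_{\vec{\rr}}|^2$ is non-negative, $\sum \g(\rr_i) \leq N \log 4$ by the cap $\rr_i \leq 1/4$, and each $\int \f_{\rr_i}(\cdot - x_i)\,d\muv'$ is $O(1)$ using $|z \log z| \lesssim 1$ on $(0,1)$. Combining with the integrability $\int e^{-\beta \zeta_V'}\,dx < \infty$ (a consequence of A1) yields $\tilde{\K} \leq e^{CN}$. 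For the lower bound, I would exploit $\zeta_V' \geq 0$ to deduce $\tilde{\K}(\muv', \zeta_V') \geq \K^{(2\beta)}(\muv', \zeta_V')$, where $\K^{(2\beta)}$ denotes the next-order partition function defined with $\beta$ replaced by $2\beta$ throughout. Applying the splitting identity at inverse temperature $2\beta$ and invoking \cite[Corollary~1.1]{LS15} at that temperature then gives $|\log \K^{(2\beta)}(\muv', \zeta_V')| \lesssim_\beta N$, hence the matching lower bound on $\tilde{\K}$.

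The main substantive step is just importing the $O(N)$ expansion of \cite{LS15}; everything else is algebra plus routine deterministic bounds on $\G$. The one subtle point to get right is the handling of the mismatched $\beta$-weighting in $\tilde{\K}$ (the integrand carries $2\beta$ on $\G$ but only $\beta$ on $\zeta_V'$), which is resolved cleanly by the two one-sided comparisons above using $\zeta_V' \geq 0$ and the deterministic lower bound on $\G$.
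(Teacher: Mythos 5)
Your proposal is correct in outline and is genuinely more constructive than what the paper does: the paper gives no proof of this lemma, citing only \cite[Corollary~1.1]{LS15} for the first claim and \cite[Lemma~2.3]{BLS18} for the exponential-moment bound. Your route of writing the expectation as the ratio $\tilde{\K}/\K$ and controlling the numerator via a temperature comparison (using $\zeta_V'\geq 0$ to deduce $\tilde{\K}\geq \K^{(2\beta)}$, then invoking the LS15 expansion at inverse temperature $2\beta$) is a clean, self-contained way to recover the BLS18 lemma from the free-energy expansion alone.

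There is, however, a sign error in your justification of the deterministic lower bound $\G(\XN,\mu)\geq -CN$, and the termwise argument you sketch does not close. Since $\rr_i\leq\tfrac14$ and $\g(x)=-\log|x|$, one has $\g(\rr_i)=-\log\rr_i\geq\log 4$, not $\leq\log 4$; the sum $\sum_i\g(\rr_i)$ is unbounded above when points cluster, so the contribution $-\tfrac12\sum_i\g(\rr_i)$ in Lemma~\ref{electrical view of next order energy} is \emph{not} bounded below on its own, and no amount of nonnegativity of $\int|\nabla h^{\muv'}_{\vec\rr}|^2$ fixes that by mere term-by-term inspection. The compensation is exactly the content of the renormalization: because the balls $B((x_i,0),\rr_i)$ are pairwise disjoint, the truncated Dirichlet energy absorbs $\tfrac12\sum_i\g(\rr_i)$, yielding the coercivity estimate $\sum_i\g(\rr_i)\leq 2\,\G(\XN,\mu)+CN$ with $C$ depending only on $\|\mu\|_{L^\infty}$ — this is Lemma~\ref{local energy control} with $\Omega=\R$ — and hence $\G(\XN,\mu)\geq -\tfrac{C}{2}N$. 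Once you replace your termwise sketch with an appeal to that coercivity result (or equivalently to the standard lower bound from \cite{SS15-1} or \cite{PS17}), the rest of your argument goes through as written.
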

We can also prove a similar a priori bound on local partition functions analogous to Proposition 3.8 of \cite{AS21}, inspired by an argument from \cite{GZ19}.
\begin{lem}\label{energy integrals and local partition functions}
Suppose that $\Omega \subset N\Sigma_V$ is such that $0<m \leq \mu \leq \Lambda$ on $\Omega$, and that $\mu(\Omega)=\mn$ with $\mn$ a natural number at least $2$. Then,
\begin{equation}\label{a priori estimate}
\frac{1}{\mn^{\mn}}\int_{\Omega^\mn}\G^{\mathrm{int}}(X_{\mn},\mu,\Omega)~d\mu^{\otimes \mn}(X_\mn) \lesssim \mn.
\end{equation}
Furthermore, we also have 
\begin{equation}\label{a priori lower bound}
\G^{\mathrm{int}}(X_\mn,\mu,\Omega) \gtrsim -\mn.
\end{equation}
As a result, we can conclude that 
\begin{equation}\label{a priori partition function control}
\left|\log \K^{\mathrm{int}}(\Omega)\right|\lesssim_{\beta}\mn.
\end{equation}
\end{lem}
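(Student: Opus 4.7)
The plan is to establish the three statements in order: the pointwise renormalized-energy lower bound \eqref{a priori lower bound}, the expected energy upper bound \eqref{a priori estimate}, and then the partition function estimate \eqref{a priori partition function control} via Jensen's inequality. The substantive work is in the first two; the third is a short calculation.

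\textbf{Pointwise lower bound \eqref{a priori lower bound}.} The terms $\sum_i\GG(x_i)\ge 0$ and $\frac{1}{4\pi}\int_{\tilde\Omega}|\nabla u_\rrh|^2 \ge 0$ are non-negative, and $-\sum_i\int_\R\f_{\rrh_i}(x-x_i)\,d\mu(x) \ge -C\mn$ trivially (since $\int \f_\eta = 2\eta \le 1/2$ and $\mu \le \Lambda$). The only dangerous term is $-\frac{1}{2}\sum_i \g(\rrh_i) = \frac{1}{2}\sum_i\log\rrh_i$, which can be very negative when particles cluster. It is tamed by grouping with $\int_{\tilde\Omega}|\nabla u_\rrh|^2$: the standard Onsager-type bound for smeared log-charges (obtained by spreading each Dirac on a circle of radius $\rrh_i$ and using positive semi-definiteness of the log kernel against the resulting zero-mass measure, cf.\ \cite{SS15-1,PS17}) yields $\frac{1}{4\pi}\int_{\tilde\Omega}|\nabla u_\rrh|^2 - \frac{1}{2}\sum_i\g(\rrh_i) \ge -C\mn$. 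Summing the pieces gives \eqref{a priori lower bound}.

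\textbf{Expected energy upper bound \eqref{a priori estimate}.} View $d\mu^{\otimes\mn}/\mn^\mn$ as the law of $\mn$ i.i.d.\ samples with density $\mu/\mn$ on $\Omega$ and expand $\G^{\mathrm{int}}$ using the electrostatic identity: integration by parts with Neumann boundary conditions on $\tilde\Omega$, together with averaging $u_\rrh$ on the smearing spheres around each $x_i$, gives (up to bounded Neumann-compensator and smearing residuals of size $O(\mn)$)
\[
\frac{1}{4\pi}\int_{\tilde\Omega}|\nabla u_\rrh|^2 - \frac{1}{2}\sum_i\g(\rrh_i) = \frac{1}{2}\sum_{i\ne j}\g(x_i-x_j) - \sum_i\int \g(x_i-y)\,d\mu(y) + \tfrac{1}{2}\int\!\!\int \g\,d\mu\,d\mu + O(\mn).
\]
Taking expectation and combining the three main contributions with algebraic coefficient $\tfrac{\mn-1}{2\mn} - 1 + \tfrac{1}{2} = -\tfrac{1}{2\mn}$ reduces the leading piece to $-\tfrac{1}{2\mn}\int\!\!\int \g\,d\mu\,d\mu$, which is $O(\mn)$ using $\mu \le \Lambda$ and $|\Omega|\asymp\mn$. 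Together with $\mathbb{E}[\sum_i\GG(x_i)] = O(1)$ (since $\GG$ has $O(1)$ integral concentrated in a unit neighborhood of $\partial\Omega$), this yields \eqref{a priori estimate}. This step is the main obstacle: individually $-\g(\rrh_i)$ and $\int|\nabla u_\rrh|^2$ diverge pointwise for close pairs, so the algebraic cancellation above must be carried out carefully, tracking both the Neumann compensator and the smearing corrections to ensure all residuals remain $O(\mn)$.

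\textbf{Partition function bound \eqref{a priori partition function control}.} For the upper bound, \eqref{a priori lower bound} gives $e^{-\beta\G^{\mathrm{int}}} \le e^{C\beta\mn}$, so
\[
\K^{\mathrm{int}}(\Omega) \le e^{C\beta\mn}\,\frac{\rho(\Omega)^\mn}{\mn^\mn} \lesssim_\beta e^{C'\beta\mn},
\]
since $\rho = dx$ in the bulk (as $\zeta = 0$ on $N\Sigma_V$) and $|\Omega|\asymp\mn$. For the lower bound, apply Jensen's inequality on the probability measure $\rho^{\otimes\mn}/\rho(\Omega)^\mn$: since $\mu$ and $\rho$ have comparable densities on $\Omega$, the expectation estimate \eqref{a priori estimate} transfers up to constants, giving $\log\K^{\mathrm{int}}(\Omega) \ge \log(\rho(\Omega)^\mn/\mn^\mn) - \beta\,O(\mn) \gtrsim_\beta -\mn$.
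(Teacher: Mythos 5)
There is a genuine gap in your bound for \eqref{a priori estimate}. After the algebraic cancellation you correctly reduce the leading piece to $-\frac{1}{2\mn}\int\!\!\int_{\Omega^2}\g\,d\mu\,d\mu$, but your claim that this is $O(\mn)$ is false. Since $|\Omega|\asymp\mn$ and $|\g(x-y)|=|\log|x-y||$ can be as large as $\log\mn$, you only get $\left|\int\!\!\int_{\Omega^2}\g\,d\mu\,d\mu\right|\lesssim \mn^2\log\mn$, hence $-\frac{1}{2\mn}\int\!\!\int\g\,d\mu\,d\mu = O(\mn\log\mn)$, not $O(\mn)$. This is precisely the ``pesky logarithm'' the paper's proof runs into at the intermediate step \eqref{intermediate estimate}. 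The paper removes it by a second, non-trivial step: partition $\Omega$ into subintervals of mass $n_i\in\{2,3\}$, use subadditivity (Lemma \ref{subadditivity}, \ref{subadditivity for partition functions}) to reduce to those small pieces where the intermediate $n_i\log n_i$ bound is already $O(1)$, and control the multinomial prefactor via Stirling. Your proof has no analog of this step, and without it the whole chain collapses: \eqref{a priori estimate} is only established up to a $\log\mn$ factor, and consequently the Jensen argument for the lower bound on $\log\K^{\mathrm{int}}(\Omega)$ only gives $\gtrsim_\beta -\mn\log\mn$, which is weaker than \eqref{a priori partition function control}.

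A secondary, less critical concern: your integration by parts replaces the Neumann Green function of $\tilde\Omega$ with the free kernel $\g$ plus ``Neumann-compensator residuals of size $O(\mn)$,'' but this replacement is not obviously affordable and is not carried out. The paper works directly with the Neumann Green function $G_\Omega$ and chooses the normalization $\int G_\Omega(X,Y)\,dX=0$ so that the cross terms and background--background terms vanish identically under the expectation; the entire expected energy reduces to the boundary-singular piece $\int_\Omega H_\Omega(\tilde x)\,d\mu$, which is then bounded by the Green's function estimate of \cite{AS21}. If you want to route through the free kernel instead, you must separately bound the image potential $G_\Omega-\g$ and its interaction with the charge, which is real work. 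Either way, the logarithm remains and the subadditivity/Stirling step cannot be avoided.
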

\begin{proof}
(\ref{a priori lower bound}) is an immediately consequence of the definition of $\G^{\mathrm{int}}(X_\mn,\mu,\Omega)$ and (\ref{a priori partition function control}) is a direct result of (\ref{a priori lower bound}) and (\ref{a priori estimate}), so we need only demonstrate (\ref{a priori estimate}). The approach is exactly as in Proposition 3.8 of \cite{AS21}; instead of then running through the technical details (which are all the same), we simply remark on the general structure and changes that are necessary.

First, we can integrate by parts as in \cite{AS21} to obtain 
\begin{align}\label{symmetrization}
\G^{\mathrm{int}}(X_\mn, \Omega)&=\frac{1}{2}\int\int_{\Delta^c}G_\Omega(X,Y)\left(\sum_{i=1}^\mn \delta_{\tilde{x_i}}-\mu\delta_{\R}\right)(X)\left(\sum_{i=1}^\mn \delta_{\tilde{x_i}}-\mu\delta_{\R}\right)(Y) \\
\nonumber &+\frac{1}{2}\sum_{i=1}^\mn H_\Omega(\tilde{x_i})+\sum_{i=1}^n\GG(x_i)
\end{align}
where $\tilde{x_i}=(x_i,0)$, $G_\Omega$ is the Neumann Green kernel for $\tilde{\Omega}$ (\ref{extension set}) solving
\begin{equation*}
\begin{cases}
-\Delta G_\Omega=2\pi\left(\delta_y(x)-\frac{\mu\delta_{\R}}{\mu(\Omega)}\right) & \text{in }\tilde{\Omega} \\
-\nabla G_\Omega \cdot \widehat{n}=0 & \text{on }\partial \tilde{\Omega}
\end{cases}
\end{equation*}
and 
\begin{equation*}
\lim_{Y\rightarrow X}G_\Omega(X,Y)+\log |X-Y|=H_\Omega(X).
\end{equation*}
Using (\ref{symmetrization}) we compute 
\begin{align*}
&\int_{\Omega^\mn}\G^{\mathrm{int}}(X_\mn,\mu,\Omega)~d\mu^{\otimes \mn}(X_\mn)=\frac{1}{2}\int_{\Omega^\mn}\left(\sum_{i=1}^\mn H_\Omega(\tilde{x_i})+2\sum_{i=1}^\mn\GG(x_i)\right)~d\mu^{\otimes \mn}(X_\mn)+ \\
&\frac{1}{2}\int_{\Omega^\mn}\left(\sum_{i\ne j}G_\Omega(\tilde{x_i}, \tilde{x_j})-2\sum_{i=1}^\mn \int G_\Omega(\tilde{x_i},Y)~d\mu\delta_{\R}(Y)+\int G_\Omega(X,Y)~d\mu\delta_\R^{\otimes 2}(X,Y)\right)~d\mu^{\otimes \mn}(X_\mn)  \\
&=\frac{1}{2}\mn(\mn-1)\mn^{\mn-2}\int_{\Omega^2}G_\Omega(X,Y)~d\mu\delta_\R^{\otimes 2}(X,Y)-\mn\mn^{\mn-1}\int_{\Omega^2}G_\Omega(X,Y)~d\mu\delta_\R^{\otimes 2}(X,Y)\\
&+\frac{1}{2}\mn^\mn\int_{\Omega^2}G_\Omega(X,Y)~d\mu\delta_\R^{\otimes 2}(X,Y)+\frac{1}{2}\mn\mn^{\mn-1}\int_{\Omega}H_\Omega(\tilde{x})+2\GG(x)~d\mu(x) \\
&=-\frac{1}{2}\mn^{\mn-1}\int_{\Omega^2}G_\Omega(X,Y)~d\mu\delta_\R^{\otimes 2}(X,Y)+\frac{1}{2}\mn^{\mn}\int_{\Omega}H_\Omega(\tilde{x})+2\GG(x)~d\mu(x).
\end{align*}
Now, up to an additive constant, we can choose $G_\Omega$ such that $\int G_\Omega(X,Y)~dx=0$ and
\begin{equation}\label{H bound}
H_\Omega(\tilde{x})\leq -\frac{1}{\mn}\int_\Omega\g(\tilde{x}-Z)~d\mu\delta_\R(Z)+C\max(\g(\dist(\tilde{x},\partial \Omega \times \R)),1)
\end{equation}
by the Green's function estimate of \cite[Proposition A.1]{AS21}. Since $\int G_\Omega(X,Y)~d\mu\delta_\R(Y)$ is harmonic with vanishing Neumann boundary condition, it is a constant. By Fubini,
\begin{equation*}
\int G_\Omega(X,Y)~d\mu\delta_\R(Y)~dx=\int G_\Omega(X,Y)~dXd\mu\delta_\R(Y)=0
\end{equation*}
so that constant is zero. Hence $\int G_\Omega(X,Y)~d\mu\delta_\R(Y)d\mu\delta_\R(X)=0$ too and so we only need to control 
\begin{equation*}
\frac{1}{2}\mn^{\mn}\int_{\Omega}H_\Omega(\tilde{x})+2\GG(x)~d\mu(x).
\end{equation*}
Using (\ref{H bound}), the definition of $\GG$ and that log is an integrable singularity at the origin, we conclude immediately that 
\begin{equation*}
\frac{1}{2}\mn^{\mn}\int_{\Omega}H_\Omega(\tilde{x})+2\GG(x)~d\mu(x) \lesssim \mn^\mn(\mn \log \mn)
\end{equation*}
and so 
\begin{equation}\label{intermediate estimate}
\frac{1}{\mn^\mn}\int_{\Omega^\mn}\G^{\mathrm{int}}(X_\mn,\mu,\Omega)~d\mu^{\otimes \mn}(X_\mn) \lesssim \mn \log \mn,
\end{equation}
which is almost the result we want aside from a pesky logarithm. To remove the logarithm for $\mn>3$, we partition $\Omega$ into a union $\Omega_1 \subset \Omega_2 \cdots \subset \Omega_k=\Omega$ with $\mu(\Omega_1)=n_1 \in \{2,3\}$ and $\mu(\Omega_i\setminus \Omega_{i=1})=n_i \in \{2,3\}$, and use subadditivity as in the proof of Lemma \ref{subadditivity for partition functions}. Observe the same argument for (\ref{intermediate estimate}) also applies to obtain
\begin{equation*}
\frac{1}{n_i^{n_i}}\int_{(\Omega_i \setminus \Omega_{i-1})^n_i}\G^{\mathrm{ext}}(X_{n_i},\mu,\Omega)~d\mu^{\otimes n_i}(X_{n_i}) \lesssim n_i \log n_i.
\end{equation*}
$k \sim \mn$, so we find
\begin{align*}
&\frac{1}{\mn^\mn}\int_{\Omega^\mn}\G^{\mathrm{int}}(X_\mn,\mu,\Omega)~d\mu^{\otimes \mn}(X_\mn)\\
&=\frac{1}{\mn^\mn}\frac{\mn!}{n_1! \cdots n_k!}\int_{\substack{\Omega_1 \times (\Omega_2\setminus \Omega_1)\times \\ \cdots \times (\Omega_k \setminus \Omega_{k-1})}}\G^{\mathrm{int}}(X_\mn \vert_{\Omega_1}, \mu, \Omega_1)+\sum_{i=2}^k \G^{\mathrm{ext}}(X_\mn\vert_{\Omega_i\setminus \Omega_{i-1}},\mu,\Omega_i\setminus \Omega_{i-1})~d\mu^{\otimes \mn} \\
&=\frac{\mn!}{\mn^\mn}\prod_{j=1}^k \frac{n_j^{n_j}}{n_j!}\left(\int_{\Omega_1^{n_1}}\G^{\mathrm{int}}(X_{n_1},\mu,\Omega_1)+\sum_{i=2}^k \frac{1}{n_i}\int_{(\Omega_i\setminus \Omega_{i-1})^{n_i}}\G^{\mathrm{ext}}(X_{n_i},\mu,\Omega_i \setminus \Omega_{i-1})~d\mu^{\otimes n_i}(X_{n_i})\right) \\
&\lesssim \mn\frac{\mn!}{\mn^\mn}\prod_{i=j}^k \frac{n_j^{n_j}}{n_j!}.
\end{align*}
Using Stirling's formula and $a+b \leq ab$ for integers $a,b>1$,
\begin{equation*}
\frac{\mn!}{\mn^\mn}\prod_{j=1}^k \frac{n_j^{n_j}}{n_j!} \lesssim \frac{\mn^\mn e^{-\mn}\sqrt{\mn}}{\mn^\mn}\prod_{j=1}^k \frac{n_j^{n_j}}{n_j^{n_j}e^{-n_j}\sqrt{n_j}}\leq e^{-\mn+\sum_{j=1}^kn_j}\sqrt{\frac{\mn}{\prod_{j=1}^kn_j}}\leq 1.
\end{equation*}
Thus, we actually have 
\begin{equation*}
\frac{1}{\mn^\mn}\int_{\Omega^\mn}\G^{\mathrm{int}}(X_\mn,\mu,\Omega)~d\mu^{\otimes \mn}(X_\mn) \lesssim \mn,
\end{equation*}
as desired.
(\ref{a priori estimate}) and (\ref{a priori lower bound}) immediately imply the partition function control. Using the lower bound on $\G$ (\ref{a priori lower bound}) and the definition (\ref{local next order partition function}), we have
\begin{equation*}
\K^{\mathrm{int}}(\Omega) \lesssim \frac{1}{n^n}\int_{\Omega^n} e^{\beta n}~d\rho ^{\otimes n} \lesssim \frac{|\rho(\Omega)|^n}{n^n}e^{\beta n}\lesssim  \frac{|\mu(\Omega)|^n}{n^n}e^{\beta n}=  e^{\beta n}.
\end{equation*}
using that $\mu$ is bounded above and below, and that $\rho$ agrees with Lebesgue measure on $\Omega\subset N\Sigma_V$.

To obtain a lower bound on $\K^{\mathrm{int}}(\Omega)$, we use the estimate (\ref{a priori estimate}) and Jensen's inequality. Tilting first by $\mu$ and using that $\rho$ agrees with Lebesgue measure on $\Omega$, we find 
\begin{align*}
\K^{\mathrm{int}}(\Omega)&=\frac{1}{\mn^\mn}\int_{\Omega^\mn}\exp \left(-\beta \G^{\mathrm{int}}(X_{\mn},\mu,\Omega)+\sum_{i=1}^\mn \log \frac{\rho(x_i)}{\mu(x_i)}\right)~d\mu^{\otimes \mn} \\
& \geq\exp \left(\frac{1}{\mn^\mn}\int_{\Omega^\mn}\left(-\beta \G^{\mathrm{int}}(X_{\mn},\mu,\Omega)+\sum_{i=1}^\mn \log \frac{\rho(x_i)}{\mu(x_i)}\right)~d\mu^{\otimes \mn}\right) \\
&\geq \exp \left(-C\beta \mn- \frac{1}{\mn^\mn}\int_{\Omega^\mn}\sum_{i=1}^\mn \log \mu(x_i)~d\mu^{\otimes \mn}\right) \\
&=\exp \left(-C\beta \mn-\int_{\Omega}\mu \log \mu\right)\geq \exp \left(-C_\beta \mn \right).
\end{align*}

\end{proof}

\section{Main Boostrap}\label{Section Bootstrap on Scales}
This section aims to prove the main probabilistic control on local energies, following a bootstrap on scales in the spirit of \cite{AS21}. We prove the following theorem.

\setcounter{theo}{0}
\begin{theo}[Local Law]\label{Local Law}
Let $\Omega \subset \B'$, the blowup of the bulk $\B$, satisfy $|\Omega|\in [L,2L]$ for some $L>\omega$ defined in the proof of Proposition \ref{decay control}. Let $C_0:=\frac{C}{8\pi}$, where $C$ is the constant of Lemma \ref{local energy control}. Then there exists a constant $\mathcal{C}$, independent of $L$, and a good event $\mathcal{G}_\Omega$ such that 
\begin{equation}\label{energy bound - local law}
\G^{\Omega}(\XN, \mu)+C_0\#(\XN \cap \Omega) \leq \mathcal{C}L
\end{equation}
on $\mathcal{G}_\Omega$, with 
\begin{equation}
\PNbeta(\mathcal{G}_\Omega^c)\leq C_1e^{-C_2\beta L}
\end{equation}
for some constants $C_1$ and $C_2$ dependent only on $\mu$.
\end{theo}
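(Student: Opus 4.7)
The plan is a downward induction on scales, coupled with the corresponding induction for Theorem \ref{Uniform Fluctuations}. The base case at a macroscopic scale $L_0 \sim N$ is furnished by Lemma \ref{global bound on partition functions}, which gives exponential moment control of the global next-order energy $\G$ and thus in particular of $\G^\Omega$ at that scale. Assuming \eqref{energy bound - local law} holds at all dyadic scales $2^k L$ for $k \geq 1$, I will prove it at scale $L$, stopping the induction at the threshold $\omega$ below which the argument breaks. A Chernoff reduction shows that it suffices to establish the exponential moment estimate
\begin{equation*}
\Esp\left[\exp\left(\tfrac{\beta}{2}\bigl(\G^\Omega(\XN,\mu)+C_0\#(\XN\cap\Omega)\bigr)\right)\indic_{\mathcal G}\right] \leq e^{\tilde{\mathcal C} L}
\end{equation*}
on a good event $\mathcal G$ whose complement has probability at most $C_1 e^{-C_2\beta L}$; choosing $\mathcal C$ larger than $2\tilde{\mathcal C}/\beta$ then yields the claim with the tail bound \eqref{energy bound - local law}.

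The event $\mathcal G$ is an intersection of three pieces: (a) the inductive local law on every dyadic enlargement $2^k\Omega$ of $\Omega$ up to the macroscopic scale; (b) the screenability condition $\int_{2\Omega \times \{\pm h\}}|\nabla u|^2 \leq a$ for some $h \in [L/2, 3L/2]$ with $a \lesssim L$, for the true electrostatic potential $u$ of \eqref{electrostatic potential}; and (c) a lower bound $\rrc_i \geq \epsilon$ on minimal distances in a neighborhood of $\partial\Omega$. Event (a) has the required probability by induction hypothesis, and (c) follows from (a) together with the discrepancy estimate applied at microscopic scales. The core of the argument is then to condition on $n := \#(\XN \cap \Omega)$ and write, using symmetry,
\begin{equation*}
\Esp[\cdots] \leq \sum_n \frac{e^{\frac{\beta}{2}C_0 n}}{\K(\mu,\zeta)}\binom{N}{n} I_n^{\text{int}}\cdot I_n^{\text{ext}}.
\end{equation*}
On $\mathcal G$, Lemma \ref{relaxation} applied to the restrictions of the true field to the slabs $\Omega\times\R$ and $\Omega^c\times\R$ gives $\G^\Omega \geq \H^{\text{int}}_{a,L/2}(X_n,\mu,\Omega) - E$ for a bounded error $E$ coming from the truncation of the intermediate region, converting each $I_n^{\bullet}$ into an integral of $\exp(-\tfrac{\beta}{2}\H^\bullet_{a,h})$.

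The screening procedure of Proposition \ref{screening result}, to be carried out in Appendix B, then modifies each minimizing screenable field into one with perfect Neumann boundary data at a cost of $O(L + a)$ in energy; this identifies $I_n^{\text{int}}$ with (a constant multiple of) the local partition function $\K^{\text{int}}(\Omega,\beta/2)$ and $I_n^{\text{ext}}$ with $\K^{\text{ext}}(\Omega^c,\beta)$. Using subadditivity of partition functions (Lemma \ref{subadditivity for partition functions}) in $\K(\mu,\zeta)$ to factor out $\K^{\text{ext}}(\Omega^c,\beta)$, after cancellation and summation in $n$ the bound reduces to
\begin{equation*}
\Esp[\cdots] \lesssim \frac{\K^{\text{int}}(\Omega,\beta/2)}{\K^{\text{int}}(\Omega,\beta)}\cdot e^{C L},
\end{equation*}
and the partition function ratio is itself $\leq e^{C'L}$ by Lemma \ref{energy integrals and local partition functions} combined with a Jensen tilt in the inverse temperature, completing the exponential moment bound.

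The main obstacle, and the reason for the coupled bootstrap, is establishing that event (b) has probability at least $1 - C_1 e^{-C_2\beta L}$. The two components of $\nabla u$ evaluated on the horizontal slab $y = \pm h$ are, up to constants, the linear statistics $\Fluct_N(\kappa_{a,h})$ and $\Fluct_N(\zeta_{a,h})$ of the smooth rescaled kernels \eqref{definition of rho}--\eqref{definition of zeta}, which have $H^{1/2}$-norms of order one (independent of $L$) and $C^3$-norms of order $1/h$. The elementary fluctuation bound \eqref{energy controls fluctuations} controls each such statistic only by the square root of the ambient energy, which is short by a factor of $\sqrt{L}$ of what is needed to conclude $\int_{2\Omega\times\{\pm h\}}|\nabla u|^2 \leq C L$. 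The resolution is to invoke Theorem \ref{Uniform Fluctuations} at scale $2L$, which gives sub-Gaussian control of these fluctuations at the correct order $O(1)$, directly yielding the $O(L)$ bound on the slab integral after averaging over $h\in[L/2, 3L/2]$ and applying Markov's inequality. This step is what forces Theorem \ref{Local Law} at scale $L$ and Theorem \ref{Uniform Fluctuations} at scale $2L$ to be treated simultaneously in the same inductive loop, and it is where the minimal scale $\omega$ enters: below it, the required transport estimates underlying Theorem \ref{Uniform Fluctuations} cease to provide useful control.
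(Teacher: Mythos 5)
Your high-level architecture matches the paper's: a downward dyadic bootstrap, Chernoff reduction to an exponential moment bound, conditioning on the number of points, relaxation to the minimal ``screenable'' energies, screening to Neumann fields, subadditivity of partition functions, and — crucially — coupling with Theorem~\ref{Uniform Fluctuations} (proved at scale $2L$) to control the field away from the axis. That structure is correct. However, the quantitative target you set for the decay event (b) is wrong, and the error is not cosmetic: it would break the bootstrap.

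You require $\int_{2\Omega\times\{\pm h\}}|\nabla u|^2 \leq a$ with $a \lesssim L$, and later say Theorem~\ref{Uniform Fluctuations} ``directly yield[s] the $O(L)$ bound on the slab integral.'' This cannot be the right target, for two reasons. First, the screening Proposition~\ref{screening result} produces an error term $L\, e(X_n)$ with $e(X_n) = \int_{\square_L\times\{-h_2,h_1\}}|\nabla w|^2$; if $e(X_n)\lesssim L$, that error is $O(L^2)$, which swamps the $\mathcal C L$ budget of the local law (and its coefficient would then grow with $L$, contradicting the requirement that $\mathcal C$ be scale-independent). What is actually needed, and what Proposition~\ref{decay control} delivers, is $e(X_n)\leq\epsilon$ for an arbitrarily small \emph{constant} $\epsilon$ (chosen at the end so that $C\epsilon$ is absorbed into $\mathcal C/8$). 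Second, your own diagnosis that the a priori fluctuation bound \eqref{energy controls fluctuations} is ``short by a factor of $\sqrt L$'' of the $O(L)$ target is off: plugging the inductive energy bound $\G^\Omega\lesssim L$ and the norms of $\kappa_{a,h},\zeta_{a,h}$ into \eqref{energy controls fluctuations} already gives $|\nabla u|\lesssim 1/\sqrt L$ at height $h\sim L$, hence a slab integral of order $1\lesssim L$. So if $O(L)$ were sufficient, no coupled bootstrap would be needed at all — the elementary estimate would close the argument. The entire point of invoking Theorem~\ref{Uniform Fluctuations} is to upgrade the pointwise bound on $|\nabla u|$ from $O(1/\sqrt L)$ to $O(1/(K_2\sqrt L))$ with $K_2$ free, via a Chernoff bound on a whole family of points, which improves the slab integral from $O(1)$ to $O(1/K_2^2) = \epsilon$.

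Two smaller inaccuracies: the $H^{1/2}$ norms of $\kappa_{a,h}$ and $\zeta_{a,h}$ are $O(1/h)$, not $O(1)$, since $\kappa_{a,h}=h^{-1}\kappa_{0,1}((\cdot-a)/h)$ and $H^{1/2}$ is invariant under rescaling of the argument but not under the overall $1/h$ prefactor; and the event (c) (a lower bound $\rrc_i\geq\epsilon$ near $\partial\Omega$) is not part of the paper's proof — the necessary control of bad truncation near the boundary is handled inside the screening construction itself, and the only auxiliary discrepancy fact used is that $|n-\mathsf n|\lesssim L^{3/5}$ on the good event.
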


It suffices to demonstrate the following proposition.

\begin{prop}\label{Main Bootstrap}
Let $\square_{2^kL}\subset \B'$ be intervals of integer mass with radius $[2^kL,2^{k+1}L]$ and fixed center and suppose that 
\begin{equation}\label{energy control assumption}
\G^{\square_{2^kL}}(\XN, \mu)+C_0\#(\XN \cap \square_{2^kL}) \leq \mathcal{C}2^kL
\end{equation}
for all $k \geq 1$ on an event $\mathcal{G}'_L$. Then, there is an event $\mathcal{G}_L\subset \mathcal{G}_L'$ such that 
\begin{equation*}
\G^{\square_L}(\XN, \mu)+C_0\#(\XN \cap \square_L) \leq \mathcal{C}L
\end{equation*}
on $\mathcal{G}_L$, and 
\begin{equation*}
\PNbeta(\mathcal{G}'_L \setminus \mathcal{G}_L) \leq C_1e^{-C_2\beta L}+e^{-\frac{\beta}{4}\mathcal{C}L}
\end{equation*} 
with $C_1$ and $C_2$ only dependent on $\mu$.
\end{prop}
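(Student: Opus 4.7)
\textbf{Proof plan for Proposition \ref{Main Bootstrap}.} The main strategy is a Chernoff bound in the spirit of the bootstrap in \cite{AS21}: noting that the quantity $\G^{\square_L}(\XN,\mu)+C_0\#(\XN\cap \square_L)$ is non-negative by Lemma \ref{local energy control}, I would apply Markov's inequality to obtain
\begin{equation*}
\PNbeta\bigl(\mathcal{G}'_L\setminus \mathcal{G}_L\bigr) \leq e^{-\frac{\beta}{2}\mathcal{C}L}\, \Esp_{\PNbeta}\!\left[\exp\!\left(\tfrac{\beta}{2}\bigl(\G^{\square_L}(\XN,\mu)+C_0\#(\XN\cap \square_L)\bigr)\right)\indic_{\mathcal{G}}\right],
\end{equation*}
where $\mathcal{G}\subset \mathcal{G}'_L$ is a good event (to be specified) on which the electric field decays appropriately away from the real axis. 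The goal is to show the expectation is at most $e^{\mathcal{C}L}$ for $\mathcal{C}$ independent of $L$, so that the Chernoff factor $e^{-\frac{\beta}{4}\mathcal{C}L}$ gives the required probability bound after taking $\mathcal{C}$ large.

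The next step is the structural decomposition announced in Section \ref{Outline of the Proof}. I would condition on $n:=\#(\XN\cap \square_L)$, separate the $n$ interior points from the $N-n$ exterior points, and use Lemma \ref{relaxation} (coupled with the decay built into $\mathcal{G}$) to bound from below the true energies $\G^{\square_L}$ and $\G^{\square_L^c}$ by screenable energies $\H_{a,h}^{\mathrm{int}}(X_n,\mu,\square_L)$ and $\H_{a,h}^{\mathrm{ext}}(X_{N-n},\mu,\square_L^c)$ with appropriate parameters $a,h$ depending on the scale. Combined with the identification $\PNbeta=\Q(\muv',\zeta_V')$ from Section \ref{Section Basic Results}, this yields
\begin{equation*}
\Esp_{\PNbeta}\bigl[\cdots\indic_{\mathcal{G}_n}\bigr] \leq \frac{1}{N^N\K(\muv',\zeta_V')}\binom{N}{n}\int_{\square_L^n\cap\mathcal{G}_n}\!\! e^{-\frac{\beta}{2}\H^{\mathrm{int}}+\frac{\beta}{2}C_0 n+\text{err}}\, d\rho^{\otimes n}\int_{(\square_L^c)^{N-n}\cap\mathcal{G}_n}\!\! e^{-\beta \H^{\mathrm{ext}}+\text{err}}\, d\rho^{\otimes(N-n)}.
\end{equation*}
Applying the subadditivity of next-order partition functions (Lemma \ref{subadditivity for partition functions}) in the denominator, a combinatorial Stirling cancellation leaves
\begin{equation*}
\Esp_{\PNbeta}\bigl[\cdots\indic_{\mathcal{G}_n}\bigr] \lesssim \frac{\K^{\mathrm{int}}(\muv',\zeta_V',\square_L;\beta/2)}{\K^{\mathrm{int}}(\muv',\zeta_V',\square_L;\beta)}\,\exp\!\bigl(\tfrac{\beta}{2}C_0 n+\text{screening errors}\bigr).
\end{equation*}
The a priori bound of Lemma \ref{energy integrals and local partition functions} applied to both $\beta$ and $\beta/2$ controls the log-ratio by $O_\beta(n)=O_\beta(L)$; the term $C_0 n$ is likewise $O(L)$ since on $\mathcal{G}'_L$ the discrepancy estimate of Lemma \ref{discrepancy estimate} applied at scale $2L$ gives $n\leq \muv'(\square_L)+\sqrt{\mathcal{C}L}\lesssim L$. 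Summing over $n$ then contributes at most $L$ terms of the same order.

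The heart of the argument, and the main obstacle, lies in controlling the \emph{screening error terms}. Most of these (interior averages over annuli of width $\sim h$, discrepancy contributions, corner terms) are bulk quantities which, by the screening construction of Appendix B and the input inequality \eqref{energy control assumption} at scale $2L$, are either genuinely lower-order in $L$ or can be absorbed into $\mathcal{C}L$ by choosing $\mathcal{C}$ large. The delicate term, however, is the vertical boundary energy
\begin{equation*}
L\int_{\square_L\times\{\pm h\}}|\nabla u|^2,
\end{equation*}
which naively is of the same order $L$ that we seek to control. The key observation I would exploit is that the two components of $\nabla u$ restricted to the horizontal line $\{y=\pm h\}$ are exactly fluctuations of the rescaled test functions $\kappa_{a,h}$ and $\zeta_{a,h}$ defined in \eqref{definition of rho}--\eqref{definition of zeta}, which live at scale $h$ around points $a$. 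The energy bound \eqref{energy controls fluctuations} only produces a bound proportional to $\sqrt{\mathcal{C}L}\cdot \|\cdot\|_{H^{1/2}}$, which squares back to order $L$ and is insufficient. Instead I would invoke Theorem \ref{Uniform Fluctuations} at the same scale $2L$ (this is precisely why the two theorems must be bootstrapped together) applied to a discretization of $a$ over a cover of $\square_L$; the Gaussian-type bound on $\log\Esp[\exp(s\Fluct_N(\kappa_{a,h}))]$ with $s$ of order $\sqrt{L}$ yields, after redefining $\mathcal{G}$ to include the event where all these fluctuations are bounded, a deterministic bound on the boundary energy of order $o(L)$ plus an $O(L)$ contribution that can be absorbed in $\mathcal{C}$. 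The failure event contributes another $C_1 e^{-C_2\beta L}$ to the probability bound via \eqref{good event}, and combining everything produces the stated control $\PNbeta(\mathcal{G}'_L\setminus\mathcal{G}_L)\leq C_1 e^{-C_2\beta L}+e^{-\frac{\beta}{4}\mathcal{C}L}$.
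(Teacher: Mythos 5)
Your proposal follows essentially the same path as the paper: condition on the number of interior points $n$, lower-bound the true energies by the screenable quantities $\H^{\mathrm{int}}_{a,h}$ and $\H^{\mathrm{ext}}_{a,h}$ on a good decay event (the paper's Proposition~\ref{decay control}), use the screening construction together with subadditivity of local partition functions and a Stirling cancellation to reduce to the ratio $\K^{\mathrm{int}}(\Omega,\beta/2)/\K^{\mathrm{int}}(\Omega,\beta)$ controlled by Lemma~\ref{energy integrals and local partition functions}, control $|n-\mn|$ via the discrepancy estimate, and close with a Chernoff bound. You also correctly identify the crux — that the term $L\!\int_{\square_L\times\{\pm h\}}|\nabla u|^2$ is a priori of the critical order $L$ and must be beaten by feeding Theorem~\ref{Uniform Fluctuations} at the larger dyadic scales into a discretized pointwise bound on the electric field, which is precisely why the two theorems are bootstrapped together; the only small quantitative differences from the paper (it obtains $|n-\mn|\lesssim L^{3/5}$ rather than $\sqrt{L}$, and the Chernoff parameter $s=ch$ is of order $L^{3/2}$ rather than $\sqrt{L}$, with the boundary energy itself controlled down to a fixed $\epsilon$ rather than $O(L)$) do not affect the validity of the argument.
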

\begin{proof}[Proof of Theorem \ref{Local Law}]
The idea is simply to start at the macroscale in $\B'$, where we know the above local law holds outside of an exponentially small event (cf. \cite[Lemma 2.3]{BLS18}), and apply Proposition \ref{Main Bootstrap} iteratively down to $\square_L$. 
At each application, proving the result on $\square_{2^kL}$, we lose an event of probability no more than $\leq C_1e^{-C_2\beta 2^kL}+e^{-\frac{\beta}{4}\mathcal{C}2^kL}$. Thus, the local law at scale $L$ holds off of an event of size at most 
\begin{equation*}
\sum_{k=0}^\infty C_1e^{-C_2\beta 2^kL}+e^{-\frac{\beta}{4}\mathcal{C}2^kL} \lesssim C_1\sum_{k=1}^\infty \left(e^{-C_2\beta L}\right)^k+\sum_{k=1}^\infty \left(e^{-\frac{\beta}{4}\mathcal{C}L}\right)^k \leq C_1'e^{-C_2\beta L}+C_3e^{-\frac{\beta}{4}\mathcal{C}L},
\end{equation*}
which up to redefining constants independently of $L$ demonstrates the result. 
Finally, observing that neither the proof of Proposition \ref{Main Bootstrap} or the previous argument requires $\square_L$ to be centered at the origin but rather can be centered at any point, we arrive at the desired result for all $\Omega \in \mathcal{Q}_L$. Covering $\B'$ with $\mathcal{Q}_L$ allows us to conclude the result.
\end{proof}
Thus, we set about proving Proposition \ref{Main Bootstrap}. We will assume throughout the result of Section \ref{Section Uniform Bound on Fluctuations}; namely, if Theorem \ref{Local Law} holds down to scale $L$, then Theorem \ref{Uniform Bound on Fluctuations} holds down to scale $L$ too.

\subsection{Screening}
Our main tool is a screening result, based on ideas from \cite{ACO09} and \cite{SS12}. It was introduced for the log gas in \cite{SS15-1} and used extensively in \cite{PS17} and \cite{LS15}. The version presented below is refined and optimized for the log gas, as was done for the Coulomb gas in \cite{AS21}. We also introduce a new inner screening technique for the one-dimensional log gas, which allows us to screen fields external to set $\square_L$. This new tool is an essential part of our bootstrap procedure.
\begin{prop}\label{screening result}
Assume $\mu$ is a density satisfying $0<m \leq \mu \leq \Lambda$ in $\square_L $ (outer screening) or $\square_{2L} \setminus \square_L$ (inner screening), where $\square_L\in \mathcal{Q}_L$ satisfies $\mu(\square_L)=\mn \in \mathbb{N}$ (resp. $\mu(\square_L^c)=\mn \in \mathbb{N}$ for inner screening). Then, there exists $C$ depending only on $m, \Lambda$ such that the following holds. 

Suppose $L\geq \tilde{l}\geq l \geq C$, and assume in the inner case that $\square_L  \subset \{x \in \R: \dist(x, \partial \Sigma) \geq \tilde{l}\}$. In the outer screening, let $\frac{3L}{4} \geq h_1,h_2 \geq \frac{L}{2}$ and in the inner screening, let $\frac{3L}{2} \leq h_1,h_2 \leq 2L$. In the outer screening, let $X_n$ be a configuration of points in $\square_L$ and let $w$ solve 
\begin{equation}\label{outer screening w}
\begin{cases}
-\Delta w=2\pi \left(\sum_{i=1}^n \delta_{(x_i,0)}-\mu \delta_\R\right) \hspace{3mm} &\text{in } \square_L \times [-L,L].
\end{cases}
\end{equation}
In the inner screening, let $X_n$ be a configuration of points in $\square_L^c$ and let $w$ solve 
\begin{equation}\label{inner screening w}
\begin{cases}
-\Delta w=2\pi \left(\sum_{i=1}^n \delta_{(x_i,0)}-\mu \delta_\R\right) \hspace{3mm} &\text{in } (\square_L \times [-L,L])^c.
\end{cases}
\end{equation}

In the outer screening, denote
\begin{align}
&S(X_n)=\int_{(\square_{L-\tilde{l}}\setminus \square_{L-2\tilde{l}})\times [-h_2,h_1]}|\nabla w_{\rrc}|^2  \\
& e(X_n)=\int_{\square_{L} \times \{-h_2,h_1\}}|\nabla w|^2.
\end{align}
In the inner screening, denote
\begin{align}
&S(X_n)=\int_{(\square_{L+2\tilde{l}}\setminus \square_{L+\tilde{l}}) \times [-L,L]}|\nabla w_{\rrc}|^2  \\
& e(X_n)=\int_{\square_{L+2\tilde{l}}\times \{-h_2,h_1\}}|\nabla w|^2.
\end{align}
Let $h=\max\{h_1,h_2\}$. Assume the screenability condition 
\begin{equation}\label{screenability}
\max\left(M_0^2, \frac{hS(X_n)}{\tilde{l}l^2}\right)\leq \mathsf{c},
\end{equation}
where $M_0$ is a constant dependent on the configuration defined in the proof (\ref{definition of M} - \ref{definition of M - neg}), and $\mathsf{c}$ is a constant dependent only on $m$ and $n$ defined in the proof as well (\ref{definition of little c}). Then, there exists a $T \in [\tilde{l}, 2\tilde{l}]$, a set $\Old$ such that $\square_{L-T-1} \subset \Old \subset \square_{L-T+1}$ (outer screening) or $\R \setminus \square_{L-T+1} \subset \Old \subset \R \setminus \square_{L-T-1}$ (inner screening), a subset $I_{\partial} \subset \{1,2,\dots,n\}$, and a positive measure $\tilde{\mu}$ in $\New:=\square_L \setminus \Old$ in the outer screening and in $\square_L^c \setminus \Old$ in the inner screening such that the following holds:
\begin{enumerate}
    \item With $n_\Old$ defined below, we have
\begin{itemize}
\item $\tilde{\mu}(\New)=\mn-n_\Old$
\item $|\mu(\New)-\tilde{\mu}(\New)|\lesssim h+\frac{S(X_n)}{\tilde{l}}$
\item $\|\mu-\tilde{\mu}\|_{L^\infty(\New)}<\frac{m}{2}$
\item $\left|\frac{\mu-\tilde{\mu}}{\tilde{\mu}}\right|_{L^\infty(\New)}\leq\frac{1}{2}$
\item $\int_{\New}(\mu-\tilde{\mu})^2 \lesssim l$.
\end{itemize}
    \item $\#I_\partial \lesssim \frac{S(X_n)}{\tilde{l}}$.
    \item $\G^{\mathrm{int}}(Y_{\mn}, \mu, \Omega)$ and $\E^{\mathrm{int}}(w, \Omega)$ are comparable, i.e.
\begin{align}\label{screening error}
\nonumber \G^{\mathrm{int}}(Y_{\mn}, \mu, \Omega)-&\E^{\mathrm{int}}(w, \Omega)\lesssim_{m,\Lambda} \\
&\frac{lS(X_n)}{\tilde{l}}+\sum_{i,j}\g(x_i-z_j)+\tilde{l}+\G^{\mathrm{int}}(Z_{\mn-n_\Old}, \tilde{\mu}, \New) +|n-\mn|+\tilde{l}LM_0^2 +Le(X_n),
\end{align}
where
\begin{equation*}
J=\{(i,j) \in I_\partial \times \{1,\dots,\mn-n_\Old\}:|x_i-z_j|\leq \rrc_i\}.
\end{equation*}
\end{enumerate}
For inner screening, the same result holds with $\E^{\mathrm{int}}$ and $\G^{\mathrm{int}}$ in (\ref{screening error}) replaced by $\E^{\mathrm{ext}}$ and $\G^{\mathrm{ext}}$.
\end{prop}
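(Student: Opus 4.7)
The plan is to adapt the partition/screening construction used for the outer case in \cite{PS17} and refined in \cite{AS21}, and then to carry out a symmetric inner-screening variant as the main new contribution. First I choose a ``good slice'' radius $T \in [\tilde{l}, 2\tilde{l}]$. Averaging the quantity $\int_{\partial \square_{L-T} \times [-h_2,h_1]} |\nabla w|^2$ in $T$ gives, by Fubini, the value $\frac{1}{\tilde{l}} S(X_n)$, so some $T$ satisfies $\int_{\partial \square_{L-T} \times [-h_2,h_1]} |\nabla w|^2 \lesssim S(X_n)/\tilde{l}$; this is the $T$ I select, with $\Old$ the associated slab (and the obvious reversal $T \mapsto L+T$ in the inner case). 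The screenability hypothesis (\ref{screenability}) ensures this boundary flux is small compared to $l^2/h$, which is the quantitative smallness required for solvability of the compensating PDE in step three.

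Second I build the new configuration $Y_\mn$ via a cell construction on $\New = \square_L \setminus \Old$ (resp.\ $\square_L^c \setminus \Old$). I partition $\New$ into intervals of scale $l$; on each such interval $I$, I place $\lceil \mu(I) \rceil$ or $\lfloor \mu(I) \rfloor$ points uniformly, with the choice of roundings arranged so that the total placed charge in $\New$ is the integer $\mn - n_\Old$. The modified background $\tilde{\mu}$ is taken piecewise constant on each cell, tuned so $\tilde{\mu}(\text{cell})$ matches the count of points placed there. Since the cell-by-cell rounding error is $O(1/l)$ per cell among $O(|\New|/l)$ cells, I obtain directly the $L^\infty$, $L^2$ and total-mass estimates of item (1) as soon as $l$ exceeds an absolute constant, which is our standing hypothesis $l \ge C$. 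Old points lying in the thin slab $\square_{L-T+1}\setminus \square_{L-T-1}$ are removed and indexed by $I_\partial$; their count is controlled by the divergence theorem applied to $w$ on this slab, together with the flux bound from step one, yielding $\#I_\partial \lesssim S(X_n)/\tilde{l}$, which is item (2).

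Third I construct a screened electric field. Keep $w$ unchanged on $\Old$, and on $\New \times [-L,L]$ solve the Neumann problem $-\Delta w' = 2\pi(\sum_i \delta_{y_i} - \tilde{\mu}\,\delta_\R)$ with normal data equal to $\nabla w \cdot \widehat{n}$ on $\partial \Old$ and zero on the remainder of the outer boundary, with data on the horizontal faces $\square_L \times \{-h_2, h_1\}$ averaged to zero. Solvability requires compatibility between net charge and net flux; the constant $M_0$ is defined precisely as this residual mean flux, and (\ref{screenability}) ensures $M_0^2 \le \mathsf{c}$ so the required adjustment costs at most $\tilde{l}LM_0^2$. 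A standard $H^1$-estimate for the Neumann problem then bounds the energy of $w'$ by $Le(X_n) + \tilde{l}LM_0^2 + lS(X_n)/\tilde{l} + \tilde{l}$, the last term being the background self-energy in the narrow corridor. Concatenating $w$ on $\Old$ with $w'$ on $\New$ yields an admissible electric field for $Y_\mn$, so by Lemma \ref{projection} and summation of truncation errors we obtain item (3): the sum $\sum_J \g(x_i - z_j)$ appears from near-collisions of $I_\partial$ points with placed points where the truncation $\rrc$ fails to separate, the $\G^{\mathrm{int}}(Z_{\mn - n_\Old}, \tilde{\mu}, \New)$ term accounts for the self-interaction inside $\New$, and $|n - \mn|$ absorbs lower-order combinatorial corrections.

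The main obstacle is the \textbf{inner screening} construction, which has no direct precedent: the exterior region is unbounded, so one cannot literally port the solvability argument of \cite{PS17, AS21}. I handle this by truncating at heights $h_1, h_2 \in [\tfrac{3L}{2}, 2L]$ and working in the compact annular slab $(\square_{L+2\tilde{l}}\setminus \square_L) \times [-L,L]$, using $e(X_n)$ to control the cost of imposing the Neumann condition on the horizontal faces, and using the hypothesis $\square_L \subset \{\dist(\cdot, \partial \Sigma) \ge \tilde{l}\}$ to guarantee $\mu \ge m > 0$ on all cells where points are placed. A mirror-image mean-value argument on $T$, a symmetric cell construction in $\square_{L+2\tilde{l}} \setminus \square_{L+\tilde{l}}$, and the same PDE compensation scheme then yield the analogue with $\G^{\mathrm{ext}}$, $\E^{\mathrm{ext}}$ in place of the interior energies; the orientation of $\widehat{n}$ on $\partial \Old$ is reversed but the bookkeeping is otherwise identical, giving (\ref{screening error}) in the stated form.
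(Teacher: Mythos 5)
Your proposal captures the skeleton of the argument — a mean-value selection of the slab radius $T$, a partition of $\New$ into cells at scale $l$, and a concatenation of the original field on $\Old$ with a compensating field on $\New$ — but the crucial construction of $\tilde{\mu}$ is not the one the proof requires, and this gap is not cosmetic.

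The central misreading is that you treat the new points $Z_{\mn - n_\Old}$ as objects to be \emph{constructed deterministically} (placing $\lceil\mu(I)\rceil$ or $\lfloor\mu(I)\rfloor$ points per cell), and then tune $\tilde{\mu}$ to match those counts. But the statement of the proposition leaves $Z_{\mn-n_\Old}$ free: the term $\G^{\mathrm{int}}(Z_{\mn-n_\Old},\tilde{\mu},\New)$ on the right-hand side of (\ref{screening error}) must hold for an \emph{arbitrary} configuration of new points, because Proposition \ref{volume of configurations} later integrates over all such $Z$. Hence $\tilde{\mu}$ cannot be defined by counting placed points. In the actual construction, $\tilde{\mu} = \mu + \sum_k \mathbbm{1}_{H_k} m_k$ with
\begin{equation*}
m_k|H_k| = \int_{\partial D_0 \cap \partial \tilde{H}_k} E_{\rrc}\cdot\hat{n} + M_0|H_k| - n_k,
\end{equation*}
where $n_k$ accounts for the smeared charge mass falling into $\tilde H_k$. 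The $m_k$ are dictated by \emph{flux balance}: they are exactly what is needed for the cell-wise Neumann problems $h_{2,k}$ (which carry away the flux of the original field through $\Gamma$ and push a uniform flux $M_0^\pm$ out through the horizontal faces) to be solvable with mean-zero data. Your ``rounding error $O(1/l)$ per cell'' argument for the $L^\infty$, $L^1$ and $L^2$ bounds on $\mu-\tilde{\mu}$ therefore proves bounds for the wrong object; in particular it cannot produce the $S(X_n)/\tilde{l}$ and $h$ terms appearing in the true estimates, because those come precisely from the flux integrals defining $m_k$.

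A second, related gap: you define $M_0$ as the ``residual mean flux'' required for solvability of a single global Neumann problem on $\New\times[-L,L]$, but in the construction $M_0^\pm$ are defined independently as the averaged normal components of $E_{\rrc}$ over $\Old\times\{\pm h\}$ (see (\ref{definition of M})--(\ref{definition of M - neg})), and they enter the definition of the $m_k$ and the horizontal Neumann data of $h_{2,k}$ and $h_4$ before any compensation problem is posed. The field is also not built by solving one Neumann problem on all of $\New\times[-L,L]$: it is assembled from four pieces ($E_1$ completing smeared charges, $E_2$ handling the background shift and flux, $E_3$ a Neumann field for the arbitrary new configuration $Z$ against $\tilde{\mu}$, and $E_4$ evacuating the $D_1$ region), and the decomposition into these pieces is what allows the error to be split into the stated summands. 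Without the flux-driven definition of $\tilde\mu$ and $M_0$, the screenability hypothesis (\ref{screenability}) has no role, the error term $\tilde{l}LM_0^2$ does not appear, and the estimates of item (1) are not reproducible. Since your inner screening scheme inherits the same cell/$\tilde\mu$ construction, it carries the same defect, even though the geometric set-up (mirrored $T$, truncation at heights $h_1,h_2\in[\tfrac{3L}{2},2L]$) is appropriate.
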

The proof of this statement is given in Appendix B. Examining the error terms above, there is no reason a priori that $Le(X_n)$ should live at scale less than $L$. Below we show that with high probability this is arbitrarily smaller than $L$, and hence that $\E^{\mathrm{int}}(w, \square_L)$ and $\E^{\mathrm{ext}}(w,\square_L^c)$ for a screenable field and the Neumann energy of a corresponding screened configuration are comparable. 

\subsection{Control of Screening Errors}
We first show that on a good event we have requisite decay away from the axis at local scales.
\begin{prop}\label{decay control}
Let $\XN \subset \R$ be a configuration, and $u$ associated as in (\ref{electrostatic potential}). Let $\Omega=\square_L \subset \B'$ where $L>\omega$ for some $\omega$ defined below. Let $h_L =\frac{L}{2}$. Then, for $\epsilon>0$ arbitrarily small there are constants $C_1$ and $C_2$, dependent only on $m, \Lambda, K_1$ and $\epsilon$, and an event $\mathcal{G}$ with $\mathcal{G}\subset \mathcal{G}_L'$ such that for large enough $N$
\begin{equation}\label{local control of top energy}
\int_{2\Omega \times \{\pm h_L\}}|\nabla u|^2 \leq \epsilon 
\end{equation}
on $\mathcal{G}$, with 
\begin{equation}\label{local event}
\PNbeta(\mathcal{G}_L' \setminus \mathcal{G})\leq C_1e^{-C_2 \beta L}.
\end{equation}
The same result holds for $\int_{2\Omega\times \{\pm h_L\}}|\nabla u|^2$, with $h_L=\frac{3L}{2}$.
\end{prop}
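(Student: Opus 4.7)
The plan is to realize $\nabla u$ on the horizontal lines $\{y = \pm h_L\}$ as a pair of fluctuations against the smooth kernels $\kappa_{a, h_L}$ and $\zeta_{a, h_L}$ introduced in (\ref{definition of rho})--(\ref{definition of zeta}), and to control those fluctuations using Theorem \ref{Uniform Fluctuations}, which is available at all scales $\geq 2L$ by the coupling of Section \ref{Section Uniform Bound on Fluctuations} combined with the bootstrap hypothesis $\mathcal{G}_L'$. Direct differentiation of $u = \g * \bigl(\sum_i \delta_{(x_i, 0)} - \mu\delta_\R\bigr)$ gives
\begin{equation*}
  2\pi\, \partial_x u(a, \pm h_L) = \Fluct(\kappa_{a, \pm h_L}), \qquad 2\pi\, \partial_y u(a, \pm h_L) = \Fluct(\zeta_{a, \pm h_L}),
\end{equation*}
so the task reduces to pointwise-in-$a$ control of these two fluctuations followed by integration over $|2\Omega| \sim L$.

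Because $\kappa_{a, h_L}$ decays only like $1/|s - a|$, Theorem \ref{Uniform Fluctuations} cannot be applied to it directly. I would introduce a smooth dyadic partition of unity $\{\chi_k\}_{k \geq 0}$ centered at $a$, with $\chi_0$ a bump of radius $CL$ for a fixed constant $C \geq 2$ and $\chi_k$ supported on the annulus $\{2^k L \leq |s - a| \leq 2^{k+2} L\}$ for $k \geq 1$, and write $\kappa_{a, h_L} = \sum_k \chi_k(\cdot - a)\, \kappa_{a, h_L}$ (and similarly for $\zeta_{a, h_L}$). Each summand is a compactly supported rescaled test function at scale $2^k L \geq 2L$ of amplitude $\lesssim 1/(2^k L)$; scale invariance of $\|\cdot\|_{H^{1/2}}$ then yields $\|\chi_k\, \kappa_{a, h_L}\|_{H^{1/2}} \lesssim 1/(2^k L)$ with a constant independent of $k$ and $L$. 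Applying Theorem \ref{Uniform Fluctuations} in its sub-Gaussian regime gives
\begin{equation*}
  \PNbeta\bigl(|\Fluct(\chi_k\, \kappa_{a, h_L})| \geq t_k\bigr) \leq 2 \exp\bigl(-c\, \beta\, (2^k L)^2\, t_k^2\bigr),
\end{equation*}
and choosing $t_k = c'\, 2^{-k}\, \sqrt{\epsilon / L}$, so that $\sum_k t_k \lesssim \sqrt{\epsilon / L}$, produces after a geometric union bound in $k$ the pointwise estimate $|\nabla u(a, \pm h_L)|^2 \leq c''\, \epsilon / L$ at each fixed $a$ off an exceptional event of $\PNbeta$-measure at most $\exp(-c\, \beta\, \epsilon\, L)$.

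To upgrade this pointwise estimate to a uniform-in-$a$ control, I would establish a Lipschitz estimate on $a \mapsto \nabla u(a, \pm h_L)$ by repeating the same dyadic argument applied to the differentiated kernels $\partial_a \kappa_{a, h_L}$ and $\partial_a \zeta_{a, h_L}$, which have amplitude $\lesssim 1/L^2$ and the same decay. This yields a Lipschitz constant $\lesssim C/L^2$ on a good event, so that a grid of $O(1)$ points in $2\Omega$ (whose length is $\sim L$) together with the Lipschitz bound suffices to control $\sup_{a \in 2\Omega}|\nabla u(a, \pm h_L)|^2$ by $c''\, \epsilon / L$. Multiplying by $|2\Omega|$ gives $\int_{2\Omega \times \{\pm h_L\}} |\nabla u|^2 \lesssim \epsilon$, and a union bound over the $O(1)$ grid points preserves the rate $\PNbeta(\mathcal{G}_L' \setminus \mathcal{G}) \leq C_1 \exp(-C_2\, \beta\, L)$ after redefining the constants. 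The case $h_L = 3L/2$ is handled identically, since the kernels $\kappa_{a, 3L/2}, \zeta_{a, 3L/2}$ obey the same scaling.

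The main obstacle is the scale mismatch: the natural scale of the fluctuation kernels $\kappa_{a, h_L}$ and $\zeta_{a, h_L}$ is precisely $L$, the scale at which we are trying to establish the local law, whereas Theorem \ref{Uniform Fluctuations} is only accessible from the bootstrap hypothesis at scales $\geq 2L$. The dyadic decomposition with an inflated inner radius $CL$ is designed to push each summand into this admissible regime, but one must carefully verify that all profile-dependent constants appearing in the exponential-moment bound of Theorem \ref{Uniform Fluctuations} ($\|\theta\|_{C^3}$, $\|\theta\|_{H^{1/2}}$, etc.) remain uniformly bounded across the dyadic levels $k$ and across $L$, and that the resulting threshold $\omega$ depends only on $(m, \Lambda, \beta, \epsilon)$, so that the inductive step in Proposition \ref{Main Bootstrap} closes.
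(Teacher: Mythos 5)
Your proposal departs from the paper in two respects, and both departures introduce genuine gaps. First, the paper does \emph{not} decompose $\kappa_{a,h}$ dyadically and feed the pieces into the compactly-supported case of Theorem~\ref{Uniform Fluctuations}; it invokes the second case of Theorem~\ref{Uniform Bound on Fluctuations}, which is stated and proved \emph{directly} for $\kappa_{a,h}$ and $\zeta_{a,h}$ (via the anisotropy control of Proposition~\ref{easy Error2 bound - local laws}, which already contains the multiscale decomposition internally). This matters quantitatively: the error terms in Theorem~\ref{Uniform Fluctuations} contain $\frac{s^2}{\beta}\max(\|\theta\|_{C^1},\|\theta\|_{C^3}^2)$, which scales \emph{linearly}, not quadratically, in the amplitude of the profile. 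Each dyadic piece $\chi_k\kappa_{a,h_L}$ has amplitude $\sim (2^kL)^{-1}$, so after optimizing the Chernoff bound the effective variance is $\sim (2^kL)^{-1}$ rather than $\|\chi_k\kappa_{a,h_L}\|_{H^{1/2}}^2\sim (2^kL)^{-2}$ as your argument assumes. With $t_k = c'2^{-k}\sqrt{\epsilon/L}$ the tail bound you obtain at level $k$ is $\exp(-c\beta\epsilon\, 2^{-k})$, which tends to $1$ as $k\to\infty$: the "geometric union bound in $k$" does not converge. Even the $k=0$ piece alone only produces a rate $\exp(-c\beta\epsilon)$, far from the claimed $\exp(-c\beta\epsilon L)$. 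The direct version for $\kappa_{a,h}$ has $s^2$-error $\frac{s^2}{\beta h^2}$ — the honest quadratic scaling — and this is what makes the paper's choice $s\sim h$, threshold $M\sim\sqrt{L}$ produce the rate $\exp(-C\beta L)$. Separately, your far dyadic annuli eventually leave the bulk $\mathcal B$, so Theorem~\ref{Uniform Fluctuations} (which requires $z\in\mathcal B$) simply does not apply to them.

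Second, the step from pointwise to uniform control differs. You propose to get a Lipschitz bound on $a\mapsto\nabla u(a,\pm h_L)$ probabilistically, again via the same dyadic Chernoff machinery applied to $\partial_a\kappa$, $\partial_a\zeta$; this inherits the problem above. The paper instead bounds $|\partial_x E|_{L^\infty(I_i)}$ \emph{deterministically} on $\mathcal G_L'$ using the energy-to-fluctuation estimate (Lemma~\ref{Local Laws Energy Estimate}) together with the local law hypothesis~\eqref{energy control assumption} at scales $2^kL$ and a global estimate outside the bulk — no Chernoff, no union bound for this part. It then splits $2\Omega$ into a \emph{constant} number $K$ of subintervals and only needs the Chernoff bound at $K$ fixed centers, so the union bound is over an $O(1)$ set and the exponential rate is preserved. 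Your account of the strategy — realize the field as fluctuations of $\kappa,\zeta$ and Chernoff-bound the result — matches the paper at the top level, but the dyadic reduction to the compactly supported case does not close, and the hybrid deterministic/probabilistic structure the paper uses is essential.
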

\begin{proof}
Let $h=h_L$ for ease of notation. Letting $E$ be shorthand for $\nabla u$, we recall that $E_x$ and $E_y$ are fluctuations and make use of Theorem \ref{Uniform Bound on Fluctuations} at scale $2^kL$. Namely, 
\begin{equation*}
E_x(a,h)= \int_{\R}\kappa_{a,h}(x) d\left(\sum_{i=1}^N \delta_{x'_i}-\mu\right)(x),
\end{equation*}
with $\kappa_{a,h}(x)=2\pi \frac{-(a-x)}{(a-x)^2+h^2}$ and similarly, for the $y$-component, we have 
\begin{equation*}
E_y(a,h)= \int_{\R} \zeta_{a.h}(x) d\left(\sum_{i=1}^N \delta_{x_i'}-\mu\right)(x),
\end{equation*}
with $\zeta_{a,h}(x)=-2\pi h\frac{1}{(a-x)^2+h^2}$. We split $2\Omega$ into $K$ equally sized subintervals $I_i$, and let $z_i=(a_i,\pm h)$ denote the center of the subinterval $I_i$. On each subinterval $I_i$, we estimate
\begin{align*}
\int_{I_i}|E|^2&\leq 2\int_{I_i}|E_x(z)-E_x(z_i)|^2+2\int_{I_i}|E_x(z_i)|^2+2\int_{I_i}|E_y(z)-E_y(z_i)|^2+2\int_{I_i}E_y(z_i)^2 \\
&\leq 4 |\partial_x E|_{L^\infty(I_i)}^2\int_{I_i}|z-z_i|^2+\frac{4L}{K}|E_x(z_i)|^2+\frac{4L}{K}|E_y(z_i)|^2 \\
&\leq \frac{32L^3}{K^3} |\partial_x E|_{L^\infty(I_i)}^2+\frac{4L}{K}|E(z_i)|^2.
\end{align*}
\subsection*{Control of $|\partial_x E|_{L^\infty(I_i)}^2$}
We first estimate $|\partial_x E|_{L^\infty(I_i)}^2$ using the energy estimate (Lemma \ref{Local Laws Energy Estimate}) and the control (\ref{energy control assumption}). Without loss of generality we consider $|\partial_xE_y((0,h))|$, since the argument for other $a_i \in \B'$ is analogous.

There is some $\alpha>0$, independent of $N$, such that $[-\alpha N,\alpha N]\subset \B'$. We apply the control of \ref{energy control assumption} inside of $[-\alpha N,\alpha N]$, and exploit a uniform control outside. In order to make use of the decay of $\zeta_{0,h}$, we introduce a partition of unity $\{\varphi_k\}_{k=1}^{k_*+1}$ satisfying 
\begin{align*}
\supp(\varphi_k)& \subset \square_{2^kL}\setminus \square_{2^{k-1}L} \\
|\partial_x \zeta_{0,h} \varphi_k|_{C^m( \square_{2^kL}\setminus \square_{2^{k-1}L})}&\lesssim |\partial_x \zeta_{0,h}|_{C^m( \square_{2^kL}\setminus \square_{2^{k-1}L})}
\end{align*}
for $2\leq k \leq k_*$ with $2^{k_*}L \leq \alpha N<2^{k_*+1}L$. We suppose the same holds for $\varphi_1$ with $\square_{2L}\setminus \square_{L}$ replaced by $\square_{2L}$, and for $\varphi_{k_*+1}$ on the rest of the line. By Lemmas \ref{Local Laws Energy Estimate}, \ref{local energy control} and the control of (\ref{energy control assumption}), we find
\begin{align*}
|\partial_xE_y((0,h))|\leq A+B+C
\end{align*}
with 
\begin{align*}
A&= \|(\partial_x \zeta_{0,h} \varphi_1)'\|_{L^\infty(\square_{2L})}(|\square_{2L}|+|\square_{2L}|^{1/4}\|\nabla u_{\rrc}\|_{L^2(\tilde{\square}_{2L})}) \\
&+C\left(\sqrt{\mathfrak{h}_1}\|(\partial_x \zeta_{0,h} \varphi_1)'\|_{L^2(\square_{2L})}+\frac{1}{\sqrt{\mathfrak{h}_1}}\|\partial_x \zeta_{0,h} \varphi_1)\|_{L^2(\square_{2L})}\right)\|\nabla u_{\rrc}\|_{L^2(\tilde{\square}_{2L})},
\end{align*}
\begin{align*}
&B=\sum_{k=2}^{k_*} \|(\partial_x \zeta_{0,h} \varphi_k)'\|_{L^\infty(\square_{2^kL}\setminus \square_{2^{k-1}L})}(|\square_{2^kL}\setminus \square_{2^{k-1}L}|+|\square_{2^kL}\setminus \square_{2^{k-1}L}|^{1/4}\|\nabla u_{\rrc}\|_{L^2(\tilde{\square}_{2^kL}\setminus \tilde{\square}_{2^{k-1}L})}) \\
&+C\sum_{k=2}^{k_*}\left(\sqrt{\mathfrak{h}_k}\|(\partial_x \zeta_{0,h} \varphi_k)'\|_{L^2(\square_{2^kL}\setminus \square_{2^{k-1}L})}+\frac{1}{\sqrt{\mathfrak{h}_k}}\|\partial_x \zeta_{0,h} \varphi_k)\|_{L^2(\square_{2^kL}\setminus \square_{2^{k-1}L})}\right)\|\nabla u_{\rrc}\|_{L^2(\tilde{\square}_{2^kL}\setminus \tilde{\square}_{2^{k-1}L})}
\end{align*}
and 
\begin{align*}
C&=\|(\partial_x \zeta_{0,h} \varphi_{k_*+1})'\|_{L^\infty(\square_{2^{k_*}L}^c)}(|\square_{2^{k_*}L}^c|+|\square_{2^{k_*}L}^c|^{1/4}\|\nabla u_{\rrc}\|_{L^2(\tilde{\square}_{2^{k_*}L}^c )}) \\
&+C\left(\sqrt{\mathfrak{h}_{k_*+1}}\|(\partial_x \zeta_{0,h} \varphi_{k_*+1})'\|_{L^2(\square_{2^{k_*}L}^c)}+\frac{1}{\sqrt{\mathfrak{h}_{k_*+1}}}\|\partial_x \zeta_{0,h} \varphi_{k_*+1})\|_{L^2(\square_{2^{k_*}L}^c)}\right)\|\nabla u_{\rrc}\|_{L^2(\tilde{\square}_{2^{k_*}L}^c)},
\end{align*}
where $\tilde{\square}_{2^kL}$ is as in (\ref{extension set}). Observe that $\partial_x \zeta_{0,h}=\frac{4\pi hx}{(x^2+h^2)^2}$ and $(\partial_x\zeta_{0,h})'=\frac{-16\pi hx^2}{(x^2+h^2)^3}$. For $A$, we compute directly using Lemma \ref{local energy control} and (\ref{energy control assumption}) that 
\begin{align*}
A&\lesssim \frac{1}{h^3}\left((4L)+(4L)^{1/4}\sqrt{2\mathcal{C}L}\right)+\frac{1}{h^2}\sqrt{2\mathcal{C}L} \lesssim \sqrt{\mathcal{C}}\left(\frac{L}{h^3}+\frac{\sqrt{L}}{h^2}\right)
\end{align*}
with $\mathfrak{h}_1=h$. Similarly,
\begin{align*}
B&\lesssim \sum_{k=2}^{k_*}\left[\frac{1}{\left(2^{k-1}L\right)^3}\left(2^kL+(2^kL)^{1/4}\sqrt{\mathcal{C}2^kL}\right) + \frac{1}{\left(2^{k-1}L\right)^2}\sqrt{\mathcal{C}2^kL}\right]\lesssim \frac{\sqrt{\mathcal{C}}}{L^{3/2}}
\end{align*}
with $\mathfrak{h}_k=2^{k-1}L$. Finally, on the rest of $\R$ we use a global energy estimate to find 
\begin{equation*}
C \lesssim \frac{1}{N^3}\left(N+N^{1/4}\sqrt{N}\right)+\frac{1}{N^2}\sqrt{N}\lesssim \frac{1}{N^{3/2}}.
\end{equation*}
Since $L<N$, we have 
\begin{equation*}
|\partial_xE_y((0,h))|^2\lesssim \mathcal{C}\left(\frac{L^2}{h^6}+\frac{L}{h^4}+\frac{1}{L^3}\right).
\end{equation*}
Observe that we could have just as easily ran the same estimates for any $a \in \Omega$, and for $\kappa_{a,h}$ instead of $\zeta_{a,h}$. Thus, by adjusting constants, we find
\begin{equation}\label{gradient electric field bound}
|\partial_x E|_{L^\infty(I_i)}^2\lesssim \mathcal{C}\left(\frac{L^2}{h^6}+\frac{L}{h^4}+\frac{1}{L^3}\right)
\end{equation}
too. 

\subsection*{Control of $|E(z_i)|^2$}
Next, we control $|E(z_i)|$ using a using a Chernoff bound and the uniform bound on the electric field given by Theorem \ref{Uniform Bound on Fluctuations}. Fix a positive number $M\geq 2C$ to be determined, where $C$ here denotes the constant in the uniform bound Theorem \ref{Uniform Bound on Fluctuations}. The idea is that for most configurations, the electric field at height $h$ is controlled by $\frac{M}{h}$. 

Letting $s=ch$ and applying Theorem \ref{Uniform Bound on Fluctuations}, we find with a simple Chernoff bound that
\begin{align*}
\PNbeta(chE_x(a,h)\geq cM \cap \mathcal{G}'_L) &\leq \exp \left(-cM+cC\left(1+\left|1-\frac{1}{\beta}\right|\right)+\frac{Cc^2}{\beta}+\frac{Cc^3}{\beta^2 h}\right) \\
&\leq \exp \left(-\frac{cM}{2}+\frac{c^2C}{\beta}\right)
\end{align*}
where $C$ is the constant in Theorem \ref{Uniform Bound on Fluctuations}, so long as we chose $M$ and $c$ such that 
\begin{equation}\label{restriction}
M>2C\left(1+\left|1-\frac{1}{\beta}\right|+\frac{c^2}{\beta^2 h}\right).
\end{equation}
Optimizing over $c$ yields $c=\frac{\beta M}{4C}$ (which we will check later is a legitimate choice) and 
\begin{equation*}
\PNbeta(chE_x(a,h)\geq cM \cap \mathcal{G}'_L) \leq \exp \left(-\frac{\beta ^2 M^2}{16 C}\right).
\end{equation*}
It follows that
\begin{align*}
&\PNbeta\left(\left\{|E_x(a,h)|> \frac{M}{h}\right\} \cap \mathcal{G}_L'\right) \\
&\leq\PNbeta\left(\left\{e^{chE_x(a,h)} > e^{cM} \right\}\cap \mathcal{G}'_L\right)+\PNbeta\left(\left\{e^{-chE_x(a,h)} \geq e^{cM}\right\} \cap \mathcal{G}'_L\right) \\
&\leq2\exp \left(-\frac{\beta ^2 M^2}{16 C}\right).
\end{align*}
The same estimate works for $E_y$. Thus, we have 
\begin{equation}\label{pointwise electric field bound}
\PNbeta\left(\text{there exists an }i \text{ such that }|E_x(z_i)| \text{ or }|E_y(z_i)| > \frac{M}{h} \cap \mathcal{G}'_L\right) \leq 4K \exp \left(-\frac{\beta ^2 M^2}{16 C}\right).
\end{equation}

\subsection*{Conclusion}
Define $\mathcal{G}$ to be the event on which the above estimate holds. Using (\ref{gradient electric field bound}) and (\ref{pointwise electric field bound}), we find 
\begin{align*}
\int_{2\Omega \times \{\pm h\}}|\nabla u|^2=\sum_i \int_{I_i}|E|^2&\lesssim K \left( \frac{L^3}{K^3}(|\partial_xE_x(z_i)|^2+|\partial_xE_y(z_i)|^2)+\frac{L}{K}|E(z_i)|^2\right) \\
&\lesssim L\frac{M^2}{h^2}+\frac{\mathcal{C}L^3}{K^2}\left[\frac{L^2}{h^6}+\frac{L}{h^4}\right]
\end{align*}
on $\mathcal{G}$, with $\PNbeta(\mathcal{G}_L'\setminus \mathcal{G})\leq 4K \exp \left(-\frac{\beta ^2 M^2}{16 C}\right)$. Now, with $h=\frac{L}{2}$, $M=\frac{\sqrt{L}}{K_2}$ and $K$ a constant, we have on $\mathcal{G}$ that 
\begin{equation*}
\int_{\Omega \times \{\pm h\}}|\nabla u|^2\lesssim \frac{1}{K_2^2L}+\frac{\mathcal{C}}{K^2L}+\frac{\mathcal{C}}{K^2},
\end{equation*}
which can be made small by appropriate choice of constants. The astute reader will observe that a similarly sharp estimate for $\int_{2\Omega \times \{\pm h\}}|\nabla u|^2$ could be made via Theorem \ref{Uniform Bound on Fluctuations} and a second moment bound, however this approach has the benefit of yielding an estimate that holds on an event of much larger probability. 

We need to check that this choice is valid given our restriction (\ref{restriction}) on $M$ and $c$. Taking $L$ large enough, we easily have $M>2C\left(1+\left|1-\frac{1}{\beta}\right|\right)$ in the constant $\beta$ regime, although we note that in the regime where $\beta$ depends on $N$ this might introduce an additional constraint on our minimal scale. This choice of minimal scale $\omega$ of course depends on the $\epsilon$ that we will choose to apply this estimate; however, since the choice of $\epsilon$ is independent of the scale, this leads us to a well-defined minimal scale $\omega$. To guarantee that $M>2C \left(\frac{c^2}{\beta^2 h}\right)$ too, we substitute in $h=\frac{L}{2}$, $M=\frac{\sqrt{L}}{K_2}$ and our $c=\frac{\beta M}{4C}$ to find
\begin{equation*}
\frac{\sqrt{L}}{K_2}>2C\left(\frac{2L}{4CK_2^2L}\right)=\frac{1}{2K_2^2},
\end{equation*}
which is valid so long as we take $L$ large enough. Hence, if we take $K$ and $K_2$ large enough,   
\begin{equation*}
\int_{\Omega \times \{\pm h\}}|\nabla u|^2\leq \epsilon
\end{equation*}
with \begin{equation*}
\PNbeta(\mathcal{G}'_L\setminus \mathcal{G})\leq C_1e^{-C_2\beta L},
\end{equation*}
as desired.
\end{proof}
This allows us to conclude that the local energy of any field with similar decay is comparable to the Neumann energy.  
\begin{prop}\label{tailored screening}
Let $C_0:=\frac{C}{8\pi}$ for the constant $C$ defined in Lemma \ref{local energy control}. Let $\Omega=\square_L\subset \B'$ and $X_n=\XN\vert_\Omega$. Suppose $\mu(\Omega)=\mn$, and observe that $\mu$ satisfies the assumptions of Proposition \ref{screening result}. Then, for $\epsilon>0$ arbitrarily small, there are some constants $C_1:=C_1(\epsilon)$, $C_2:=C_2(\epsilon)$ dependent only on $m,\Lambda$ and set of configurations $\mathcal{G}\subset \mathcal{G}_L'$ such that for all configurations in $\mathcal{G}$, 
  \begin{align}\label{tailored outer screening error}
\G^{\mathrm{int}}(Y_{\mn}, \Omega)-\H^{\mathrm{int}}_{\epsilon,\frac{L}{2}}(X_n, \Omega)&\lesssim_{m,\Lambda} \sum_{i,j}\g(x_i-z_j)+\G^{\mathrm{int}}(Z_{\mn-n_\Old}, \tilde{\mu}, \New) +|n-\mn|+\epsilon L,
\end{align}
\begin{align}\label{tailored inner screening error}
\G^{\mathrm{ext}}(Y_{\mn}, \Omega^c)-\H^{\mathrm{ext}}_{\epsilon, \frac{3L}{2}}(\XN\setminus X_n, \Omega^c)&\lesssim_{m,\Lambda} \sum_{i,j}\g(x_i-z_j)+\G^{\mathrm{int}}(Z_{\mn-n_\Old}, \tilde{\mu}, \New) +|n-\mn|+\epsilon L
\end{align}
and 
\begin{equation*}
\PNbeta(\mathcal{G}'_L\setminus \mathcal{G})\leq C_1e^{-C_2\beta L}.
\end{equation*}
\end{prop}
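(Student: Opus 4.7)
The plan is to apply the screening Proposition \ref{screening result} in the outer case with $h_1 = h_2 = L/2$ and in the inner case with $h_1 = h_2 = 3L/2$, and to absorb every error term appearing in \eqref{screening error} that is not explicitly retained in \eqref{tailored outer screening error}--\eqref{tailored inner screening error} into a single $\epsilon L$ correction, on a good event of probability at least $1 - C_1 e^{-C_2 \beta L}$.

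First I would invoke Proposition \ref{decay control} (applied with the prescribed $\epsilon$ at both $h=L/2$ for the outer case and $h=3L/2$ for the inner case) to obtain a good event $\mathcal{G} \subset \mathcal{G}'_L$ on which the top/bottom energies satisfy $e(X_n) \leq \epsilon$, with $\PNbeta(\mathcal{G}'_L \setminus \mathcal{G}) \leq C_1 e^{-C_2 \beta L}$. This immediately bounds the last term $Le(X_n) \leq \epsilon L$ in \eqref{screening error}.

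Next I would fix the screening parameters $l, \tilde{l}$ as suitable powers of $L$ satisfying $\tilde{l} \geq l$ and $l^2 \gtrsim L$, for instance $l \sim K\sqrt{L}$ and $\tilde{l} \sim K\sqrt{L}$ with $K$ large enough. The total extended energy on $\tilde{\Omega}$ is bounded by $\mathcal{C}L$ via the bootstrap hypothesis \eqref{energy control assumption} coupled with Lemma \ref{local energy control}, so partitioning $\tilde{\Omega}$ into $\sim L/\tilde{l}$ concentric annuli of thickness $\tilde{l}$ and exploiting the freedom to choose $T \in [\tilde{l}, 2\tilde{l}]$ inside the screening construction yields at least one annulus on which $S(X_n) \lesssim \tilde{l}$. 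Consequently the screenability condition $hS/(\tilde{l} l^2) \leq \mathsf{c}$ is satisfied for $h \lesssim L$, and the error terms $lS/\tilde{l} + \tilde{l} \lesssim \sqrt{L} \leq \epsilon L /4$ for all $L \geq \omega$ sufficiently large.

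The remaining error $\tilde{l} L M_0^2$ requires a pointwise-in-configuration bound on the boundary field at the screening interface. I would rerun the Chernoff argument from the proof of Proposition \ref{decay control}, applying the exponential moment control of Theorem \ref{Uniform Bound on Fluctuations} to the rescaled test functions $\kappa_{a,h}$ and $\zeta_{a,h}$ at a finite net of points along the screening boundary. Tuning the threshold $M$ so that $\beta^2 M^2 \sim \beta L$ produces $M_0 \lesssim L^{-1/2}$ on a further subevent whose complement is still $\leq C_1 e^{-C_2 \beta L}$, and hence $\tilde{l} L M_0^2 \lesssim \sqrt{L} \leq \epsilon L /4$. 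Combining all of these controls and plugging them into \eqref{screening error} yields \eqref{tailored outer screening error}. The inner case \eqref{tailored inner screening error} follows identically using the inner variant of Proposition \ref{screening result} at $h = 3L/2$ and the second conclusion of Proposition \ref{decay control}.

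The main obstacle is the simultaneous verification of screenability and smallness of $M_0$: the $L^2$ energy control from the local law alone is insufficient to make the boundary field small pointwise, and it is precisely this gap that forces the bootstrap to pass through Theorem \ref{Uniform Bound on Fluctuations} at each step. Once the pointwise and screenability bounds are in hand, the remaining error analysis is routine bookkeeping.
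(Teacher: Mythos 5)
Your screenability verification has a genuine gap. With $l \sim \tilde l \sim K\sqrt{L}$, the condition $\frac{hS(X_n)}{\tilde l l^2}\leq \mathsf{c}$ requires $S(X_n)\lesssim \tilde l$, but the quantity $S(X_n)$ in Proposition~\ref{screening result} is the truncated energy in the \emph{fixed} annulus $\square_{L-\tilde l}\setminus\square_{L-2\tilde l}$, and nothing in the proposition lets you move that annulus. The freedom to choose $T\in[\tilde l,2\tilde l]$ is a freedom to pick a thin screening surface \emph{inside} that band; it is what produces the bound $M\leq S/\tilde l$ in the proof of the screening proposition, not a bound on $S$ itself. Without the annulus-finding step you invoke (which would require re-stating or re-proving the screening proposition with a movable boundary), the only available a priori bound is $S(X_n)\lesssim \int_{\tilde\Omega}|\nabla w_{\rrc}|^2\lesssim 8\pi\mathcal{C}L$ from Lemma~\ref{local energy control} and the bootstrap hypothesis, and then $\frac{hS}{\tilde l l^2}\sim\frac{L\cdot L}{\sqrt L\cdot L}\sim\sqrt L$, which is not small. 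The paper avoids this by taking $l=L/K_1$ and $\tilde l = K_3 L$ with $K_1$ large and $K_3$ small but both order one, so that $\frac{hS}{\tilde l l^2}\lesssim K_1^2/(K_3 L)$ is automatically small, while the error terms $lS/\tilde l\lesssim L/(K_1K_3)$ and $\tilde l = K_3 L$ can be made $\leq\epsilon L$ by tuning the constants.

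Your treatment of the $M_0$ error is also more complicated than necessary, and is forced only by your parameter choice. Since $M_0$ is an \emph{average} of the normal flux over the top and bottom of $\Old$, Cauchy--Schwarz (Jensen) gives directly $M_0^2\lesssim \frac{|\Old|}{\tilde l^2}e(X_n)\lesssim\frac{L\epsilon}{\tilde l^2}$, with $e(X_n)\leq\epsilon$ already supplied by Proposition~\ref{decay control}. With $\tilde l = K_3 L$ this is $\frac{\epsilon}{K_3^2 L}$, hence $\tilde l L M_0^2\lesssim \epsilon L/K_3$ with no new probabilistic input; the dependence on Theorem~\ref{Uniform Bound on Fluctuations} is already encapsulated in Proposition~\ref{decay control} and does not need to be invoked a second time. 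With your $\tilde l\sim\sqrt L$ the same Jensen bound only yields $M_0^2\lesssim\epsilon$, which is why you had to reach for a pointwise sup-norm Chernoff bound — effectively re-deriving what a larger $\tilde l$ gives you for free.
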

\begin{proof}
Let us focus on the proof for the outer screening, since the inner case is analogous. Let $\epsilon>0$ be arbitrarily small, and let $\mathcal{G}$ be the event of Proposition \ref{decay control}. Let $w$ be any potential associated to $\Omega$ as in (\ref{outer screening w}) with $\E^{\mathrm{int}}(w, \Omega)\leq \G^\Omega(\XN, \mu)$ and $\int_{\Omega \times \pm h_L}|\nabla w|^2 \leq \epsilon$. (Recall that there is at least one potential that satisfies these assumptions, namely the true potential (\ref{electrostatic potential}).) We first claim that for any configuration in $\mathcal{G}$, all such $w$ are screenable in $\Omega$. We denote $X_n=\XN \vert_\Omega$, and set $l=\frac{L}{K_1}$, $\tilde{l}=K_3L<\frac{1}{4}L$ for $K_3\geq\frac{1}{K_1}$ and $K_1 \geq C_0$ to be defined later. Let us define $h_1=h_2=\frac{L}{2}.$ 

%
Observe using Jensen that we have 
\begin{equation*}
M_0^2\leq\left[\frac{1}{2\tilde{l}}\int_{\Old\times\{\pm h\}}\nabla w \cdot \widehat{n}\right]^2\lesssim\frac{|\Old|e(X_n)}{\tilde{l}^2}\lesssim \frac{L\epsilon}{K_3^2L^2}=\frac{\epsilon}{K_3^2L},
\end{equation*}
where we have used $e(X_n)\leq \epsilon$ on $\mathcal{G}$ by Proposition \ref{decay control}. Then also 
\begin{align*}
\frac{hS(X_n)}{\tilde{l}l^2}\leq \frac{L}{2}\frac{8\pi\mathcal{C}L}{K_3L \frac{L^2}{K_1^2}}=\frac{4\pi\mathcal{C}K_1^2}{K_3L}
\end{align*}
using Lemma \ref{local energy control} and the assumption $\E^{\mathrm{int}}(w, \Omega)\leq \G^\Omega(\XN, \mu)$ and we see that all configurations in $\mathcal{G}$ are screenable for some uniform $L>\omega$. Thus, we can apply the screening result Proposition \ref{screening result}, and the requisite control will be obtained by appropriately controlling the errors introduced by the screening. First, we can control the $M_0^2$ error term by $e(X_n)$:
\begin{equation*}
\tilde{l}LM_0^2 = \tilde{l}L \left(\frac{1}{2\tilde{l}}\int_{\Old\times \{\pm h\}}\nabla w \cdot \widehat{n}\right)^2\leq \tilde{l}L \frac{|\Old|}{2\tilde{l}^2} e(X_n)\leq \frac{L^2}{\tilde{l}}e(X_n)\leq\frac{\epsilon L }{K_3}.
\end{equation*}
We can now rewrite the error bound of Proposition \ref{screening result} as  
 \begin{align*}
&\G^{\mathrm{int}}(Y_{\mn}, \Omega)-\E^{\mathrm{int}}(w,\Omega) \\
&\lesssim_{m,\Lambda} 2l\frac{S(X_n)}{\tilde{l}}+K_2L+\sum_{i,j}\g(x_i-z_j)+\G^{\mathrm{int}}(Z_{\mn-n_\Old}, \tilde{\mu}, \New) +|n-\mn|+\frac{\epsilon L}{K_3}.
\end{align*}
Now, notice that $ l\frac{S(X_n)}{\tilde{l}} \leq \frac{2C\mathcal{C}L^2}{K_1K_3L}=\frac{2C\mathcal{C}L}{K_1K_3} $. Hence, if we take $K_1$ large enough relative to $K_3$,   
  \begin{align*}
\G^{\mathrm{int}}(Y_{\mn}, \Omega)-\E^{\mathrm{int}}(w, \Omega)&\lesssim_{m,\Lambda} \sum_{i,j}\g(x_i-z_j)+\G^{\mathrm{int}}(Z_{\mn-n_\Old}, \tilde{\mu}, \New) +|n-\mn|+\epsilon L.
\end{align*}
$w$ was an arbitrary potential with the desired decay, so we might as well take a potential minimizing the argument in the definition of $\H^{\mathrm{int}}_{\epsilon, \frac{L}{2}}(\XN\vert_\Omega, \Omega)$ and obtain
  \begin{align*}
\G^{\mathrm{int}}(Y_{\mn}, \Omega)-\H^{\mathrm{int}}_{\epsilon, \frac{L}{2}}(\XN\vert_\Omega, \Omega)&\lesssim_{m,\Lambda} \sum_{i,j}\g(x_i-z_j)+\G^{\mathrm{int}}(Z_{\mn-n_\Old}, \tilde{\mu}, \New) +|n-\mn|+\epsilon L.
\end{align*}
with 
\begin{equation*}
\PNbeta(\mathcal{G}'_L\setminus \mathcal{G})\leq C_1e^{-C_2\beta L},
\end{equation*}
as desired.
\end{proof}

\subsection{Analysis of Exponential Moments}
It is not sufficient to simply have the above energy bound on one configuration, but rather on a whole volume to maintain good control on Gibbs measures. We make use of the volume of configurations generated by screening in the following proposition. 

\begin{prop}\label{volume of configurations}
Keep the same assumptions as in Proposition \ref{tailored screening}. Let $\epsilon>0$ be arbitrarily small, and let $\mathcal{G}$ denote the good event of Proposition \ref{tailored screening}. Let $n \leq N$, and define 
\begin{align}\label{integration restriction of good event}
\mathcal{G}^\Omega&=\{X_n \in \Omega^n: X_n=\XN \vert_\Omega \text{ for some }\XN \in \mathcal{G}\} \\
\mathcal{G}^{\Omega^c}&=\{X_{N-n} \in (\Omega^c)^{N-n}: X_{N-n}=\XN \vert_{\Omega^c} \text{ for some }\XN \in \mathcal{G}\}
\end{align}
Then,we have 
\begin{equation*}
n^{-n}\int_{\mathcal{G}^\Omega}\exp\left\{-\beta H^{\mathrm{int}}_{\epsilon,\frac{L}{2}}(X_n, \mu,\Omega)\right\}~d\rho^{\otimes n}(X_n)\leq K^{\mathrm{int}}(\Omega)\exp(\beta \varepsilon_e+\varepsilon_v)
\end{equation*}
and 
\begin{equation*}
(N-n)^{N-n}\int_{\mathcal{G}^{\Omega^c}}\exp\left\{-\beta H^{\mathrm{ext}}_{\epsilon,\frac{3L}{2}}(X_{N-n}, \mu,\Omega^c)\right\}~d\rho^{\otimes N-n}(X_{N-n})\leq K^{\mathrm{ext}}(\Omega^c)\exp(\beta \varepsilon_e+\varepsilon_v)
\end{equation*}
where $\varepsilon_e$ denotes the energy error
\begin{equation*}
\varepsilon_e=C\left(|\mn-n|+\epsilon L+\tilde{l}+\frac{\mathcal{C}L}{\tilde{l}}\right)
\end{equation*}
and $\varepsilon_v$ denotes the volume error, with estimate
\begin{equation*}
\varepsilon_v\leq C\tilde{l}+(\mn-n-\alpha)\log \frac{\alpha}{\alpha'}+\left(\mn-n-\alpha-\frac{1}{2}\right)\log \left(1+\frac{n-\mn}{\alpha}\right)+\frac{1}{2}\log \frac{n}{\mn},
\end{equation*}
where $\alpha,\alpha'$ are integers satisfying $\alpha,\alpha' \lesssim \tilde{l}$ and $|\alpha-\alpha'|\lesssim h+\frac{L}{\tilde{l}}$.
\end{prop}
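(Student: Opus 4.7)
The plan is to run the screening procedure inside the integral, introduce free integration variables for the inserted points, and then fold the result into the definition of $K^{\mathrm{int}}(\Omega)$. Concretely, for each $X_n$ in the good event, Proposition \ref{tailored screening} (and the underlying Proposition \ref{screening result}) produces a modified configuration $Y_\mn$ of $\mn$ points in $\Omega$ by keeping $n_\Old:=n-\#I_\partial$ of the original points of $X_n$ and inserting $\mn-n_\Old$ new points $Z$ inside the corridor $\New$ of width $\sim \tilde l$, with the energy comparison
\begin{equation*}
\G^{\mathrm{int}}(Y_\mn,\mu,\Omega) \leq \H^{\mathrm{int}}_{\epsilon,L/2}(X_n,\mu,\Omega)+\varepsilon_e,
\end{equation*}
where $\varepsilon_e$ collects the error terms from Proposition \ref{screening result} (the $|n-\mn|$, $\tilde l$, $\mathcal{C}L/\tilde l$ and $\epsilon L$ contributions; the $\sum \g(x_i-z_j)$ terms are absorbed into the good event on $Z$).

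Once this replacement is in place, I would view $Z$ not as a deterministic output of the screening but as a free variable integrated against $\rho^{\otimes (\mn-n_\Old)}/\rho(\New)^{\mn-n_\Old}$. This is legitimate because the screening result holds uniformly for any reasonable $Z\in\New^{\mn-n_\Old}$ satisfying the prescribed mass. Inserting this tautological $1$ gives
\begin{equation*}
\frac{1}{n^n}\!\int_{\mathcal{G}^\Omega}\!e^{-\beta \H^{\mathrm{int}}_{\epsilon,L/2}}d\rho^{\otimes n}(X_n)\leq \frac{e^{\beta\varepsilon_e}}{n^n\rho(\New)^{\mn-n_\Old}}\!\int_{\mathcal{G}^\Omega\times\New^{\mn-n_\Old}}\!e^{-\beta\G^{\mathrm{int}}(Y_\mn)}d\rho^{\otimes n}(X_n)d\rho^{\otimes(\mn-n_\Old)}(Z).
\end{equation*}
Since the integrand depends only on $Y_\mn=X_{n_\Old}\cup Z$, the $\#I_\partial$ dropped coordinates of $X_n$ can be integrated out (contributing $\rho(\Omega\setminus\New)^{\#I_\partial}$), and then symmetrizing over the $\mn$ coordinates of $Y_\mn$ allows us to replace the mixed product integral by $\int_{\Omega^\mn}e^{-\beta\G^{\mathrm{int}}(Y_\mn)}d\rho^{\otimes\mn}$.

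The surviving prefactor is then an explicit ratio of volumes and factorials that must reconcile the normalizations $n^{-n}$ of the left side and $\mn^{-\mn}$ of $K^{\mathrm{int}}(\Omega)$. Setting $\alpha:=\mn-n_\Old\sim\tilde l$ and $\alpha':=\lfloor\mu(\New)\rfloor\sim\tilde l$, we have $|\alpha-\alpha'|\lesssim h+L/\tilde l$ directly from the second bullet of Proposition \ref{screening result}. A Stirling expansion of the resulting ratio
\begin{equation*}
\frac{\mn^\mn}{n^n}\cdot\frac{\binom{n}{n_\Old}\rho(\Omega\setminus\New)^{\#I_\partial}\rho(\New)^{-\alpha}}{\binom{\mn}{n_\Old}}
\end{equation*}
produces the displayed $\varepsilon_v$: the $(\mn-n-\alpha)\log(\alpha/\alpha')$ and $(\mn-n-\alpha-\tfrac12)\log(1+(n-\mn)/\alpha)$ terms come from comparing $\rho(\New)^{-\alpha}$ with $\alpha'^{\,-\alpha}$ and handling the factorial $\binom{\mn}{n_\Old}^{-1}$ via Stirling, while the $\tfrac12\log(n/\mn)$ term is the Stirling residual from $\mn!/n!$, and the $C\tilde l$ absorbs the bounded-density contribution of $\rho$ on $\New$.

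The main obstacle is this last bookkeeping step: keeping track of which indices are dropped, which are kept, and which are auxiliary, so that the combinatorial symmetrization genuinely reconstructs $K^{\mathrm{int}}(\Omega)$ rather than a smaller subintegral, and so that the logarithmic mass-discrepancy errors collapse into the compact form of $\varepsilon_v$. Once this is set up for the interior energy, the exterior statement follows by repeating the argument verbatim with $\Omega$ replaced by $\Omega^c$, invoking the inner-screening version of Proposition \ref{screening result} and using the $h=3L/2$ branch of Proposition \ref{decay control} to control $e(X_n)$.
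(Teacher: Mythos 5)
Your overall architecture — produce a screened configuration $Y_\mn$, introduce the new points $Z$ as auxiliary integration variables via a tautological normalization, symmetrize, and bookkeep factorials with Stirling — matches the paper. But there is a genuine gap in how you handle the $Z$-dependent part of the screening error.

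The inequality from Proposition \ref{tailored screening} does not read $\G^{\mathrm{int}}(Y_\mn)\le \H^{\mathrm{int}}_{\epsilon,L/2}(X_n)+\varepsilon_e$ with $\varepsilon_e$ a quantity depending only on $X_n$; the right-hand side contains both $\sum_{(i,j)\in J}\g(x_i-z_j)$ and $\G^{\mathrm{int}}(Z_{\mn-n_\Old},\tilde\mu,\New)$, which depend on the inserted points $Z$. Your display pulls $e^{\beta\varepsilon_e}$ out of the $Z$-integral as if it were a constant, which is exactly where the argument breaks: neither of these error terms is uniformly bounded over $Z\in\New^{\mn-n_\Old}$ (the log sum blows up as $z_j\to x_i$, and $\G^{\mathrm{int}}(Z,\tilde\mu,\New)$ is unbounded above for clustered $Z$). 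Saying the $\g$-terms are ``absorbed into the good event on $Z$'' doesn't work here, because you have introduced $Z$ as a free variable integrated over all of $\New^{\mn-n_\Old}$; restricting to a good $Z$-set would change the volume normalization and you would then need to control its measure, which is a different and not obviously easier problem.

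The missing ingredient is an integrated (not pointwise) control of the $Z$-error terms: one must bound the inner integral $\int_{\New^{\mn-n_\Old}}\exp\bigl(-\beta C[\sum\g(x_i-z_j)+\G^{\mathrm{int}}(Z,\tilde\mu,\New)]\bigr)\,d\rho^{\otimes(\mn-n_\Old)}$ from \emph{below} by $(\mn-n_\Old)^{\mn-n_\Old}\exp(-C\beta(\tilde l+\mathcal{C}L/\tilde l)-C\tilde l)$. The paper does this by tilting the reference measure from Lebesgue to $\tilde\mu|_\New$, applying Jensen's inequality to the resulting probability measure, and then using (i) integrability of the one-dimensional log singularity against the bounded density $\tilde\mu$ to handle $\sum\g(x_i-z_j)$ (bounded by $C\#I_\partial\lesssim S(X_n)/\tilde l\lesssim\mathcal{C}L/\tilde l$), (ii) the a priori expectation estimate of Lemma \ref{energy integrals and local partition functions} to handle $\G^{\mathrm{int}}(Z,\tilde\mu,\New)$ in mean (contributing $\lesssim\tilde l$), and (iii) a bound on $\int_\New\tilde\mu\log\tilde\mu$ for the tilt ($\lesssim\tilde l$). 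Your proposal mentions none of these three estimates, and without them the energy error $\varepsilon_e$ in your conclusion does not follow from the screening statement. The combinatorial and Stirling part of your plan, including the identification of $\alpha,\alpha'$, is essentially on the right track once this integrated error control is in place.
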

\begin{proof}
We focus on the integral over $\mathcal{G}^\Omega$, since the argument for the integral over $\mathcal{G}^{\Omega^c}$ is an analogous application of Proposition \ref{tailored screening}. We first need to do some combinatorial accounting, as in \cite[Proposition 4.2]{AS21}.

Each screenable configuration $X_n$ yields a number of points $n-\N$ of points that are removed. There are ${n \choose \N}$ ways of doing this, and each deletion corresponds to a volume of configurations of no more than $\rho(\New)^{n-\N}$. Then we insert $\mn-\N$ new points, but the resulting configurations are equivalent up to a permutation of indices; this leads to an overcounting of factor ${\mn \choose \N}$. Coupling this with the error estimate of Proposition \ref{tailored screening}, we find 
\begin{align*}
&\mn^{\mn}\K^{\mathrm{int}}(\Omega) \geq\int_{\mathcal{G}^\Omega}\exp \left(-\beta \left(\H^{\mathrm{int}}_{\epsilon,\frac{L}{2}}(X_n,\Omega) +|\mn-n|+\epsilon L\right)\right) \\
&\int_{\New^{\mn-\N}}\exp \left(-\beta C\left(\sum_{(i,j)\in J}\g(x_i-z_j)+\G^{\mathrm{int}}(Z_{\mn-\N},\tilde{\mu},\New)\right)\right) \frac{{\mn \choose \N}}{{n \choose \N}|\New|^{n-\N}}~dZ_{\mn-\N}d\rho^{\otimes n}(X_n),
\end{align*}
where we have replaced $\rho$ with the Lebesgue measure in the bulk. Notice that this computation is valid for inner screening as well, since we only modify the configuration in the bulk. We will make use of a Jensen inequality trick inspired by \cite{GZ19} coupled with a tilt as in the proof of Proposition \ref{global bound on partition functions}. Integrating against $\tilde{\mu}$ instead of Lebesgue measure, we can rewrite the interior integral as
\begin{equation*}
\int_{\New^{\mn-\N}}\exp \left(-\beta C\left(\sum_{(i,j)\in J}\g(x_i-z_j)+\G^{\mathrm{int}}(Z_{\mn-\N},\tilde{\mu},\New)\right)-\sum_{i=1}^{\mn-\N}\log \tilde{\mu}(z_i)\right)~d\tilde{\mu}\vert_{\New}^{\otimes {\mn-\N}}(Z_{\mn-\N}).
\end{equation*}
Applying Jensen's inequality, we then have 
\begin{align*}
&\int_{\New^{\mn-\N}}\exp \left(-\beta C\left(\sum_{(i,j)\in J}\g(x_i-z_j)+\G^{\mathrm{int}}(Z_{\mn-\N},\tilde{\mu},\New)\right)-\sum_{i=1}^{\mn-\N}\log \tilde{\mu}(z_i)\right)~d\tilde{\mu}\vert_{\New}^{\otimes {\mn-\N}}(Z_{\mn-\N}) \\
&\geq \tilde{\mu}(\New)^{\mn-\N}\exp (A+B+C)
\end{align*}
with 
\begin{align*}
A&=\tilde{\mu}(\New)^{\N-\mn}\int_{\New^{\mn-\N}}-\beta C\sum_{(i,j)\in J}\g(x_i-z_j)~d\tilde{\mu}\vert_{\New}^{\otimes {\mn-\N}}(Z_{\mn-\N}) \\
B&= \tilde{\mu}(\New)^{\N-\mn}\int_{\New^{\mn-\N}}-\beta C\G^{\mathrm{int}}(Z_{\mn-\N},\tilde{\mu},\New)~d\tilde{\mu}\vert_{\New}^{\otimes {\mn-\N}}(Z_{\mn-\N}) \\
C&= \tilde{\mu}(\New)^{\N-\mn}\int_{\New^{\mn-\N}}-\sum_{i=1}^{\mn-\N}\log \tilde{\mu}(z_i)~d\tilde{\mu}\vert_{\New}^{\otimes {\mn-\N}}(Z_{\mn-\N})
\end{align*}
Now we examine the three terms in the exponent. $B$ is immediately dealt with by Lemma \ref{energy integrals and local partition functions}. Integrating $A$ and $C$ we find
\begin{equation*}
A=-C\beta\sum_{i \in I_{\partial}} \int_{|z-x_i|\leq \rrc_i} \g( x_i-z)~d\tilde{\mu}\vert_{\New}(z) \geq -C\beta\#I_\partial
\end{equation*}
since the log singularity is integrable in one dimension, and 
\begin{equation*}
C=\int_{\New} \tilde{\mu}\log \tilde{\mu} \geq  -C\tilde{l}
\end{equation*}
from the boundedness of $\tilde{\mu}$. Since $\#I_\partial \lesssim \frac{S(X_n)}{\tilde{l}}\lesssim \frac{\mathcal{C}L}{\tilde{l}}$, we find 
\begin{align*}
&\int_{\New^{\mn-\N}}\exp \left(-\beta C\left(\sum_{(i,j)\in J}\g(x_i-z_j)+\G^{\mathrm{int}}(Z_{\mn-\N},\tilde{\mu},\New)\right)-\sum_{i=1}^{\mn-\N}\log \tilde{\mu}(z_i)\right)~d\tilde{\mu}\vert_{\New}^{\otimes {\mn-\N}}(Z_{\mn-\N}) \\
&\geq (\mn-\N)^{\mn-\N}\exp \left(-C\beta\left(\tilde{l}+\frac{\mathcal{C}L}{\tilde{l}}\right)-C\tilde{l}\right).
\end{align*}
Applying a mean value argument, we then find that for some $X_n^0$ we have
\begin{align*}
\mn^{\mn}\K^{\mathrm{int}}(\Omega)&\geq \int_{\mathcal{G}^\Omega}\exp \left(-\beta\H^{\mathrm{int}}_{\epsilon, \frac{L}{2}}(X_n,\Omega)-C\beta \left(|\mn-n|+\epsilon L\right)\right)~d\rho^{\otimes n}(X_n) \\
&\times  \frac{{\mn \choose \N(X_n^0)}(\mn-\N)^{\mn-\N}}{{n \choose \N(X_n^0)}|\New(X_n^0)|^{n-\N(X_n^0)}} \exp \left(-C\beta\left(\tilde{l}+\frac{\mathcal{C}L}{\tilde{l}}\right)-C\tilde{l}\right).
\end{align*}
Rearranging yields 
\begin{equation*}
\int_{\mathcal{G}^\Omega}\exp(-\beta \H^{\mathrm{int}}_{\epsilon,\frac{L}{2}}(X_n,\Omega))~d\rho^{\otimes n}(X_n)\leq \mn^{\mn}\K^{\mathrm{int}}(\Omega)\exp \left(\beta C\varepsilon_e+\varepsilon_v\right)
\end{equation*}
with 
\begin{equation*}
\varepsilon_e=|\mn-n|+\epsilon L+\tilde{l}+\frac{\mathcal{C}L}{\tilde{l}}
\end{equation*}
and 
\begin{equation*}
\varepsilon_v=C\tilde{l}+\log \left(\frac{{n \choose \N(X_n^0)}|\New(X_n^0)|^{n-\N(X_n^0)}}{{\mn \choose \N(X_n^0)}(\mn-\N)^{\mn-\N}}\right)=C\tilde{l}+\log \left(\frac{n!(\mn-\N)!|\New|^{n-\N}}{\mn!(n-\N)!(\mn-\N)^{\mn-\N}}\right)
\end{equation*}
We need now to make sure the volume error is not too big. Computing directly with Stirling's formula as in \cite[Proposition 4.2]{AS21} we find
\begin{align*}
\varepsilon_v&=C\tilde{l}+n\log n-\mn\log \mn-(n-\N)\log \frac{n-\N}{|\New|}+\frac{1}{2}\log\frac{n(\mn-\N)}{\mn(n-\N)}.
\end{align*}
Let $\alpha=\tilde{\mu}(\New)$ and $\alpha'=|\New|$. Then, observe that $n-\N=\alpha+n-\mn$ since $\alpha=\mn-\N$ and so we can write 
\begin{align*}
\varepsilon_v&=C\tilde{l}+(\mn-n-\alpha)\log \frac{\alpha+n-\mn}{\alpha'}+\frac{1}{2}\log\frac{n\alpha}{\mn(\alpha+n-\mn)} \\
&=C\tilde{l}+(\mn-n-\alpha)\log \frac{\alpha}{\alpha'}+\left(\mn-n-\alpha-\frac{1}{2}\right)\log \left(1+\frac{n-\mn}{\alpha}\right)+\frac{1}{2}\log \frac{n}{\mn},
\end{align*}
as desired.
\end{proof}
Notice that the volume error takes the same exact form as in \cite[Proposition 4.2]{AS21}. Modifying the associated screening errors then, we can control how the interior and exterior volume error contribute to the total error.
\begin{remark}\label{error bound}
The error contributions over $\Omega$ where $n$ points fall and $\Omega^c$ where $N-n$ points fall add up to a well bounded error. To be precise, if $\alpha, \alpha'$ and $\gamma, \gamma'$ are two pairs corresponding as in Proposition \ref{volume of configurations},
\begin{align*}
&\alpha-\alpha'+(\mn-n-\alpha)\log \frac{\alpha}{\alpha'}-(\alpha+n-\mn+\frac{1}{2})\log \left(\frac{n-\mn}{\alpha}+1\right)+\frac{1}{2}\log \frac{\mn}{n} \\
&+\gamma-\gamma'+(n-\mn-\gamma)\log \frac{\gamma}{\gamma'}-(\gamma+\mn-n+\frac{1}{2})\log \left(\frac{\mn-n}{\gamma}+1\right)+\frac{1}{2}\log \frac{N-n}{N-\mn} \lesssim \tilde{l}.
\end{align*}
\end{remark}
\begin{proof}
The argument is almost exactly the same as in Remark 4.4 \cite{AS21}. It is sufficient to show that for fixed $\mn$, $\alpha$, $\alpha'$, $\gamma$ and $\gamma'$ that the above quantity is maximized for $|n-\mn|\lesssim \tilde{l}$, for some uniform constant - then all quantities are products of order $1$ and order $\tilde{l}$ terms, and we are done since the difference terms are order $\tilde{l}$. To do this, we differentiate in $n$ and find that the maximum is achieved for 
\begin{align*}
&-\log \left(1+\frac{n-\mn}{\alpha}\right)+\log \left(1+\frac{\mn-n}{\gamma}\right)+\log \frac{\gamma}{\alpha}=0,
\end{align*}
which in particular tells us that $\left|\log \left(1+\frac{\mn-n}{\gamma}\right)-\log \left(1+\frac{n-\mn}{\alpha}\right)\right|$ is bounded. A routine computation tells us then that $|\mn-n|$ lives at the same scale as $\alpha$ and $\gamma$, which are both uniformly bounded by $\tilde{l}$. This implies the result.
\end{proof}
In particular, the volume errors are controlled by the volume within which we change the configurations. This allows us to deduce a control in exponential moments over most energies.
%
\begin{prop}\label{energy control for good points}
Let $\mathcal{G}$ denote the good event from Proposition \ref{tailored screening}, and denote by $\mathcal{G}_{n}$ the configurations in $\mathcal{G}$ who have $n$ points in $\Omega$. Then, 
\begin{equation*}
\mathbb{E}_{\PNbeta}\left(\exp \left(\frac{\beta}{2}\G^{\omc}(\cdot, \mu)\right)\indic_{\mathcal{G}_{n}}\right)\lesssim_{m,\Lambda} \frac{\K^{\mathrm{int}}\left(\Omega, \frac{\beta}{2}\right)}{\K^{\mathrm{int}}(\Omega,\beta)}\exp \left(\beta \left(\frac{\mathcal{C}}{8}L+\frac{1}{2}C_0n+C|\mn-n|\right)\right)
\end{equation*}
for some constant $C$ dependent only on $m$ and $\Lambda$, where we have indicated the temperature dependence of the partition function as an additional argument.
\end{prop}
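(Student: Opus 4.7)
The plan is to follow the strategy sketched in Section \ref{Outline of the Proof}: unwind the Gibbs measure into a ratio of partition functions, split the exponent through super/subadditivity of the local energies, invoke the relaxation and screening machinery built up in Propositions \ref{tailored screening} and \ref{volume of configurations} to transfer the true-field energy onto the screenable surrogates $\H^{\mathrm{int}}$ and $\H^{\mathrm{ext}}$, and finally identify the resulting ratio of volumes as $\K^{\mathrm{int}}(\Omega,\beta/2)/\K^{\mathrm{int}}(\Omega,\beta)$ by means of Lemma \ref{subadditivity for partition functions}.

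Concretely, first I would rewrite
$$\Esp_{\PNbeta}\!\left[e^{\frac{\beta}{2}\G^{\Omega}(\cdot,\mu)}\indic_{\mathcal{G}_n}\right] = \frac{1}{N^N \K^{\mathrm{int}}(\R,\beta)}\int_{\mathcal{G}_n} \exp\!\left(-\beta\,\G(\XN,\mu)+\tfrac{\beta}{2}\G^{\Omega}(\XN,\mu)\right)~d\rho^{\otimes N}.$$
Applying Lemma \ref{superadditivity} to the true electrostatic potential $u$ gives $\G(\XN,\mu) \geq \G^{\Omega}(\XN,\mu)+\E^{\mathrm{ext}}(u,\Omega^c)$, so the exponent is bounded above by $-\tfrac{\beta}{2}\G^{\Omega}(\XN,\mu)-\beta\,\E^{\mathrm{ext}}(u,\Omega^c)$. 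Since $\mathcal{G}_n \subset \mathcal{G}$, Proposition \ref{decay control} furnishes the requisite decay of $\nabla u$ at heights $L/2$ and $3L/2$, so Lemma \ref{relaxation} relaxes this further to $-\tfrac{\beta}{2}\H^{\mathrm{int}}_{\epsilon,L/2}(X_n,\mu,\Omega)-\beta\,\H^{\mathrm{ext}}_{\epsilon,3L/2}(X_{N-n},\mu,\Omega^c)$; the extra $+\tfrac{\beta}{2}C_0 n$ visible in the outline of Section \ref{Outline of the Proof} comes out of recording the sign of $\G^{\Omega}$ from Lemma \ref{local energy control}, which keeps the inequality consistent with the positive combination $\G^{\Omega}+C_0\#(\XN\cap\Omega)$ that the bootstrap is really after.

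Next I would symmetrize: sum over which $n$ indices lie in $\Omega$ to obtain a combinatorial factor $\binom{N}{n}$ and a product of decoupled integrals over $\mathcal{G}^{\Omega}$ and $\mathcal{G}^{\Omega^c}$ in the sense of Proposition \ref{volume of configurations}. That proposition, applied at inverse temperature $\beta/2$ to the interior integral and at $\beta$ to the exterior integral, yields
$$\int_{\mathcal{G}^{\Omega}} e^{-\frac{\beta}{2}\H^{\mathrm{int}}_{\epsilon,L/2}(X_n,\mu,\Omega)}~d\rho^{\otimes n} \leq n^{n}\,\K^{\mathrm{int}}(\Omega,\beta/2)\,\exp\!\left(\tfrac{\beta}{2}\varepsilon_e+\varepsilon_v\right),$$
and the analogous bound for $\mathcal{G}^{\Omega^c}$ with $\K^{\mathrm{ext}}(\Omega^c,\beta)$ and weight $(N-n)^{N-n}$. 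Finally, Lemma \ref{subadditivity for partition functions} applied to the two-piece partition $\R=\Omega\cup\Omega^c$ gives
$$\K^{\mathrm{int}}(\R,\beta) \geq \frac{N!\,N^{-N}}{n!(N-n)!\,n^{-n}(N-n)^{-(N-n)}}\,\K^{\mathrm{int}}(\Omega,\beta)\,\K^{\mathrm{ext}}(\Omega^c,\beta),$$
so that the combinatorial prefactors $\binom{N}{n}$, $n^{n}$, $(N-n)^{N-n}$, and $N^{-N}$ cancel exactly and $\K^{\mathrm{ext}}(\Omega^c,\beta)$ drops out, leaving precisely the ratio $\K^{\mathrm{int}}(\Omega,\beta/2)/\K^{\mathrm{int}}(\Omega,\beta)$.

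The main obstacle is the careful bookkeeping of the error terms. The energy error of Proposition \ref{volume of configurations} contributes $\beta C(|\mn-n|+\epsilon L+\tilde l+\mathcal{C}L/\tilde l)$, so one must choose $\tilde l\sim K_3^{-1}L$ (matching the choice in the screening Proposition \ref{tailored screening}) and $\epsilon$ small enough that the $\mathcal{C}L/\tilde l$ and $\epsilon L$ pieces are each strictly smaller than a fixed fraction of $\mathcal{C}L/8$; this is what ultimately fixes the minimal scale $\omega$. The volume error $\varepsilon_v$ is more delicate, as none of its summands need be individually small, but Remark \ref{error bound} shows that the interior and exterior volume errors conspire to a net contribution of order $\tilde l$, again absorbable into $\mathcal{C}L/8$. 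Only the genuinely $|\mn-n|$- and $n$-proportional pieces survive, producing the stated bound $\exp\!\bigl(\beta(\tfrac{\mathcal{C}}{8}L+\tfrac{1}{2}C_0 n+C|\mn-n|)\bigr)$.
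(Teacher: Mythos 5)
Your proposal captures the broad architecture of the paper's argument (unwind the Gibbs measure, superadditivity to split off the exterior, relaxation to the screenable minimizer $\H$, Proposition \ref{volume of configurations} at temperatures $\beta/2$ and $\beta$, then Lemma \ref{subadditivity for partition functions}), but there are two concrete gaps.

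First, you collapse the two sets $\omc$ and $\Omega$ into a single set, and then you say the $+\tfrac{\beta}{2}C_0 n$ term ``comes out of recording the sign of $\G^{\Omega}$ from Lemma \ref{local energy control}.'' That is not what happens, and your own chain does not produce that term. In the paper, $\omc$ is a strictly smaller set nested inside $\Omega$, and the mechanism is: apply superadditivity twice, once to split $\G(\XN,\mu)=\E^{\mathrm{int}}(u,\R) \geq \E^{\mathrm{int}}(u,\Omega)+\E^{\mathrm{ext}}(u,\Omega^c)$, and once more to write $\E^{\mathrm{int}}(u,\Omega) \geq \E^{\mathrm{int}}(u,\omc)+\E^{\mathrm{ext}}(u,\Omega\setminus\omc)$. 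The $\tfrac{\beta}{2}\G^{\omc}-\beta\G$ exponent then produces a spare $-\tfrac{\beta}{2}\E^{\mathrm{ext}}(u,\Omega\setminus\omc)$, which is \emph{dropped} at cost $+\tfrac{\beta}{2}C_0 n$ using the positivity $\E^{\mathrm{ext}}(u,\Omega\setminus\omc)+C_0\#(\XN\cap(\Omega\setminus\omc))\geq 0$ supplied by Lemma \ref{local energy control}. If you want to claim the $C_0 n$ term is unnecessary because your collapsed chain yields $-\tfrac{\beta}{2}\H^{\mathrm{int}}-\beta\H^{\mathrm{ext}}$ directly, you should commit to that and say the bound you get is stronger; as written, the proposal attributes the term to a mechanism that your chain does not invoke.

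Second, your claimed cancellation is not exact. The combinatorial prefactor from symmetrizing over which $n$ indices land in $\Omega$, together with the weights $n^n(N-n)^{N-n}N^{-N}$, involves $n$; but Lemma \ref{subadditivity for partition functions} is stated in terms of the \emph{masses} $N_i=\mu(\Omega_i\setminus\Omega_{i-1})$, i.e.\ $\mn$ and $N-\mn$, not $n$ and $N-n$. So after substituting the correct subadditivity bound and cancelling $\K^{\mathrm{ext}}(\Omega^c,\beta)$, what remains is the leftover factor
\[
\frac{\mn!\,(N-\mn)!\,n^n\,(N-n)^{N-n}}{n!\,(N-n)!\,\mn^{\mn}\,(N-\mn)^{N-\mn}},
\]
which is not identically $1$ and must be bounded (e.g., by Stirling's formula, as the paper does). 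Writing the subadditivity inequality with $n$ in place of $\mn$ and declaring an exact cancellation is an error; the Stirling control of this mismatch is a genuine step in the argument that your proposal omits.
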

\begin{proof}
The proof goes almost exactly as in that of Proposition 4.5 of \cite{AS21}. We first immediately compute using Lemma \ref{superadditivity} and Lemma \ref{relaxation} with $u$ as in (\ref{electrostatic potential}) that 
\begin{align*}
\frac{\beta}{2}\G^{\omc}(\XN,\mu)-\beta \G(\XN,\mu)&\leq \frac{\beta}{2}\E^{\mathrm{int}}(u, \omc)-\beta \E^{\mathrm{int}}(u, \Omega)-\beta\E^{\mathrm{ext}}(u, \Omega^c) \\
&\leq -\frac{\beta}{2}\E^{\mathrm{ext}}(u, \Omega \setminus \omc)-\frac{\beta}{2}\E^{\mathrm{int}}(u, \Omega)-\beta \E^{\mathrm{ext}}(u, \Omega^c) \\
&\leq -\frac{\beta}{2}\E^{\mathrm{int}}(u, \Omega)-\beta \E^{\mathrm{ext}}(u, \Omega^c)+\frac{\beta}{2}C_0n \\
&\leq -\frac{\beta}{2} \H^{\mathrm{int}}_{\epsilon, \frac{L}{2}}(\XN\vert_\Omega, \Omega)-\beta \H^{\mathrm{ext}}_{\epsilon, \frac{3L}{2}}(\XN \vert_{ \Omega^c},\Omega^c)+\frac{\beta}{2}C_0n,
\end{align*}
where the penultimate line follows from Lemma \ref{local energy control} and the final line follows from the minimality of $\H_{\epsilon, h}$ and Proposition \ref{decay control}. Applying Proposition \ref{volume of configurations} we find 
\begin{align*}
&\mathbb{E}_{\PNbeta}\left(\exp \left(\frac{\beta}{2}\G^{\omc}(\cdot, \mu)\right)\indic_{\mathcal{G}_{n}}\right)\\
&\leq \frac{1}{N^N\K^{\mathrm{int}}(\R)}{N \choose n}\int_{\mathcal{G}^\Omega}\exp \left(-\frac{\beta}{2}\H^{\mathrm{int}}_{\epsilon, \frac{L}{2}}(X_n,\Omega)+\frac{\beta}{2}C_0n\right)~d\rho^{\otimes n}(X_n) \\
&\times \int_{(\Omega^c)^{N-n}\cap \mathcal{G}^{\Omega^c}}\exp \left(-\beta \H^{\mathrm{ext}}_{\epsilon, \frac{3L}{2}}(X_{N-n},\Omega^c)\right)~d\rho^{\otimes N-n}(X_{N-n}) \\
&\leq \frac{{N \choose n}}{N^N \K^{\mathrm{int}}(\R, \beta)}n^n(N-n)^{N-n}\K^{\mathrm{int}}\left(\Omega, \frac{\beta}{2}\right)\K^{\mathrm{ext}}(\Omega^c, \beta)\exp \left(\beta \varepsilon_e+\varepsilon_v+\frac{\beta}{2}C_0n\right)
\end{align*}
where in the last line we have indicated the temperature dependence in the partition function to avoid confusion and $\varepsilon_e$ and $\varepsilon_v$ denote the energy and volume errors 
\begin{align*}
\varepsilon_e&=C(\epsilon L+\tilde{l}+|\mn-n|) \\
\varepsilon_v&=C\tilde{l}.
\end{align*}
By subadditivity of the partition functions (Lemma \ref{subadditivity for partition functions}), we obtain 
\begin{equation*}
\mathbb{E}_{\PNbeta}\left(\exp \left(\frac{\beta}{2}\G^{\omc}(\cdot, \mu)\right)\indic_{\mathcal{G}_{n}}\right)\leq \frac{\mn! (N-\mn)!n^n(N-n)^{N-n}}{n!(N-n)!\mn^\mn(N-\mn)^{N-\mn}}\frac{\K^{\mathrm{int}}\left(\Omega, \frac{\beta}{2}\right)}{\K^{\mathrm{int}}(\Omega, \beta)}\exp \left(\beta \varepsilon_e+\varepsilon_v+\frac{\beta}{2}C_0n\right).
\end{equation*}
Stirling's formula allows us to bound 
\begin{equation*}
\mathbb{E}_{\PNbeta}\left(\exp \left(\frac{\beta}{2}\G^{\omc}(\cdot, \mu)\right)\indic_{\mathcal{G}_{n}}\right)\lesssim \frac{\K^{\mathrm{int}}\left(\Omega, \frac{\beta}{2}\right)}{\K^{\mathrm{int}}(\Omega, \beta)}\exp \left(\beta \varepsilon_e+\varepsilon_v+\frac{\beta}{2}C_0n\right);
\end{equation*}
choosing $\epsilon$ so that $C\epsilon+\left(1+\frac{1}{\beta}\right)Cc_1+\frac{C}{C_0}+\frac{C}{c_1L}\leq \frac{\mathcal{C}}{8}$, we obtain the Proposition.
\end{proof}

\subsection{Conclusion}
With the above results in tow, we are in a position to prove Proposition \ref{Main Bootstrap}.
\begin{proof}[Proof of Propoisiton \ref{Main Bootstrap}]
The goal is to first show that 
\begin{equation*}
\mathbb{E}_{\PNbeta}\left(\exp \left(\frac{\beta}{2}\left(\G^{\square_L}(\cdot, \mu)+C_0 \# ( \{\XN\} \cap \square_L)\right)\right)\indic_{\mathcal{G}}\right)\leq \exp \left(\frac{\mathcal{C}}{4}\beta L\right).
\end{equation*}
This is done by summing the previous estimates over $n$. We can almost exactly mirror the arguments in \cite{AS21}, with one difference: on our good event $\mathcal{G}$, point discrepancies are necessarily controlled; namely, we will show that we only need to consider $|\mn-n|\leq KL^{3/5}$ for some constant $K$. 

Let $n-\mn=\int_\Omega \sum \delta_{x_i}-\mu$. Then, using Lemma \ref{discrepancy estimate} we find that if $|n-\mn|\gtrsim L$ that
\begin{equation*}
L^2 \lesssim |\mn-n|^2 \lesssim \int_{\square_{2L}\times [-2L,2L]}|\nabla u_{\rrc}|^2 \lesssim 2L,
\end{equation*}
a contradiction. So, $|n-\mn|\lesssim L$ and Lemma \ref{discrepancy estimate} yields
\begin{equation*}
\frac{|\mn-n|^{5/2}}{\sqrt{L}} \lesssim \int_{\square_{2L}\times [-2L,2L]}|\nabla u_{\rrc}|^2 \lesssim 2L,
\end{equation*}
which tells us that $|\mn-n| \leq KL^{3/5}$ for some constant $K$.
%
%
%
For the remaining small point discrepancies $|\mn-n|\leq K L^{3/5}$ we make use of the control in Proposition \ref{energy control for good points}. First, using that result directly we find 
\begin{align*}
\sum_{|\mn-n|\leq KL^{3/5}}&\mathbb{E}_{\PNbeta}\left(\exp \left(\frac{\beta}{2}\G^{\omc}(\cdot, \mu)\right)\indic_{\mathcal{G}_{n}}\right) \\
&\lesssim \sum_{|\mn-n|\leq KL^{3/5}}\frac{\K^{\mathrm{int}}\left(\Omega, \frac{\beta}{2}\right)}{\K^{\mathrm{int}}(\Omega, \beta)}\exp \left(\beta \left(\frac{\mathcal{C}}{8}L+\frac{1}{2}C_0n+C|\mn-n|\right)\right).
\end{align*}
Now, using Lemma \ref{energy integrals and local partition functions}, we can control the ratio of partition functions uniformly by $e^{CL}$. Controlling $n$ and $|\mn-n|\lesssim L^{3/5}$ and adjusting constants as necessary, we find that
\begin{equation*}
\mathbb{E}_{\PNbeta}\left(\exp \left(\frac{\beta}{2}\left(\G^{\square_L}(\cdot, \mu)+C_0 \# ( \{\XN\} \cap \square_L)\right)\right)\indic_{\mathcal{G}}\right)\leq \exp \left(\frac{\mathcal{C}}{4}\beta L\right).
\end{equation*}
It follows immediately by a Chernoff bound that $\PNbeta(\{\G^{\square_L}(\cdot, \mu)+C_0 \# ( \{\XN\} \cap \square_L) \geq \mathcal{C}L\}\cap \mathcal{G})$ can be controlled by 
\begin{align*}
&\PNbeta(\{\G^{\square_L}(\cdot, \mu)+C_0 \# ( \{\XN\} \cap \square_L) \geq \mathcal{C}L\}\cap \mathcal{G}) \\
&\leq e^{-\frac{\beta}{2}\mathcal{C}L}\Esp \exp \left(\frac{\beta}{2}\left(\G^{\square_L}(\cdot, U)+C_0 \# ( \{\XN\} \cap \square_L)\right)\indic_{\mathcal{G}}\right) \\
& \leq \exp \left(-\frac{\beta}{4}\mathcal{C}L\right).
\end{align*}
This establishes Proposition \ref{Main Bootstrap}, and as discussed earlier in the section, Theorem \ref{Local Law}.
\end{proof}

\section{Uniform Bound on Fluctuations}\label{Section Uniform Bound on Fluctuations}
The goal of this section is to prove both the control necessary for the local laws bootstrap, and a similar control for rescaled test functions. The approach relies on the Laplace transform method introduced in \cite{J98}, coupled with the transport approach of \cite{BLS18} and modified as in \cite{L21}. The computations rely on explicit formulae for the equilibrium measure and transport map that are available in traditional coordinates. Hence, both for computational ease and to match the existing literature for such estimates we revert to traditional coordinates.

Recall that for the local laws in the previous section, we need probabilistic control of the fluctuations of 
\begin{equation*}
\kappa_{a,h}(x')=2\pi \frac{-(a-x')}{(a-x')^2+h^2}
\end{equation*}
and 
\begin{equation*}
 \zeta_{a,h}(x')=-2\pi h\frac{1}{(a-x')^2+h^2}.
\end{equation*}
Changing back to traditional coordinates, we find 
\begin{equation*}
\int_{\R}\zeta_{a,h}(x') \left(\sum_{i=1}^N \delta_{x'_i}-\mu(x')dx'\right)=\int_{\R} \zeta_{a,h}(Nx) \left(\sum_{i=1}^N \delta_{x_i}-N\mu_V(x)dx\right)=\Fluct_N\left(\zeta_{a,h}(Nx)\right).
\end{equation*}
The same computation can be run for the fluctuations of the test function $\kappa_{a,h}$. 
For computational ease, we keep the argument as $Nx$. We prove that these specific fluctuations are bounded in (almost) exponential moments, and simultaneously prove a control on fluctuations of rescaled test functions. In this section and the following, $L$ again denotes a scale in traditional coordinates. 

\begin{theo}\label{Uniform Bound on Fluctuations}
Suppose that $\theta\in C^3$ is a compactly supported test function, and let $\xi_{z,L}$ denote the associated rescaled test function at scale $L>\frac{\omega}{N}$ with $z \in \B$. Let $N_c$ be as in Definition \ref{critical N}. Then, for every $N\geq N_c$ there is an event $\mathcal{G}_N$ such that 
\begin{align*}
&\left|\log \Esp_{\PNbeta}\left[\exp \left(s\Fluct_N(\xi)\right)\indic_{\mathcal{G}_N}\right] \right|= \frac{s^2}{\beta}\|\xi\|_{H^{1/2}}^2 \\
&+O\left(|s|\|\theta\|_{C^3}+|s|\left|1-\frac{1}{\beta}\right|\|\theta\|_{C^3}+\frac{s^2}{\beta}\max \left(\|\theta\|_{C^1},\|\theta\|_{C^3}^2\right)+\frac{|s|^3}{\beta^2 LN}\max\left(\|\theta\|_{C^2}^2,\|\theta\|_{C^3}^3\right)\right)
\end{align*}
and 
\begin{equation*}
\PNbeta( \mathcal{G}_N^c)\leq C_1e^{-C_2\beta LN}
\end{equation*} 
for some fixed constants $C_1$ and $C_2$. If $\xi=\zeta_{a,h}(Nx)$ or $\kappa_{a,h}(Nx)$ for some $a\in 2\Omega$, where $\Omega$ is as in Proposition \ref{Main Bootstrap}, then we can say analogously that
\begin{equation*}
\left|\log \Esp_{\PNbeta}\left[\exp \left(s\Fluct_N(\xi)\right)\indic_{\mathcal{G}'_L}\right]\right|\lesssim \frac{|s|}{h}+\left|1-\frac{1}{\beta}\right|\frac{|s|}{ h}+ \frac{s^2}{\beta h^2}+\frac{|s|^3}{\beta^2 h^4}
\end{equation*}
with $\mathcal{G}_L'$ as in Proposition \ref{Main Bootstrap}.
\end{theo}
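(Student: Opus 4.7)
The plan is to implement the Laplace transform method of Johansson \cite{J98} combined with a transport change of variables, as sketched in the outline and analogous to the approach of \cite{BLS18} and \cite{L21}. Starting from
\begin{equation*}
\Esp_{\PNbeta}\left[\exp(s\Fluct_N(\xi))\indic_{\mathcal{G}}\right] = \frac{1}{\ZNbeta}\int_{\mathcal{G}} e^{s\Fluct_N(\xi) - \beta \HN^V(\XN)}\,d\XN,
\end{equation*}
I would perform the change of variables $\XN = \Phi_t(Y_N)$ with $\Phi_t(x) = x + t\psi(x)$ and $t = s/(\beta N)$, where $\psi$ is the transport map satisfying the linearized Euler--Lagrange equation $h^{(\psi\muv)'} = \xi + \text{const}$ in $\Sigma_V$, defined precisely in (\ref{definition of transport}) and whose derivative and decay properties are assembled in Appendix C. Expanding the Jacobian, the one-body potential $V$, and the logarithmic interaction to second order in $t$, and absorbing the deterministic quadratic contribution via the identity $\int \psi\xi\,d\muv = \|\xi\|_{H^{1/2}}^2$, one arrives at a representation of the form
\begin{equation*}
\Esp_{\PNbeta}\left[\exp(s\Fluct_N(\xi))\indic_{\mathcal{G}}\right] = \exp\!\left(\tfrac{s^2}{\beta}\|\xi\|_{H^{1/2}}^2 + N\,\mathsf{Error}_1\right)\Esp_{\PNbeta}\!\left[\exp\!\left(-\tfrac{\beta}{2}(\mathsf{Error}_2 + \mathsf{Error}_3)\right)\indic_{\Phi_t^{-1}(\mathcal{G})}\right].
\end{equation*}

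Here $\mathsf{Error}_1$ and $\mathsf{Error}_3$ are linear statistics of quantities built from $\psi$ and its derivatives (Jacobian correction, higher-order terms in the expansion of $V$ and of $\g$, remainders from the Euler--Lagrange identity holding only on $\Sigma_V$). These can be handled by a direct application of the fluctuation-energy inequality (\ref{energy controls fluctuations}) together with the bootstrap hypothesis: since we are running the joint induction, Theorem \ref{Local Law} is available at all dyadic scales $2^k L$ down to $L$, so summing geometrically the contributions from annuli $\square_{2^{k+1}L}\setminus\square_{2^k L}$ (and using the tails of $\psi$ from Appendix C to weight them) yields bounds of the size stated in the theorem. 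The rescaling $t = s/(\beta N)$ converts the $t^3$ remainder in the expansion into the $|s|^3/(\beta^2 LN)$ term, while the $(1-1/\beta)$ prefactor originates from the Jacobian of $\Phi_t$ combined with its second-order contribution to the density.

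The genuinely delicate step is $\mathsf{Error}_2$, the anisotropy
\begin{equation*}
\int \frac{\psi(x) - \psi(y)}{x - y}\,d\fluct_N(x)\,d\fluct_N(y),
\end{equation*}
whose kernel is only bounded (and not $C^1$ in $x-y$), so that a naive application of (\ref{energy controls fluctuations}) is logarithmically too weak. I would control this term by localizing the functional inequality of \cite{S20} (reformulated as a commutator estimate in \cite{NRS21} and localized in \cite{RS22}) to the support of $\psi$: this bounds the anisotropy by $\|\psi'\|_{L^\infty}$ times the local next-order energy $\G^{\Omega}(\XN,\mu) + C_0 \#(\XN \cap \Omega)$ over an interval $\Omega$ containing $\supp(\psi)$. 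The bootstrap hypothesis (Theorem \ref{Local Law} at scale $L$) then gives a deterministic $O(LN\|\psi'\|_{L^\infty})$ control on the good event, which translates after multiplication by $\beta/2$ and division by $N$ (via the choice of $t$) into the claimed order-one error $\tfrac{s^2}{\beta}\|\theta\|_{C^1}$.

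Finally, for the specialization $\xi = \zeta_{a,h}(N\cdot)$ or $\kappa_{a,h}(N\cdot)$ with $a \in 2\Omega$ and $h \sim L$, I would substitute the explicit Poisson-kernel $C^k$ bounds $\|\theta\|_{C^k} \lesssim h^{-(k+1)}$ and the corresponding $H^{1/2}$ norm $\|\xi\|_{H^{1/2}}^2 \lesssim h^{-2}$ into the general statement; the scale $h$ plays the role of the rescaled scale $LN$, and collecting terms reproduces the bound $|s|/h + |1 - 1/\beta||s|/h + s^2/(\beta h^2) + |s|^3/(\beta^2 h^4)$. The good event $\mathcal{G}_N$ is taken to be the local-law event at scale $L$ produced by Theorem \ref{Local Law}, whose complement has probability at most $C_1 e^{-C_2 \beta LN}$ as required; in the Poisson-kernel case we may use the local-law event $\mathcal{G}_L'$ already present in the hypotheses of Proposition \ref{Main Bootstrap}, closing the mutual bootstrap.
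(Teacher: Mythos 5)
Your overall architecture — Johansson's Laplace transform plus a transport change of variables, the split into $\textsf{Main}_1$ and $\textsf{Error}_{1,2,3}$, and a localized commutator estimate for the anisotropy — matches the paper's. But your treatment of $\textsf{Error}_2$ has a genuine gap. You propose to apply the commutator estimate of \cite{RS22} ``over an interval $\Omega$ containing $\supp(\psi)$'' and then invoke the local law at scale $L$. The problem is that $\psi$ is \emph{not} compactly supported: inverting the master operator $\Xi_V$ spreads it over all of $\Sigma_V$, with only algebraic decay $|\psi(x)|\lesssim L\|\theta\|_{C^1}/|x-z|$ (Lemma~\ref{Rescaled Transport Estimates}); this is exactly the ``global tails'' difficulty the paper flags in the outline. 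If you enlarge $\Omega$ to a macroscopic interval that actually contains the tails, the local law there gives energy $\sim N$ and the commutator estimate produces $|t|\cdot N\cdot\|\psi'\|_{L^\infty}\sim \frac{|s|\|\theta\|_{C^3}}{\beta L}$, which is larger than the claimed $O(|s|\|\theta\|_{C^3})$ by a factor $1/L$. The fix — implemented in Propositions~\ref{easy Error2 bound - local laws} and~\ref{easy Error2 bound - rescaled test function} — is to carry the dyadic annulus decomposition you already use for $\textsf{Error}_1$ and $\textsf{Error}_3$ over to the anisotropy: write $\psi=\sum_k \psi\zeta_k$ with $\zeta_k$ supported on $\square_{2^kL}\setminus\square_{2^{k-1}L}$, apply the commutator estimate to each $\psi\zeta_k$ at scale $2^kL$, and note that the decay of $\psi'$ on the $k$-th annulus cancels the growth of the local energy so the sum over $k$ converges geometrically.

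Your Poisson-kernel specialization also needs repair. With $\xi=\zeta_{a,h}(N\cdot)=\theta((\cdot-z)/L')$ and $L'=h/N$, one has $\|\theta\|_{C^k}\sim 1/h$ for all $k$, not $h^{-(k+1)}$. Even with the correct scaling, simply substituting into the rescaled-test-function bound gives a middle term $\frac{s^2}{\beta}\max(1/h,1/h^2)=s^2/(\beta h)$, which is weaker than the stated $s^2/(\beta h^2)$ and not strong enough to close the Chernoff bound in Proposition~\ref{decay control} (there one needs to take $s\sim h$). The sharper $h^{-2}$ arises because the effective support $|J|\sim Lh$ entering the $\tau_t$ estimates (Lemmas~\ref{Energy Difference Estimate} and~\ref{Local Laws Transport Estimates}) is smaller for the Poisson kernel than for a generic rescaled test function; the paper therefore runs the Poisson case as a parallel argument rather than deducing it as a corollary.
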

When discussing the test functions $\kappa_{a,h}$ and $\zeta_{a,h}$, we will suppose $h$ is exactly as defined in Section \ref{Section Bootstrap on Scales} and will take the assumptions of Proposition \ref{Main Bootstrap}: namely, that local laws already hold at scales $2^kL'$ for $k \geq 1$.

\subsection{Setup and Expansion of Laplace Transform}
In this section we let $\xi$ be an arbitrary test function, which will later be specified as one of our two relevant cases. The main structure makes use of the Laplace transform method of Johansson \cite{J98} and a suitable change of coordinates as in \cite{BLS18}. The idea of the transport is to introduce a change of coordinates that approximately maps the equilibrium measure $\muv$ of a potential $V$ onto that of the equilibrium measure $\muvt$ of the perturbed potential $V_t=V+t\xi$.
\begin{defi}\label{general transport}
Let $U \supset \Sigma_V$ be open and bounded, and let $\psi \in C^1(U)$. Define the associated \textit{transport map}
\begin{equation}\label{transport map}
\phi_t(x):=x+t\psi(x),
\end{equation}
which induces a map on configurations $\Phi_t:\R^N\rightarrow \R^N$ given by $(\Phi_t(\XN))_i=\phi_t(x_i)$. Let $\tilde{\muvt}$ denote the \textit{approximate equilibrium measure}
\begin{equation}\label{approximate equilibrium measure}
\tilde{\muvt}:=\phi_t \#\muv,
\end{equation}
and $\tilde{\zetat}$ the \textit{approximate confinement potential} $\zeta_V \circ \phi_t^{-1}$. 
\end{defi}
For ease of computation in Appendix C, we choose $U$ to be such that we can write 
\begin{equation}\label{potential derivative}
\zeta_V'=M(x)\sigma(x)
\end{equation}
for some positive and sufficiently smooth function $M$ on $U \setminus \Sigma_V$, with $\sigma(x)$ as in (\ref{equilibrium measure}). Under our assumptions, such a set always exists (see \cite[Lemma 3.1]{BLS18}). We will want to restrict to a set $U$ with compact closure, and will want the configuration $\XN$ to live in $U$. Sacrificing a set of exponentially small probability, this is always possible (see \cite[Lemma 2.6]{BLS18}):
\begin{lem}\label{compact point configurations}
Let $U$ be a fixed open neighborhood of $\Sigma_V$. Then,
\begin{equation}\label{points in compact set}
\PNbeta(\XN \subset U^N) \geq 1-e^{-cN},
\end{equation}
with $c$ only dependent on $\beta$ and $U$.
\end{lem}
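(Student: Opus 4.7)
By symmetry of $\PNbeta$ under permutation of coordinates and a union bound,
\[
\PNbeta(\XN \not\subset U^N) \leq N \cdot \PNbeta(x_1 \notin U),
\]
so it suffices to establish $\PNbeta(x_1 \notin U) \leq e^{-c'N}$ for some $c' = c'(\beta, U) > 0$. The remaining factor of $N$ is absorbed into $c$ for large $N$.

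The natural strategy is to estimate the ratio of the numerator and the denominator using the splitting formula (Lemma \ref{Splitting Formula}) together with the partition function asymptotics of Lemma \ref{global bound on partition functions}. Passing to the blown-up variables $x_i' = N x_i$ developed in this section, one rewrites
\[
\PNbeta(x_1 \notin U) = \frac{1}{\K(\muv', \zeta_V')} \int_{\{x_1' \notin NU\}} e^{-\beta \G(X_N', \muv')} \prod_{i=1}^N \frac{e^{-\beta \zeta_V'(x_i')}}{N} \, dx_i',
\]
with $\zeta_V'(x_i') = N \zeta_V(x_i)$. Lemma \ref{global bound on partition functions} gives the denominator bound $\K(\muv', \zeta_V') \geq e^{-C \beta N}$; a Jensen inequality against the product reference measure $\prod_i e^{-\beta \zeta_V'(x_i')}/N$, combined with the integral estimate on $\G$ in the spirit of Lemma \ref{energy integrals and local partition functions}, sharpens this to $\K(\muv', \zeta_V') \geq |\Sigma_V|^N e^{-C\beta N}$. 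For the numerator, a pointwise lower bound $\G(X_N', \muv') \gtrsim -N$, analogous to \eqref{a priori lower bound} but applied globally via the renormalized electrostatic identity of Lemma \ref{electrical view of next order energy}, decouples the interaction term and reduces the numerator, after reverting the change of variables, to the explicit one-dimensional product
\[
\int_{\R \setminus U} e^{-\beta N \zeta_V(y)} \, dy \cdot \left(\int_{\R} e^{-\beta N \zeta_V(y)} \, dy\right)^{N-1}.
\]

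The remainder is a direct consequence of the defining properties of $\zeta_V$. Assumptions A4 and A1 together imply that $c_U := \inf_{\R \setminus U} \zeta_V$ is strictly positive, and that $\int_{\R} e^{-\beta \zeta_V(y)} \, dy < +\infty$. The elementary split $e^{-\beta N \zeta_V(y)} \leq e^{-\beta(N-1) c_U} e^{-\beta \zeta_V(y)}$ valid on $\R \setminus U$ then gives $\int_{\R \setminus U} e^{-\beta N \zeta_V(y)} \, dy \lesssim_\beta e^{-\beta(N-1) c_U}$. Concentration of $e^{-\beta N \zeta_V(y)}$ on $\Sigma_V$ (where $\zeta_V \equiv 0$) yields $\int_{\R} e^{-\beta N \zeta_V(y)} \, dy = |\Sigma_V|(1 + o(1))$ as $N \to \infty$, so that the factor $|\Sigma_V|^{N-1}$ produced in the numerator cancels exactly against the matching volume factor in the sharpened lower bound on $\K$. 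Collecting all estimates yields $\PNbeta(x_1 \notin U) \leq C_\beta \, e^{-cN}$ with $c = c(\beta, U) > 0$.

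The main technical obstacle is the careful tracking of constants to ensure the final exponent is strictly negative for any neighborhood $U$: both the pointwise lower bound on $\G$ and the sharpened lower bound on $\K$ must be tight enough that the extensive terms $|\Sigma_V|^N$ cancel cleanly, leaving only the gain $e^{-\beta N c_U}$ from the confining effective potential. The detailed bookkeeping is carried out in \cite[Lemma 2.6]{BLS18}.
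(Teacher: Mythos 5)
First, note that the paper itself gives no proof of this lemma: it simply imports it from \cite[Lemma 2.6]{BLS18}, so the comparison is to that reference's argument rather than to anything proved here.

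Your sketch has a genuine gap that the ``detailed bookkeeping'' you defer cannot close. After applying the pointwise lower bound $\G(\XN',\muv')\gtrsim -C_1 N$ together with the partition-function lower bound $\K(\muv',\zeta_V')\geq e^{-C_2\beta N}$ from Lemma~\ref{global bound on partition functions}, the leading extensive factor $|\Sigma_V|^N$ (or $(|\Sigma_V|/\Lambda)^N$ after the Jensen tilt) does cancel up to $O(1)$ between numerator and denominator, but it is not the only extensive term. You are left with a universal prefactor $e^{(C_1+C_2)\beta N}$ in which $C_1,C_2$ depend only on $\muv$ and $\beta$, to be compared against the confinement gain $e^{-\beta c_U(N-1)}$ where $c_U := \inf_{\R\setminus U}\zeta_V$. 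Assumption A4 gives $c_U>0$, but $c_U\to 0$ as $U$ tightens around $\Sigma_V$ (typically like $\mathrm{dist}(\partial U,\Sigma_V)^{3/2}$). Since the lemma must hold for every fixed open neighborhood $U$, there is no way to ensure $c_U > C_1+C_2$, and for small $U$ the bound is vacuous. The root cause is that discarding $\G$ via a uniform lower bound throws away precisely the information that makes it expensive for a particle to leave $U$: when $x_1'$ sits far from $N\Sigma_V$, the next-order energy $\G$ is in fact well above $-C_1 N$ (a lone point outside the support increases the electrostatic fluctuation), and that surplus is essential. The proof in \cite{BLS18} does not simply remove $\G$; it compares the restricted integral to $\ZNbeta$ (or to a tilted partition function) in a way that makes the $O(N)$ errors in numerator and denominator match, so that the only net exponential contribution comes from the confinement. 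Without an ingredient of that type your outline cannot produce a $c>0$ valid for every fixed $U$.
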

We would like to select a transport map $\psi$ such that $\tilde{\muvt}$ well approximates the true equilibrium measure $\muvt$, which amounts to finding a $\psi$ such that $\tilde{\muvt}$ satisfies the Euler-Lagrange equation
\begin{equation*}
(h^{\muvt}+V_t)\circ \Phi_t=c
\end{equation*}
at leading order in $t$. This is, as it turns out, equivalent to inverting the so-called ``master operator" (cf. \cite{BG13}, \cite{BG16}):
\begin{defi}\label{master operator}
Let $\psi \in C^1(U)$. Then, the \textit{master operator} $\Xi_V$ acts on $\psi$ by 
\begin{equation}\label{mo}
\Xi_V[\psi](x)=-\psi(x)V'(x)+\int \frac{\psi(x)-\psi(y)}{x-y}~d\muv(y).
\end{equation}
\end{defi}
The following is proven in \cite[Lemma 3.3]{BLS18}. We restate it here for convenience.
\begin{lem}\label{inverting the master operator}
Let 
\begin{equation}\label{definition of transport}
\psi(x)=\begin{cases}
-\frac{1}{\pi^2S(x)}\int_{\Sigma_V}\frac{\xi(y)-\xi(x)}{\sigma(y)(y-x)}~dy & \text{if } x \in \Sigma_V; \\
\frac{\int \frac{\psi(y)\muv(y)}{x-y}~dy+\xi(x)+c_{\xi}}{\int_{\Sigma_V}\frac{\muv(y)}{x-y}~dy-V'(x)} & \text{if }x \in U \setminus \Sigma_V.
\end{cases}
\end{equation}
Then, there is some constant $c_\xi$ such that 
\begin{equation}\label{transport}
\Xi_V[\psi]=\xi+c_\xi
\end{equation}
in $U$. 
\end{lem}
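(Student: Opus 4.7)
The plan is to verify identity (\ref{transport}) separately on $\Sigma_V$ and on $U \setminus \Sigma_V$, in each region using a different mechanism, and to choose $c_\xi$ so that the two prescriptions of $\psi$ are compatible. Expanding the kernel as $\frac{\psi(x)-\psi(y)}{x-y} = \psi(x)\cdot \frac{1}{x-y} - \frac{\psi(y)}{x-y}$ in a principal value sense, I would write
\begin{equation*}
\Xi_V[\psi](x) = -\psi(x)V'(x) + \psi(x)\,\mathrm{p.v.}\!\int \frac{d\muv(y)}{x-y} - \mathrm{p.v.}\!\int \frac{\psi(y)\,d\muv(y)}{x-y}.
\end{equation*}

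For $x \in \Sigma_V$, the first step is to invoke the Euler-Lagrange equation (\ref{Euler-Langrange equation}): differentiating $h^{\muv}+V = c_V$ on $\Sigma_V$ gives $V'(x) = \mathrm{p.v.}\!\int \frac{d\muv(y)}{x-y}$, so the first two terms above cancel exactly and the equation $\Xi_V[\psi]=\xi+c_\xi$ collapses to the Tricomi-type finite Hilbert transform equation
\begin{equation*}
-\mathrm{p.v.}\!\int_{\Sigma_V} \frac{\psi(y) S(y) \sigma(y)}{x-y}\,dy = \xi(x)+c_\xi.
\end{equation*}
The second step is then to apply the classical airfoil/Tricomi inversion formula in the multi-cut setting: the $L^p$ solution with the appropriate endpoint behavior dictated by the square-root weight $\sigma$ is, up to the kernel of the finite Hilbert transform,
\begin{equation*}
\psi(x) S(x) = -\frac{1}{\pi^2}\,\mathrm{p.v.}\!\int_{\Sigma_V} \frac{\xi(y)+c_\xi}{\sigma(y)(y-x)}\,dy,
\end{equation*}
valid provided the $n$ solvability conditions $\int_{a_i}^{b_i} \frac{\xi(y)+c_\xi}{\sigma(y)}\,dy = 0$ coming from the components of $\Sigma_V$ hold. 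Using oddness (equivalently, subtracting $\xi(x)$ in the integrand, which only shifts by a constant that is reabsorbed into $c_\xi$) then yields precisely the first line of (\ref{definition of transport}). The constant $c_\xi$ is the unique one compatible with the (one-dimensional) family of solvability conditions in the bulk component, which is how it is selected.

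For $x \in U \setminus \Sigma_V$, the derivation is purely algebraic: since $\muv$ is supported in $\Sigma_V$ the integrand in $\Xi_V[\psi](x)$ is nonsingular at $x$, and one rearranges $\Xi_V[\psi](x) = \xi(x)+c_\xi$ into
\begin{equation*}
\psi(x)\left(\int_{\Sigma_V} \frac{d\muv(y)}{x-y} - V'(x)\right) = \xi(x)+c_\xi + \int_{\Sigma_V} \frac{\psi(y)\,d\muv(y)}{x-y},
\end{equation*}
and divides. Assumption A4 combined with the factorization $\zeta_V' = M\sigma$ with $M>0$ on $U\setminus\Sigma_V$ ensures the denominator is nonzero there, so $\psi$ is well defined off $\Sigma_V$ once its values on $\Sigma_V$ are known from the previous step.

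The main obstacle I anticipate is the rigorous justification of the Tricomi inversion in the multi-cut regime, including the interchange of principal values (a Poincaré-Bertrand argument), the precise accounting of the solvability conditions that determine $c_\xi$, and checking that the $\psi$ produced by the two formulas glues into a function in $C^1(U)$ given only the regularity of $S$, $\sigma$, and $\xi$. In particular, one must verify that the square-root vanishing of $\muv$ at $\partial\Sigma_V$ conspires with the $\sigma^{-1}$ in the inversion formula so that $\psi$ remains bounded at the edges and matches continuously with the outer expression. These are classical singular-integral manipulations, and the argument closely mirrors \cite[Lemma~3.3]{BLS18}.
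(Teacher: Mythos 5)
The paper does not prove this lemma itself; it quotes it from \cite[Lemma~3.3]{BLS18} and defers the full argument to Appendix~B of that reference. Your reconstruction follows the same route as that source: splitting the difference quotient, differentiating the Euler--Lagrange identity (\ref{Euler-Langrange equation}) to obtain $V'(x)=\mathrm{p.v.}\int\frac{d\muv(y)}{x-y}$ on $\Sigma_V$ so that two of the three terms of $\Xi_V[\psi]$ cancel, inverting the resulting finite Hilbert transform for $\psi\muv$ by the Tricomi/airfoil formula, and rearranging algebraically off $\Sigma_V$ (where the nonvanishing of the denominator follows from the factorization $\zeta_V'=M\sigma$ with $M>0$, as you note). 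This outline is correct and is indeed the strategy of the cited proof.

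There is, however, one concrete intermediate claim that is wrong as stated and would need to be repaired to cover the multi-cut case required by assumption~A2. You assert that passing from $\mathrm{p.v.}\int_{\Sigma_V}\frac{\xi(y)+c_\xi}{\sigma(y)(y-x)}\,dy$ to $\int_{\Sigma_V}\frac{\xi(y)-\xi(x)}{\sigma(y)(y-x)}\,dy$ ``only shifts by a constant.'' That is equivalent to $\mathrm{p.v.}\int_{\Sigma_V}\frac{dy}{\sigma(y)(y-x)}$ being constant in $x\in\Sigma_V$. It is (it vanishes) when $n=1$, but not for $n\ge 2$: with the unsigned weight $\sigma>0$ of~A3, the identity that actually holds is $\mathrm{p.v.}\int_{\Sigma_V}\frac{\epsilon_j\,dy}{\sigma(y)(y-x)}=0$, where $\epsilon_j=(-1)^{n-j}$ is the alternating sign of the boundary jump of $1/R$ for $R(z)=\prod_i((z-a_i)(z-b_i))^{1/2}$; for two or more cuts $\mathrm{p.v.}\int_{\Sigma_V}\frac{dy}{\sigma(y)(y-x)}$ is therefore a nonconstant function of $x$, and your subtraction changes $\psi$ by a genuinely $x$-dependent term. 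Your solvability bookkeeping has a related mismatch: you list $n$ conditions $\int_{a_i}^{b_i}\frac{\xi(y)+c_\xi}{\sigma(y)}\,dy=0$ against a single free constant $c_\xi$, without bringing in the $(n-1)$-dimensional polynomial kernel of the inversion that absorbs the remaining constraints. You flag ``the precise accounting of the solvability conditions'' as the anticipated obstacle, which is the right place to look, but the specific bridging steps you offer are not correct as given and must be redone as in \cite[Appendix~B]{BLS18} to handle multiple cuts.
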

The above $\psi$ is the transport that we will take going forward. With this $\psi$, there is also a natural associated energy difference.
\begin{defi}
Let $\psi$ be as in \ref{transport}. Then, the \textit{energy difference} is defined by
\begin{equation}\label{energy difference}
\tau_t=\mathcal{F}(t,\psi)-\mathcal{F}(0,0)-tc_\xi,
\end{equation}
where $\mathcal{F}(t,\psi)$ is given by 
\begin{equation*}
\int-\log |\phi_t(\cdot)-\phi_t(y)|~d\muv(y)+V_t \circ \phi_t(\cdot)
\end{equation*}
and $\phi_t$ is the transport map associated to $\psi$.
\end{defi}
Expanding the Laplace transform and splitting as in \cite{BLS18}, we can write the exponential moments of the fluctuations as a ratio of partition functions. Relative to the above transport map, this ratio can be expanded further to yield leading order terms that are independent of the configuration, and error terms which are themselves fluctuations. We record this in the following proposition:
\begin{prop}\label{Expansion of Laplace Transform}
Let $\xi$ be a compactly supported measureable test function. Then, for any $\PNbeta$-measurable event $\mathcal{G}$,
\begin{equation}\label{first expansion}
\Esp_{\PNbeta}[\exp(s\Fluct_N(\xi))\indic_{\mathcal{G}}]=\exp\left(-sN\int \xi~d\muv\right)\frac{\ZNbeta^{\Vt, \mathcal{G}}}{\ZNbeta},
\end{equation}
where $t=-\frac{s}{\beta N}$, $V_t=V+t\xi$ and 
\begin{equation*}
\ZNbeta^{\Vt, \mathcal{G}}=\int_{\mathcal{G}}\exp \left(-\beta \HN^{V_t}(\XN)\right)~d\XN.
\end{equation*}
Furthermore, if we let $\psi$ be defined as in (\ref{transport}), we can expand
\begin{align}\label{main expansion}
&\nonumber \Esp_{\PNbeta}[\exp(s\Fluct_N(\xi))\indic_{\mathcal{G}}]\\
&=\exp \left(-\beta N^2 \textsf{Main}_1+N\textsf{Error}_1\right)\Esp_{\PNbeta}\left[\exp \left(-\beta\left(\textsf{Error}_2+\textsf{Error}_3\right)\right)\indic_{\Phi_t^{-1}(\mathcal{G})}\right],
\end{align}
where
\begin{align}
\label{Main1}\textsf{Main}_1&=\mathcal{I}_{\Vt}(\tilde{\muvt})-\mathcal{I}_V(\muv)-t\int \xi~d\muv \\
\label{Error1}\textsf{Error}_1&=\left(1-\beta \right)\int \log \phi_t'(x)~d\muv \\
\label{Error2}\textsf{Error}_2&=\int\int-\log \left|\frac{\phi_t(x)-\phi_t(y)}{x-y}\right|~d\fluct_N(x)d\fluct_N(y) \\
\label{Error3}\textsf{Error}_3&=\Fluct_N \left(\left(1-\frac{1}{\beta}\right)\log \phi_t'+N\tau_t \right)
\end{align}
for any $s$ such that $t=-\frac{s}{\beta N}$ satisfies $\|t\psi'\|_{L^\infty}<\frac{1}{2}$.
\end{prop}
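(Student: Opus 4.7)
The plan for (\ref{first expansion}) is a direct Johansson-type calculation. Writing $s = -\beta N t$ and using $H_N^{V_t}(\XN) = H_N^V(\XN) + N t \sum_i \xi(x_i)$, the tilt $\exp(s \sum_i \xi(x_i)) \exp(-\beta H_N^V(\XN))$ becomes $\exp(-\beta H_N^{V_t}(\XN))$; integrating over $\mathcal{G}$, normalizing by $\ZNbeta$, and pulling out the deterministic factor $\exp(-s N \int \xi \, d\muv)$ yields the identity. For (\ref{main expansion}) the plan is to apply the change of variables $\YN = \Phi_t(\XN)$ inside $\ZNbeta^{V_t,\mathcal{G}}$; the hypothesis $\|t\psi'\|_{L^\infty} < \frac{1}{2}$ ensures $\phi_t$ is a $C^1$-diffeomorphism of $U$ with positive Jacobian $\phi_t'$, producing the factor $\exp(\sum_i \log \phi_t'(x_i))$ and replacing $\mathcal{G}$ by $\Phi_t^{-1}(\mathcal{G})$.

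The algebraic heart of the argument is to rewrite $H_N^{V_t}(\Phi_t(\XN))$ via the splitting formula (Lemma \ref{Splitting Formula}), or rather its extension to an arbitrary probability measure obtained by expanding $\sum \delta_{y_i} = (\sum \delta_{y_i} - N\tilde{\muvt}) + N\tilde{\muvt}$ inside the pair interaction, with $\tilde{\muvt} = \phi_t\#\muv$ playing the role of the reference measure. Under the substitution $y = \phi_t(x)$, the identity $\log|\phi_t(x) - \phi_t(y)| = \log|x-y| + W(x,y)$ with $W(x,y) := \log\bigl|(\phi_t(x) - \phi_t(y))/(x-y)\bigr|$ transforms the next-order energy according to $F_N(\Phi_t(\XN),\tilde{\muvt}) = F_N(\XN,\muv) - \frac{1}{2}\int\int W \, d\fluct_N d\fluct_N$, which is exactly the anisotropy $\textsf{Error}_2$. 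The crucial cancellation is the identity $h^{\tilde{\muvt}} \circ \phi_t = h^{\muv} - g$ with $g(x) := \int W(x,y)\,d\muv(y)$, which together with the definition of $\tau_t$ in (\ref{energy difference}) gives $(h^{\tilde{\muvt}} + V_t)\circ \phi_t = \zeta_V + c_V + \tau_t + t c_\xi$ and cleanly isolates the $\Fluct_N(\tau_t)$ contribution. A parallel direct computation yields $\frac{1}{2}\int\int W \, d\muv d\muv + \int \tau_t \, d\muv + t c_\xi = \mathcal{I}_{V_t}(\tilde{\muvt}) - \mathcal{I}_V(\muv)$, which when combined with the $\beta N^2 t \int \xi \, d\muv$ coming from the Johansson prefactor gives exactly $-\beta N^2 \textsf{Main}_1$.

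The final step is to decompose the Jacobian as $\sum_i \log \phi_t'(x_i) = N\int \log\phi_t'\,d\muv + \Fluct_N(\log\phi_t')$ and to regroup it with the $-\beta N \Fluct_N(\tau_t)$ piece so as to produce $N\textsf{Error}_1 - \beta \textsf{Error}_3$. The main obstacle is the careful bookkeeping: every constant and every fluctuation must be sorted by its $\beta$-dependence so that the deterministic linear-in-$N$ term $N(1-\beta)\int\log\phi_t'\,d\muv$ and the fluctuation combination $-\beta\Fluct_N((1-1/\beta)\log\phi_t' + N\tau_t)$ emerge with the exact coefficients stated. Once these identifications are made, exponentiating and dividing by $\ZNbeta$ produces the claimed expansion.
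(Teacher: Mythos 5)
Your overall strategy matches the paper's: the Johansson tilt for (\ref{first expansion}), the change of variables $\Phi_t$ in $\ZNbeta^{V_t,\mathcal{G}}$, the splitting formula with $\tilde{\muvt}$ as reference measure, and the identification $(h^{\tilde{\muvt}}+V_t)\circ\phi_t = \zeta_V + c_V + \tau_t + tc_\xi$ to expose the $\Fluct_N(N\tau_t)$ term. Your auxiliary identities ($h^{\tilde{\muvt}}\circ\phi_t = h^{\muv}-g$ with $g(x)=\int W(x,y)\,d\muv(y)$, and the relation between $\mathcal{I}_{V_t}(\tilde{\muvt})-\mathcal{I}_V(\muv)$, $\int\tau_t\,d\muv$, $tc_\xi$) are correct and somewhat more explicit than the paper, which simply invokes the generalized splitting formula to produce $\mathcal{I}_{V_t}(\tilde{\muvt})$ directly.

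There is, however, a genuine gap that your plan as written cannot close. You identify $\FN(\Phi_t(\XN),\tilde{\muvt}) - \FN(\XN,\muv)$ with $\textsf{Error}_2$ and then propose to regroup the Jacobian $\sum_i\log\phi_t'(x_i)$ with $-\beta N\Fluct_N(\tau_t)$ to produce $N\textsf{Error}_1 - \beta\textsf{Error}_3$. But $\FN$ integrates over $\Delta^c$, whereas $\textsf{Error}_2$ integrates over the full product, and the diagonal limit of the anisotropy kernel is $-\log\bigl|\frac{\phi_t(x)-\phi_t(y)}{x-y}\bigr|\big|_{y=x} = -\log\phi_t'(x)$. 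Thus $-\beta\bigl(\FN(\Phi_t(\XN),\tilde{\muvt})-\FN(\XN,\muv)\bigr)$ produces not $-\beta\textsf{Error}_2$ alone but $-\beta\textsf{Error}_2 - \beta\sum_i\log\phi_t'(x_i)$, and it is the cancellation of this extra $-\beta\sum_i\log\phi_t'(x_i)$ with the Jacobian's $+\sum_i\log\phi_t'(x_i)$ that yields the net coefficient $(1-\beta)$. Splitting $(1-\beta)\sum_i\log\phi_t'(x_i)$ into mean and fluctuation gives $N\textsf{Error}_1 = N(1-\beta)\int\log\phi_t'\,d\muv$ on the deterministic side and $(1-\beta)\Fluct_N(\log\phi_t') = -\beta\bigl(1-\tfrac{1}{\beta}\bigr)\Fluct_N(\log\phi_t')$ on the fluctuation side, which combined with $-\beta\Fluct_N(N\tau_t)$ is $-\beta\textsf{Error}_3$. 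Regrouping the Jacobian with $\tau_t$ alone, without the diagonal contribution from the anisotropy, gives a coefficient of $+1$ rather than $(1-\beta)$ on $\sum_i\log\phi_t'(x_i)$, and consequently neither $\textsf{Error}_1$ nor the $\bigl(1-\tfrac{1}{\beta}\bigr)\log\phi_t'$ piece of $\textsf{Error}_3$ will appear with the stated form. This diagonal bookkeeping is the single idea missing from your outline; once added, the remainder closes.
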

The proof is a computation and an application of the splitting formula Lemma \ref{Splitting Formula}. 
\begin{proof}[Proof of Proposition \ref{Expansion of Laplace Transform}]
We first expand directly, finding with $t=-\frac{s}{\beta N}$ that
\begin{align*}
&\Esp_{\PNbeta}[\exp(s\Fluct_N(\xi))\indic_{\mathcal{G}}] \\
&=\frac{1}{\ZNbeta^V}\int_{\R^N}\exp \left(s\Fluct_N(\xi)-\beta \HNV(\XN)\right)\indic_{\mathcal{G}}~d\XN \\
&=\frac{e^{-sN\int \xi~d\muv}}{\ZNbeta^V}\int_{\R^N}\exp \left(-\beta\left(\frac{1}{2}\sum_{i \ne j}\g(x_i-x_j)+N\sum_{i=1}^N \left(V(x_i)+t\xi(x_i)\right)\right)\right)\indic_{\mathcal{G}}~d\XN,
\end{align*}
which is exactly (\ref{first expansion}). Using Lemma \ref{Splitting Formula}, we can write 
\begin{equation*}
\ZNbeta^V=\exp \left(-\beta N^2\I_V(\muv)\right)\int \exp \left (-\beta \left(\FN(\XN,\muv)+N\sum_{i=1}^N V(x_i)\right)\right)~d\XN
\end{equation*}
and by a change of variables write
\begin{align*}
\ZNbeta^{\Vt, \mathcal{G}}&=\int_{\R^N}\exp \left(-\beta \HN^{\Vt}(\XN)\right)\indic_{\mathcal{G}}~d\XN \\
&=\int_{\R^N}\exp \left(-\beta \HN^{\Vt}(\Phi_t(\XN))+\sum_{i=1}^N \log \phi_t'(x_i)\right)\indic_{\Phi_t^{-1}(\mathcal{G})}~d\XN
\end{align*}
As in Lemma \ref{Splitting Formula}, we can compute
\begin{equation*}
\HNVt(\Phi_t(\XN))=N^2\I_{\Vt}(\tilde{\muvt})+N\sum_{i=1}^N\zeta_V(x_i)+\FN(\Phi_t(\XN),\tilde{\muvt})+\Fluct_N(N\tau_t),
\end{equation*}
where the fluctuation of $\tau_t$ appears from replacing the confinement potential $\zetat$ for $V_t$ with the approximate confinement potential $\tilde{\zetat}$. It follows that 
\begin{align*}
 \ZNbeta^{V_t,\mathcal{G}}=\exp (-\beta N^2\I_{\Vt}(\tilde{\muvt}))\int_{\R^N}\exp \biggl(&-\beta \biggl(\FN(\Phi_t(\XN),\tilde{\muvt})+N\sum_{i=1}^N\zeta_V(x_i) \\
&+\Fluct_N(N\tau_t)-\frac{1}{\beta}\sum_{i=1}^N \log \phi_t'(x_i)\biggr)\biggr)\indic_{\Phi_t^{-1}(\mathcal{G})}~d\XN.
\end{align*}
Inserting the expansions of $\ZNbeta$ and $\ZNbeta^{\Vt,\mathcal{G}}$ into \ref{first expansion} and using the definition of $\textsf{Main}_1$ (\ref{Main1}),  
\begin{align*}
&\Esp_{\PNbeta}[\exp(s\Fluct_N(\xi))\indic_{\mathcal{G}}]=\exp \left(-\beta N^2\textsf{Main}_1\right) \times \\
&\Esp_{\PNbeta}\left[\exp\left(-\beta \left(\FN(\Phi_t(\XN),\tilde{\muvt})-\FN(\XN,\muv)-\frac{1}{\beta}\sum_{i=1}^N \log \phi_t'(x_i)+\Fluct_N(N\tau_t)\right)\right)\indic_{\Phi_t^{-1}(\mathcal{G})}\right].
\end{align*}
The error terms in the expectation can be further simplified. Expanding the difference in next-order energies,
\begin{align*}
&\FN(\Phi_t(\XN), \tilde{\muvt})-\FN(\XN,\muv)-\frac{1}{\beta}\sum_{i=1}^N \log \phi_t'(x_i) \\
&=\int_{\Delta^c}-\log \left|\frac{\phi_t(x)-\phi_t(y)}{x-y}\right|\left(\sum_{i=1}^N \delta_{x_i}-N\muv \right)^2(x,y)-\frac{1}{\beta}\sum_{i=1}^N \log \phi_t'(x_i) \\
&=\int -\log \left|\frac{\phi_t(x)-\phi_t(y)}{x-y}\right|\left(\sum_{i=1}^N \delta_{x_i}-N\muv \right)^2(x,y)+\left(1-\frac{1}{\beta}\right)\sum_{i=1}^N \log \phi_t'(x_i) \\
&=\textsf{Error}_2-\frac{1}{\beta}N\textsf{Error}_1+\left(1-\frac{1}{\beta}\right)\int \log \phi_t'(x)~d\fluct_N(x).
\end{align*}
Substituting this into the expectation and using the definition of $\textsf{Error}_3$ (\ref{Error3}), we obtain (\ref{main expansion}).
\end{proof}
The rest of this section makes use of the specific transport $\psi$ solving (\ref{transport}) to identify $\textsf{Main}_1$ and control each $\textsf{Error}_i$. We will use precise estimates on the transport map, which we prove in Appendix C. As mesoscopic scales, we will need certain assumptions on the test function to obtain the desired estimates; this leads us to introduce the following threshold for $N$. 
\begin{defi}\label{critical N}
Let $\xi_{z,L}$ be a rescaled test function with $\frac{\omega}{N}<L\ll 1$ and $z \in \B$. Let $t \in \R$. We define the value $N_c$ as the smallest number such that for all $N \geq N_c$, 
\begin{itemize}
\item $\supp(\xi_{z,L}) \subset \B$,
\item $\|t\psi'\|_{L^\infty}<\frac{1}{2}$,
\item $L$ is less than one half of the minimum distance between any two endpoints of $\Sigma_V$.
\end{itemize}
When $L$ is a macroscopic constant, we let $N_c=1$.
\end{defi}
As we will see later in this section, in our cases $N_c$ is always finite even for mesoscopic $L$ since we take $t\sim \frac{1}{N}$, $\|\psi'\|_{L^\infty} \lesssim \frac{1}{L}$ and $\frac{\omega}{N}\ll L\ll1$. The final item is primarily to make our computations easier in Appendix C.
 
\begin{lem}[Rescaled Transport Estimates]\label{Rescaled Transport Estimates}
Let $k \geq 0$ and suppose that $\theta \in C^{k+2}$ has compact support. Let $\xi_{z,L}$ denote the associated rescaled test function at scale $L$ with $z \in B$, and let $N \geq N_c$. Then, 
\begin{equation*}
|\psi^{(k)}(x)|\lesssim \begin{cases}
\frac{1}{L^{k}}\|\theta\|_{C^{k+2}} & \text{if }x \in \supp(\xi_{z,L}); \\
L\left(\frac{1}{|x-z|}+\frac{1}{|x-z|^{k+1}}\right)\|\theta\|_{C^{k+2}} & \text{otherwise.}
\end{cases}
\end{equation*}
\end{lem}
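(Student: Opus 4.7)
The argument would split into three regimes according to where $x$ lies, matching the two branches of (\ref{definition of transport}): (i) $x \in \supp(\xi_{z,L})$, (ii) $x \in \Sigma_V \setminus \supp(\xi_{z,L})$, and (iii) $x \in U \setminus \Sigma_V$. Since $N \geq N_c$ forces $\supp(\xi_{z,L}) \subset \B \subset \Sigma_V$, in regime (i) the first branch of (\ref{definition of transport}) applies, and in regimes (ii) and (iii) the simplification $\xi_{z,L}(x) = 0$ is available, removing the singularity in the divided difference.

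For regime (i), I would differentiate the first branch of (\ref{definition of transport}) under the integral sign, using the identity
\begin{equation*}
\partial_x^k\!\left[\frac{\xi_{z,L}(y)-\xi_{z,L}(x)}{y-x}\right] = \int_0^1 (1-s)^k\, \xi_{z,L}^{(k+1)}(x+s(y-x))\,ds
\end{equation*}
to sidestep the apparent singularity at $y=x$. Since $S$ is smooth and bounded away from zero on $\B$, $1/\sigma$ is integrable on $\Sigma_V$, and $\|\xi_{z,L}^{(j)}\|_{L^\infty} \lesssim \|\theta\|_{C^j}/L^j$, the Leibniz rule gives $|\psi^{(k)}(x)| \lesssim \|\theta\|_{C^{k+1}}/L^k$, which is stronger than claimed. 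For regime (ii) the first branch becomes $\psi(x) = -\frac{1}{\pi^2 S(x)}\int \xi_{z,L}(y)/[\sigma(y)(y-x)]\,dy$, whose integrand has no singularity at $y=x$; differentiating $k$ times, using $|\supp(\xi_{z,L})| \lesssim L$ and $|\xi_{z,L}| \leq \|\theta\|_\infty$, and distinguishing $|y-x|\gtrsim L$ (always, when $|x-z|\lesssim L$) from $|y-x|\gtrsim |x-z|$ (when $|x-z|\gg L$) for $y$ in the support, I would obtain $|\psi^{(k)}(x)| \lesssim L\bigl(1/|x-z| + 1/|x-z|^{k+1}\bigr)\|\theta\|_\infty$.

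For regime (iii) I would apply the second branch of (\ref{definition of transport}). By (\ref{potential derivative}) together with the nondegeneracy assumption (A4), the denominator equals $-\zeta_V'(x) = -M(x)\sigma(x)$, which is bounded below on compact subsets of $U\setminus \Sigma_V$. The numerator is controlled using the bounds on $\psi|_{\Sigma_V}$ from regimes (i)--(ii) to estimate $\int \psi(y)\muv(y)/(x-y)\,dy$ and its derivatives: differentiation contributes extra $1/(x-y)$ factors producing the $L/|x-z|^{k+1}$ term, while the mass bound $\int |\psi|\,d\muv \lesssim L\|\theta\|_\infty$ furnishes the far-field $L/|x-z|$ contribution; $c_\xi$ is shown to be $O(L\|\theta\|_\infty)$ by integrating (\ref{transport}) against $\muv$. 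The main obstacle is regime (iii) close to $\partial \Sigma_V$: the denominator vanishes like $\sigma(x)$, so one must show that the numerator carries a matching factor of $\sigma$ to keep the quotient and its derivatives bounded. This is where the structural form (\ref{potential derivative}), the square-root boundary decay (\ref{Boundary decay}), and the two spare derivatives of $\theta$ (hence $C^{k+2}$ rather than the naive $C^{k+1}$) are essential --- they provide the slack needed to absorb the boundary corrections coming from differentiating the explicit square-root factors, so that $\psi$ extends smoothly across $\partial \Sigma_V$ with the claimed decay.
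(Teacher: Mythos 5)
Your reduction to the three regimes and your use of the explicit formula (\ref{definition of transport}) match the paper's overall structure, but the central estimate in regime (i) has a genuine gap, and the same gap recurs in regime (iii).

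In regime (i) you bound $\bigl|\partial_x^k[(\xi(y)-\xi(x))/(y-x)]\bigr|$ uniformly in $y$ by $\|\xi^{(k+1)}\|_{L^\infty}\lesssim\|\theta\|_{C^{k+1}}/L^{k+1}$ and then integrate against the bounded measure $dy/\sigma(y)$. As written this gives only $\|\theta\|_{C^{k+1}}/L^{k+1}$, a full factor of $L$ worse than claimed; the missing $L$ must come from the smallness of $\{(s,y):x+s(y-x)\in\supp\xi\}$. Once you track that, though, you find that for $y\notin 2\supp(\xi)$ the integrand is $\sim\frac{1}{L^k|y-x|}$, and $\int_{\Sigma_V\setminus 2\supp(\xi)}\frac{dy}{\sigma(y)|y-x|}\sim\log(1/L)$. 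Integrating by parts in your $\int_0^1(1-s)^k\xi^{(k+1)}\,ds$ representation makes this explicit: the dominant contribution is $-\xi^{(k)}(x)/(y-x)$ plus an $O(L\|\xi^{(k)}\|_\infty/(y-x)^2)$ remainder, and the leading term reproduces exactly the integral $\int\frac{dy}{\sigma(y)(y-x)}$ whose naive bound is logarithmic. The paper avoids the log by a symmetrized Cauchy-principal-value estimate, comparing $f(x+u)$ with $f(x-u)$ for $f=1/\sigma$; without some such cancellation your claimed $\|\theta\|_{C^{k+1}}/L^k$ does not follow. The same obstacle reappears in regime (iii): your asserted mass bound $\int|\psi|\,d\mu_V\lesssim L\|\theta\|_\infty$ is false — the $L/|y-z|$ tail of $\psi$ also produces a $\log(1/L)$ — and the paper removes it by substituting the integral formula for $\psi$ back in, applying Fubini, and invoking the same principal-value cancellation in the inner integral.

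A secondary issue: in regime (ii) you claim that when $|x-z|\lesssim L$ one always has $|y-x|\gtrsim L$ for $y\in\supp(\xi)$, but for $x\in 2\supp(\xi)\setminus\supp(\xi)$ just outside the support this fails — $|y-x|$ can be arbitrarily small. The paper patches this with a mean-value/Taylor argument in the annulus $2\supp(\xi)\setminus\supp(\xi)$, using the extra derivative of $\theta$ (this is one of the places the $C^{k+2}$ hypothesis is used, not only near $\partial\Sigma_V$). Your regime (iii) sketch near $\partial\Sigma_V$ captures the right ingredients (vanishing of $\Xi_V[\tilde\psi]-c_\xi-\xi$ on $\Sigma_V$, the $\sqrt{\cdot}$ factor absorbing boundary derivatives), but as stated it is too schematic to assess beyond that.
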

Observe that the above estimates hold immediately when $L$ is a fixed macroscopic constant, as one can always bound $\psi$ associated to $\xi$ via $\|\psi\|_{C^k}\leq \|\xi\|_{C^{k+1}}$ in the noncritical case. This justifies our choice $N_c=1$ in the macroscopic regime.

We have analogous estimates for the transport associated to the local laws test functions $\kappa_{a,h}$ and $\zeta_{a,h}$. To keep analogy with the above proposition, we denote the scale $\frac{1}{N}$ by $L$.
\begin{lem}[Local Laws Transport Estimates]\label{Local Laws Transport Estimates}
Let $\psi$ be the transport associated with $\zeta_{a,h}(Nx)$ or $\kappa_{a,h}(Nx)$ for $a \in \Omega$ and $h$. Then, we have for all $n \geq 0$
\begin{equation*}
|\psi^{(n)}(x)|\lesssim \begin{cases}
\frac{1}{h}\frac{1}{(Lh)^n} & \text{if }|x-a|\leq Lh \\
L \left(\frac{1}{|x-a|}+\frac{1}{|x-a|^{n+1}}\right) & \text{if }|x-a|>Lh,
\end{cases}
\end{equation*}
where the constants depend only on $n$ and $\muv$.
\end{lem}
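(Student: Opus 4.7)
The plan is to follow the scheme of Lemma~\ref{Rescaled Transport Estimates}, viewing $\xi(x)=\zeta_{a,h}(Nx)$ (and similarly $\kappa_{a,h}(Nx)$) as a rescaled test function at effective scale $\eta:=Lh=h/N$ and amplitude $\sim 1/h$, centred at $a/N$. The essential new feature is that $\xi$ is not compactly supported: it has Lorentzian tails that must be tracked through every integral. Throughout, the assumption $a\in\Omega\subset\B$ combined with a suitable choice of $\omega$ ensures that the peak sits deep in the bulk, so that $1/S$ and $1/\sigma$ are smooth and bounded below in a fixed neighborhood of $a/N$.

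The first step is to record sharp pointwise bounds for $\xi$ itself. Using the complex representations
\begin{equation*}
\zeta_{a,h}(x')=-2\pi\,\mathrm{Im}\bigl[(a-x'-ih)^{-1}\bigr], \qquad \kappa_{a,h}(x')=-2\pi\,\mathrm{Re}\bigl[(a-x'-ih)^{-1}\bigr],
\end{equation*}
repeated differentiation yields $|\xi^{(k)}(x)|\lesssim N^k/((a-Nx)^2+h^2)^{(k+1)/2}$. Splitting at $|a-Nx|=h$ gives $|\xi^{(k)}|\lesssim 1/(h(Lh)^k)$ in the near regime and $|\xi^{(k)}|\lesssim L/|x-a|^{k+1}$ in the far regime; integrating, $\|\xi^{(k)}\|_{L^1}\lesssim L/(Lh)^k$. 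These are the only facts about $\xi$ that enter the rest of the argument.

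The heart of the proof is the bulk estimate. Write $\psi=G/S$ on $\Sigma_V$ with $G(x)=-\tfrac{1}{\pi^2}\int_{\Sigma_V}\tfrac{\xi(y)-\xi(x)}{\sigma(y)(y-x)}\,dy$, and expand by Leibniz
\begin{equation*}
\psi^{(k)}=\sum_{j=0}^{k}\binom{k}{j}\,G^{(j)}\,(1/S)^{(k-j)},
\end{equation*}
where the factors $(1/S)^{(k-j)}$ are uniformly bounded by the bulk smoothness of $S$. To control $G^{(j)}(x)$ I would view the integral as a weighted Hilbert-type transform of $\xi/\sigma$ and exploit its action on Lorentzian profiles. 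Concretely, for $|x-a|\leq Lh$, splitting the integral at $|y-a|=2Lh$ and using the mean-value identity $(\xi(y)-\xi(x))/(y-x)=\int_0^1\xi'(x+s(y-x))\,ds$ on the near piece together with the Lorentzian tail bound on the remainder produces $|G^{(j)}(x)|\lesssim 1/(h(Lh)^j)$; for $|x-a|>Lh$, after differentiating, $G^{(j)}(x)$ is dominated by terms of the form $\int\xi(y)/(y-x)^{j+1}\,dy$ (controlled via $\|\xi\|_{L^1}\lesssim L$) and by $\xi^{(j)}(x)$ times smooth bounded weights, each of order $L/|x-a|^{j+1}$. Summing, the near region yields $|\psi^{(k)}|\lesssim 1/(h(Lh)^k)$, while in the far region the Leibniz extremes $j=0$ and $j=k$ produce exactly the envelope $L\bigl(1/|x-a|+1/|x-a|^{k+1}\bigr)$, which dominates all intermediate contributions. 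Outside $\Sigma_V$, the second branch of Lemma~\ref{inverting the master operator} writes $\psi$ as a ratio whose denominator is bounded below by A4 and the decomposition (\ref{potential derivative}), and whose numerator is the Cauchy integral of $\psi|_{\Sigma_V}\muv$ plus $\xi(x)+c_\xi$; the same estimates propagate by the quotient rule, with $|c_\xi|$ dominated via the inversion formula by a moment of $\xi$ against $\muv$ (and hence of order $L$, which is harmless).

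The principal obstacle is the absence of compact support for $\xi$, which blocks the direct localisation used in Lemma~\ref{Rescaled Transport Estimates}. The remedy is to split every integral at the natural scale $Lh$ and use the Lorentzian decay of $\xi$ to control the tails, relying on $\|\xi^{(k)}\|_{L^1}\lesssim L/(Lh)^k$. A secondary bookkeeping challenge is the two-parameter dependence on $L$ and $h$: the estimates must interpolate between the near regime, governed by $\|\xi\|_{L^\infty}\sim 1/h$, and the far regime, governed by $\|\xi\|_{L^1}\sim L$, with the transition occurring at $Lh$. This two-scale structure is precisely what produces the extra $L/|x-a|$ tail that is absent from Lemma~\ref{Rescaled Transport Estimates}.
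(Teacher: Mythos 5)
Your treatment of $x\in U\setminus\Sigma_V$ has a genuine gap. You propose to ``propagate the estimates by the quotient rule'' applied to the second branch of Lemma~\ref{inverting the master operator}, whose denominator is $-\zeta_V'(x)=M(x)\sigma(x)$ by (\ref{potential derivative}). This fails near $\partial\Sigma_V$: $\sigma$ vanishes like a square root there, so $(1/\sigma)^{(m)}$ grows like $\dist(x,\partial\Sigma_V)^{-m-1/2}$, and each quotient-rule differentiation loses a half power, so the estimates do not propagate. The paper explicitly flags this as the trap and instead rewrites $\psi=\tilde\psi+(\xi+c_\xi-\Xi_V[\tilde\psi])/(-\zeta_V')$ for a $C^m$ extension $\tilde\psi$ of $\psi|_{\Sigma_V}$ (as in \cite[Appendix B]{BLS18}), then Taylor-expands $\Xi_V[\tilde\psi]-c_\xi-\xi$, which vanishes to all orders on $\Sigma_V$, about the boundary point to absorb the singular weight. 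Without this rewriting the claimed bound $|\psi^{(n)}|\lesssim L$ off $\Sigma_V$ does not follow from your argument.

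Inside $\Sigma_V$ your route is genuinely different from the paper's. The paper does not run a mean-value/tail argument at all: writing $\xi(x)=-2\pi hL^2/(x^2+(Lh)^2)$ (for $a=0$), it uses the exact algebraic separability
\begin{equation*}
\frac{\xi(y)-\xi(x)}{y-x}=\frac{2\pi hL^2\,x}{x^2+(Lh)^2}\cdot\frac{1}{y^2+(Lh)^2}+\frac{1}{x^2+(Lh)^2}\cdot\frac{2\pi hL^2\,y}{y^2+(Lh)^2},
\end{equation*}
so that the $y$-integration in $G$ collapses to two $x$-independent constants (each estimated once, with a single Cauchy principal-value argument) and every $\frac{d^k}{dx^k}$ falls on explicit rational prefactors whose numerators obey a closed recursion; an analogous identity handles $\kappa_{a,h}$. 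Your split-at-$Lh$/mean-value/$\|\xi\|_{L^1}\lesssim L$ strategy is more generic and could be made rigorous, but it needs two ingredients you elide: when controlling $G^{(j)}$ the identity must be the $j$-fold differentiated form $j!\,(\xi(y)-P_j^x(y))/(y-x)^{j+1}$, not the undifferentiated mean-value form you quote; and the $n=j$ term of the Taylor polynomial leaves $\xi^{(j)}(x)\int\tfrac{dy}{\sigma(y)(y-x)}$ over the far region, which must be treated by the same principal-value cancellation as in the proof of Lemma~\ref{Rescaled Transport Estimates} or you pick up a spurious $\log(1/(Lh))$. The paper's factorization sidesteps both issues; your approach is longer but, suitably completed, reaches the same bulk estimate.
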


\subsection{Analysis of Main Term}
We first identify the main term asymptotically with the $H^{1/2}$ norm of $\xi$ via the following proposition.
\begin{prop}\label{Main Term Expansion}
Suppose that $\theta\in C^3$ is a compactly supported test function, and let $\xi_{z,L}$ denote the associated rescaled test function at scale $L$ at least microscopic with $z \in \B$. Then, for every $N\geq N_c$,
\begin{equation*}
-\beta N^2\textsf{Main}_1=\frac{s^2}{\beta}\|\theta\|_{H^{1/2}}^2+ O\left(\frac{|s|^3}{\beta ^2 LN}\max\left(\|\theta\|_{C^2}^2,\|\theta\|_{C^3}^3\right)\right).
\end{equation*}
If $\xi=\zeta_{a,h}(Nx)$ or $\xi=\kappa_{a,h}(Nx)$ for $a \in \Omega$ and $h \lesssim N$, we have 
\begin{equation*}
-\beta N^2\textsf{Main}_1=\frac{s^2}{\beta}\|\xi\|_{H^{1/2}}^2+ O\left(\frac{|s|^3}{\beta ^2 h^4}\right).
\end{equation*}
\end{prop}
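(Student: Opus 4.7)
The plan is to prove Proposition \ref{Main Term Expansion} by Taylor-expanding $\textsf{Main}_1$ in $t$ around $t=0$: the $t^1$ coefficient vanishes via the Euler--Lagrange equation, the $t^2$ coefficient is identified with $-\|\xi\|_{H^{1/2}}^2$ through the master-operator identity of Lemma \ref{inverting the master operator}, and the $t^3$ remainder is controlled with the transport estimates of Lemma \ref{Rescaled Transport Estimates} (resp.\ Lemma \ref{Local Laws Transport Estimates} in the local-laws case).

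First I would exploit the change of variables $\tilde{\mu}_{V_t}=\phi_t\#\mu_V$ to rewrite
\begin{align*}
\textsf{Main}_1
&=\tfrac{1}{2}\int\!\!\int -\log\!\left|\tfrac{\phi_t(x)-\phi_t(y)}{x-y}\right|\,d\mu_V(x)\,d\mu_V(y) \\
&\quad +\int\bigl[V(\phi_t(x))-V(x)\bigr]\,d\mu_V(x)+t\!\int\bigl[\xi(\phi_t(x))-\xi(x)\bigr]\,d\mu_V(x),
\end{align*}
and then Taylor-expand each integrand in $t$, which is legal under $\|t\psi'\|_{L^\infty}<\tfrac12$ (valid for $N\geq N_c$). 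This yields explicit $t^1$ and $t^2$ coefficients plus a cubic remainder in $\psi$, $\xi$, and their derivatives.

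Second, I would check that the $t^1$ coefficient vanishes. It equals
\[
-\tfrac{1}{2}\int\!\!\int\tfrac{\psi(x)-\psi(y)}{x-y}\,d\mu_V(x)\,d\mu_V(y)+\int V'\psi\,d\mu_V,
\]
and upon symmetrizing $(x,y)\leftrightarrow(y,x)$ the first double integral becomes $-\mathrm{PV}\!\int\!\!\int \tfrac{\psi(x)}{x-y}\,d\mu_V(y)\,d\mu_V(x)$. The Euler--Lagrange equation (\ref{Euler-Langrange equation}), differentiated on $\Sigma_V$, reads $V'(x)=\mathrm{PV}\!\int\tfrac{d\mu_V(y)}{x-y}$; since $\mu_V$ is supported on $\Sigma_V$, this makes the two contributions cancel exactly.

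Third, and this is the main obstacle, I need to identify the $t^2$ coefficient
\[
\mathcal{Q}(\psi):=\tfrac{1}{4}\int\!\!\int\left(\tfrac{\psi(x)-\psi(y)}{x-y}\right)^{\!2}\!d\mu_V d\mu_V +\tfrac{1}{2}\!\int V''\psi^2 d\mu_V+\!\int \xi'\psi\,d\mu_V
\]
with $-\|\xi\|_{H^{1/2}}^2$. The key input is Lemma \ref{inverting the master operator}, which gives $\Xi_V[\psi]=\xi+c_\xi$ on $\Sigma_V$. Setting $\nu:=-(\psi\mu_V)'$, this is equivalent to the logarithmic-potential equation $h^\nu=\xi+c_\xi$ on $\Sigma_V$. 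The planar embedding viewpoint from the Introduction then identifies $h^\nu$ on $\mathbb{R}^2_+$ with the harmonic extension of $h^\nu|_{\mathbb{R}}$ and (since $\int d\nu=0$ kills the additive constant) gives the electrostatic identity $\int h^\nu\,d\nu=2\|\xi\|_{H^{1/2}}^2$. Integrating by parts converts $\int \xi'\psi\,d\mu_V=-\int\xi\,d\nu$, and using the Euler--Lagrange relation once more to rewrite the remaining quadratic terms yields $\mathcal{Q}(\psi)=-\|\xi\|_{H^{1/2}}^2=-\|\theta\|_{H^{1/2}}^2$ by scale invariance of the $H^{1/2}$ seminorm. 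The hard part is justifying the integration by parts across $\partial \Sigma_V$, where $\mu_V$ vanishes as a square root (assumption A3), and keeping track of the global tails of $\psi$ outside $\supp\xi$; both are handled using the explicit representation (\ref{definition of transport}) of $\psi$ together with Lemma \ref{Rescaled Transport Estimates}.

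Finally, I would bound the cubic remainder by inserting the pointwise estimates $|\psi^{(k)}|\lesssim \|\theta\|_{C^{k+2}}/L^k$ on $\supp\xi$ (Lemma \ref{Rescaled Transport Estimates}) together with the algebraic tail decay of $\psi$ outside $\supp\xi$: each of the cubic terms, $\int V'''\psi^3 d\mu_V$, $\int\xi''\psi^2 d\mu_V$, and $\int\!\!\int\bigl[(\psi(x)-\psi(y))/(x-y)\bigr]^3 d\mu_V d\mu_V$, contributes through a region of $\mu_V$-mass $\lesssim L$, producing a total remainder of size $\lesssim |t|^3\max(\|\theta\|_{C^2}^2,\|\theta\|_{C^3}^3)/L$. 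Multiplying by $-\beta N^2$ and using $t=-s/(\beta N)$ yields the stated error $O\bigl(\tfrac{|s|^3}{\beta^2 LN}\max(\|\theta\|_{C^2}^2,\|\theta\|_{C^3}^3)\bigr)$. For the local-laws test functions $\zeta_{a,h}(Nx)$ and $\kappa_{a,h}(Nx)$, the same argument applies with Lemma \ref{Local Laws Transport Estimates} replacing Lemma \ref{Rescaled Transport Estimates}, the effective scale $Lh$ replacing $L$, and using $\|\zeta_{a,h}(Nx)\|_{H^{1/2}},\|\kappa_{a,h}(Nx)\|_{H^{1/2}}\sim 1/h$, which produces the claimed $|s|^3/(\beta^2 h^4)$ remainder.
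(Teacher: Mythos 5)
Your proposal follows essentially the same route as the paper: expand $\textsf{Main}_1$ via the push-forward identity, Taylor-expand the logarithm and the potential to second order, invoke cancellations to isolate a $t^2$ coefficient equal to $\frac12\int\xi'\psi\,d\mu_V$, identify this with $-\|\xi\|_{H^{1/2}}^2$ through the planar harmonic-extension definition of the $H^{1/2}$ seminorm, and control the cubic remainder with Lemmas \ref{Rescaled Transport Estimates}/\ref{Local Laws Transport Estimates}. The one genuine difference is that the paper handles the cancellations in a single stroke by citing \cite[Lemma 4.2]{BLS18}, which states precisely the two integral identities
\begin{equation*}
\int\!\!\int\tfrac{\psi(x)-\psi(y)}{x-y}\,d\mu_V\,d\mu_V = 2\int V'\psi\,d\mu_V,\qquad
\tfrac12\int\!\!\int\!\left(\tfrac{\psi(x)-\psi(y)}{x-y}\right)^2\! d\mu_V\,d\mu_V + \int V''\psi^2\,d\mu_V = -\int\xi'\psi\,d\mu_V,
\end{equation*}
whereas you aim to re-derive them. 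Your derivation of the first from the differentiated Euler--Lagrange equation $V'(x)=\mathrm{PV}\!\int\frac{d\mu_V(y)}{x-y}$ after symmetrization is fine and is indeed how it is done in \cite{BLS18}. Your treatment of the second is under-specified: the phrase ``using the Euler--Lagrange relation once more to rewrite the remaining quadratic terms'' does not suffice, since this second identity really requires \emph{differentiating} the master-operator relation $\Xi_V[\psi]=\xi+c_\xi$ (not just Euler--Lagrange) and then multiplying by $\psi$ and integrating against $\mu_V$ with a symmetrization; the quadratic term $\int V''\psi^2\,d\mu_V$ only appears after that differentiation. You should either cite the lemma as the paper does or spell that computation out. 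Two further bookkeeping points: with $\nu:=-(\psi\mu_V)'$ you actually get $h^\nu=-(\xi+c_\xi)$ on $\Sigma_V$ (cf.\ the paper's own relation $h^{(\psi\mu_V)'}=\xi+\mathrm{const}$ in $\S$\ref{Outline of the Proof}), and correspondingly $\int\xi'\psi\,d\mu_V=+\int\xi\,d\nu$, not $-\int\xi\,d\nu$; since $\|{\pm\xi}\|_{H^{1/2}}$ are equal this washes out, but as written the signs do not track. Your cubic error estimate, while phrased a bit heuristically (``$\mu_V$-mass $\lesssim L$''), does hit the dominant contributions $\|\xi''\|_{L^\infty}\int\psi^2\,d\mu_V\sim\|\theta\|_{C^2}^3/L$ and $\|\psi'\|_{L^\infty}\int\!\!\int(\tfrac{\psi(x)-\psi(y)}{x-y})^2\sim\|\theta\|_{C^3}^3/L$ and thus lands on the stated $|s|^3/(\beta^2 LN)$ error after multiplying by $\beta N^2$ and using $t=-s/(\beta N)$.
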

\begin{proof}
We write $\xi$ generically for our test function, and specialize to the various cases of the proposition. First observe that by the definition of the push-forward,
\begin{equation*}
\I_{\Vt}(\tilde{\muvt})=\int \int -\log |\phi_t(x)-\phi_t(y)|~d\muv(y)d\muv(x)+\int V_t \circ \phi_t~d\muv
\end{equation*}
and so 
\begin{align*}
&\I_{\Vt}(\tilde{\muvt})-\I_V(\muv)-t\int \xi~d\muv \\
&=\int \int -\log \left|\frac{\phi_t(x)-\phi_t(y)}{x-y}\right|~d\muv(y)d\muv(x)+\int (V_t \circ \phi_t-V)~d\muv-t\int \xi~d\muv.
\end{align*}
Taylor expanding the logarithm, we find 
\begin{align*}
&\int \int -\log \left|\frac{\phi_t(x)-\phi_t(y)}{x-y}\right|~d\muv(y)d\muv(x)=-t\int \int \frac{\psi(x)-\psi(y)}{x-y}~d\muv(y)d\muv(x) \\
&+\frac{t^2}{2}\int \int \left(\frac{\psi(x)-\psi(y)}{x-y}\right)^2~d\muv(y)d\muv(x) + O\left(|t|^3 \int \int \left(\frac{\psi(x)-\psi(y)}{x-y}\right)^3~d\muv(y)d\muv(x)\right).
\end{align*}
Taylor expanding the potential term, we have
\begin{align*}
\int (V_t \circ \phi_t-V)~d\muv&-t\int \xi~d\muv
=\int \left(tV'(x)\psi(x)+\frac{t^2}{2}V''(x)\psi(x)^2+t^2 \xi'(x)\psi(x)\right)~d\muv(x) \\
&+O\left(|t|^3\|V'''\|_{L^\infty}\int \psi(x)^3~d\muv(x)+|t|^3\|\xi''\|_{L^\infty}\int \psi(x)^2~d\muv(x)\right).
\end{align*}
Making use of \cite[Lemma 4.2]{BLS18}, we can rewrite the integrals against $\muv$ as 
\begin{equation*}
-t\int \int \frac{\psi(x)-\psi(y)}{x-y}~d\muv(y)d\muv(x)=-t\int V'(x)\psi(x)~d\muv(x)
\end{equation*}
and 
\begin{equation*}
\frac{t^2}{2}\int \int \left(\frac{\psi(x)-\psi(y)}{x-y}\right)^2~d\muv(y)d\muv(x)+\frac{t^2}{2}\int V''(x)\psi(x)^2~d\muv(x)=-\frac{t^2}{2}\int \xi'(x)\psi(x)~d\muv(x).
\end{equation*}
We conclude that 
\begin{align*}
-\beta N^2 \textsf{Main}_1&=-\frac{\beta}{2}N^2t^2 \int \xi'(x)\psi(x)~d\muv(x)+O\left(N^2|t|^3 \int \int \left|\frac{\psi(x)-\psi(y)}{x-y}\right|^3~d\muv(y)d\muv(x)\right) \\
&+\beta O\left(N^2|t|^3\|V'''\|_{L^\infty}\int |\psi(x)|^3~d\muv(x)+N^2|t|^3\|\xi''\|_{L^\infty}\int \psi(x)^2~d\muv(x)\right).
\end{align*} 
Using the estimates on the transport map (Lemma \ref{Rescaled Transport Estimates}), we have immediately that $\int |\psi^3(x)|~d\muv(x) \lesssim L\|\theta\|_{C^2}^3$ and $\int \psi^2(x)~d\muv(x) \lesssim L\|\theta\|_{C^2}^2$ in the rescaled test function case. For the local laws, we similarly obtain $\int |\psi^3(x)|~d\muv(x) \lesssim \frac{1}{Nh^2}$ and $\int \psi^2(x)~d\muv(x) \lesssim \frac{1}{Nh} $. In particular, for a rescaled test function 
\begin{align*}
&O\left(N^2|t|^3\|V'''\|_{L^\infty}\int |\psi(x)|^3~d\muv(x)+N^2|t|^3\|\xi''\|_{L^\infty}\int \psi(x)^2~d\muv(x)\right) \\
&=O\left( \frac{|s|^3}{\beta ^3 LN}\max\left(\|\theta\|_{C^2}^2,\|\theta\|_{C^2}^3\right)\right).
\end{align*}
For $\xi=\kappa_{a,h}(Nx)$ or $\xi=\zeta_{a,h}(Nx)$, we obtain 
\begin{align*}
&O\left(N^2|t|^3\|V'''\|_{L^\infty}\int |\psi(x)|^3~d\muv(x)+N^2|t|^3\|\xi''\|_{L^\infty}\int \psi(x)^2~d\muv(x)\right)=O\left(\frac{|s|^3}{\beta ^3 h^4}\right)
\end{align*}
as $h\lesssim N$. Next, we claim that
\begin{equation}\label{uniform square integral bound}
\int \int \left(\frac{\psi(x)-\psi(y)}{x-y}\right)^2~dxdy \lesssim \|\theta\|_{C^3}^2
\end{equation}
for the rescaled test function, and 
\begin{equation}\label{uniform square integral bound - local laws}
\int \int \left(\frac{\psi(x)-\psi(y)}{x-y}\right)^2~dxdy \lesssim \frac{1}{h^2}
\end{equation}
for $\xi=\kappa_{a,h}(Nx)$ or $\xi=\zeta_{a,h}(Nx)$. Let's start with the rescaled case. We split up the integral into two pieces: one on which $|x-y|<L$, and one on which $|x-y|\geq L$. In the first case, we can use symmetry, a mean value argument and Lemma \ref{Rescaled Transport Estimates} to write 
\begin{align*}
\int_{|x-y|<L}\left(\frac{\psi(x)-\psi(y)}{x-y}\right)^2~dxdy&=2\int_{\substack{|x-y|<L \\ x \leq y}} \left(\frac{\psi(x)-\psi(y)}{x-y}\right)^2~dxdy  \\
&\lesssim \int_{x\leq y <x+L} |\psi'(x)|^2~dxdy 
\lesssim \|\theta\|_{C^3}^2.
\end{align*}
In the second case, we use the decay of $|\psi(x)|$ outside of $\supp(\xi)$ to write 
\begin{align*}
\int_{|x-y|\geq L}\left(\frac{\psi(x)-\psi(y)}{x-y}\right)^2~dxdy&=2\int_{\substack{|x-y|\geq L \\ x \leq y}} \left(\frac{\psi(x)-\psi(y)}{x-y}\right)^2~dxdy \\
&\lesssim \int_{|x-y|\geq L} \frac{|\psi(x)|^2}{|x-y|^2} 
\lesssim \|\theta\|_{C^2}^2.
\end{align*}
This yields (\ref{uniform square integral bound}). Hence, for the case of a rescaled test function we have
\begin{equation*}
O\left(N^2|t|^3 \int \int \left|\frac{\psi(x)-\psi(y)}{x-y}\right|^3~d\muv(y)d\muv(x)\right)=O\left(\frac{|s|^3}{\beta ^3 LN}\|\theta\|_{C^3}^3\right).
\end{equation*}
A similar argument for $\xi=\kappa_{a,h}(Nx)$ or $\xi=\zeta_{a,h}(Nx)$ with Lemma \ref{Local Laws Transport Estimates} yields (\ref{uniform square integral bound - local laws}), and so for $\xi=\kappa_{a,h}(Nx)$ or $\xi=\zeta_{a,h}(Nx)$ we obtain 
\begin{align*}
O\left(N^2|t|^3 \int \int \left|\frac{\psi(x)-\psi(y)}{x-y}\right|^3~d\muv(y)d\muv(x)\right)=O\left(\frac{|s|^3}{\beta ^3 h^4}\right).
\end{align*}
Finally, using our definition of $H^{1/2}$ by harmonic extension and Lemma \ref{inverting the master operator},
\begin{equation*}
\int_{\R} -\xi'\psi~d\muv=\int_{\R}\xi (\psi \muv)'=\int_{\R^2}\tilde{\xi}(\psi \muv)'\delta_{\R}=\frac{1}{2\pi}\int_{\R^2}\tilde{\xi}(-\Delta \tilde{\xi})=\frac{1}{2\pi}\int_{\R^2}|\nabla \tilde{\xi}|^2=2\|\xi\|_{H^{1/2}}^2
\end{equation*}
where $\tilde{\xi}$ denotes the harmonic extension of $\xi$ to the upper half plane in $\R^2$. Thus,
\begin{equation*}
-\frac{\beta}{2}N^2t^2 \int \xi'(x)\psi(x)~d\muv(x)=\frac{s^2}{\beta}\|\xi\|_{H^{1/2}}^2,
\end{equation*}
as desired. Due to translation and scale invariance, we can replace $\xi$ with $\theta$ here for the rescaled test function.
\end{proof}
In the following section, we control the Error terms directly.

\subsection{Error Bounds}
The goal of this section is to show that the remaining error terms are at worst $O(1)$. These error estimates will be improved in the following section. We first consider $\textsf{Error}_1$. 
\begin{prop}\label{easy Error1 bound}
Suppose that $\theta\in C^3$ is a compactly supported test function, and let $\xi_{z,L}$ denote the associated rescaled test function at scale $L$ at least microscopic with $z \in \B$. Then, for every $N\geq N_c$,
\begin{equation*}
\left|\textsf{Error}_1\right|\lesssim |1-\beta||t|\|\theta\|_{C^3}.
\end{equation*}
Suppose now that $\xi =\kappa_{a,h}(Nx)$ or $\xi=\zeta_{a,h}(Nx)$ for $a \in 2\Omega$. Then, for all $N \geq N_c$,
\begin{equation*}
\left|\textsf{Error}_1\right|\lesssim |1-\beta|\frac{|t|}{h}.
\end{equation*}
\end{prop}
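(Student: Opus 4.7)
The plan is to expand $\log \phi_t'$ around the identity and then reduce the bound on $\textsf{Error}_1$ to a weighted $L^1$ estimate on $\psi'$ against $\muv$, which I can then control directly from the transport estimates (Lemma \ref{Rescaled Transport Estimates} and Lemma \ref{Local Laws Transport Estimates}). Recall that $\phi_t'(x)=1+t\psi'(x)$, and by the hypothesis $N\geq N_c$ we have $\|t\psi'\|_{L^\infty}<\tfrac{1}{2}$, so the elementary inequality $|\log(1+u)|\leq 2|u|$ for $|u|\leq \tfrac12$ yields the pointwise bound $|\log \phi_t'(x)|\leq 2|t||\psi'(x)|$. Therefore
\begin{equation*}
|\textsf{Error}_1|\;\leq\; 2|1-\beta||t|\int_{\R}|\psi'(x)|\,d\muv(x),
\end{equation*}
and it suffices to show that $\int |\psi'|\,d\muv \lesssim \|\theta\|_{C^3}$ in the rescaled case and $\int|\psi'|\,d\muv\lesssim 1/h$ in the local-laws case.

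For the rescaled test function $\xi_{z,L}$, I split the integral according to whether $x$ lies in $\supp(\xi_{z,L})$ or not, using Lemma \ref{Rescaled Transport Estimates} with $k=1$ (which requires $\theta\in C^3$). On $\supp(\xi_{z,L})$, which has Lebesgue measure $\sim L$ and on which $\muv$ is bounded, the pointwise bound $|\psi'(x)|\lesssim L^{-1}\|\theta\|_{C^{3}}$ gives a contribution of order $\|\theta\|_{C^3}$. Off the support but still inside $\Sigma_V$ (which has $O(1)$ diameter), the estimate $|\psi'(x)|\lesssim L(|x-z|^{-1}+|x-z|^{-2})\|\theta\|_{C^3}$ together with boundedness of $\muv$ produces the contributions $L\int_{L}^{O(1)} r^{-1}\,dr \lesssim L\log(1/L)$ and $L\int_L^{O(1)} r^{-2}\,dr \lesssim 1$, both of $O(\|\theta\|_{C^3})$. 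Summing these yields the desired $\int|\psi'|\,d\muv\lesssim \|\theta\|_{C^3}$.

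The argument in the local-laws case for $\xi=\kappa_{a,h}(Nx)$ or $\xi=\zeta_{a,h}(Nx)$ follows the same template with $L=1/N$, splitting the integral at $|x-a|=Lh=h/N$ and invoking Lemma \ref{Local Laws Transport Estimates}. The inner region has Lebesgue measure $\sim h/N$ and pointwise bound $|\psi'|\lesssim N/h^2$, contributing $\sim 1/h$. For the outer region, the $|x-a|^{-2}$ term dominates: $(1/N)\int_{h/N}^{O(1)} r^{-2}\,dr \sim 1/h$, while the $|x-a|^{-1}$ term contributes only $(1/N)\log(N/h) = O(1/h)$ as well. This yields $\int|\psi'|\,d\muv\lesssim 1/h$, completing the bound.

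I do not expect any real obstacle here: the only mildly delicate point is that the transport $\psi$ is not localized to $\supp(\xi)$ but has polynomial tails, and one must check that these tails remain integrable against the (compactly supported, bounded) equilibrium measure. The quadratic tail $|x-z|^{-2}$ or $|x-a|^{-2}$ is sharply integrable from the dyadic scale of the support to $O(1)$ and produces a contribution of the same order as the on-support piece, so no logarithmic loss occurs at leading order.
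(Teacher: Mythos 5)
Your argument is correct and follows essentially the same route as the paper: you linearize $\log\phi_t'$ using $\|t\psi'\|_{L^\infty}<\tfrac12$, split the integral of $|\psi'|$ at the scale of $\supp(\xi)$, and bound the on-support and tail contributions by $\|\theta\|_{C^3}$ (resp. $1/h$) via Lemmas \ref{Rescaled Transport Estimates} and \ref{Local Laws Transport Estimates}. The only cosmetic difference is that you integrate directly against $d\muv$, while the paper first passes to Lebesgue measure using the $L^\infty$ bound on $\muv$ — the power counting is otherwise identical.
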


\begin{proof}
We again let $\xi$ denote a generic test function, and specialize to the appropriate cases. Since $\|t\psi'\|_{L^\infty}<\frac{1}{2}$, we can actually write
\begin{equation*}
|\log \phi_t'|=|\log(1+t\psi'(x))|\lesssim t|\psi'(x)|.
\end{equation*}
Using this inequality and an $L^\infty$ bound for $\muv$, we find 
\begin{align}\label{L1 bound}
\nonumber \left|\int \log (1+t\psi'(x))~d\muv(x)\right| &\lesssim   \int_J |\log(1+t\psi'(x))|~dx+\int_{J^c}|\log(1+t\psi'(x))|~dx \\
&\lesssim  |t| \int_J |\psi'(x)|~dx+|t|\int_{J^c}|\psi'(x)|~dx.
\end{align}
where $J$ denotes the bulk of the support (either $\supp(\xi)$ in the rescaled case, or an interval of size $Lh$ if $\xi =\kappa_{a,h}(Nx)$ or $\xi=\zeta_{a,h}(Nx)$). In the case where $\xi =\kappa_{a,h}(Nx)$ or $\xi=\zeta_{a,h}(Nx)$, we can estimate using Lemma \ref{Local Laws Transport Estimates}
\begin{equation*}
\int_J |\psi'(x)|~dx \lesssim  \frac{1}{h}
\end{equation*}
and 
\begin{equation*}
\int_{J^c}|\psi'(x)|~dx \lesssim L\int_{J^c} \left(\frac{1}{|x-a|}+\frac{1}{|x-a|^{2}}\right)~dx \lesssim \frac{1}{h}.
\end{equation*}
Substituting this into (\ref{L1 bound}) yields the result. When $\xi=\xi_{z,L}$ is a rescaled test function, we can similarly estimate using Lemma \ref{Rescaled Transport Estimates}
\begin{equation*}
\int_J |\psi'(x)|~dx  \lesssim \|\theta\|_{C^3}
\end{equation*}
and 
\begin{equation*}
\int_{J^c}|\psi'(x)|~dx \lesssim L\int_{J^c} \left(\frac{1}{|x-z|}+\frac{1}{|x-z|^{2}}\right)\|\theta\|_{C^3}~dx \lesssim \|\theta\|_{C^3}.
\end{equation*}
Again, substituting this into (\ref{L1 bound}) bound yields the result.
\end{proof}

We next analyze $\textsf{Error}_3$, controlling the fluctuations of $\tau_t$ and $\log \phi_t'$. To control the fluctuations of $\tau_t$, we will make use of the following lemma. We postpone its proof to Appendix C. 
\begin{lem}[Energy Difference Estimate]\label{Energy Difference Estimate}
Suppose that $\theta\in C^2$ is a compactly supported test function, and $\xi_{z,L}$ is the associated rescaled test function at scale $L$ at least microscopic with $z \in \B$. Then, 
\begin{equation}\label{general tau bound}
|\tau_t(x)|\lesssim \int t^2\left(\frac{\psi(x)-\psi(y)}{x-y}\right)^2~d\muv(y)+t^2\psi(x)^2+|\xi'(\gamma(x))|t^2|\psi(x)|,
\end{equation}
where $\gamma$ is a small perturbation of $x$ defined in the proof. The same result holds for $\xi =\kappa_{a,h}(Nx)$ or $\xi=\zeta_{a,h}(Nx)$ for $a \in 2\Omega$. 
Furthermore, for the rescaled rest function, if $N \geq N_c$
\begin{equation}\label{rescaled tau bound}
|\tau_t(x)| \lesssim \begin{cases}
 \frac{t^2}{L}\max\left(\|\theta\|_{C^1}, \|\theta\|_{C^2}^2\right)
 & \text{if }x \in 2\supp(\xi); \\
 \frac{t^2L\|\theta\|_{C^2}^2}{|x-z|^2} & \text{otherwise}.
 \end{cases}
\end{equation}
If $\xi =\kappa_{a,h}(Nx)$ or $\xi=\zeta_{a,h}(Nx)$ with $N \geq N_c$ we have
\begin{equation}\label{local laws tau bound}
|\tau_t(x)| \lesssim \begin{cases}
 \frac{t^2}{h^3L}
 & \text{if }x \in [a-2Lh, a+2Lh]; \\
 \frac{t^2L}{h|x-a|^2} & \text{otherwise},
 \end{cases}
\end{equation}
where we have again denoted $L=\frac{1}{N}$.
\end{lem}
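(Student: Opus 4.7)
The plan is to Taylor expand $\tau_t(x)=\mathcal{F}(t,\psi)(x)-\mathcal{F}(0,0)(x)-tc_\xi$ to second order in $t$ and invoke the master operator identity of Lemma~\ref{inverting the master operator} to cancel all linear-in-$t$ contributions, leaving only quadratic remainders. Writing $\phi_t(x)=x+t\psi(x)$ and using $\|t\psi'\|_{L^\infty}<1/2$, the mean value theorem ensures $|t(\psi(x)-\psi(y))/(x-y)|<1/2$, so the standard $\log(1+u)=u+O(u^2)$ expansion gives
\begin{equation*}
-\log|\phi_t(x)-\phi_t(y)|+\log|x-y|=-t\,\frac{\psi(x)-\psi(y)}{x-y}+O\!\left(t^2\!\left(\frac{\psi(x)-\psi(y)}{x-y}\right)^{\!2}\right),
\end{equation*}
while Taylor expanding $V\circ\phi_t$ and $t\xi\circ\phi_t$ produces linear-in-$t$ terms $tV'(x)\psi(x)$ and $t\xi(x)$ together with quadratic remainders $\tfrac{t^2}{2}V''(\gamma_1)\psi(x)^2$ and $t^2\xi'(\gamma_2)\psi(x)$, where $\gamma_i$ lies between $x$ and $\phi_t(x)$.

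After integrating the log expansion against $d\muv(y)$ and summing, the order-$t$ contributions collapse to $-t\Xi_V[\psi](x)+t\xi(x)=-tc_\xi$ by (\ref{transport}), which precisely cancels the $-tc_\xi$ in the definition of $\tau_t$. Consequently $\tau_t(x)$ is exactly the sum of the three quadratic remainders, and after absorbing $\|V''\|_{L^\infty}$ into an implicit constant and setting $\gamma:=\gamma_2$ this yields (\ref{general tau bound}). The identical argument applies verbatim to $\xi=\kappa_{a,h}(Nx)$ or $\zeta_{a,h}(Nx)$, since Lemma~\ref{inverting the master operator} is insensitive to which smooth test function is chosen.

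To pass from the general bound to (\ref{rescaled tau bound}) and (\ref{local laws tau bound}), I would feed the pointwise transport estimates of Lemmas~\ref{Rescaled Transport Estimates}--\ref{Local Laws Transport Estimates} into each of the three terms. The two pointwise pieces $t^2\psi(x)^2$ and $t^2|\xi'(\gamma)||\psi(x)|$ are handled immediately: inside the (doubled) support one uses $|\psi|\lesssim\|\theta\|_{C^2}$ and $|\xi'|\lesssim\|\theta\|_{C^1}/L$ (yielding the $\max(\|\theta\|_{C^1},\|\theta\|_{C^2}^2)$ form after the trivial inequality $\|\theta\|_{C^1}\|\theta\|_{C^2}\le\max(\|\theta\|_{C^1},\|\theta\|_{C^2}^2)$), while off-support $\xi'(\gamma)$ vanishes (since $N\ge N_c$ forces $|t\psi(x)|\ll L$, keeping $\gamma$ outside $\supp(\xi)$) and $|\psi(x)|$ carries the decay $L\|\theta\|_{C^2}/|x-z|$. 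The main obstacle is the singular integral $\int(\psi(x)-\psi(y))^2/(x-y)^2\,d\muv(y)$, which I would split at the natural transport scale ($L$ for the rescaled test function, $Lh$ for $\kappa_{a,h},\zeta_{a,h}$): the near-diagonal piece is estimated via MVT using the local $L^\infty$ bound on $\psi'$ from the transport lemmas (contributing $\lesssim L\cdot\|\psi'\|_{L^\infty}^2$), while the far piece uses the triangle inequality together with the direct decay of $\psi$ and the integrability of $(x-y)^{-2}$ at scale $L$, yielding the stated $1/L$ in-support and $L/|x-z|^2$ off-support bounds (resp.\ $1/(h^3L)$ and $L/(h|x-a|^2)$ in the local-laws case where $y\in\supp(\xi)$ contributes $|x-y|\sim|x-a|$). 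The delicate bookkeeping is ensuring the near-diagonal contribution lives at scale $1/L$ (rather than degenerating into a logarithm at the transition), and that the near and far regions match cleanly at the splitting scale; all required input beyond the transport lemmas is the observation that $L\le 1$ absorbs pure $t^2\|\theta\|_{C^2}^2$ contributions into the stated $t^2/L$ form.
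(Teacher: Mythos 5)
Your proposal follows the same route as the paper's own proof: Taylor-expand the log, $V$, and $\xi$ pieces to second order in $t$, cancel the linear-in-$t$ contributions against $tc_\xi$ using the master-operator identity of Lemma~\ref{inverting the master operator}, and then feed the pointwise estimates of Lemmas~\ref{Rescaled Transport Estimates}--\ref{Local Laws Transport Estimates} into the three quadratic remainders. The one place where your sketch is slightly imprecise is the treatment of the singular integral when $x$ lies outside $2\supp(\xi)$: the paper splits the near-diagonal piece at scale $|x|/2$ about $x$ (so that the MVT estimate can use the decaying bound $|\psi'(\beta(y))|\lesssim L(|x|^{-1}+|x|^{-2})$), whereas a fixed split at scale $L$ would force you to sum over intervening annuli and check that no logarithm accumulates — this is exactly the bookkeeping you flag as delicate, and the $|x|$-dependent split is the clean way to handle it.
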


We will also need the following easy technical lemma, which we make use of throughout this section and the next.
\begin{lem}[Rough $L^1$ Bound]\label{Rough L^1 Bound}.
Let $\xi \in C^1(U)$ be compactly supported and let $A \subset \B$. Let $\XN$ be a configuration such that the local law (\ref{energy bound - local law}) holds on the blown up set $NA$. Then,
\begin{equation}\label{local rough L^1 bound}
\left|\int_{A}\xi(x)~d\fluct_N(x)\right|\lesssim N|A||\xi|_{L^\infty(A)}.
\end{equation}
Furthermore, if $|A|\geq \mathfrak{c}$ for some fixed constant $\mathfrak{c}$, then 
\begin{equation}\label{global rough L^1 bound}
\left|\int_A \xi(x)~d\fluct_N(x)\right|\lesssim_{\mathfrak{c}} N \int_A|\xi(x)|~dx.
\end{equation}
for all $\XN \in \mathcal{G}_\mathsf{M}$, where $\mathcal{G}_{\mathsf{M}}$ is the event on which the macroscopic local law (\ref{energy bound - local law}) holds, and
\begin{equation}
\PNbeta(\mathcal{G}_{\mathsf{M}}^c)\leq C_1e^{-C_2\beta N}  
\end{equation}
for some constants $C_1$, $C_2$ dependent only on $\muv$.
\end{lem}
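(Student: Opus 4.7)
The plan is to decompose via the trivial triangle-inequality bound
\begin{equation*}
\left|\int_A \xi\,d\fluct_N\right| \leq \|\xi\|_{L^\infty(A)}\bigl(\#(\XN \cap A) + N\muv(A)\bigr),
\end{equation*}
and then use the local law to control $\#(\XN \cap A)$. Both parts of the lemma follow from this decomposition, with the difference lying in which particle-count estimate one invokes and at what scale.

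For (\ref{local rough L^1 bound}), the second term is bounded by $N\|\muv\|_{L^\infty}|A| \lesssim N|A|$. For the first, pass to blown-up coordinates $\XN' = N\XN$, in which $A$ becomes $NA$ with $|NA| = N|A|$. The hypothesis that Theorem \ref{Local Law} holds on $NA$, combined with the positivity $\G^\Omega + C_0\#(\XN \cap \Omega) \geq 0$ from Lemma \ref{local energy control}, yields the count bound $\#(\XN \cap A) = \#(\XN' \cap NA) \leq \mathcal{C}|NA|/C_0 \lesssim N|A|$. Substituting both estimates into the displayed inequality gives (\ref{local rough L^1 bound}) directly.

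For (\ref{global rough L^1 bound}), cover $A$ by finitely many macroscopic sub-intervals $\{I_j\}$ of length $\asymp \mathfrak{c}$. The number of $I_j$ is bounded by a constant depending only on $\mathfrak{c}$ and the diameter of $A$. On each $NI_j$, the macroscopic local law (\cite[Lemma 2.3]{BLS18}) holds off an event of probability $\leq C_1 e^{-C_2\beta N}$; the event $\mathcal{G}_\mathsf{M}$ is defined as the intersection, so a union bound yields the stated probability estimate. On $\mathcal{G}_\mathsf{M}$, applying (\ref{local rough L^1 bound}) to each $I_j$ and summing via the triangle inequality gives
\begin{equation*}
\left|\int_A \xi\,d\fluct_N\right| \leq \sum_j \left|\int_{I_j}\xi\,d\fluct_N\right| \lesssim N\sum_j |I_j|\|\xi\|_{L^\infty(I_j)}.
\end{equation*}

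The main (and only) remaining obstacle is the comparison of the Riemann-type sum $\sum_j |I_j|\|\xi\|_{L^\infty(I_j)}$ to $\int_A|\xi|$. Using the $C^1$ regularity of $\xi$, on each $I_j$ one has $\|\xi\|_{L^\infty(I_j)} \leq |I_j|^{-1}\int_{I_j}|\xi| + |I_j|\|\xi'\|_{L^\infty(I_j)}$; refining the partition to sub-intervals of length $c(\mathfrak{c})\leq |I_j| \leq \mathfrak{c}$ (still macroscopic, so the local law applies with a $\mathfrak{c}$-dependent constant), the cumulative gradient correction is absorbed into the $\mathfrak{c}$-dependent implicit constant $\lesssim_\mathfrak{c}$, completing the proof of (\ref{global rough L^1 bound}).
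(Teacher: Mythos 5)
The overall decomposition — bound $\left|\int_A \xi\, d\fluct_N\right|$ by $\|\xi\|_{L^\infty(A)}\bigl(\#(\XN\cap A) + N\muv(A)\bigr)$, control the second term trivially, and control $\#(\XN\cap A)$ via the local law — is the same structure the paper uses. The gap is in how you justify the particle-count bound.

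You claim that Theorem \ref{Local Law} together with the positivity $\G^\Omega + C_0\#(\XN\cap\Omega)\geq 0$ from Lemma \ref{local energy control} gives $\#(\XN\cap NA)\leq\mathcal{C}|NA|/C_0$. This does not follow: the positivity statement is that the \emph{combination} $\G^\Omega + C_0\#$ is nonnegative (being bounded below by $\frac{1}{8\pi}\int_{\tilde\Omega}|\nabla u_{\rrc}|^2$), not that $\G^\Omega$ itself is nonnegative. In fact $\G^\Omega$ is generically negative: the self-energy renormalization in its definition (\ref{true local energy}) contains $-\frac{1}{2}\sum_{x_i\in\Omega}\g(\rrc_i)$, and since $\rrc_i\leq\frac14$ each term is $\leq -\frac{\log 4}{2}$, so this contribution alone is $\leq -\frac{\log 4}{2}\#$. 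A configuration with many clustered points can therefore make $\G^\Omega$ very negative while $\G^\Omega + C_0\#$ stays in $[0,\mathcal{C}L]$, so the two-sided bound on the combination gives you no upper bound on $\#$ by itself.

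The missing step is the discrepancy estimate, Lemma \ref{discrepancy estimate}. The paper first uses Lemma \ref{local energy control} plus the local law to bound the local electric energy $\int_{\tilde{B}_{2L}}|\nabla u_{\rrc}|^2 \lesssim L$, and then invokes Lemma \ref{discrepancy estimate} to convert this energy bound into a bound on the discrepancy $D_A = \#(\XN'\cap NA) - \mu(NA)$, namely $D_A^2\min\bigl(1,\sqrt{D_A/(N|A|)}\bigr)\lesssim N|A|$, yielding $|D_A|\lesssim N|A|$ and hence $\#(\XN\cap A)\lesssim N|A|$. This is the step your argument omits, and it cannot be bypassed by appealing only to positivity and the local law.

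For the second item, your partition-into-macroscopic-pieces strategy and the Riemann-sum comparison using $C^1$ regularity matches the paper's (terse) description in spirit, once the count bound on each piece is justified by the discrepancy estimate rather than by the incorrect positivity argument.
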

\begin{proof}
The first item is a direct result of the discrepancy control Lemma \ref{discrepancy estimate}; letting $D_A$ denote $\int_{NA} \fluct_N(x)$, Lemma \ref{discrepancy estimate} yields
\begin{equation*}
D_A^2 \min \left(1, \sqrt{\frac{D_A}{N|A|}}\right)\lesssim N|A|.
\end{equation*}
Rearranging yields $|D_A|\lesssim \max (\sqrt{N|A|}, (N|A|)^{3/5})$, which implies (\ref{local rough L^1 bound}).

For the second item, we split $\B$ into roughly $\frac{1}{\mathfrak{c}}$ pieces of size $\mathfrak{c}$, and use the local law control (\ref{energy bound - local law}) at macroscales on $\mathcal{G}_{\mathsf{M}}$ coupled with Lemma \ref{discrepancy estimate} as previously to show that the number of points in each interval of size $\mathfrak{c}$ is $\lesssim N\mathfrak{c}$. We note that $\PNbeta(\mathcal{G}_{\mathsf{M}}^c)\leq C_1e^{-C_2\beta N}$ from the expansion of partition functions prove in \cite[Theorem 6]{SS15-1} (cf \cite[Lemma 2.3]{BLS18}). This allows us to obtain the control (\ref{global rough L^1 bound}) on $ \mathcal{G}_{\mathsf{M}}$ with constant dependent only on $\mathfrak{c}$, as desired.
\end{proof}
This allows us to obtain the following error estimate.

\begin{prop}\label{easy Error3 bound} 
Suppose that $\theta\in C^3$ is a compactly supported test function, and let $\xi_{z,L}$ denote the associated rescaled test function at scale $L$ at least microscopic with $z \in \B$. Then, for every $N\geq N_c$ and $\XN \in \mathcal{G}_{\mathsf{M}}$,
\begin{equation*}
|\textsf{Error}_3|\lesssim  \left|1-\frac{1}{\beta}\right| |t|N\|\theta\|_{C^3}+N^2t^2 \max\left(\|\theta\|_{C^1}, \|\theta\|_{C^2}^2\right).
\end{equation*}
Suppose that $\xi =\kappa_{a,h}(Nx)$ or $\xi=\zeta_{a,h}(Nx)$ for $a \in 2\Omega$. Then, for all $N \geq N_c$,
\begin{equation*}
|\textsf{Error}_3|\lesssim \left|1-\frac{1}{\beta}\right|\frac{N|t|}{h}+\frac{N^2t^2}{h^2}.
\end{equation*}
\end{prop}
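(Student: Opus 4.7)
The plan is to split
\[
\textsf{Error}_3 = \left(1-\frac{1}{\beta}\right)\Fluct_N(\log \phi_t') + \Fluct_N(N\tau_t)
\]
and control each fluctuation using the rough $L^1$ bound of Lemma \ref{Rough L^1 Bound}, which on the macroscopic good event $\mathcal{G}_\mathsf{M}$ trades a fluctuation against a test function for $N$ times its $L^1$ norm on a bounded set. Throughout, we work on $\mathcal{G}_\mathsf{M}$ and use Lemma \ref{compact point configurations} to confine $\XN$ to a fixed bounded neighborhood $U$ of $\Sigma_V$ (this event is exponentially likely and can be absorbed into $\mathcal{G}_\mathsf{M}$), so that all fluctuations only test the restriction to $U$ even though $\log\phi_t'$ and $\tau_t$ are not compactly supported.

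For the first piece, the constraint $\|t\psi'\|_{L^\infty}<1/2$ built into the definition of $N_c$ gives the pointwise bound $|\log \phi_t'|\leq 2|t||\psi'|$. Lemma \ref{Rough L^1 Bound} then yields $|\Fluct_N(\log \phi_t')|\lesssim N|t|\int_U|\psi'|$, and the integral $\int|\psi'|$ has already been computed in the proof of Proposition \ref{easy Error1 bound}: it is bounded by $\|\theta\|_{C^3}$ in the rescaled case and by $1/h$ in the local laws case. Multiplying by $|1-1/\beta|$ produces the first term of each displayed bound.

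For the second piece, the same lemma gives $|\Fluct_N(N\tau_t)|\lesssim N^2\int_U|\tau_t|$, so it suffices to estimate $\int|\tau_t|$ using the pointwise bounds \eqref{rescaled tau bound} and \eqref{local laws tau bound} of Lemma \ref{Energy Difference Estimate}. In each case the estimate splits naturally into a bulk region (of size $\sim L$, resp.~$\sim Lh$) on which $\tau_t$ obeys the sharp $L^\infty$ bound, and a tail region with $|x-z|^{-2}$ (resp.~$|x-a|^{-2}$) decay that is integrable away from the origin. A direct computation of both regions gives
\[
\int|\tau_t|\lesssim t^2\max(\|\theta\|_{C^1},\|\theta\|_{C^2}^2) \qquad\text{and}\qquad \int|\tau_t|\lesssim \frac{t^2}{h^2}
\]
in the two cases, respectively. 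Multiplying by $N^2$ yields the second term of each bound.

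No real obstacle arises; the only mild subtlety is the noncompact support of $\log\phi_t'$ and $\tau_t$, handled by the confinement step above, and the need to check that the tail integrals converge, which they do thanks to the $|x-z|^{-1}$ and $|x-z|^{-2}$ (resp.~$|x-a|^{-1}$ and $|x-a|^{-2}$) decay in Lemmas \ref{Rescaled Transport Estimates} and \ref{Local Laws Transport Estimates} being fed through the quadratic structure of $\tau_t$.
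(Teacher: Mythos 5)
Your proof is correct and takes essentially the same approach as the paper: both split $\textsf{Error}_3$ into the $\log\phi_t'$ and $N\tau_t$ fluctuations, appeal to Lemma \ref{Rough L^1 Bound} on $\mathcal{G}_{\mathsf{M}}$, and feed in the pointwise transport/energy-difference estimates via the same bulk-plus-tail decomposition. The only cosmetic difference is organizational: you apply the global $L^1$ bound on all of $U$ and then compute $\int|\tau_t|$, while the paper applies the local $L^\infty$ bound of Lemma \ref{Rough L^1 Bound} directly on the bulk interval $J$ and the global $L^1$ bound only on $J^c$, which yields the same estimate.
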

\begin{proof}
We control the fluctuations of $\tau_t$ and $\log \phi_t'$ separately. Let's start with $\tau_t$ in the case where $\xi=\kappa_{a,h}(Nx)$ or $\xi=\zeta_{a,h}(Nx)$. Then, again letting $J$ denote the bulk of the support (an interval of size $2Lh$), we have using Lemmas \ref{Energy Difference Estimate} and \ref{Rough L^1 Bound}
\begin{align*}
\left|N\int_{J} \tau_t(x) \left(\sum_{i=1}^N \delta_{x_i}-N\muv\right)(x)\right|&\lesssim N^2|J|\|\tau_t\|_{L^\infty} \lesssim \frac{N^2t^2}{h^2}.
\end{align*}
Next, outside of $J$, we have 
\begin{align*}
\left|N\int_{J^c} \tau_t(x) \left(\sum_{i=1}^N \delta_{x_i}-N\muv\right)(x)\right| &\lesssim N^2 t^2 \int_{J^c}\frac{L}{h|x-a|^2}~dx \lesssim \frac{N^2t^2}{h^2}
\end{align*}
too. For $\xi=\xi_{z,L}$ a rescaled test function, the same computation with Lemmas \ref{Energy Difference Estimate} and \ref{Rough L^1 Bound} and $J=2\supp(\xi)$ arrives at the bound
\begin{align*}
\left|\Fluct_N(N\tau_t)\right|\lesssim N^2t^2 \max\left(\|\theta\|_{C^1}, \|\theta\|_{C^2}^2\right).
\end{align*}
To control the fluctuations of $\log \phi_t'$, recall from the proof of Proposition \ref{easy Error1 bound} that $\|t\psi'\|_{L^\infty}<\frac{1}{2}$ tells us that $|\log \phi_t'|\lesssim t|\psi'(x)|$. Thus, using Lemma \ref{Rough L^1 Bound} to control the fluctuations we arrive at
\begin{align*}
\left|\int \log (1+t\psi'(x))\left(\sum_{i}\delta_{x_i}-N\muv\right)(x)\right|\lesssim |t|N\|\psi'\|_{L^1}.
\end{align*}
Using Lemmas \ref{Local Laws Transport Estimates} and \ref{Rescaled Transport Estimates}, we immediately conclude that 
\begin{equation*}
\left|\Fluct_N(\log \phi_t')\right|\lesssim \frac{|t|N}{h}
\end{equation*}
for $\xi=\kappa_{a,h}(Nx)$ or $\xi=\zeta_{a,h}(Nx)$, and 
\begin{equation*}
\left|\Fluct_N(\log \phi_t')\right|\lesssim |t|N\|\theta\|_{C^3}
\end{equation*}
for $\xi=\xi_{z,L}$ a rescaled test function. Coupling the above estimates with the definition of $\textsf{Error}_3$ (\ref{Error3}) yields the proposition.
\end{proof}
The estimate for $\textsf{Error}_2$ is more delicate, and requires us to make use of the additional commutator structure observed in \cite{NRS21}. We will appeal to the following local commutator estimate from \cite{RS22}, which we've modified and restated here to fit our purposes. It is a blown-up version of \cite[Theorem 1.1]{RS22}, and provides a localized version of the commutator estimate established in \cite{S20} and \cite{NRS21}.
\begin{lem}[Rosenzweig, Serfaty '22]\label{commutator estimate}
Let $v \in C^1(\Omega)$, with $\Omega \subset \R$ containing a $\frac{1}{N}$ neighborhood of $\supp(v)$. Let $\mu$ be the blown up equilibrium measure of Section \ref{Main Bootstrap}. Then,
\begin{equation*}
\left|\int_{\R^2}\frac{v(x)-v(y)}{x-y}~d\left(\sum_{i=1}^N \delta_{x_i}-N\muv \right)^{\otimes 2}(x,y)\right| \lesssim \|v'\|_{L^\infty(\Omega)}\left(\G^{N\Omega}(\XN,\mu)+C_0 \#(\{X_N\} \cap N \Omega )\right).
\end{equation*}
\end{lem}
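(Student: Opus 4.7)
The plan is to adapt the stress-tensor/commutator approach of \cite{S20}, \cite{NRS21} and \cite{RS22} to the one-dimensional log setting via the Caffarelli-Silvestre extension, obtaining a genuinely local version of the classical commutator estimate. The starting point is the symmetrization
\[
\int\!\!\int \frac{v(x)-v(y)}{x-y}\, d\nu^{\otimes 2}(x,y) \;=\; 2\!\int v(x)\,\mathrm{p.v.}\!\!\int \frac{d\nu(y)}{x-y}\, d\nu(x),
\]
for $\nu := \sum_i \delta_{x_i} - N\mu$, together with the identification of the inner principal value with (a multiple of) the boundary trace $\partial_1 u(x,0)$ of the 2D electrostatic potential associated to $\nu\delta_\R$ via (\ref{electrostatic potential}). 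Replacing $u$ by the truncated potential $u_{\tilde r}$ (cf.\ (\ref{truncated potential})) with $\tilde r_i$ as in (\ref{local minimal distance}) to tame the particle singularities, the anisotropy then reduces, up to explicit self-interaction remainders, to a constant multiple of $\int_\R v\,\partial_1 u_{\tilde r}\,d\nu$.

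The next step is to push this one-dimensional integral into the upper half-plane using $-\Delta u_{\tilde r} = 2\pi(\sum_i \delta_{(x_i,0)}^{(\tilde r_i)} - \mu\delta_\R)$ and an integration by parts. I would extend $v$ to a Lipschitz function $V$ on $\R^2$ with $V(\cdot,0)=v$, supported in $\widetilde{N\Omega} := N\Omega \times [-|N\Omega|,|N\Omega|]$, and with $\|\nabla V\|_{L^\infty} \lesssim \|v'\|_{L^\infty(\Omega)}$. This is possible precisely because $\Omega$ contains a $\tfrac{1}{N}$-neighborhood of $\supp v$, leaving room to taper $V$ smoothly to zero. Using the standard identity $\partial_i T_{ij}(u) = (\Delta u)\,\partial_j u$ for the stress tensor $T_{ij}(u) = \partial_i u\,\partial_j u - \tfrac{1}{2}\delta_{ij}|\nabla u|^2$, one obtains
\begin{align*}
\int\!\!\int \frac{v(x)-v(y)}{x-y}\, d\nu^{\otimes 2}
= \frac{1}{\pi}\!\int_{\widetilde{N\Omega}}\!\!\Bigl[\tfrac{1}{2}\partial_1 V\bigl(|\partial_1 u_{\tilde r}|^2-|\partial_2 u_{\tilde r}|^2\bigr) + \partial_2 V\,\partial_1 u_{\tilde r}\,\partial_2 u_{\tilde r}\Bigr] + \mathcal R(X_N),
\end{align*}
where $\mathcal R(X_N)$ collects boundary and self-interaction remainders arising from the smearing of the Diracs at scale $\tilde r_i$.

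The pointwise bound on the stress-tensor integrand is $\|\nabla V\|_{L^\infty}|\nabla u_{\tilde r}|^2$, and the support property of $V$ localizes the integration to $\widetilde{N\Omega}$. By the definition (\ref{true local energy}), $\tfrac{1}{4\pi}\int_{\widetilde{N\Omega}}|\nabla u_{\tilde r}|^2$ equals $\G^{N\Omega}(X_N,\mu)$ up to nonnegative truncation self-energies of the form $\tfrac{1}{2}\sum_{x_i \in N\Omega}\g(\tilde r_i)$ and background contributions against $\mu$, all of which are majorized by a multiple of $\#(\{X_N\}\cap N\Omega)$ through Lemma \ref{local energy control}. The remainder $\mathcal R(X_N)$ is controlled pointwise by $\|v'\|_{L^\infty}\sum_{x_i\in N\Omega}\g(\tilde r_i)_+$, which is again absorbed into the same counting term. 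Combining these estimates yields the claimed inequality.

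The main technical hurdle is the careful bookkeeping of the remainder $\mathcal R$: because $V$ is merely Lipschitz and not harmonic, the integration by parts produces crossed terms involving $\partial_2 V$ that must be controlled uniformly up to the axis, and each smeared Dirac contributes a boundary integral that has to be compared with the log self-energy $\g(\tilde r_i)$. Ensuring that these corrections remain nonnegative and proportional to $\#(\{X_N\}\cap N\Omega)$ is precisely what the minimal-distance truncation $\tilde r_i$ of (\ref{local minimal distance}) is designed for. Once that bookkeeping is done, the inequality follows from the pointwise stress-tensor bound and the positivity of $\G^{N\Omega}+C_0\#$ supplied by Lemma \ref{local energy control}; this is exactly the content of \cite[Theorem 1.1]{RS22}, imported here after a straightforward rescaling.
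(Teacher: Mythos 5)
The paper does not actually supply a proof of this lemma: it is stated as ``Rosenzweig, Serfaty '22,'' and the surrounding text explicitly says it is a restated, blown-up version of \cite[Theorem 1.1]{RS22}, which in turn localizes the commutator/anisotropy estimate of \cite{S20} and \cite{NRS21}. So there is no internal argument in the paper to compare your sketch against --- the ``proof'' in the paper is a citation plus a rescaling remark. Your proposal correctly reconstructs the mechanism behind the cited result: symmetrize the anisotropy to $2\int v\,\partial_1 u\,d\nu$, lift to the half-plane, integrate by parts via the stress tensor $T_{ij}(u) = \partial_i u\,\partial_j u - \tfrac12\delta_{ij}|\nabla u|^2$, use a compactly supported Lipschitz extension $V$ of $v$ (rather than a harmonic one) to force the localization, truncate at the minimal distances $\tilde{\mathsf r}_i$, and absorb the resulting self-interaction remainders and $\sum_i \g(\tilde{\mathsf r}_i)$ terms into $\G^{N\Omega}+C_0\#$ via Lemma \ref{local energy control}. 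This is the same route as \cite{RS22}, and you explicitly acknowledge at the end that the rigorous version is exactly their Theorem 1.1 after rescaling, which is what the paper does.

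Two points in your sketch would need tightening if you wanted a standalone proof rather than a citation. First, the assertion $\|\nabla V\|_{L^\infty}\lesssim\|v'\|_{L^\infty(\Omega)}$ uses the hidden inequality $\|v\|_{L^\infty}\lesssim |\Omega|\,\|v'\|_{L^\infty}$ (from compact support) together with the fact that the vertical extent of the slab is $\sim |N\Omega|$; that deserves to be stated, since with a shorter slab the $\partial_2 V$ component would be worse than $\|v'\|_{L^\infty}$. Second, the claim that the remainder $\mathcal R(X_N)$ is ``nonnegative'' is neither needed nor obviously true; what you actually need (and what \cite{RS22} shows) is that $|\mathcal R|$ is controlled by $\|v'\|_{L^\infty}\sum_{x_i\in N\Omega}\g(\tilde{\mathsf r}_i)$, which is then absorbed by Lemma \ref{local energy control}. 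Also note the double integral on the left is implicitly off-diagonal (the $i=j$ terms are finite since $\tfrac{v(x)-v(y)}{x-y}\to v'$, but the principal-value rewriting is only valid with the diagonal removed), a point your symmetrization step glosses over.
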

Coupling this with Proposition \ref{Main Bootstrap}, we are able to prove the following boundedness with high probability.
\begin{prop}\label{easy Error2 bound - local laws}
Suppose that $\xi=\kappa_{a,h}(Nx)$ or $\xi=\zeta_{a,h}(Nx)$ for some $a \in 2\Omega$. Then, on the event $\mathcal{G}'_L$ of Proposition \ref{Main Bootstrap} we have for all $N \geq N_c$,
\begin{equation*}
\int-\log \left|\frac{\phi_t(x)-\phi_t(y)}{x-y}\right|~d\fluct_N(x)d\fluct_N(y) \lesssim \frac{|t|N}{h}+\frac{t^2N^2}{h^2}.
\end{equation*}
\end{prop}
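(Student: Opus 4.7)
The plan is to Taylor expand the logarithm and split the integral into a linear-in-$t$ piece, treated via the commutator estimate, and a quadratic remainder, treated by a dyadic decomposition and the local law. Since $N \geq N_c$ guarantees $\|t\psi'\|_{L^\infty} < 1/2$ and $q(x,y) := \frac{\psi(x)-\psi(y)}{x-y}$ is a Newton quotient satisfying $|q|\leq\|\psi'\|_{L^\infty}$, writing $-\log|1+u| = -u + R(u)$ with $|R(u)|\leq Cu^2$ for $|u|\leq 1/2$ gives
\begin{equation*}
\int -\log\Big|\frac{\phi_t(x)-\phi_t(y)}{x-y}\Big|\, d\fluct_N^{\otimes 2} = -t\int q\, d\fluct_N^{\otimes 2} + O\Big(t^2 \int q^2\, d|\fluct_N|^{\otimes 2}\Big).
\end{equation*}

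For the linear-in-$t$ term, Lemma \ref{commutator estimate} would apply immediately if $\psi$ were compactly supported, but Lemma \ref{Local Laws Transport Estimates} only gives polynomial decay $\sim L/|x-a|$ in the tails. I would therefore introduce a smooth dyadic partition of unity $\{\chi_k\}_{k\geq 0}$ subordinate to $\Omega_0 := \{|x-a|\leq 2Lh\}$ and annuli $\Omega_k := \{2^{k-1}Lh \leq |x-a| \leq 2^{k+1}Lh\}$ for $k\geq 1$. A direct calculation using Lemma \ref{Local Laws Transport Estimates} and $|\chi_k'|\lesssim 1/(2^k Lh)$ yields $\|(\chi_k\psi)'\|_{L^\infty} \lesssim N/(4^k h^2)$ for every $k$. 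Applying Lemma \ref{commutator estimate} to each truncated $\chi_k\psi$ (supported within a $1/N$-neighborhood of $\Omega_k$) and invoking the local law at blown-up scale $2^k h$ available by hypothesis of Proposition \ref{Main Bootstrap} gives
\begin{equation*}
\Big|\int \frac{(\chi_k\psi)(x)-(\chi_k\psi)(y)}{x-y}\, d\fluct_N^{\otimes 2}\Big| \lesssim \frac{N}{4^k h^2}\cdot \mathcal{C}\, 2^k h = \frac{CN}{2^k h},
\end{equation*}
and summing the geometric series in $k$ produces the bound $|t|N/h$ after integrating against $t$.

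For the quadratic remainder, I would decompose $\R^2 = \bigsqcup_{j,k\geq 0} \Omega_j \times \Omega_k$ based on dyadic distance from $a$. On $\Omega_j \times \Omega_k$ I bound $|q|$ either by $\|\psi'\|_{L^\infty(\Omega_j \cup \Omega_k)}$ when $|j-k|\leq 1$, or by $(|\psi(x)|+|\psi(y)|)/|x-y|$ using the tail decay when $j$ and $k$ are far apart, while simultaneously bounding $|\fluct_N|(\Omega_j) \lesssim 2^j h$ on $\mathcal{G}_L'$ using the local law at blown-up scale $2^j h$ combined with the discrepancy estimate of Lemma \ref{discrepancy estimate}. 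A direct case analysis, symmetric in $j \leftrightarrow k$, produces a contribution $\lesssim N^2/(2^{\max(j,k)}h^2)$ from each rectangle, which sums to the claimed $N^2/h^2$.

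The main obstacle is the quadratic remainder, which lacks the commutator structure that allowed the linear term to be absorbed into local energies. I expect the delicate work to lie in balancing the crude $L^\infty$ bounds on $q^2$, which are singular near $a$, against sharp total-variation bounds on $|\fluct_N|$ at each dyadic scale, and in arranging for the resulting geometric series to converge to exactly the stated $N^2/h^2$ rate rather than any larger power of $N/h$.
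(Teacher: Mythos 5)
Your proposal takes essentially the same route as the paper for the linear anisotropy term: Taylor expand, dyadically cut off $\psi$ into annuli around $a$, apply Lemma \ref{commutator estimate} on each annulus, and close using the local-law hypothesis of Proposition \ref{Main Bootstrap} at each dyadic scale. The minor gap there is that you do not treat the outermost annuli, which exit the bulk $\B$ where Theorem \ref{Local Law} is available; the paper handles those with the global energy bound (its term $C$), yielding a contribution of order $|t|$, which is dominated by $|t|N/h$ since $h \lesssim N$. Also, your claim $\|(\chi_k\psi)'\|_{L^\infty}\lesssim N/(4^kh^2)$ hides a second piece of order $1/(2^kh)$ coming from the $L/|x-a|$ tail in Lemma \ref{Local Laws Transport Estimates}; it is indeed dominated for $2^kh\lesssim N$ (the bulk range), but this should be said.

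Where you genuinely diverge from the paper is the quadratic remainder. You propose a full tensor-product dyadic decomposition of $\R^2=\bigsqcup_{j,k}\Omega_j\times\Omega_k$, a case analysis on $|j-k|$, and total-variation control $|\fluct_N|(\Omega_j)\lesssim 2^jh$ at each scale via the local law and Lemma \ref{discrepancy estimate}. This does work and sums to $N^2/h^2$, but it is far more machinery than needed, and your expectation that "the main obstacle is the quadratic remainder" is misplaced. The paper dispatches it in one line: the deterministic bound $\int\int\left(\frac{\psi(x)-\psi(y)}{x-y}\right)^2dxdy\lesssim 1/h^2$ (equation (\ref{uniform square integral bound - local laws}), proved purely from Lemma \ref{Local Laws Transport Estimates}), combined with two applications of the rough $L^1$ bound $\left|\int_A\xi\,d\fluct_N\right|\lesssim N\int_A|\xi|$ from Lemma \ref{Rough L^1 Bound} on $\mathcal{G}_{\mathsf M}$, yields $t^2N^2\cdot\frac{1}{h^2}$ directly. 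The point you miss is that the remainder has $L^1_{x,y}$-smallness already baked in, so a crude macroscopic fluctuation bound suffices — no commutator structure or multiscale local laws are required for this piece. Your approach buys sharper scale-by-scale information (local laws at every $2^jh$ instead of just the macroscale), but it is not needed to prove the stated $O(t^2N^2/h^2)$.
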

\begin{proof}
Without loss of generality, let us take $a=0$. We first use a Taylor expansion, writing
\begin{align*}
-\log \left|\frac{\phi_t(x)-\phi_t(y)}{x-y}\right|=-t\frac{\psi(x)-\psi(y)}{x-y}+\frac{t^2}{2(1+\beta(x,y))^2}\left(\frac{\psi(x)-\psi(y)}{x-y}\right)^2,
\end{align*}
with $|\beta(x,y)|< \|t\psi'\|_{L^\infty}<\frac{1}{2}$. Let us first tackle $-t\frac{\psi(x)-\psi(y)}{x-y}$. The idea is to make use of the commutator estimate Lemma \ref{commutator estimate} and the energy controls of Proposition \ref{Main Bootstrap} in such a way that respects the decay of the transport map as was done in the proof of Proposition \ref{tailored screening}. Towards this end, let $\alpha$ be a fixed positive number such that $[-\alpha ,\alpha ]$ is contained in the bulk of $\Sigma_V$, and choose $k_*$ such that $\frac{2^{k_*}L'}{N}\leq \alpha<\frac{2^{k_*+1}L'}{N}$. Notice here that $L'$ is the length scale of $\S 3$. We introduce a partition of unity $\{\zeta_k\}_{k=1}^{k_*+1}$ satisfying 
\begin{align*}
\supp(\zeta_k)& \subset \square_{\frac{2^kL'}{N}}\setminus \square_{\frac{2^{k-1}L'}{N}} \\
|\psi \zeta_k|_{C^m\left( \square_{\frac{2^kL'}{N}}\setminus \square_{\frac{2^{k-1}L'}{N}}\right)}&\lesssim |\psi |_{C^m\left( \square_{\frac{2^kL'}{N}}\setminus \square_{\frac{2^{k-1}L'}{N}}\right)}
\end{align*}
for $2\leq k \leq k_*$, where $\square_T$ denotes the interval $[-T,T]$. We suppose the same holds for $\zeta_1$ with $\square_{\frac{2L'}{N}}\setminus \square_{\frac{L'}{N}}$ replaced by $\square_{\frac{2L'}{N}}$, and for $\zeta_{k_*+1}$ on the rest of $U$. First we observe that
\begin{equation*}
|t|\left|\int_{\R^2}\frac{\psi(x)-\psi(y)}{x-y}~d\left(\sum_{i=1}^N \delta_{x_i}-N\muv \right)^{\otimes 2}(x,y)\right|\leq A+B+C
\end{equation*}
with 
\begin{align*}
A&=|t|\left|\int_{\R^2}\frac{\psi \zeta_1(x)-\psi\zeta_1(y)}{x-y}~d\left(\sum_{i=1}^N \delta_{x_i}-N\muv \right)^{\otimes 2}(x,y)\right| \\
B&=\sum_{k=2}^{k_*}\left|\int_{\R^2}\frac{\psi \zeta_k(x)-\psi\zeta_k(y)}{x-y}~d\left(\sum_{i=1}^N \delta_{x_i}-N\muv \right)^{\otimes 2}(x,y)\right|\\
C&=|t|\left|\int_{\R^2}\frac{\psi \zeta_{k_*+1}(x)-\psi\zeta_{k_*+1}(y)}{x-y}~d\left(\sum_{i=1}^N \delta_{x_i}-N\muv \right)^{\otimes 2}(x,y)\right|.
\end{align*}

\subsection*{Control of A}
Using Lemma \ref{Local Laws Transport Estimates} we have that 
\begin{equation*}
\|(\psi \zeta_1)'\|_{L^\infty\left(\square_{\frac{2L'}{N}}\right)}\leq \|\psi'\|_{L^\infty\left(\square_{\frac{2L'}{N}}\right)} \lesssim \frac{N}{h^2}.
\end{equation*}
Applying Lemma \ref{commutator estimate} and Proposition \ref{Main Bootstrap} then we have on $\mathcal{G}_L'$ that 
\begin{equation}\label{control of A anisotropy}
|A|\lesssim |t| \|\psi'\|_{L^\infty\left(\square_{\frac{2L'}{N}}\right)}\left(\G^{\square_{2L'}}(\XN,\mu)+C_0 \#(\{X_N\} \cap \square_{2L'} )\right)\lesssim \frac{\mathcal{C}L'}{h^2}|t|N.
\end{equation}

\subsection*{Control of B}.
For $2 \leq k \leq k_*$, we have
\begin{equation*}
\|(\psi \zeta_k)'\|_{L^\infty\left( \square_{\frac{2^kL'}{N}}\setminus \square_{\frac{2^{k-1}L'}{N}}\right)}\leq\|\psi'\|_{L^\infty\left( \square_{\frac{2^kL'}{N}}\setminus \square_{\frac{2^{k-1}L'}{N}}\right)} \lesssim \frac{1}{N}\left(\frac{N}{2^kL'}+\frac{N^2}{2^{2k}(L')^2}\right).
\end{equation*}
Then, using Lemma \ref{commutator estimate} and Proposition \ref{Main Bootstrap} we have on $\mathcal{G}_L'$ that 
\begin{align}
\nonumber|B|&\lesssim |t|\sum_{k=2}^{k_*}\|(\psi \zeta_k)'\|_{L^\infty\left( \square_{\frac{2^kL'}{N}}\setminus \square_{\frac{2^{k-1}L'}{N}}\right)}\left(\G^{\square_{2^kL'}}(\XN,\mu)+C_0 \#(\{X_N\} \cap \square_{2^kL'} )\right) \\
\label{control of B anisotropy}&\lesssim |t|\sum_{k=2}^{k_*}\frac{1}{N} \left(\frac{N}{2^kL'}+\frac{N^2}{(L')^22^{2k}}\right)(\mathcal{C}2^{k+1}L') \lesssim \mathcal{C}|t|\log \left(\frac{N}{L'}\right)+\frac{\mathcal{C}|t|N}{L'}
\end{align}
since $k_*\sim \log \left(\frac{N}{L'}\right)$.

\subsection*{Control of C}.
Once more using Lemma \ref{Local Laws Transport Estimates}, we have that 
\begin{equation*}
\|(\psi \zeta_{k_*+1})'\|_{L^\infty\left(\square_{\frac{2^{k_*}L'}{N}}^c\right)}\leq \|\psi'\|_{L^\infty\left(\square_{\frac{2^{k_*}L'}{N}}^c\right)} \lesssim \frac{1}{N}.
\end{equation*}
Thus, using Lemma \ref{commutator estimate} and Lemma \ref{global bound on partition functions} we have
\begin{equation}\label{control of C anisotropy}
|C|\lesssim |t\||(\psi \zeta_{k_*+1})'\|_{L^\infty\left(\square_{\frac{2^{k_*}L'}{N}}^c\right)}\left(\G(\XN,\mu)+C_0 N )\right)\lesssim |t|.
\end{equation}

\subsection*{Conclusion}
Coupling (\ref{control of A anisotropy}), (\ref{control of B anisotropy}), (\ref{control of C anisotropy}), we find 
\begin{equation}\label{control of anisotropy}
|t|\left|\int_{\R^2}\frac{\psi(x)-\psi(y)}{x-y}~d\left(\sum_{i=1}^N \delta_{x_i}-N\muv \right)^{\otimes 2}(x,y)\right|\lesssim |t|\left(\frac{\mathcal{C}L'}{h^2}N+\mathcal{C}\log \left(\frac{N}{L'}\right)+\frac{\mathcal{C}N}{L'}+1\right)
\end{equation}
Recalling that $h=\frac{L'}{K}$, the right hand side above is simply $\frac{|t|N}{h}$. Finally, using (\ref{uniform square integral bound - local laws}) and Lemma \ref{Rough L^1 Bound} on $\mathcal{G}_\mathsf{M}$ we obtain 
\begin{equation}\label{control of anisotropy error}
\left|\int \frac{t^2}{2(1+\beta(x,y))^2}\left(\frac{\psi(x)-\psi(y)}{x-y}\right)^2 ~d\fluct_N(x)d\fluct_N(y)\right|\lesssim \frac{t^2N^2}{h^2}.
\end{equation}
Combining (\ref{control of anisotropy}) and (\ref{control of anisotropy error}) yields the desired estimate.
\end{proof}

\subsection{Proof of Theorem \ref{Uniform Bound on Fluctuations}} 
Coupling Propositions \ref{Main Term Expansion}, \ref{easy Error1 bound}, \ref{easy Error3 bound}, and \ref{easy Error2 bound - local laws} with Proposition \ref{Expansion of Laplace Transform} yields
\begin{equation*}
\Esp_{\PNbeta}\left[\exp \left(s\Fluct_N(\xi)\right)\indic_{\mathcal{G}'_L}\right]= \exp \left(\frac{s^2}{\beta}\|\xi\|_{H^{1/2}}^2+O\left( \frac{|s|}{h}+\left|1-\frac{1}{\beta}\right|\frac{|s|}{ h}+ \frac{s^2}{\beta h^2}+\frac{|s|^3}{\beta^2 h^4}\right)\right)
\end{equation*}
for $\xi=\kappa_{a,h}(Nx)$ or $\xi=\zeta_{a,h}(Nx)$. Computing directly,
\begin{equation*}
\|\xi\|_{H^{1/2}}^2 \lesssim \frac{1}{h^2}.
\end{equation*}
So, 
\begin{equation*}
\left|\log \Esp_{\PNbeta}\left[\exp \left(s\Fluct_N(\xi)\right)\indic_{\mathcal{G}'_L}\right]\right|\lesssim \frac{|s|}{h}+\left|1-\frac{1}{\beta}\right|\frac{|s|}{ h}+ \frac{s^2}{\beta h^2}+\frac{|s|^3}{\beta^2 h^4}
\end{equation*}
which establishes Theorem \ref{Uniform Bound on Fluctuations} for $\xi=\kappa_{a,h}(Nx)$ or $\xi=\zeta_{a,h}(Nx)$.

Having established Theorem \ref{Uniform Bound on Fluctuations} for $\xi=\kappa_{a,h}(Nx)$ or $\xi=\zeta_{a,h}(Nx)$, we are free to run the machinery of Section \ref{Main Bootstrap} to prove the local law of Theorem \ref{Local Law}. With this estimate, we are in a position to finish the proof of Theorem \ref{Uniform Bound on Fluctuations} for any rescaled test function at scale $L>\omega$. The approach is analogous to the proof of Proposition \ref{easy Error2 bound - local laws} and is the content of the following proposition.
 
\begin{prop}\label{easy Error2 bound - rescaled test function}
Suppose that $\theta\in C^3$ is a compactly supported test function, and let $\xi_{z,L}$ denote the associated rescaled test function at scale $L>\frac{\omega}{N}$ with $z \in \B$. Then, for every $N\geq N_c$, there is an event $\mathcal{G}_N$ such that 
\begin{equation*}
\int-\log \left|\frac{\phi_t(x)-\phi_t(y)}{x-y}\right|~d\fluct_N(x)d\fluct_N(y) \lesssim \frac{|s|}{\beta}\|\theta\|_{C^3}+\frac{s^2}{\beta^2}\|\theta\|_{C^3}^2
\end{equation*}
on $\mathcal{G}_N$ and 
\begin{equation*}
\PNbeta(\mathcal{G}_N^c)\leq C_1e^{-C_2 \beta LN}.
\end{equation*}
\end{prop}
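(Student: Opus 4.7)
The plan is to mirror the argument of Proposition \ref{easy Error2 bound - local laws}, replacing the single application of the commutator estimate at scale $L'$ with a dyadic decomposition that respects the decay of $\psi$ given by Lemma \ref{Rescaled Transport Estimates}, and plugging in the local law of Theorem \ref{Local Law} (now established at every scale $\ell > \omega/N$ thanks to the preceding section). As in Proposition \ref{easy Error2 bound - local laws}, we Taylor expand
\[
-\log\left|\frac{\phi_t(x)-\phi_t(y)}{x-y}\right|=-t\,\frac{\psi(x)-\psi(y)}{x-y}+\frac{t^2}{2(1+\beta(x,y))^2}\left(\frac{\psi(x)-\psi(y)}{x-y}\right)^2,
\]
with $|\beta(x,y)|<\|t\psi'\|_{L^\infty}<\tfrac{1}{2}$, and treat the two contributions separately.

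For the linear (anisotropy) term, introduce a dyadic partition of unity $\{\zeta_k\}_{k=1}^{k_*+1}$ centered at $z$, with $\zeta_1$ supported in the window $[z-2L,z+2L]$ of size comparable to $\supp\xi$, $\zeta_k$ supported in the annulus at scale $2^k L$ for $2\leq k\leq k_*$ where $2^{k_*}L\sim \alpha$ for a macroscopic constant $\alpha>0$ with $[z-\alpha,z+\alpha]\subset\B$, and $\zeta_{k_*+1}$ supported on the complement. Apply Lemma \ref{commutator estimate} to each $v_k:=\psi\zeta_k$ on the corresponding blown-up set, and bound $\|v_k'\|_{L^\infty}$ via Lemma \ref{Rescaled Transport Estimates}: inside the support of $\xi$ one has $\|(\psi\zeta_1)'\|_{L^\infty}\lesssim \frac{1}{L}\|\theta\|_{C^3}$, while on the annulus at scale $2^k L$ one has $\|(\psi\zeta_k)'\|_{L^\infty}\lesssim\bigl(\frac{1}{2^k}+\frac{1}{2^{2k}L}\bigr)\|\theta\|_{C^3}$, and globally outside the macroscopic window $\|(\psi\zeta_{k_*+1})'\|_{L^\infty}\lesssim L\|\theta\|_{C^3}$. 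Pairing these with the local law control $\G+C_0\#\lesssim \mathcal{C}\,2^k LN$ on a good event $\mathcal{G}_k$ with $\PNbeta(\mathcal{G}_k^c)\lesssim e^{-C\beta 2^kLN}$ (and the a priori global bound of Lemma \ref{global bound on partition functions} on the far annulus), the annulus contributions sum to
\[
|t|N\|\theta\|_{C^3}\sum_{k=1}^{k_*}\Bigl(L+\tfrac{1}{2^k}\Bigr)+|t|\,LN\|\theta\|_{C^3}\lesssim |t|N\|\theta\|_{C^3}\lesssim \frac{|s|}{\beta}\|\theta\|_{C^3},
\]
using $k_*\sim\log(1/L)$ and $L\log(1/L)\lesssim 1$.

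For the quadratic remainder, use the estimate
\[
\int\int\left(\frac{\psi(x)-\psi(y)}{x-y}\right)^2 dx\,dy\lesssim \|\theta\|_{C^3}^2
\]
established in (\ref{uniform square integral bound}), and apply Lemma \ref{Rough L^1 Bound} in each variable (on the macroscopic event $\mathcal{G}_{\mathsf{M}}$) to upgrade Lebesgue measure into a fluctuation integral at the cost of a factor $N^2$, yielding
\[
t^2\left|\int\int\left(\frac{\psi(x)-\psi(y)}{x-y}\right)^2 d\fluct_N(x)\,d\fluct_N(y)\right|\lesssim t^2 N^2\|\theta\|_{C^3}^2\lesssim \frac{s^2}{\beta^2}\|\theta\|_{C^3}^2.
\]
Setting $\mathcal{G}_N:=\mathcal{G}_{\mathsf{M}}\cap\bigcap_{k=1}^{k_*}\mathcal{G}_k$, a union bound gives $\PNbeta(\mathcal{G}_N^c)\lesssim C_1 e^{-C_2\beta LN}$ (the smallest dyadic scale dominates), which is the announced probability.

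The main obstacle is the dyadic bookkeeping for the linear piece: one has to verify that the geometric series in $k$ telescopes neatly and that the borderline case $2^kL\sim 1$ (where Lemma \ref{Rescaled Transport Estimates} transitions between its two regimes) does not produce an extra $\log(1/L)$ loss. This is resolved by the observation above that $L\log(1/L)=O(1)$; the rest of the estimates are direct. Combining this proposition with Propositions \ref{Main Term Expansion}, \ref{easy Error1 bound}, \ref{easy Error3 bound} and the expansion of Proposition \ref{Expansion of Laplace Transform} then completes the proof of Theorem \ref{Uniform Bound on Fluctuations} in the rescaled test function regime.
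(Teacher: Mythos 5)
Your proposal is correct and mirrors the paper's own proof: the same Taylor expansion into a linear anisotropy piece and a quadratic remainder, the same dyadic partition of unity $\{\zeta_k\}$ centered at $z$, the same application of the commutator estimate (Lemma \ref{commutator estimate}) fed by Lemma \ref{Rescaled Transport Estimates} and the local law of Theorem \ref{Local Law} on each annulus, and the same use of (\ref{uniform square integral bound}) with Lemma \ref{Rough L^1 Bound} to control the quadratic remainder. The union bound over the dyadic good events, dominated by the smallest scale $LN$, also matches the paper's treatment.
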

\begin{proof}
The proof is exactly as in that of Proposition \ref{easy Error2 bound - local laws}. We first use a Taylor expansion again, writing
\begin{align*}
-\log \left|\frac{\phi_t(x)-\phi_t(y)}{x-y}\right|=-t\frac{\psi(x)-\psi(y)}{x-y}+\frac{t^2}{2(1+\beta(x,y))^2}\left(\frac{\psi(x)-\psi(y)}{x-y}\right)^2,
\end{align*}
with $|\beta(x,y)|< \|t\psi'\|_{L^\infty}<\frac{1}{2}$ and approach $-t\frac{\psi(x)-\psi(y)}{x-y}$ using Lemma \ref{commutator estimate} and Proposition \ref{Main Bootstrap}. Construct the same partition of unity as in the proof of Proposition \ref{easy Error2 bound - local laws}, with $L'=2|\supp(\xi)|N\lesssim LN$ (assuming $z=0$; the approach is analogous for other $z \in \B$). We again obtain
\begin{equation*}
|t|\left|\int_{\R^2}\frac{\psi(x)-\psi(y)}{x-y}~d\left(\sum_{i=1}^N \delta_{x_i}-N\muv \right)^{\otimes 2}(x,y)\right|\leq A+B+C
\end{equation*}
with 
\begin{align*}
A&=|t|\left|\int_{\R^2}\frac{\psi \zeta_1(x)-\psi\zeta_1(y)}{x-y}~d\left(\sum_{i=1}^N \delta_{x_i}-N\muv \right)^{\otimes 2}(x,y)\right| \\
B&=\sum_{k=2}^{k_*}\left|\int_{\R^2}\frac{\psi \zeta_k(x)-\psi\zeta_k(y)}{x-y}~d\left(\sum_{i=1}^N \delta_{x_i}-N\muv \right)^{\otimes 2}(x,y)\right|\\
C&=|t|\left|\int_{\R^2}\frac{\psi \zeta_{k_*+1}(x)-\psi\zeta_{k_*+1}(y)}{x-y}~d\left(\sum_{i=1}^N \delta_{x_i}-N\muv \right)^{\otimes 2}(x,y)\right|.
\end{align*}
and find ourselves needing to control $A$, $B$ and $C$. Using Lemma \ref{Rescaled Transport Estimates}, we find with the same approach as in Proposition \ref{easy Error2 bound - local laws} coupled with Theorem \ref{Local Law} that 
\begin{align*}
|A|&\lesssim |t|N \|\theta\|_{C^3} \\
|B|&\lesssim |t|L\log \left(\frac{1}{L}\right)\|\theta\|_{C^3}\\
|C|&\lesssim |t|LN\|\theta\|_{C^3}
\end{align*}
on event $\mathcal{G}_N$ with 
\begin{equation*}
\PNbeta( \mathcal{G}_N^c)\leq C_1\sum_{k=1}^{k_*+1}e^{-C_2\beta 2^kLN}\leq C_1e^{-C_2\beta LN},
\end{equation*}
up to adjusting the definitions of $C_1$ and $C_2$. This immediately tells us that on $\mathcal{G}_N$ we have
\begin{equation*}
|t|\left|\int_{\R^2}\frac{\psi(x)-\psi(y)}{x-y}~d\left(\sum_{i=1}^N \delta_{x_i}-N\muv \right)^{\otimes 2}(x,y)\right|\leq A+B+C \lesssim |t|N\|\theta\|_{C^3}.
\end{equation*}
Finally, using (\ref{uniform square integral bound}) and Lemma \ref{Rough L^1 Bound} we obtain 
\begin{equation}\label{control of anisotropy error}
\left|\int \frac{t^2}{2(1+\beta(x,y))^2}\left(\frac{\psi(x)-\psi(y)}{x-y}\right)^2 ~d\fluct_N(x)d\fluct_N(y)\right|\lesssim t^2N^2 \|\theta\|_{C^3}^2.
\end{equation}
Combining (\ref{control of anisotropy}) and (\ref{control of anisotropy error}) yields the desired estimate.
\end{proof}
Coupling this with Propositions \ref{Main Term Expansion}, \ref{easy Error1 bound}, \ref{easy Error3 bound}, and Proposition \ref{Expansion of Laplace Transform} yields
\begin{align*}
&\left|\log \Esp_{\PNbeta}\left[\exp \left(s\Fluct_N(\xi)\right)\indic_{\mathcal{G}_N}\right] \right|= \frac{s^2}{\beta}\|\xi\|_{H^{1/2}}^2\\
&+O\left(|s|\|\theta\|_{C^3}+|s|\left|1-\frac{1}{\beta}\right|\|\theta\|_{C^3}+\frac{s^2}{\beta}\max \left(\|\theta\|_{C^1},\|\theta\|_{C^3}^2\right)+\frac{|s|^3}{\beta^2 LN}\max\left(\|\theta\|_{C^2}^2,\|\theta\|_{C^3}^3\right)\right)
\end{align*}
which completes the proof of Theorem \ref{Uniform Bound on Fluctuations}.

\section{CLT Estimate}\label{Section CLT Estimate}
In this section, we upgrade the estimates of the previous section to prove the following CLT for fluctuations of mesoscopic linear statistics, using the fact that the local law of Theorem \ref{Local Law} have now been proven down to the minimal scale. 

\begin{theo}\label{Gaussian asymptotics}
Suppose that $\theta\in C^{13}$ is a compactly supported test function, and let $\xi_{z,L}$ denote the associated rescaled test function at scale $L>\frac{\omega}{N}$ with $z \in \B$. Let $\mathcal{G}_N$ be the good event of Theorem \ref{Uniform Bound on Fluctuations}. Then, for every $N\geq N_c$ there is a good event $\mathcal{G}_N'\subset \mathcal{G}_N$ such that
\begin{align*}
\log \mathbb{E}_{\mathbb{P}_{N,\beta}}[\exp(s\Fluct_N(\xi))\indic_{\mathcal{G}_N'}]&=\frac{s^2}{\beta}\|\xi\|_{H^{1/2}}^2 + s\left(1-\frac{1}{\beta}\right)\int\psi'(x)~d\muv(x) \\ &+O_\beta\left(\frac{|s|}{(LN)^{1/4}}\|\theta\|_{C^{13}}+\frac{s^2}{(LN)^{1/4}}\|\theta\|_{C^5}^2+\frac{|s|^3}{LN}\|\theta\|_{C^3}^3  \right),
\end{align*}
with 
\begin{equation*}
\PNbeta(\mathcal{G}_N \setminus \mathcal{G}_N') \lesssim \frac{1}{(LN)^{1/4}}.
\end{equation*}
\end{theo}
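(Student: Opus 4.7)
The plan is to revisit the expansion of Proposition \ref{Expansion of Laplace Transform} and sharpen each of the three error terms to the stated rate, now that Theorem \ref{Local Law} is available at every scale $L>\omega/N$. The identification $-\beta N^2 \textsf{Main}_1 = \frac{s^2}{\beta}\|\xi\|_{H^{1/2}}^2 + O(\cdot)$ is already provided by Proposition \ref{Main Term Expansion}, so the work is threefold: (i) extract the explicit subleading term $s(1-1/\beta)\int \psi'\,d\muv$ from $N\textsf{Error}_1$, (ii) upgrade the rough $L^1$ control used for $\textsf{Error}_3$ in the previous section, and (iii) sharpen the anisotropy contribution $\textsf{Error}_2$ from $O(1)$ down to $O((LN)^{-1/4})$.

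For $N\textsf{Error}_1 = N(1-\beta)\int \log(1+t\psi')\,d\muv$ a second-order Taylor expansion gives $N(1-\beta)t\int \psi'\,d\muv$ plus a remainder of order $Nt^2\|\psi'\|_{L^\infty}\|\psi'\|_{L^1}$; substituting $t=-s/(\beta N)$ yields the claimed term $s(1-1/\beta)\int\psi'\,d\muv$ and a remainder controlled by the transport bounds of Lemma \ref{Rescaled Transport Estimates}. For $\textsf{Error}_3$ the rough bound from Lemma \ref{Rough L^1 Bound} is replaced by the energy bound (\ref{energy controls fluctuations}) coupled with Theorem \ref{Local Law} at scale $L$; combined with the transport estimates on $\tau_t$ of Lemma \ref{Energy Difference Estimate} this produces fluctuation estimates for $\log\phi_t'$ and $N\tau_t$ of the desired size. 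The quadratic Taylor tail of $\textsf{Error}_2$ is handled exactly as in Proposition \ref{easy Error2 bound - rescaled test function}, contributing the $|s|^3/(LN)\|\theta\|_{C^3}^3$ term.

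The genuinely new ingredient is the treatment of the linear part of $\textsf{Error}_2$, namely $-t\int\!\!\int \frac{\psi(x)-\psi(y)}{x-y}\,d\fluct_N(x)\,d\fluct_N(y)$. Writing
\[\frac{\psi(x)-\psi(y)}{x-y}=\int_0^1 \psi'(rx+(1-r)y)\,dr\]
and then Fourier-inverting gives
\[\int\!\!\int \frac{\psi(x)-\psi(y)}{x-y}\,d\fluct_N(x)\,d\fluct_N(y)=\frac{1}{2\pi}\int_0^1\!\!\int_{\R} \mathcal{F}(\psi')(\lambda)\,\Fluct_N(e^{i\lambda r\cdot})\,\Fluct_N(e^{i\lambda(1-r)\cdot})\,d\lambda\,dr.\]
Theorem \ref{Uniform Bound on Fluctuations} applied to the real and imaginary parts $\cos(\lambda r\,\cdot)$ and $\sin(\lambda r\,\cdot)$ (suitably truncated to a neighbourhood of the bulk), combined with a Chernoff argument, shows that on a good event $|\Fluct_N(e^{i\lambda r\cdot})|\lesssim \sqrt{|\lambda|/\beta}$ up to lower-order corrections, provided $|\lambda|$ is below a threshold $\Lambda\sim(LN)^{1/4}$. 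For $|\lambda|>\Lambda$, the Fourier decay $|\mathcal{F}(\psi')(\lambda)|\lesssim |\lambda|^{-k}\|\psi^{(k+1)}\|_{L^1}$, controlled via Lemma \ref{Rescaled Transport Estimates} once $\theta\in C^{k+2}$, makes the tail integrable. Optimizing $k$ and $\Lambda$ against the Chernoff error gives both the $C^{13}$ regularity requirement and the final $(LN)^{-1/4}$ rate.

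The event $\mathcal{G}_N'\subset \mathcal{G}_N$ is then obtained by intersecting $\mathcal{G}_N$ with the event that the above Chernoff bounds hold simultaneously over a sufficiently fine net in $\lambda\in[-\Lambda,\Lambda]$, the failure probability being estimated by a union bound against the exponentially small tails of Theorem \ref{Uniform Bound on Fluctuations}, yielding $\PNbeta(\mathcal{G}_N\setminus\mathcal{G}_N')\lesssim (LN)^{-1/4}$. The main obstacle is the calibration of the cutoff $\Lambda$: taking it too small sacrifices too much Fourier mass and forces more smoothness on $\theta$, while taking it too large degrades the cubic error $|s|^3/(LN\beta^2)$ of Theorem \ref{Uniform Bound on Fluctuations} when evaluated at $s\sim \lambda r$. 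The exponent $1/4$ in $(LN)^{-1/4}$ reflects the fact that, at mesoscopic scales, it is this cubic error rather than the Gaussian variance term that is dominant in the optimization.
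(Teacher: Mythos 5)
Your high-level scheme follows the paper's blueprint, and the treatments of $\textsf{Error}_1$ and $\textsf{Error}_3$, as well as the cubic Taylor tail of $\textsf{Error}_2$, are in line with the actual argument. However, there are three genuine gaps in the handling of $\textsf{Error}_2$. First, the failure probability $\PNbeta(\mathcal{G}_N\setminus\mathcal{G}_N')\lesssim (LN)^{-1/4}$ cannot be obtained by taking a union bound against the exponentially small tails of Theorem \ref{Uniform Bound on Fluctuations} over a net in $\lambda$; a union over exponentially small events would give an exponentially small failure probability, not a polynomial one. The paper instead bounds the \emph{expectation} $\Esp_{\PNbeta}[\,|\text{anisotropy}|\,\indic_{\mathcal{G}_N}] \lesssim |s|\|\theta\|_{C^{13}}/\sqrt{LN}$ (via Corollary \ref{pseudo-Gaussian moments} and Cauchy--Schwarz) and then applies a Markov inequality: the trade-off between the $(LN)^{-1/2}$ expectation bound and the $(LN)^{-1/4}$ threshold is exactly where the $(LN)^{-1/4}$ rate comes from, not from an optimization of a Fourier cutoff $\Lambda$ against the cubic error term. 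Second, the raw exponentials $e^{i\lambda r\cdot}$ are not compactly supported in the bulk, so Theorem \ref{Uniform Bound on Fluctuations} cannot be applied to them; one must insert cutoffs $\chi$ adapted to the partition of unity $\{\zeta_i\}$ (to exploit the decay of $\psi$ at each dyadic scale $2^iL$) and then separately control the cross terms $\chi(x)(1-\chi(y))$, which requires the nontrivial $H^{1/2}$ estimate (\ref{second anisotropy function}) on $e^{i\lambda\cdot}\chi(\cdot)g(\cdot)$ with $g=\int \frac{1-\chi}{x-\cdot}d\fluct_N$. Your claimed pointwise bound $|\Fluct_N(e^{i\lambda r\cdot})|\lesssim\sqrt{|\lambda|/\beta}$ also misses the $L$-dependence of the $H^{1/2}$ norm, which scales like $(1+\lambda^2(2^iL)^2)^{1/2}$ after localization.

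Third, the quadratic Taylor term in $\textsf{Error}_2$ is \emph{not} handled ``exactly as in Proposition \ref{easy Error2 bound - rescaled test function}'': that proposition only gives an $O(s^2)$ bound, which is not good enough for the CLT. The paper has a separate and rather delicate argument (controlling $h(x)=\int(\frac{\psi_i(x)-\psi_i(y)}{x-y})^2\chi(y)\,d\fluct_N(y)$ and $h'$ via Lemma \ref{CLT Energy Estimate} and the local law) to push this to $O(s^2/(LN)^{1/4}\,\|\theta\|_{C^5}^2)$. Your proposal as written leaves an $O(s^2)$ error floating in the final estimate, which would destroy the CLT conclusion.
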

The above estimate yields the desired CLT.
\begin{coro}
Suppose that $\theta\in C^{13}$ is a compactly supported test function, and let $\xi_{z,L}$ denote the associated rescaled test function at scale $\frac{\omega}{N} < L \ll 1$ with $z \in \B$. Then,
\begin{equation*}
\Fluct_N(\xi_{z,L}) \implies \mathcal{N}\left(0,  \frac{2}{\beta}\|\xi\|_{H^{1/2}}^2\right),
\end{equation*}
where the above convergence is in distribution. When $L$ is a fixed macroscopic constant, 
\begin{equation*}
\Fluct_N(\xi_{z,L}) \implies \mathcal{N}\left(\left(1-\frac{1}{\beta}\right)\int\psi'(x)~d\muv(x), \frac{2}{\beta}\|\xi\|_{H^{1/2}}^2\right).
\end{equation*}
\end{coro}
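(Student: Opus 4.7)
The plan is to extract weak convergence from the sharp Laplace transform expansion of Theorem \ref{Gaussian asymptotics} via Curtiss's theorem. Fix $s$ real in a small neighborhood of zero (so that the theorem applies) and analyze each term as $N \to \infty$. In the mesoscopic regime $\omega/N < L \ll 1$ we have $LN \to \infty$, so every error term
\begin{equation*}
\frac{|s|}{(LN)^{1/4}}\|\theta\|_{C^{13}} + \frac{s^2}{(LN)^{1/4}}\|\theta\|_{C^5}^2 + \frac{|s|^3}{LN}\|\theta\|_{C^3}^3
\end{equation*}
vanishes. It remains to show that the linear-in-$s$ correction $s(1-\frac{1}{\beta})\int \psi'\,d\muv$ is also negligible.

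This reduces to proving $\int \psi'(x)\,d\muv(x) \to 0$ as $L \to 0$. Integration by parts, where the boundary term vanishes since $\muv$ is compactly supported, gives $\int \psi'\,d\muv = -\int \psi\,\mu_V'\,dx$. I would split between $\supp(\xi_{z,L})$ and its complement inside $\Sigma_V$ and apply Lemma \ref{Rescaled Transport Estimates} with $k=0$, which yields $|\psi| \lesssim \|\theta\|_{C^2}$ on the support (of length $\sim L$) and $|\psi(x)| \lesssim L\|\theta\|_{C^2}|x-z|^{-1}$ outside. Bounding each piece gives $|\int \psi'\,d\muv| \lesssim L\log(1/L)\|\theta\|_{C^2}\|\mu_V'\|_{L^\infty}$, which $\to 0$. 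Combined with the scale invariance $\|\xi_{z,L}\|_{H^{1/2}} = \|\theta\|_{H^{1/2}}$, Theorem \ref{Gaussian asymptotics} then yields $\log \mathbb{E}_{\PNbeta}[\exp(s\Fluct_N(\xi))\indic_{\mathcal{G}_N'}] \to s^2\|\theta\|_{H^{1/2}}^2/\beta$.

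The next step is to remove the indicator. Applying Cauchy--Schwarz together with the bound $\PNbeta(\mathcal{G}_N^c) \leq C_1 e^{-C_2\beta LN}$ from Theorem \ref{Uniform Fluctuations} and $\PNbeta(\mathcal{G}_N \setminus \mathcal{G}_N') \lesssim (LN)^{-1/4}$ gives
\begin{equation*}
\mathbb{E}_{\PNbeta}[e^{s\Fluct_N(\xi)}\indic_{(\mathcal{G}_N')^c}] \leq \mathbb{E}_{\PNbeta}[e^{2s\Fluct_N(\xi)}\indic_{\mathcal{G}_N'}]^{1/2}\PNbeta((\mathcal{G}_N')^c)^{1/2} + (\text{exp.\ small}) \lesssim (LN)^{-1/8},
\end{equation*}
where the second moment is bounded by applying Theorem \ref{Gaussian asymptotics} at parameter $2s$. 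Hence $\log \mathbb{E}_{\PNbeta}[\exp(s\Fluct_N(\xi))] \to s^2\|\theta\|_{H^{1/2}}^2/\beta$ pointwise for real $s$ in a neighborhood of $0$. Since the limiting Gaussian $\mathcal{N}(0, 2\|\theta\|_{H^{1/2}}^2/\beta)$ has a moment generating function analytic on $\mathbb{R}$, Curtiss's theorem (or the Laplace transform version of Lévy's continuity theorem) yields convergence in distribution. The macroscopic case follows from the same argument with $L$ fixed: only the $(LN)^{-\alpha}$ error terms vanish, while the mean correction $(1-\frac{1}{\beta})\int \psi'\,d\muv$ persists and becomes the Gaussian mean.

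The only nontrivial obstacle is the $O(L\log(1/L))$ bound on $\int \psi'\,d\muv$; this is the step where the specific localization properties of the transport map (Lemma \ref{Rescaled Transport Estimates}) enter. The rest is a routine packaging of the quantitative Laplace transform expansion into a weak convergence statement. One could equivalently run the argument through characteristic functions by substituting $s = it$, which sidesteps the indicator-removal step since $|e^{it\Fluct_N}| = 1$; this is cleaner but requires justifying that the proof of Theorem \ref{Gaussian asymptotics} extends to imaginary $s$, which it does because each intermediate expansion is analytic in $s$ near $0$.
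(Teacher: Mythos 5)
Your overall strategy is the same as the paper's (feed the quantitative Laplace transform expansion of Theorem \ref{Gaussian asymptotics} into a continuity theorem, after checking that the mean term vanishes as $L\to 0$), but there is a genuine gap in the indicator-removal step. You write
\[
\E_{\PNbeta}\bigl[e^{s\Fluct_N(\xi)}\indic_{(\mathcal{G}_N')^c}\bigr] \leq \E_{\PNbeta}\bigl[e^{2s\Fluct_N(\xi)}\indic_{\mathcal{G}_N'}\bigr]^{1/2}\PNbeta\bigl((\mathcal{G}_N')^c\bigr)^{1/2} + (\text{exp. small}),
\]
but Cauchy--Schwarz does not produce the indicator $\indic_{\mathcal{G}_N'}$ inside the second moment: it yields $\E[e^{2s\Fluct_N}]^{1/2}$ or, at best, $\E[e^{2s\Fluct_N}\indic_{\mathcal{G}_N}]^{1/2}$ for the piece over $\mathcal{G}_N\setminus\mathcal{G}_N'$. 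The remaining piece over $\mathcal{G}_N^c$ you wave away as ``exp.\ small,'' but the paper provides no bound on $e^{s\Fluct_N}$ there: the deterministic bound $|\Fluct_N(\xi)|\lesssim N$ coupled with $\PNbeta(\mathcal{G}_N^c)\lesssim e^{-C\beta LN}$ gives $e^{C'|s|N - C\beta LN}$, which does \emph{not} vanish for fixed $s$ once $L\to 0$. So the full moment generating function $\E[e^{s\Fluct_N}]$ is not actually under control, and Curtiss's theorem cannot be applied to $\Fluct_N$ directly. The paper sidesteps this entirely by defining $X_N:=\Fluct_N(\xi)\indic_{\mathcal{G}_N'}$, for which one has the \emph{exact} identity $\E[e^{sX_N}] = \E[e^{s\Fluct_N}\indic_{\mathcal{G}_N'}] + \PNbeta((\mathcal{G}_N')^c)$ (since $e^{sX_N}=1$ on the bad set); Curtiss's theorem then applies to $X_N$, and $\Fluct_N$ inherits the limit because $\PNbeta(X_N\ne\Fluct_N)\to 0$. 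This is the trick from \cite[Section 6]{L21}, and it is not merely cosmetic.

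Your alternative route via characteristic functions ($s=it$) would genuinely close the gap since $|e^{it\Fluct_N}|=1$ makes the bad set harmless, but you are right that this requires verifying that Theorem \ref{Gaussian asymptotics} extends to purely imaginary $s$; the paper never does that work, and it is not entirely automatic since the inputs (e.g.\ the commutator estimate and the Chernoff bounds in the bootstrap) are stated for real exponentials. The identity trick on $X_N$ is the more economical fix.

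One minor point on the mean term: your bound $|\int\psi'\,d\muv|\lesssim L\log(1/L)\|\theta\|_{C^2}\|\mu_V'\|_{L^\infty}$ invokes $\|\mu_V'\|_{L^\infty}$, but $\mu_V'$ is not bounded --- it has $1/\sqrt{\cdot}$ singularities at $\partial\Sigma_V$. Since $z\in\B$, $\mu_V'$ is indeed bounded on $\supp(\xi_{z,L})$, but on the complementary region you must combine the $L/|x-z|$ decay of $\psi$ with the integrability of $\mu_V'$ near the boundary rather than an $L^\infty$ bound. The paper does exactly this in Proposition \ref{vanishing Error1 bound}, which the proof of the corollary simply cites as (\ref{Error1 asymptotic}); your re-derivation is the right idea but the stated bound is not quite the right form.
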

Notice that the macroscopic case is \cite[Theorem 1]{BLS18} in the nonsingular case.
\begin{proof}
Since we can only prove the boundedness of Theorem \ref{Gaussian asymptotics} on a good event, we cannot immediately conclude. However, we can make use of a trick to handle this situation that was used in \cite[Section 6]{L21}. The idea is as follows. Let $X_N$ denote the random variable 
\begin{equation*}
X_N:=\Fluct_N(\xi)\indic_{\mathcal{G}_N'}.
\end{equation*}
When $L\ll 1$, notice that 
\begin{equation*}
\left|\left(1-\frac{1}{\beta}\right)\int\psi'(x)~d\muv(x)\right|\lesssim_\beta(|s|L+|s|L\ln L)\|\theta\|_{C^2} \rightarrow 0
\end{equation*}
as $N \rightarrow \infty$ by (\ref{Error1 asymptotic}) below. Thus, we have that uniformly on compact sets of $s$
\begin{equation*}
\lim_{N \rightarrow \infty}\mathbb{E}_{\mathbb{P}_{N,\beta}}[\exp(s\Fluct_N(\xi))\indic_{\mathcal{G}_N'}]=\exp \left(\frac{s^2}{\beta}\|\xi\|_{H^{1/2}}^2\right).
\end{equation*}
Then, since
\begin{align*}
\Esp_{\PNbeta}\left[\exp \left(s X_N \right)\right]=\Esp_{\PNbeta}[\exp(s\Fluct_N(\xi))\indic_{\mathcal{G}_N'}]+\PNbeta((\mathcal{G}_N')^c),
\end{align*}
we can also conclude that 
\begin{equation*}
\lim_{N \rightarrow \infty}\Esp_{\PNbeta}\left[\exp \left(s X_N \right)\right]=\exp \left(\frac{s^2}{\beta}\|\xi\|_{H^{1/2}}^2\right),
\end{equation*}
and hence that $X_N$ converges in distribution to a centered Gaussian random variable with variance $ \frac{2}{\beta}\|\xi\|_{H^{1/2}}^2$. Since $\PNbeta((\mathcal{G}_N')^c)$ vanishes asymptotically, we can also conclude that $\Fluct_N(\xi_{z,L})$ converges in distribution to the same random variable, as desired. 

When $L$ is a macroscopic constant, we have uniformly on compact sets of $s$ that
\begin{equation*}
\lim_{N \rightarrow \infty}\mathbb{E}_{\mathbb{P}_{N,\beta}}[\exp(s\Fluct_N(\xi))\indic_{\mathcal{G}_N'}]=\exp \left(s\left(1-\frac{1}{\beta}\right)\int\psi'(x)~d\muv(x)+\frac{s^2}{\beta}\|\xi\|_{H^{1/2}}^2\right).
\end{equation*}
Then, exactly as above we can conclude that $X_N$ and $\Fluct_N(\xi_{z,L})$ converge in distribution to 
\begin{equation*}
 \mathcal{N}\left(\left(1-\frac{1}{\beta}\right)\int\psi'(x)~d\muv(x), \frac{2}{\beta}\|\xi\|_{H^{1/2}}^2\right),
\end{equation*}
recovering \cite[Theorem 1]{BLS18} in the nonsingular case.
\end{proof}

We proceed with the proof of Theorem \ref{Gaussian asymptotics}. As a result of Proposition \ref{Expansion of Laplace Transform} and Proposition \ref{Main Term Expansion}, we need only upgrade the controls on the $\textsf{Error}_i$ terms defined in (\ref{Error1}-\ref{Error3}). This will require use of the fluctuation control Lemma \ref{Local Laws Energy Estimate} alongside of Theorem \ref{Local Law}. We restate Lemma \ref{Local Laws Energy Estimate} here in traditional coordinates.


\begin{lem}\label{CLT Energy Estimate}
Let $\xi\in C^1(U)$ and let $\Omega$ contain a $\frac{1}{N}$-neighborhood of the support of $\xi$. Suppose $|\Omega| \in [L,2L]$. Then, 
\begin{align*}
\left|\int \xi(x)\left(\sum_{i=1}^N \delta_{x_i}-N\muv(x)dx\right)\right| &\leq \|\xi'\|_{L^\infty(\Omega)}\left[|\Omega|+N^{-3/4}|\Omega|^{1/4}\|\nabla u_{\rrc}\|_{L^2(N\Omega)\times [-NL,NL]}\right]+\\
&C\left[\sqrt{\mathfrak{h}}\|\xi'\|_{L^2(\Omega)}+\frac{1}{\sqrt{\mathfrak{h}}}\|\xi\|_{L^2(\Omega)}\right]\|\nabla u_{\rrc}\|_{L^2(N\Omega \times [-NL,NL])},
\end{align*}
where $\mathfrak{h} \leq NL-1$ is a free parameter.
\end{lem}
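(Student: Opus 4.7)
The claim is the restatement of Lemma~\ref{Local Laws Energy Estimate} (equivalently the inequality (\ref{energy controls fluctuations})) in traditional (unscaled) coordinates, so the plan is simply to carry out the change of variables $y = Nx$ and track the scaling factors. First I would pass to blown-up coordinates by setting $\eta(y) := \xi(y/N)$, so that $\int \xi\,d\fluct_N = \int \eta\,d\fluct_N'$, where $\fluct_N' = \sum \delta_{x_i'} - \muv'$ is the fluctuation measure at blown-up scale. Since $\Omega$ contains a $\tfrac{1}{N}$-neighborhood of $\supp(\xi)$, the blown-up set $N\Omega$ has $|N\Omega| \in [NL, 2NL]$ and contains a unit-sized neighborhood of $\supp(\eta)$, so the hypotheses needed for (\ref{energy controls fluctuations}) applied to $\eta$ on $N\Omega$ are met.

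Next I would apply (\ref{energy controls fluctuations}) with $\zeta = \eta$ and $\Omega$ replaced by $N\Omega$, yielding a bound in terms of $\|\eta'\|_{L^\infty(N\Omega)}$, $\|\eta'\|_{L^2(N\Omega)}$, $\|\eta\|_{L^2(N\Omega)}$, and $\sqrt{8\pi(\G^{N\Omega}(\XN',\mu) + C_0\#(\XN' \cap N\Omega))}$. By Lemma~\ref{local energy control}, the latter square-root is equivalent, up to the constant $C_0$, to $\|\nabla u_{\rrc}\|_{L^2(N\Omega \times [-NL,NL])}$, which is precisely the quantity appearing on the right-hand side of the claim.

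The final step is the bookkeeping of rescalings. Using the identities
\begin{equation*}
\|\eta'\|_{L^\infty(N\Omega)} = N^{-1}\|\xi'\|_{L^\infty(\Omega)}, \quad \|\eta'\|_{L^2(N\Omega)}^2 = N^{-1}\|\xi'\|_{L^2(\Omega)}^2, \quad \|\eta\|_{L^2(N\Omega)}^2 = N\|\xi\|_{L^2(\Omega)}^2,
\end{equation*}
together with the computation $|N\Omega|^{1/4}/N = N^{-3/4}|\Omega|^{1/4}$ for the leading term, and relabeling the free parameter $\mathfrak{h}' \in [0, NL-1]$ from (\ref{energy controls fluctuations}) as $\mathfrak{h} = \mathfrak{h}'/N$, one checks that $\sqrt{\mathfrak{h}'}\|\eta'\|_{L^2(N\Omega)} = \sqrt{\mathfrak{h}}\|\xi'\|_{L^2(\Omega)}$ and $\|\eta\|_{L^2(N\Omega)}/\sqrt{\mathfrak{h}'} = \|\xi\|_{L^2(\Omega)}/\sqrt{\mathfrak{h}}$. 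Collecting everything recovers the inequality exactly as stated.

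There is no real analytic obstacle here: the actual work—the trace-type inequality pairing an $L^\infty$-discrepancy term with an $H^{1/2}$-type interpolation paired against $\|\nabla u_{\rrc}\|_{L^2}$—is already done in Lemma~\ref{Local Laws Energy Estimate}. The only care required is in matching the powers of $N$ that fall out of the change of variables, which is why the slightly unusual factor $N^{-3/4}$ appears in the leading term of the statement.
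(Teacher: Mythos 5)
Your proof is essentially correct and follows the same route the paper does (the paper's ``proof'' is just the remark that Lemma~\ref{CLT Energy Estimate} restates Lemma~\ref{Local Laws Energy Estimate} in traditional coordinates), and your bookkeeping of the $N$-powers — in particular the source of the $N^{-3/4}$ and the rescaling of $\mathfrak{h}$ — is right. However, one step of your exposition is an unnecessary detour that does not quite close: you invoke~(\ref{energy controls fluctuations}) and then claim that, ``by Lemma~\ref{local energy control}, the latter square-root is equivalent to $\|\nabla u_{\rrc}\|_{L^2}$.'' Lemma~\ref{local energy control} is one-sided: it controls $\int_{\tilde\Omega}|\nabla u_{\rrc}|^2$ \emph{from above} by $8\pi\G^\Omega + C\#(\ldots)$, so passing from $\sqrt{8\pi\G^\Omega + C_0\#(\ldots)}$ back to $\|\nabla u_{\rrc}\|_{L^2}$ could only shrink the right-hand side, which is the wrong direction. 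The fix is to simply not go through~(\ref{energy controls fluctuations}) at all: rescale Lemma~\ref{Local Laws Energy Estimate} directly, which already carries $\|\nabla u_{\rrc}\|_{L^2(\tilde\Omega)}$ on its right-hand side, and your three rescaling identities plus the relabeling $\mathfrak{h}\mapsto\mathfrak{h}'/N$ then give the stated inequality verbatim. (As you implicitly observed, this relabeling produces the constraint $\mathfrak{h}\le L-\tfrac1N$ rather than $\mathfrak{h}\le NL-1$; the latter appears to be a typographical slip in the statement and is consistent with the paper's actual use of the lemma with choices $\mathfrak{h}\sim L$ or $\mathfrak{h}\sim 1$.)
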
 
For ease of notation going forward, we will denote $N \Omega \times [-NL,NL]$ by $\tilde{N\Omega}$.

\subsection{$\textsf{Error}_1$ Estimate}
We first upgrade Proposition \ref{easy Error1 bound} using Lemma \ref{Rescaled Transport Estimates}.
\begin{prop}\label{vanishing Error1 bound}
Suppose that $\theta\in C^3$ is a compactly supported test function, and let $\xi_{z,L}$ denote the associated rescaled test function at scale $L$ at least microscopic with $z \in \B$. Then, for every $N\geq N_c$,
\begin{equation}\label{Error1 expansion}
N\textsf{Error}_1=N(1-\beta)\int t\psi'(x)~d\muv(x)+O_\beta \left(\frac{s^2}{LN}\|\theta\|_{C^3}^2\right).
\end{equation}
When $L \ll 1$,
\begin{equation}\label{Error1 asymptotic}
\left|N \textsf{Error}_1\right|\lesssim_\beta \frac{s^2}{LN}\|\theta\|_{C^3}^2+ (|s|L+|s|L\ln L)\|\theta\|_{C^2}.
\end{equation}
\end{prop}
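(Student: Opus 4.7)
The plan is to Taylor expand the logarithm and control the remainder using the pointwise estimates on $\psi$ from Lemma \ref{Rescaled Transport Estimates}. Since $\|t\psi'\|_{L^\infty} < \frac{1}{2}$ for $N \geq N_c$, we may write $\log(1+t\psi'(x)) = t\psi'(x) + R(x)$ with $|R(x)| \lesssim t^2 \psi'(x)^2$, so substituting into the definition (\ref{Error1}) yields
\begin{equation*}
N\textsf{Error}_1 = N(1-\beta)\,t\!\int \psi'(x)\,d\muv(x) + O\!\Bigl(N|1-\beta|\,t^2\!\int \psi'(x)^2\,d\muv(x)\Bigr).
\end{equation*}

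To bound the quadratic remainder, I would decompose $\R = \supp(\xi_{z,L})\cup\supp(\xi_{z,L})^c$ and apply Lemma \ref{Rescaled Transport Estimates}. Inside the support (of length $\sim L$) we have $|\psi'(x)| \lesssim \|\theta\|_{C^3}/L$, contributing $\lesssim \|\theta\|_{C^3}^2/L$ to $\int \psi'(x)^2\,d\muv$. Outside the support, $|\psi'(x)| \lesssim L(|x-z|^{-1}+|x-z|^{-2})\|\theta\|_{C^3}$, and the dominant tail integral is $\int_L^{O(1)} L^2/r^4\,dr \sim 1/L$. Therefore $\int \psi'(x)^2\,d\muv \lesssim \|\theta\|_{C^3}^2/L$, and with $t=-s/(\beta N)$ the remainder is $O_\beta\bigl(s^2\|\theta\|_{C^3}^2/(LN)\bigr)$. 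This establishes the expansion (\ref{Error1 expansion}).

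For the asymptotic bound (\ref{Error1 asymptotic}) in the $L \ll 1$ regime it remains to estimate $|\int \psi'\,d\muv|$. The plan is to integrate by parts on each component of $\Sigma_V$: by the square-root decay of $\muv$ from assumption A3 the boundary terms vanish, leaving
\begin{equation*}
\int \psi'\,d\muv = -\int_{\Sigma_V} \psi(x)\,\muv'(x)\,dx.
\end{equation*}
I would then apply Lemma \ref{Rescaled Transport Estimates} with $k=0$, which gives $|\psi(x)| \lesssim \|\theta\|_{C^2}$ on $\supp(\xi_{z,L})$ and $|\psi(x)| \lesssim (L/|x-z|)\|\theta\|_{C^2}$ otherwise, together with the fact that $\muv'$ is bounded in the bulk $\B$ and has integrable $|x-a_i|^{-1/2}$ singularities near $\partial\Sigma_V$. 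Splitting the integral into the three regions (support, bulk outside the support, and a neighborhood of $\partial\Sigma_V$), the support contribution is $O(L\|\theta\|_{C^2})$, the neighborhood of $\partial\Sigma_V$ contribution is $O(L\|\theta\|_{C^2})$, and the dominant bulk tail gives $\int_L^{O(1)} L/r\,dr = L|\ln L| + O(L)$. Combining these yields $|\int \psi'\,d\muv| \lesssim (L + L|\ln L|)\|\theta\|_{C^2}$; multiplying by $N|1-\beta||t|\lesssim_\beta |s|$ and adding the Taylor remainder from Step 2 gives (\ref{Error1 asymptotic}).

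The main technical point is this last integration-by-parts estimate: one must balance the $1/L$ blow-up of $\psi'$ inside the support against the $L/|x-z|$ decay of $\psi$ outside, and it is the bulk region outside the support that produces the $L|\ln L|$ correction through a logarithmically divergent tail. Everything else reduces to elementary Taylor analysis once the pointwise bounds of Lemma \ref{Rescaled Transport Estimates} are in hand.
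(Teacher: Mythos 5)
Your proof is correct and follows essentially the same route as the paper: the same Taylor expansion of $\log(1+t\psi')$, the same $L^2$-bound $\int (\psi')^2 \lesssim \|\theta\|_{C^3}^2/L$ derived from Lemma~\ref{Rescaled Transport Estimates}, and the same integration by parts $t\int\psi'\,d\muv=-t\int\psi\,\muv'\,dx$ followed by splitting into the support of $\xi$ and its complement to extract the $L\ln L$ tail. You spell out the tail computations a bit more explicitly than the paper does, but there is no substantive difference in strategy.
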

\begin{proof}
We use a Taylor expansion
\begin{align*}
\log \left|\phi_t'(x)\right|=t\psi'(x)-\frac{t^2}{2(1+\beta(x))^2}\left(\psi'(x)\right)^2,
\end{align*}
with $|\beta(x)|<\|t\psi'\|_{L^\infty}<\frac{1}{2}$ for $N \geq N_c$. We deal with the error term first, observing from Lemma \ref{Rescaled Transport Estimates} that 
\begin{equation*}
\int (\psi'(x))^2~dx\lesssim \frac{1}{L}\|\theta\|_{C^3}^2.
\end{equation*}
This allows us to quickly bound
\begin{equation*}
\left|N \left(1-\beta\right)\int\frac{t^2}{2(1+\beta(x))^2}\left(\psi'(x)\right)^2~d\muv(x)\right|\lesssim |1-\beta |\frac{t^2N}{L}\|\theta\|_{C^3}^2\lesssim \left|\frac{1}{\beta}-\frac{1}{\beta^2}\right| \frac{s^2}{LN}\|\theta\|_{C^3}^2,
\end{equation*}
establishing (\ref{Error1 expansion}). For the main term, we can integrate by parts. Since $\psi\muv(x)$ vanishes on the boundary of $\supp(\muv)$ we can write
\begin{equation*}
t\int \psi'(x)~d\muv=-t\int \psi(x)\muv'(x)~dx.
\end{equation*}
Letting $J$ denote $\supp(\xi)$, we have
\begin{align*}
\left|t\int \psi(x)\muv'(x)~dx\right|& \leq |t|\int_J|\psi(x)\|\muv'(x)|~dx+|t|\int_{J^c}|\psi(x)\|\muv'(x)|~dx \\
&\lesssim (|t|L+|t|L\ln L)\|\theta\|_{C^2},
\end{align*}
where the last line follows from Lemma \ref{Rescaled Transport Estimates} and the fact that $\muv'(x)$ has integrable $1/\sqrt{x}$ singularities at the boundary of $\supp(\muv)$. It follows that 
\begin{equation*}
\left|tN\left(1-\beta\right)\int \psi(x)\muv'(x)~dx\right|\lesssim |1-\beta| |t|LN(1+\ln L)\|\theta\|_{C^2} \lesssim (|s|L+|s|L\ln L)\left|1-\frac{1}{\beta}\right|\|\theta\|_{C^2}.
\end{equation*}
Thus,
\begin{equation*}
\left|N \textsf{Error}_1\right|\lesssim \frac{s^2}{LN}\left|\frac{1}{\beta}-\frac{1}{\beta^2}\right|\|\theta\|_{C^3}^2+ (|s|L+|s|L\ln L)\left|1-\frac{1}{\beta}\right|\|\theta\|_{C^2}.
\end{equation*}
\end{proof}

\subsection{$\textsf{Error}_3$ Estimate}
Careful consideration of Lemma \ref{CLT Energy Estimate} and Theorem \ref{Local Law} also allows us to quickly upgrade Proposition \ref{easy Error3 bound}. We will show that the fluctuations of $\tau_t$ and $\log \phi_t'$ vanish uniformly on a large set. In addition to Lemmas \ref{Rescaled Transport Estimates} and \ref{Energy Difference Estimate}, we will also need the following control on $\tau_t'$ that holds for mesoscopic test functions. For a proof, see Appendix C.

\begin{lem}\label{Energy Difference Derivative Estimate}
Suppose that $\theta\in C^4$ is a compactly supported test function, and let $\xi_{z,L}$ denote the associated rescaled test function at scale $L$ at least microscopic with $z \in \B$. Then, for every $N\geq N_c$,
\begin{equation*}
\|\tau_t'\|_{L^\infty}\lesssim \frac{t^2\|\theta\|_{C^4}^2}{L^2}.
\end{equation*}
\end{lem}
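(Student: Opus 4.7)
The plan is to differentiate $\tau_t=\mathcal{F}(t,\psi)-\mathcal{F}(0,0)-tc_\xi$ in $x$ and exploit the same structural cancellations that made $\tau_t=O(t^2)$ in Lemma~\ref{Energy Difference Estimate}. Since $\mathcal{F}(0,0)=h^{\muv}+V$ is constant on $\Sigma_V$ by the Euler--Lagrange equation, on the support of $\muv$ the derivative reduces to
\begin{equation*}
\tau_t'(x) \;=\; \phi_t'(x)\,(h^{\tilde{\muvt}}+V_t)'(\phi_t(x)),
\end{equation*}
and the factor $\phi_t'(x)$ is bounded by $3/2$ under $\|t\psi'\|_{L^\infty}<1/2$. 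It thus suffices to show that $(h^{\tilde{\muvt}}+V_t)'(\phi_t(x))$ is $O(t^2\|\theta\|_{C^4}^2/L^2)$.

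Using $(h^{\tilde{\muvt}})'(\phi_t(x))=-\int (x-y)^{-1}(1+tB(x,y))^{-1}\,d\muv(y)$ in the principal value sense, with $B(x,y):=(\psi(x)-\psi(y))/(x-y)$, I expand $(1+tB)^{-1}=1-tB+t^2B^2/(1+tB)$ and Taylor-expand $V'(\phi_t(x))$ and $t\xi'(\phi_t(x))$ to second order, collecting powers of $t$:
\begin{align*}
(h^{\tilde{\muvt}}+V_t)'(\phi_t(x)) &= t\Bigl[V''(x)\psi(x)+\xi'(x)+\int\tfrac{\psi(x)-\psi(y)}{(x-y)^2}\,d\muv(y)\Bigr] \\
&\quad + t^2\Bigl[\tfrac{1}{2}\psi(x)^2V'''(\eta)+\psi(x)\xi''(\eta')-\int\tfrac{B^2}{(x-y)(1+tB)}\,d\muv(y)\Bigr].
\end{align*}
Differentiating the master operator identity $\Xi_V[\psi]=\xi+c_\xi$ in $x$ and using $V'(x)=\int d\muv(y)/(x-y)$ (PV) on $\Sigma_V$ yields the key identity $\int\frac{\psi(x)-\psi(y)}{(x-y)^2}\,d\muv(y)=-\xi'(x)-V''(x)\psi(x)$, which cancels the entire linear-in-$t$ bracket. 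This is the derivative-level analog of the master operator cancellation that powers Lemma~\ref{Energy Difference Estimate}.

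What remains is to bound each quadratic term by $\|\theta\|_{C^4}^2/L^2$, which I expect to be the main technical obstacle. For $L\leq 1$ (since $N\geq N_c$), Lemma~\ref{Rescaled Transport Estimates} gives $|\psi|\lesssim\|\theta\|_{C^2}$, $\|\psi'\|_{L^\infty}\lesssim\|\theta\|_{C^3}/L$, $\|\psi''\|_{L^\infty}\lesssim\|\theta\|_{C^4}/L^2$, and $\|\xi''\|_{L^\infty}\lesssim\|\theta\|_{C^2}/L^2$, which immediately control the $\psi^2V'''$ and $\psi\xi''$ contributions. For the principal value integral $\int B^2/[(x-y)(1+tB)]\,d\muv$, the $1/(x-y)$ singularity is genuine because $B(x,y)\to\psi'(x)$ as $y\to x$; I subtract and add $\psi'(x)^2/(x-y)$, noting that the PV of $\psi'(x)^2/(x-y)$ against $d\muv$ produces $\psi'(x)^2V'(x)$ on $\Sigma_V$, bounded by $\|\theta\|_{C^3}^2/L^2$. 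The remainder $(B^2-\psi'(x)^2)/(x-y)=(B-\psi'(x))(B+\psi'(x))/(x-y)$ is absolutely integrable because $|B-\psi'(x)|\leq \tfrac12|x-y|\,\|\psi''\|_{L^\infty}$ by Taylor's theorem, and a near/far split around $y=x$ balancing the $O(L)$ mass of $\muv$ on $\supp(\xi_{z,L})$ against the off-support decay of $\psi$ from Lemma~\ref{Rescaled Transport Estimates} produces the desired $\|\theta\|_{C^4}^2/L^2$ bound. For $x\in U\setminus\Sigma_V$ the same computation applies with the additional contribution from $(h^\muv+V)'(x)=\zeta_V'(x)$ handled via the off-support decay estimates, which enter polynomially rather than singularly.
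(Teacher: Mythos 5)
Your starting point is correct: on $\Sigma_V$ one has $\tau_t(x)=(h^{\tilde{\muvt}}+V_t)(\phi_t(x))-c_V-tc_\xi$, so $\tau_t'(x)=\phi_t'(x)\,(h^{\tilde{\muvt}}+V_t)'(\phi_t(x))$, and your differentiated master-operator identity $\int\frac{\psi(x)-\psi(y)}{(x-y)^2}d\muv(y)=-\xi'(x)-V''(x)\psi(x)$ (as a principal value on $\Sigma_V$, using $\mathrm{PV}\int\frac{d\muv(y)}{x-y}=V'(x)$) does kill the linear-in-$t$ bracket exactly. This is a genuinely different route from the paper's. The paper reuses the decomposition $\tau_t=f_1+f_2+f_3$ already established in Lemma~\ref{Energy Difference Estimate}, in which the linear cancellation is built into the definition of the $f_i$; it then differentiates each $f_i$ directly and observes the resulting integrands are manifestly quadratic in $t$, so no fresh identity is needed. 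You instead push the cancellation through the derivative of the master-operator equation. Both are valid; the paper's is shorter because it piggy-backs on the previous lemma, while yours makes the ``stationarity of the composed Euler--Lagrange functional persists at the derivative level'' structure explicit, at the cost of a separate principal-value computation.

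There is, however, one algebraic slip you should repair. You propose to regularize $\mathrm{PV}\int\frac{B^2}{(x-y)(1+tB)}\,d\muv(y)$ by subtracting $\psi'(x)^2/(x-y)$, but the resulting remainder $\frac{B^2}{(x-y)(1+tB)}-\frac{\psi'(x)^2}{x-y}$ still carries a $1/(x-y)$ singularity: as $y\to x$ it behaves like $\frac{1}{x-y}\bigl(\frac{\psi'(x)^2}{1+t\psi'(x)}-\psi'(x)^2\bigr)=\frac{-t\psi'(x)^3}{(1+t\psi'(x))(x-y)}\ne O(1)$. The correct counterterm is $\frac{\psi'(x)^2}{(x-y)(1+t\psi'(x))}$; then the remainder equals $\frac{1}{x-y}\bigl[g(B)-g(\psi'(x))\bigr]$ with $g(b)=b^2/(1+tb)$, and since $g'$ is bounded by $O(\|\psi'\|_{L^\infty})$ on the range in question, the Taylor bound $|B-\psi'(x)|\le\tfrac12|x-y|\|\psi''\|_{L^\infty}$ gives an absolutely integrable integrand near $x$, exactly as you intended. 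The subtracted piece still contributes $\frac{\psi'(x)^2}{1+t\psi'(x)}V'(x)\lesssim\|\theta\|_{C^3}^2/L^2$, so the final bound is unchanged. Finally, the near/far split you invoke (mass $O(L)$ near $x$ against $|y-x|^{-k}$ decay far from $x$) and the off-$\Sigma_V$ extension are only sketched; they do go through with the decay estimates of Lemma~\ref{Rescaled Transport Estimates}, and indeed they are what the paper's proof spells out in detail for each $f_i'$, but as written they remain to be verified.
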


Using this, we obtain the following.
\begin{prop}\label{vanishing Error3 bound}
Suppose that $\theta\in C^4$ is a compactly supported test function, and let $\xi_{z,L}$ denote the associated rescaled test function at scale $L>\frac{\omega}{N}$ with $z \in \B$. Then, for every $N\geq N_c$ there is an event $\mathcal{G}_N'\subset \mathcal{G}_{\mathsf{M}}$ such that 
\begin{equation}\label{vanishing Error3}
\left|\textsf{Error}_3\right|\lesssim \left(\frac{|s|}{\beta}\|\theta\|_{C^4}\frac{1}{(LN)^{1/4}}+\frac{s^2}{\beta^2 (LN)^{1/4}}\|\theta\|_{C^4}^2\right)\left(1+\left|1-\frac{1}{\beta}\right|\right)
\end{equation}
on $\mathcal{G}_N'$. If $L<N^{-1/5}$ asymptotically then 
\begin{equation*}
\PNbeta(\mathcal{G}_{\mathsf{M}} \setminus \mathcal{G}_N')\leq C_1e^{-C_2\beta (LN)^{5/4}}
\end{equation*} 
for some constants $C_1$ and $C_2$, independent of $N$. Otherwise, 
\begin{equation*}
\PNbeta(\mathcal{G}_{\mathsf{M}}\setminus \mathcal{G}_N')\leq C_1e^{-C_2\beta N}.
\end{equation*}
\end{prop}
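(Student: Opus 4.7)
The plan is to decompose
\[
\textsf{Error}_3 \;=\; \left(1-\tfrac{1}{\beta}\right)\Fluct_N(\log \phi_t') \;+\; N\,\Fluct_N(\tau_t),
\]
and estimate each fluctuation by combining the energy-based control of Lemma \ref{CLT Energy Estimate} with the local law Theorem \ref{Local Law}, which is now available at all dyadic scales $2^k L \leq 1$ around $z$. For the first piece I would Taylor expand $\log \phi_t' = t\psi' - \tfrac{t^2 (\psi')^2}{2(1+\beta(x))^2}$ with $|t\psi'|<\tfrac{1}{2}$; the quadratic remainder is handled crudely via Lemma \ref{Rough L^1 Bound} together with $\|\psi'\|_{L^1}\lesssim \|\theta\|_{C^3}$ from Lemma \ref{Rescaled Transport Estimates}, which is safely below the target $(LN)^{-1/4}$ after the substitution $t = -s/(\beta N)$.

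For the linear contribution $\Fluct_N(t\psi')$ and for $\Fluct_N(N\tau_t)$, the analysis is carried out in two regimes. On $\Omega = \supp(\xi_{z,L})$ of size $L$, Lemmas \ref{Rescaled Transport Estimates} and \ref{Energy Difference Derivative Estimate} yield the $L^\infty$ bounds $\|\psi'\|_{L^\infty}\lesssim L^{-1}\|\theta\|_{C^3}$, $\|\psi''\|_{L^\infty}\lesssim L^{-2}\|\theta\|_{C^4}$ and $\|\tau_t'\|_{L^\infty}\lesssim t^2 L^{-2}\|\theta\|_{C^4}^2$, while Lemma \ref{Energy Difference Estimate} gives the $L^\infty$ bound on $\tau_t$; the corresponding $L^2$ norms pick up a factor $\sqrt{L}$. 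Plugging these into Lemma \ref{CLT Energy Estimate} with the free parameter $\mathfrak{h}=L$ balanced between the two bracket terms, and invoking the local-law energy bound $\|\nabla u_\rrc\|_{L^2(\widetilde{NL})}^2 \lesssim NL$ from Theorem \ref{Local Law}, produces contributions of order $|s|/(\beta \sqrt{NL})\|\theta\|_{C^4}$ and $s^2/(\beta^2\sqrt{NL})\|\theta\|_{C^4}^2$, which are absorbed by the $(LN)^{-1/4}$ denominator in \eqref{vanishing Error3}.

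On the complement I use a dyadic decomposition $A_k = \{2^{k-1}L \leq |x-z| < 2^k L\}$ for $k=1,\dots,k_*$ with $2^{k_*}L \sim 1$. The transport tail estimates of Lemma \ref{Rescaled Transport Estimates} supply $\|\psi^{(j)}\|_{L^\infty(A_k)} \lesssim L \cdot (2^k L)^{-(j+1)}\|\theta\|_{C^{j+2}}$, and Lemma \ref{Energy Difference Estimate} supplies $\|\tau_t\|_{L^\infty(A_k)} \lesssim t^2 L \cdot (2^k L)^{-2}\|\theta\|_{C^2}^2$; the analogous tail decay of $\tau_t'$, namely $\|\tau_t'\|_{L^\infty(A_k)} \lesssim t^2 L \cdot (2^k L)^{-3}\|\theta\|_{C^3}^2$, will need to be derived by differentiating the defining expression \eqref{energy difference} and using the decay of $\psi$ and $\psi'$ in the tail. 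Feeding these bounds into Lemma \ref{CLT Energy Estimate} applied on each annulus, with $\mathfrak{h}_k = 2^k L$ and the local-law energy bound $\|\nabla u_\rrc\|^2_{L^2} \lesssim 2^k NL$ at scale $2^k L$, each term in the sum decays by a factor comparable to $2^{-k/2}$ or better, and the geometric series over $k$ yields a bound of the same order as the on-support contribution.

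Finally, the good event $\mathcal{G}_N'$ is defined as the intersection of $\mathcal{G}_\mathsf{M}$ with the local-law good events of Theorem \ref{Local Law} at each dyadic scale $2^k L$, $k=0,\dots,k_*$. The summed failure probability is bounded by $\sum_k C_1 e^{-C_2 \beta 2^k LN} \lesssim C_1 e^{-C_2 \beta LN}$, which covers the regime $L \geq N^{-1/5}$. In the deeper mesoscopic regime $L < N^{-1/5}$, the sharper rate $e^{-C_2 \beta (LN)^{5/4}}$ is obtained by applying Markov's inequality to the exponential moment control of the local energy that appears in the proof of Proposition \ref{Main Bootstrap} (namely $\mathbb{E}\exp(\tfrac{\beta}{2}(\G^\Omega + C_0\#))\indic_{\mathcal{G}} \leq \exp(\tfrac{\mathcal{C}}{4}\beta LN)$) with the threshold inflated by the factor $(LN)^{1/4}$, which produces the claimed $(LN)^{5/4}$ Chernoff exponent. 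The main technical obstacle will be the derivation of the sharp tail bound on $\tau_t'$, which is not directly recorded in Lemma \ref{Energy Difference Derivative Estimate}, and without which the dyadic sum does not converge at the required geometric rate.
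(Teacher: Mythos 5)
Your decomposition of $\textsf{Error}_3$ and the use of Lemma \ref{CLT Energy Estimate} coupled with the local law are the right ingredients, but the dyadic-annulus strategy departs from the paper's argument in a way that creates two genuine gaps.

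First, the probability rate. Because your argument invokes Theorem \ref{Local Law} at the finest dyadic scale $2^0L$ (blown-up scale $LN$), the good event you construct necessarily includes the local-law event at blown-up scale $LN$, whose failure probability is of order $e^{-c\beta LN}$ — this is the rate the bootstrap of Theorem \ref{Local Law} actually delivers. That is strictly larger than the claimed $e^{-c\beta(LN)^{5/4}}$ whenever $LN>1$, so your approach does not reproduce the statement when $L<N^{-1/5}$. The patch you propose — Markov applied to $\mathbb{E}\bigl[\exp\bigl(\tfrac{\beta}{2}(\G^{\Omega}+C_0\#)\bigr)\indic_{\mathcal{G}}\bigr]\leq e^{\tfrac{\mathcal{C}}{4}\beta LN}$ with threshold $(LN)^{5/4}$ — does not close this gap, because that exponential moment bound is only valid on $\mathcal{G}$, which is itself the event where the local laws hold at scales $\geq 2L$; the probability of $\mathcal{G}^c$ is again of order $e^{-c\beta LN}$, and it dominates. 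The paper avoids the issue by never invoking the local law at scale $L$ at all: it applies Lemma \ref{CLT Energy Estimate} on a \emph{single} enlarged interval $I=\supp(\xi_{z,L/\epsilon})$ with $\epsilon=(LN)^{-1/4}$ (so the only local-law event needed is at blown-up scale $LN/\epsilon=(LN)^{5/4}$, with failure probability exactly $e^{-c\beta(LN)^{5/4}}$) and handles $I^c$ with the crude macroscopic $L^1$ bound of Lemma \ref{Rough L^1 Bound}, which only costs the $\mathcal{G}_{\mathsf M}$ event. When $(LN)^{5/4}\geq N$ (i.e.\ $L\geq N^{-1/5}$) the enlarged interval is capped at the bulk and the rate becomes $e^{-c\beta N}$, matching the two-case statement.

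Second, the $\tau_t$ tail. Your dyadic decomposition requires a decay estimate on $\tau_t'$ in the annuli $A_k$, which — as you yourself flag — is not recorded in Lemma \ref{Energy Difference Derivative Estimate} (that lemma only gives a global $L^\infty$ bound). Deriving such an estimate is nontrivial because $\tau_t$ depends nonlocally on $\psi$. The paper's single-scale approach sidesteps this entirely: on $I$ only the $L^\infty$ bound $\|\tau_t'\|_{L^\infty}\lesssim t^2L^{-2}\|\theta\|_{C^4}^2$ is used (together with the corresponding $L^2$ bounds), and on $I^c$ only the tail decay of $\tau_t$ itself from Lemma \ref{Energy Difference Estimate} enters, via the crude $L^1$ Lemma \ref{Rough L^1 Bound}. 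So the paper never needs a pointwise decay bound on $\tau_t'$, and your route would require establishing one from scratch.
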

\begin{proof}
Recall that 
\begin{equation*}
\textsf{Error}_3=\Fluct_N \left(\left(1-\frac{1}{\beta} \right)\log \phi_t'+N\tau_t \right).
\end{equation*}
We consider the fluctuations of $\log \phi_t'$ and $\tau_t$ separately.
\subsection*{$\Fluct_N(\log \phi_t')$ control}
Let's first consider the fluctuations of $\log \phi_t'$. Since we want to use an energy bound on the fluctuation measure, which is not a positive measure, we can't just immediately substitute $|\log \phi_t'|\lesssim |t\psi'|$. Instead, we use a Taylor expansion
\begin{equation*}
\log(1+t\psi'(x))=t\psi'(x)-\frac{1}{2(1+\beta(x))^2}t^2\psi'(x)^2
\end{equation*}
where $|\beta(x)|<\|t\psi'\|_{L^\infty}<\frac{1}{2}$, and estimate the error term.  Let's focus on the main term first. We split the above into the contribution from $I=\supp\left(\xi_{z,\frac{L}{\epsilon}}\right)$ with $\epsilon\rightarrow 0$ to be determined. Suppose first that $\epsilon$ is such that $\frac{L}{\epsilon}\rightarrow 0$, so that $I$ remains in the bulk of $\Sigma_V$. We will estimate on $I$ using Lemmas \ref{CLT Energy Estimate} and \ref{Rescaled Transport Estimates}, and Theorem \ref{Local Law}; the decay off of $I$ we will estimate with a simple $L^1$ bound. On $I$, we have that off of a bad event in $\mathcal{G}_{\mathsf{M}}$ of probability $\leq C_1e^{-\frac{C_2\beta LN}{\epsilon}}$
\begin{align*}
\left|\int_I t\psi'(x)\left(\sum_{i}\delta_{x_i}-N\muv\right)(x)\right| &\lesssim \frac{|s|}{\beta N}\|\psi''\|_{L^\infty(I)}\left(|I|+N^{-3/4}|I|^{1/4}\|\nabla u_{\rrc}\|_{L^2\left(\tilde{NI}\right)}\right) \\
&+\frac{|s|}{\beta N}\left(\sqrt{\mathfrak{h}}\|\psi''\|_{L^2\left(I\right)}+\sqrt{\frac{1}{\mathfrak{h}}}\|\psi'\|_{L^2\left(I\right)}\right)\|\nabla u_{\rrc}\|_{L^2\left(\tilde{NI}\right)}\\
&\lesssim \frac{|s|}{\beta}\left(\frac{1}{LN}\frac{1}{\epsilon}+\frac{1}{(LN)^{5/4}}\frac{1}{\epsilon^{3/4}}+\frac{1}{\sqrt{LN}}\frac{1}{\epsilon}\right)\|\theta\|_{C^4},
\end{align*}
with $\mathfrak{h}=L$. Choosing $\epsilon=\frac{1}{(LN)^{1/4}}$, we find 
\begin{align*}
&\left|\int_I t\psi'(x)\left(\sum_{i}\delta_{x_i}-N\muv\right)(x)\right|\\
&\lesssim \frac{|s|}{\beta}\left(\frac{1}{LN}(LN)^{1/4}+\frac{1}{(LN)^{5/4}}(LN)^{5/16}+\frac{1}{\sqrt{LN}}(LN)^{1/4}\right) \|\theta\|_{C^4} \\
&=\frac{|s|}{\beta (LN)^{1/4}}\|\theta\|_{C^4}
\end{align*}
On $I^c$, we use a simple $L^1$ bound with Lemma \ref{Rough L^1 Bound}:
\begin{align*}
\left|\int_{I^c} t\psi'(x)\left(\sum_{i}\delta_{x_i}-N\muv\right)(x)\right|&\lesssim |t|N\|\theta\|_{C^3} \int_{I^c} L\left(\frac{1}{|x-z|}+\frac{1}{|x-z|^2}\right)~dx \\
&\lesssim\frac{|s|}{\beta}\|\theta\|_{C^3}\epsilon=\frac{|s|}{\beta}\|\theta\|_{C^3}\frac{1}{(LN)^{1/4}}
\end{align*}
since both $L$ and $\epsilon$ tend to zero. We bounded the log using that $L/\epsilon$ can be made less than $1$, and hence $\epsilon/L$ greater than one. This tells us that on the good event $\mathcal{G}'_N$, 
\begin{equation*}
\left|\int t\psi'(x)\left(\sum_{i}\delta_{x_i}-N\muv\right)(x)\right|=\frac{|s|}{\beta}\|\theta\|_{C^4}\frac{1}{(LN)^{1/4}}
\end{equation*}
which is $o_N(1)$, as desired.

It is of course possible at some mesoscales that with our above choice of $\epsilon$, $\frac{L}{\epsilon}$ does not tend to zero. This happens precisely when $(LN)^{5/4}\geq N$, or $L \geq N^{-1/5}$. In that case, simply choose the boundary of $I$ to have radius $\max \left(\frac{L}{\epsilon},\alpha \right)$ where $[z-\alpha,z+\alpha]$ is contained in the bulk $\B$, on which Theorem \ref{Local Law} holds and repeat the above argument. In that case, since $I$ is of order $1$, $\PNbeta(\mathcal{G}_{\mathsf{M}} \setminus \mathcal{G}'_N)\leq C_1e^{-C_2\beta N}$.

It remains to show that the error term also vanishes asymptotically. This can again be done with a simple $L^1$ bound. Once more recalling that $J=\supp(\xi)$ lives at scale $L$, we find with Lemmas \ref{Rescaled Transport Estimates} and \ref{Rough L^1 Bound}
\begin{align*}
\left|\int -\frac{1}{2(1+\beta(x))^2}t^2\psi'(x)^2\left(\sum_{i}\delta_{x_i}-N\muv\right)(x)\right|&\lesssim t^2N \int_J |\psi'(x)|^2~dx+t^2N \int_{J^c}|\psi'(x)|^2~dx \\
&\lesssim\frac{s^2}{\beta^2(LN)}\|\theta\|_{C^3}^2
\end{align*}
Hence, 
\begin{equation}\label{vanishing log phi}
\left|\int \log \phi_t'(x)~d\fluct_N(x)\right|\lesssim\frac{|s|}{\beta}\|\theta\|_{C^4}\left(L+\frac{1}{(LN)^{1/4}}\right)+\frac{s^2}{\beta^2(LN)}\|\theta\|_{C^3}^2.
\end{equation}

\subsection*{$\Fluct_N(N\tau_t)$ control}
We can run the same argument for $\tau_t$ on the same event $\mathcal{G}'_N$ defined in the previous subsection, showing that
\begin{equation}\label{vanishing tau}
N\int \tau_t (x) \left(\sum_{i=1}^N \delta_{x_i}-N\muv\right)(x)= \frac{s^2}{\beta^2}o_N(1)
\end{equation}
uniformly. We estimate $\tau_t$ at scale $L/\epsilon \rightarrow 0$ with $\epsilon=\frac{1}{(LN)^{1/4}}$ using Lemmas \ref{CLT Energy Estimate}, \ref{Energy Difference Estimate}, \ref{Energy Difference Derivative Estimate} and the control of Theorem \ref{Local Law}. Setting $I:=\supp \left(\xi_{z,\frac{2L}{\epsilon}}\right)$, we find on $\mathcal{G}_N$ that
\begin{align*}
\left|N\int_I \tau_t  \left(\sum_{i=1}^N \delta_{x_i}-N\muv\right)(x)\right|&\lesssim N\|\tau_t'\|_{L^\infty \left(I\right)}\left(\frac{L}{\epsilon}+N^{-3/4}\left(\frac{L}{\epsilon}\right)^{1/4}\|\nabla u_{\rrc}\|_{L^2\left(\tilde{NI}\right)}\right) \\
&+N\left(\sqrt{\mathfrak{h}}\|\tau_t'\|_{L^2\left(I\right)}+\sqrt{\frac{1}{\mathfrak{h}}}\|\tau_t\|_{L^2\left(I\right)}\right)\|\nabla u_{\rrc}\|_{L^2\left(\tilde{NI}\right)} \\
&\lesssim\frac{s^2}{\beta^2}\left(\frac{1}{(NL)^{3/4}}+\frac{1}{(NL)^{17/16}}+\frac{1}{(NL)^{1/4}}\right)\|\theta\|_{C^4}^2 \\
&\lesssim\frac{s^2}{\beta^2(LN)^{1/4}}\|\theta\|_{C^4}^2
\end{align*}
choosing $\mathfrak{h}=L$. 
Outside of $I$, we use a simple $L^1$ bound and Lemmas \ref{Energy Difference Estimate} and \ref{Rough L^1 Bound} to find 
\begin{align*}
\left|N\int_{I^c} \tau_t (x) \left(\sum_{i=1}^N \delta_{x_i}-N\muv\right)(x)\right| \lesssim N^2t^2 L\|\theta\|_{C^2}^2\int_{I^c}\frac{1}{|x|^2}~dx\lesssim \frac{s^2}{\beta^2}\frac{\|\theta\|_{C^4}^2}{(LN)^{1/4}}.
\end{align*}
This establishes (\ref{vanishing Error3}) on $\mathcal{G}'_N$. It is again of course possible at some mesoscales that with our above choice of $\epsilon$, $\frac{L}{\epsilon}$ does not tend to zero. In that case, simply choose the boundary of $I$ again to be $\max \left(\frac{L}{\epsilon},\alpha \right)$ where $[-\alpha,\alpha]$ denotes the bulk on which Theorem \ref{Local Law} holds and repeat the above argument, with the same modifications to $\mathcal{G}'_N$ as in the control on $\log \phi_t'$.
\end{proof}

\subsection{$\textsf{Error}_2$ Estimate}
Finally, we upgrade the estimate of $\textsf{Error}_2$. Much like in Section \ref{Uniform Bound on Fluctuations}, we need a stronger energy control than just Lemma \ref{CLT Energy Estimate}. We use a Fourier transform trick inspired by Appendix A of \cite{BLS18}. The proof uses the following Corollary to Theorem \ref{Uniform Bound on Fluctuations}:
\begin{coro}\label{pseudo-Gaussian moments}
Suppose that $\theta\in C^4$ is a compactly supported test function, and let $\xi_{z,L}$ denote the associated rescaled test function at scale $L>\frac{\omega}{N}$ with $z \in \B$. Let $\mathcal{G}_N$ be the good event of Theorem \ref{Uniform Bound on Fluctuations}. Then, for all $k \geq 1$ there is some constant $C>0$ dependent only on $k$ such that
\begin{align*}
\Esp \left|\Fluct_N(\xi)\right|^k\indic_{\mathcal{G}_N} &\lesssim \left(\|\theta\|_{C^3}\left(1+\left|1-\frac{1}{\beta}\right|\right)\right)^k+\left(\frac{\|\xi\|_{H^{1/2}}^2+\max(\|\theta\|_{C^1}, \|\theta\|_{C^3}^2)}{\beta}\right)^{\frac{k}{2}} \\
&+\left(\frac{\max(\|\theta\|_{C^2}^2, \|\theta\|_{C^3}^3)}{LN\beta^2}\right)^{\frac{k}{3}}.
\end{align*}
\end{coro}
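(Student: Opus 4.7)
This is the standard passage from a Bernstein-type Laplace transform bound to a polynomial moment bound. Introduce the shorthand
\begin{align*}
A &:= \|\theta\|_{C^3}\bigl(1+\bigl|1-\tfrac{1}{\beta}\bigr|\bigr), \\
B &:= \beta^{-1}\bigl(\|\xi\|_{H^{1/2}}^2 + \max(\|\theta\|_{C^1}, \|\theta\|_{C^3}^2)\bigr), \\
C &:= (LN\beta^2)^{-1}\max(\|\theta\|_{C^2}^2, \|\theta\|_{C^3}^3),
\end{align*}
and write $Y := |\Fluct_N(\xi)|\indic_{\mathcal{G}_N}$. Theorem \ref{Uniform Bound on Fluctuations}, applied once to $\xi$ and once to $-\xi$ (equivalently, with $s$ and $-s$), gives the two-sided estimate
\[
\Esp\bigl[e^{\pm s \Fluct_N(\xi)}\indic_{\mathcal{G}_N}\bigr]\leq \exp\bigl(A|s| + Bs^2 + C|s|^3\bigr)
\]
for every $|s| \leq s_0$, where the admissibility threshold $s_0 \sim \beta NL/\|\theta\|_{C^3}$ comes from the standing constraint $\|t\psi'\|_{L^\infty}<\tfrac{1}{2}$ (with $t=-s/(\beta N)$) combined with the derivative estimate $\|\psi'\|_{L^\infty} \lesssim \|\theta\|_{C^3}/L$ from Lemma \ref{Rescaled Transport Estimates}.

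\textbf{Main step (Bernstein tail and integration).} Chernoff's inequality then yields, for $t > A$,
\[
\P(Y \geq t)\leq 2\inf_{0<s\leq s_0}\exp\bigl(-(t-A)s + Bs^2 + Cs^3\bigr).
\]
Setting $u := t - A$ and choosing $s = u/(4B)$ whenever $u \leq B^{3/2}/\sqrt{C}$ makes $Cs^3 \leq Bs^2$ and produces the sub-Gaussian tail $\P(Y \geq t) \lesssim \exp(-u^2/(8B))$; in the complementary range the choice $s = \sqrt{u/(6C)}$ makes $Bs^2 \leq Cs^3$ and yields the sub-exponential tail $\P(Y \geq t) \lesssim \exp(-cu^{3/2}/\sqrt{C})$. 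For $t \leq 2A$ use the trivial bound $\P(Y \geq t)\leq 2$. Substituting into $\Esp Y^k = k\int_0^\infty t^{k-1}\P(Y \geq t)\,dt$ and splitting over the three regions $[0,2A]$, $[2A, A + B^{3/2}/\sqrt{C}]$ and $[A + B^{3/2}/\sqrt{C},\infty)$, a change of variables $t = A + u\sqrt{B}$ in the middle region and $t = A + C^{1/3}v$ in the tail reduces each sub-integral to a standard Gaussian or Weibull-type moment. The three contributions are bounded respectively by $C_k A^k$, $C_k(A^k + B^{k/2})$, and $C_k(A^k + C^{k/3})$; adding them gives $\Esp Y^k \lesssim_k A^k + B^{k/2} + C^{k/3}$, which is precisely the claim.

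\textbf{Expected obstacle.} The only subtlety is verifying that the optimizers above satisfy $s \leq s_0$. In the sub-Gaussian range one has $s \sim u/B$ and in the sub-exponential range $s \sim \sqrt{u/C}$; both remain well below $s_0 \sim \beta N L/\|\theta\|_{C^3}$ so long as $LN$ is large, which is ensured by the hypothesis $L > \omega/N$ together with the scales of $A,B,C$. For the extreme tail where the unconstrained optimizer would exceed $s_0$, one applies Chernoff at the boundary $s = s_0$; the resulting super-exponential decay is integrable against $t^{k-1}$ for any fixed $k$ and its contribution is absorbed into the three main terms. This verification is routine bookkeeping rather than a genuine difficulty, so the argument closes cleanly.
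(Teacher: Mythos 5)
Your proposal matches the paper's argument almost verbatim: the paper's Appendix D proves the same abstract lemma — that a two-sided Bernstein bound $\Esp[e^{sX}\indic_{\mathcal{G}}]\leq \exp(a|s|+bs^2+c|s|^3)$ implies $\Esp[|X|^k\indic_{\mathcal{G}}]\lesssim_k a^k+b^{k/2}+c^{k/3}$ — via the layer-cake formula, a Chernoff bound, and the same three-regime split (trivial bound for $t\lesssim a$, sub-Gaussian optimizer $s=(t-a)/(4b)$ for moderate $t$, sub-Weibull optimizer $s=\sqrt{(t-a)/(4c)}$ for large $t$), followed by the same substitutions. The corollary is then deduced by identifying $a,b,c$ with your $A,B,C$.

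The one place you go beyond the paper is the remark on the admissibility threshold $s_0\sim\beta N L/\|\theta\|_{C^3}$, coming from the constraint $\|t\psi'\|_{L^\infty}<\tfrac12$ in Proposition \ref{Expansion of Laplace Transform}. The paper's Appendix D simply assumes the Laplace bound holds ``for all $s$'' and does not verify that the Chernoff optimizers stay in the admissible window, nor does it treat the far tail separately. Your observation is accurate: in the sub-Gaussian range the optimizer is $s\sim u/B\lesssim\sqrt{B/C}\sim\sqrt{\beta LN}\ll s_0$, and in the unconstrained sub-exponential range the optimizer eventually escapes the window, at which point one should pin $s=s_0$ and use the resulting exponential tail, whose contribution to the moment integral is negligible. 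This is a genuine (if routine) gap in the paper's write-up that your version closes; it does not change the route of proof, only makes it airtight.
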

We have the following proposition.
\begin{prop}\label{vanishing Error2 bound}
Suppose that $\theta\in C^{13}$ is a compactly supported test function, and let $\xi_{z,L}$ denote the associated rescaled test function at scale $L>\frac{\omega}{N}$ with $z \in \B$. Then, for every $N\geq N_c$ there is a good event $\mathcal{G}_N' \subset \mathcal{G}_N$ such that 
\begin{equation*}
 \left|\textsf{Error}_2\right|\lesssim\frac{|s|}{(LN)^{1/4}} \|\theta\|_{C^{13}}+ \frac{s^2}{LN}\|\theta\|_{C^5}^2+\frac{|s|^3}{LN}\|\theta\|_{C^3}^3
 \end{equation*}
on $\mathcal{G}_N'$ and 
\begin{equation*}
\PNbeta(\mathcal{G}_N\setminus \mathcal{G}_N') \lesssim\frac{1}{(LN)^{1/4}}.
\end{equation*}
\end{prop}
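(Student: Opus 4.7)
The plan follows the outline sketched in \S\ref{Outline of the Proof}, starting from the Taylor expansion
\begin{equation*}
-\log\left|\tfrac{\phi_t(x)-\phi_t(y)}{x-y}\right| = -t\tfrac{\psi(x)-\psi(y)}{x-y} + \tfrac{t^2}{2(1+\beta(x,y))^2}\left(\tfrac{\psi(x)-\psi(y)}{x-y}\right)^2,
\end{equation*}
valid with $|\beta(x,y)|<1/2$ for $N\geq N_c$, exactly as in Proposition \ref{easy Error2 bound - rescaled test function}. The quadratic-in-$t$ remainder is treated by combining the uniform $L^2$ estimate (\ref{uniform square integral bound}) with Lemma \ref{Rough L^1 Bound}, now sharpened via Lemma \ref{CLT Energy Estimate} and the local law Theorem \ref{Local Law} down to mesoscales to replace the naive $N$ by $LN$; this is what produces the $\tfrac{s^2}{LN}\|\theta\|_{C^5}^2 + \tfrac{|s|^3}{LN}\|\theta\|_{C^3}^3$ contributions in the stated bound. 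All the remaining work is to control the linear-in-$t$ term by $\tfrac{|s|}{(LN)^{1/4}}\|\theta\|_{C^{13}}$ on a suitable good event.

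To that end I would use the Fourier representation of \cite{BLS18}: since $\frac{\psi(x)-\psi(y)}{x-y}=\int_0^1\psi'(sx+(1-s)y)\,ds$, Fourier inversion gives
\begin{equation*}
\int\!\!\int \tfrac{\psi(x)-\psi(y)}{x-y}\,d\fluct_N^{\otimes 2}(x,y)= c\int_0^1\!\int_{\R}\widehat{\psi'}(\lambda)\,\Fluct_N(e^{i\lambda s\,\cdot})\,\Fluct_N(e^{i\lambda(1-s)\cdot})\,d\lambda\,ds.
\end{equation*}
Integration by parts together with Lemma \ref{Rescaled Transport Estimates} — using both its interior bound $|\psi^{(j)}|\lesssim L^{1-j}\|\theta\|_{C^{j+2}}$ on $\supp(\xi)$ and its power-decay tail — yields $\|\psi^{(k+1)}\|_{L^1(\R)}\lesssim L^{-k}\|\theta\|_{C^{k+3}}$ for every $1\le k \le 10$, and hence $|\widehat{\psi'}(\lambda)| \lesssim \|\theta\|_{C^{k+3}}(L|\lambda|)^{-k}$ for each such $k$.

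For the fluctuation factors, $e^{i\lambda s\cdot}$ is not compactly supported, so I would first restrict to the event of Lemma \ref{compact point configurations} and multiply by a fixed smooth cutoff $\chi$ equal to $1$ on a neighborhood of $\Sigma_V$ containing all particles. Then $\chi(\cdot)\cos(\lambda s\cdot)$ and $\chi(\cdot)\sin(\lambda s\cdot)$ are macroscopic compactly supported $C^3$ test functions with $\|\chi(\cdot)e^{i\lambda s\cdot}\|_{C^j}\lesssim (1+|\lambda|)^j$, so Theorem \ref{Uniform Bound on Fluctuations} applied in its macroscopic regime (where $N_c=1$) together with a Chernoff bound yields $|\Fluct_N(\chi(\cdot)e^{i\lambda s\cdot})|\lesssim_\beta (1+|\lambda|)^{3/2}$ on an event whose complement in $\mathcal{G}_N$ has probability at most $e^{-c(1+|\lambda|)^3}$. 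Intersecting over a sufficiently dense dyadic grid of frequencies $\lambda\in[-\Lambda_*,\Lambda_*]$ with $\Lambda_*\sim(LN)^A$ for some large $A$ defines the event $\mathcal{G}_N'\subset \mathcal{G}_N$, with $\PNbeta(\mathcal{G}_N\setminus \mathcal{G}_N')\lesssim (LN)^{-1/4}$ after suitable tuning; the tail $|\lambda|>\Lambda_*$ is then handled deterministically by $|\widehat{\psi'}(\lambda)|\lesssim \|\theta\|_{C^{13}}(L|\lambda|)^{-10}$.

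Assembling the three ingredients, on $\mathcal{G}_N'$ the linear-in-$t$ term is bounded by
\begin{equation*}
|t|\int_\R \|\theta\|_{C^{13}}(1+L|\lambda|)^{-10}(1+|\lambda|)^{3}\,d\lambda,
\end{equation*}
which, after optimizing the truncation $\Lambda_*$ against the $(LN)^{-1/4}$ probability threshold, yields the desired $\tfrac{|s|}{(LN)^{1/4}}\|\theta\|_{C^{13}}$ bound. The main obstacle is the delicate balance the exponent $13$ reflects: the Fourier decay of $\widehat{\psi'}$ depends on \emph{many} derivatives of $\theta$, since $\psi$ is a nonlocal transform of $\theta$ with global tails through (\ref{definition of transport}), while the pointwise growth of $|\Fluct_N(\chi(\cdot)e^{i\lambda\cdot})|$ is super-cubic in $\lambda$; one must simultaneously ensure that a dense family of such pointwise bounds holds uniformly on $\mathcal{G}_N'$, keep the union-bound cost below $(LN)^{-1/4}$, and leave enough integrability in the Fourier tail to produce the claimed $(LN)^{-1/4}$ gain.
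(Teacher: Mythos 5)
There are several genuine gaps in your plan, the most serious of which is that the Fourier trick applied \emph{globally} to $\psi$ with a fixed macroscopic cutoff does not close at small mesoscales. Your deterministic ingredient is $|\widehat{\psi'}(\lambda)|\lesssim \|\theta\|_{C^{k+3}}(L|\lambda|)^{-k}$, which is only effective once $|\lambda|\gg L^{-1}$; for $|\lambda|\lesssim L^{-1}$ you only have $|\widehat{\psi'}(\lambda)|\lesssim\|\theta\|_{C^3}$. Even granting the (optimistic) pointwise bound $|\Fluct_N(\chi e^{i\lambda s\cdot})|\lesssim(1+|\lambda|)^{3/2}$, the resulting integral
$|t|\int_\R|\widehat{\psi'}(\lambda)|(1+|\lambda|)^3\,d\lambda$
is dominated by $|\lambda|\lesssim L^{-1}$ and gives roughly $|t|L^{-4}\sim|s|/(\beta NL^4)$, which for small $L$ is much larger than the target $|s|(LN)^{-1/4}$ (it only works for $L\gtrsim N^{-1/5}$). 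The paper avoids this by decomposing $\psi=\sum_i\psi\zeta_i$ via a dyadic partition of unity and applying the Fourier trick at each scale $2^iL$ with a cutoff $\chi$ whose support also scales like $2^iL$; the functions $e^{i\lambda\cdot}\chi$ then become \emph{rescaled} test functions covered by Theorem~\ref{Uniform Bound on Fluctuations} with sharper, scale-adapted $H^{1/2}$ and $C^j$ norms, and $\|\mathcal{F}((\psi\zeta_i)^{(m)})\|_{L^1}$ decays in $i$. The per-scale bounds sum to $O(|s|/\sqrt{LN})$ before the Markov step.

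Two further problems. First, your probability count is off: with macroscopic $\|\chi e^{i\lambda\cdot}\|_{C^j}\sim(1+|\lambda|)^j$ the $s^2$ term in the Laplace-transform bound grows like $(1+|\lambda|)^6$, so a Chernoff bound at level $M$ gives tail $\exp(-cM^2/(1+|\lambda|)^6)$; to reach $e^{-c(1+|\lambda|)^3}$ you need $M\gtrsim(1+|\lambda|)^{9/2}$, not $(1+|\lambda|)^{3/2}$, which pushes the fluctuation product to $(1+|\lambda|)^9$ and makes the integral worse. (The paper works in expectation via Corollary~\ref{pseudo-Gaussian moments} and Cauchy--Schwarz, then applies Markov once at the end; this avoids union bounds over a frequency grid.) Second, the expansion should go to third order in $t$: the quadratic error term does not vanish uniformly via a naive $L^1$ bound — the paper says so explicitly — and requires the same dyadic decomposition plus $g(x)=\int\frac{1-\chi(y)}{(x-y)^2}\,d\fluct_N(y)$ type estimates proved through Lemma~\ref{CLT Energy Estimate} and Theorem~\ref{Local Law}. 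Your plan asserts the $\frac{s^2}{LN}\|\theta\|_{C^5}^2+\frac{|s|^3}{LN}\|\theta\|_{C^3}^3$ error comes from one ``quadratic-in-$t$ remainder,'' but a single remainder can't yield both $s^2$ and $|s|^3$ contributions; these come from the second- and third-order terms respectively. Finally, the cross-term analysis $\chi(x)(1-\chi(y))$, which is where the $|s|/\sqrt{LN}$ rate actually originates, is absent from your outline.
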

\begin{proof}
Recall that $|\supp(\xi)|\sim L$; for computational ease throughout the proof then, we redefine $L=|\supp(\xi)|$. First, a Taylor expansion yields
\begin{align*}
&-\log \left|\frac{\phi_t(x)-\phi_t(y)}{x-y}\right|\\
&=-t\frac{\psi(x)-\psi(y)}{x-y}+\frac{t^2}{2}\left(\frac{\psi(x)-\psi(y)}{x-y}\right)^2-\frac{t^3}{3(1+\beta(x,y))^3}\left(\frac{\psi(x)-\psi(y)}{x-y}\right)^3
\end{align*}
with $|\beta(x,y)|<\left|t\frac{\psi(x)-\psi(y)}{x-y}\right|\leq |t\psi'|<\frac{1}{2}$. We need to expand to third order since as we saw in Proposition \ref{easy Error2 bound - rescaled test function}, the second order term does not clearly vanish uniformly. However, the integral of the third order term can be approximated uniformly, using Lemmas \ref{Rescaled Transport Estimates}, \ref{Rough L^1 Bound} and \ref{uniform square integral bound}:
\begin{align*}
&\int -\frac{t^3}{3(1+\beta(x,y))^3}\left(\frac{\psi(x)-\psi(y)}{x-y}\right)^3~d\fluct_N(x)d\fluct_N(y) \\
&\lesssim \frac{|s|^3}{N^3}\|\psi'\|_{L^\infty}N^2 \int \left(\frac{\psi(x)-\psi(y)}{x-y}\right)^2~dxdy\lesssim \frac{|s|^3}{LN}\|\theta\|_{C^3}^3.
\end{align*}
We need to upgrade both our analysis of the first and second order terms. In order to do this, we will need to make use of the decay of $\psi$. Without loss of generality, assume that $z=0$; the argument is the same for other $z \in \B$, but more notationally tricky to write out. Let us recall the partition of unity $\{\zeta_k\}_{k=1}^{k_*+1}$ introduced in the proofs of Propositions \ref{easy Error2 bound - local laws} and \ref{easy Error2 bound - rescaled test function}. Let $\alpha$ be a fixed positive number such that $[-\alpha ,\alpha ]$ is contained in the bulk of $\Sigma_V$, and choose $k_*$ such that $\frac{2^{k_*}LN}{N}\leq \alpha<\frac{2^{k_*+1}LN}{N}$. We introduce a partition of unity $\{\zeta_k\}_{k=1}^{k_*+1}$ satisfying 
\begin{align*}
\supp(\zeta_k)& \subset \square_{2^kL}\setminus \square_{2^{k-1}L} \\
|\psi \zeta_k|_{C^m\left( \square_{2^kL}\setminus \square_{2^{k-1}L}\right)}&\lesssim |\psi \zeta_k|_{C^m\left( \square_{2^kL}\setminus \square_{2^{k-1}L}\right)}
\end{align*}
for $2\leq k \leq k_*$. We suppose the same holds for $\zeta_1$ with $\square_{2L}\setminus \square_{L}$ replaced by $\square_{2L}$, and for $\zeta_{k_*+1}$ on the rest of $U$.

Let's start with the anisotropy term. First, we write 
\begin{equation*}
\frac{\psi(x)-\psi(y)}{x-y}=\sum_{i=1}^{k_*+1} \frac{\psi \zeta_i(x)-\psi \zeta_i(y)}{x-y},
\end{equation*} 
For notational ease, we just write $\psi \zeta_i$ as $\psi_i$ going forward. Introducing a cutoff function $\chi(x)$ supported on $I_i \pm 2^{i}L$ and identically one on $I_i \pm 2^{i-1}L$ with $C^k$ norm controlled by $\frac{1}{(2^iL)^k}$ (for instance, by rescaling a compactly supported bump), we write with symmetry that
\begin{align*}
\int -t \frac{\psi_i(x)-\psi_i(y)}{x-y}~d\fluct_N(y)d\fluct_N(x)&=\int -t \frac{\psi_i(x)-\psi_i(y)}{x-y}\chi(x)\chi(y)~d\fluct_N(y)d\fluct_N(x) \\
&+2\int -t \frac{\psi_i(x)-\psi_i(y)}{x-y}(1-\chi(y))\chi(y)~d\fluct_N(y)d\fluct_N(x),
\end{align*}
where we dropped the term with $(1-\chi(x))(1-\chi(y))$ because both $\psi_i(x)$ and $\psi_i(y)$ vanish outside of $I_i$.

\subsection*{Anisotropy Control $1 \leq i \leq k_*$: Main Term}
For $\psi \zeta_i$ with $1 \leq i \leq k_*$, we will make use of the Fourier transform trick introduced in \cite[Appendix A]{BLS18}. Notice by Fourier inversion that 
\begin{align*}
\frac{\psi_i(x)-\psi_i(y)}{x-y}=\int_0^1 \psi_i'(sx+(1-s)y)~ds&
=c\int _0^1 \int_{\R}\mathcal{F}(\psi_i')(\lambda)e^{i\lambda sx}e^{i\lambda(1-s)y}~d\lambda ds
\end{align*}
and so we can control $\int -t \frac{\psi_i(x)-\psi_i(y)}{x-y}\chi(x)\chi(y)~d\fluct_N(y)d\fluct_N(x)$ using Fubini by 
\begin{align*}
&\int -t\frac{\psi_i(x)-\psi_i(y)}{x-y}\chi(x)\chi(y)~d\fluct_N(y)d\fluct_N(x) \\
&=-tc\int \int_0^1 \lambda \mathcal{F}(\psi_i)(\lambda)\left[\Fluct_N(e^{is\lambda \cdot}\chi(\cdot))\Fluct_N(e^{i(1-s)\lambda \cdot}\chi(\cdot))\right]~dsd\lambda.
\end{align*}
It follows that on the good event $\mathcal{G}_N$ of Theorem \ref{Uniform Bound on Fluctuations}
\begin{align*}
&\Esp_{\PNbeta}\left[\left|\left[-t\int \int\frac{\psi_i(x)-\psi_i(y)}{x-y}\chi(x)\chi(y)~d\fluct_N(x)d\fluct_N(y)\right]\right|\indic_{\mathcal{G}_N}\right] \\
& \lesssim \frac{|s|}{N}\int \int_0^1 |\lambda \mathcal{F}(\psi_i)(\lambda)|\Esp_{\PNbeta}\left|\left[\Fluct_N(e^{is\lambda \cdot}\chi(\cdot))\Fluct_N(e^{i(1-s)\lambda \cdot}\chi(\cdot))\indic_{\mathcal{G}_N}\right]\right|~dsd\lambda.
\end{align*}
Via Cauchy-Schwarz, we control 
\begin{align*}
&\Esp_{\PNbeta}|\left|\left[\Fluct_N(e^{is\lambda \cdot}\chi(\cdot))\Fluct_N(e^{i(1-s)\lambda \cdot}\chi(\cdot))\indic_{\mathcal{G}_N}\right]\right| \\
& \leq \sqrt{\Esp_{\PNbeta}\left[\left(\Fluct_N(e^{is\lambda \cdot}\chi(\cdot))\right)^2\indic_{\mathcal{G}_N}\right]}\sqrt{\Esp_{\PNbeta}\left[\left(\Fluct_N(e^{i(1-s)\lambda \cdot}\chi(\cdot))\right)^2\indic_{\mathcal{G}_N}\right]}.
\end{align*}
We now observe that since $\chi$ is a rescaled version of a compactly supported test function with support contained in the bulk, $e^{is\lambda \cdot}\chi(\cdot)$ is a test function that falls under the assumptions of Theorem \ref{Uniform Bound on Fluctuations}. Now, Theorem \ref{Uniform Bound on Fluctuations} tells us that on $\mathcal{G}_N$, these fluctuations are morally sub-Gaussian, and so by Corollary \ref{pseudo-Gaussian moments}
\begin{align*}
\Esp \left|\Fluct_N(\xi)\right|^k\indic_{\mathcal{G}_N} &\lesssim \left(\|\theta\|_{C^3}\left(1+\left|1-\frac{1}{\beta}\right|\right)\right)^k+\left(\frac{\|\xi\|_{H^{1/2}}^2+\max(\|\theta\|_{C^1}, \|\theta\|_{C^3}^2)}{\beta}\right)^{\frac{k}{2}} \\
&+\left(\frac{\max(\|\theta\|_{C^2}^2, \|\theta\|_{C^3}^3)}{LN\beta^2}\right)^{\frac{k}{3}}.
\end{align*}
We will prove in Appendix C that
\begin{equation}\label{first anisotropy function}
\left|e^{im \cdot}\chi(\cdot)\right|_{H^{1/2}}^2\leq C\sqrt{1+m^2(2^iL)^2}.
\end{equation}
Furthermore, $e^{im \cdot}\chi(\cdot)$ is a rescaled version of $\vartheta=e^{i2^iLm \cdot}\varphi(\cdot)$ where $\varphi$ is the test function defining $\chi$, independent of $\lambda$. Thus, $\|\vartheta\|_{C^k}\lesssim 2^{ki}L^km^k$. Inserting these into the above estimate and absorbing the $\beta$ dependence in the constant $\beta$ regime, we find 
\begin{equation*}
\Esp \left|\Fluct_N(e^{im \cdot}\chi(\cdot))\right|^2\indic_{\mathcal{G}_N} \lesssim_\beta 1+2^{6i}L^6m^6
\end{equation*}
and so
\begin{align*}
&\Esp_{\PNbeta}|\left|\left[\Fluct_N(e^{is\lambda \cdot}\chi(\cdot))\Fluct_N(e^{i(1-s)\lambda \cdot}\chi(\cdot))\indic_{\mathcal{G}_N}\right]\right| \\
& \lesssim_\beta \sqrt{(1+\sqrt{|s|^6\lambda^6(2^iL)^6})(1+\sqrt{|(1-s)|^6\lambda^6(2^iL)^6})} \lesssim_\beta  1+|\lambda|^62^{6i}L^6.
\end{align*}
It follows that 
\begin{align*}
&\Esp_{\PNbeta}\left[\left|\left[-t\int \int\frac{\psi_i(x)-\psi_i(y)}{x-y}\chi(x)\chi(y)~d\fluct_N(x)d\fluct_N(y)\right]\right|\indic_{\mathcal{G}_N}\right] \\
& \lesssim_\beta \frac{|s|}{N}\int \int_0^1 |\lambda \mathcal{F}(\psi_i)(\lambda)|\Esp_{\PNbeta}\left|\left[\Fluct_N(e^{is\lambda \cdot}\chi(\cdot))\Fluct_N(e^{i(1-s)\lambda \cdot}\chi(\cdot))\indic_{\mathcal{G}_N}\right]\right|~dsd\lambda \\
&\lesssim_\beta \frac{|s|}{N}\int |\lambda \mathcal{F}(\psi_i)(\lambda)|~d\lambda+\frac{|s|2^{6i}L^6}{N}\int \lambda^{7 }| \mathcal{F}(\psi_i)(\lambda)|~d\lambda \\
&=\frac{|s|}{N}\|\mathcal{F}(\psi_i')\|_{L^1}+\frac{|s|2^{6i}L^6}{N}\|\mathcal{F}(\psi_i^{(7)})\|_{L^1}.
\end{align*}
It remains to control the Fourier transforms of the derivatives of $\psi_i$ in $L^1$, which we can approach via the decay estimates of Lemma \ref{Rescaled Transport Estimates}. For $\psi_i'$, we use a simple $\|\mathcal{F}(\psi_i')\|_{L^\infty}\leq \|\psi_i'\|_{L^1}\lesssim \|\theta\|_{C^3}$ for $|\lambda|\leq \frac{1}{2^iL}$, and then use the decay estimate $|\mathcal{F}(\psi_i')(\lambda)|\leq \frac{\|\psi_i'''\|_{L^1}}{|\lambda|^2}\lesssim \frac{1}{(2^iL)^2|\lambda|^2}\|\theta\|_{C^5}$:
\begin{align*}
\int |\mathcal{F}(\psi_i')(\lambda)|~d\lambda\lesssim \int_{|\lambda|\leq \frac{1}{2^iL}}\|\theta\|_{C^3}+\int_{|\lambda|>\frac{1}{2^iL}}\frac{\|\theta\|_{C^5}}{(2^iL)^2|\lambda|^2}~d\lambda\lesssim \frac{1}{2^iL}\|\theta\|_{C^5}.
\end{align*}
For $\psi_i^{(7)}$, we similarly bound $\|\mathcal{F}(\psi_i^{(7)})\|_{L^\infty}\leq \|\psi_i^{(7)}\|_{L^1}\lesssim \frac{L}{(2^iL)^{7}}\|\theta\|_{C^{9}}$ for $|\lambda|\leq \frac{1}{(2^iL)}$, and then use the decay estimate $|\mathcal{F}(\psi_i^{(7)})(\lambda)|\leq \frac{\|\psi_i^{(9)}\|_{L^1}}{|\lambda|^{2}}\lesssim \frac{L}{(2^iL)^{9}|\lambda|^{2}}\|\theta\|_{C^{11}}$:
\begin{align*}
\int |\mathcal{F}(\psi_i^{(7)})(\lambda)|~d\lambda\lesssim \int_{|\lambda|\leq \frac{1}{(2^iL)}}\frac{L}{(2^iL)^{7}}\|\theta\|_{C^{9}}~d\lambda+\int_{|\lambda|>\frac{1}{(2^iL)}}\frac{L}{(2^iL)^{9}|\lambda|^{2}}\|\theta\|_{C^{11}}~d\lambda\lesssim \frac{L}{(2^iL)^{8}}\|\theta\|_{C^{11}}.
\end{align*}
Hence, 
\begin{equation}\label{anisotropy main terms}
\mathbb{E}_{\PNbeta}\left|-t\int \int\frac{\psi_i(x)-\psi_i(y)}{x-y}\chi(x)\chi(y)~d\fluct_N(x)d\fluct_N(y)\indic_{\mathcal{G}_N}\right|\lesssim_\beta \frac{|s|}{2^{i}LN}\|\theta\|_{C^{11}}.
\end{equation}
\subsection*{Anisotropy Control $1 \leq i \leq k_*$: Cross Terms}
We turn our attention the cross terms. A similar argument will work, but we will have to be a bit more careful in our estimates. Observe that $\psi_i(y)$ vanishes on the support of $1-\chi(y)$, so we can actually write the cross term as 
\begin{equation*}
-2t \int \frac{\psi_i(x)}{x-y}\chi(x)(1-\chi(y))~d\fluct_N(y)d\fluct_N(x),
\end{equation*}
which we can then write using Fourier inversion as 
\begin{equation*}
ct\int \frac{\chi(x)(1-\chi(y)}{x-y}\int \mathcal{F}(\psi_i)(\lambda)~d\lambda \fluct_N(y)d\fluct_N(x)=ct\int \mathcal{F}(\psi_i)(\lambda)\Fluct_N \left[e^{ix \lambda}\chi(x)g(x)\right]~d\lambda
\end{equation*}
for some constant $c$, where we have defined $g(x):=\int \frac{1-\chi(y)}{x-y}~d\fluct_N(y)$. Thus, we have immediately that on $\mathcal{G}_N$ \begin{align*}
&\Esp_{\PNbeta}\left[\left|\left[-t\int \int\frac{\psi_i(x)-\psi_i(y)}{x-y}\chi(x)(1-\chi(y))~d\fluct_N(x)d\fluct_N(y)\right]\right|\indic_{\mathcal{G}_N}\right] \\
& \lesssim \frac{|s|}{N}\int  | \mathcal{F}(\psi_i)(\lambda)|\Esp_{\PNbeta}\left|\left[\Fluct_N(e^{i\lambda \cdot}\chi(\cdot)g(\cdot)\indic_{\mathcal{G}_N}\right]\right|~d\lambda.
\end{align*}
Notice that by choice of $\chi$, $\xi(x)=e^{i\lambda x}\chi(x)g(x)$ is a rescaled version of a compactly supported test function that falls under the assumptions of Theorem \ref{Uniform Bound on Fluctuations}. It is a rescaled version of $\vartheta=e^{i2^iLm \cdot}\varphi(\cdot)h(\cdot)$ where $\varphi$ is the test function defining $\chi$, independent of $\lambda$, and one can show from the computations in the proof of (\ref{second anisotropy function}) in Appendix C that $\|h\|_{C^k}\lesssim \sqrt{\frac{N}{2^iL}}$. Thus, $\|\vartheta\|_{C^k}\lesssim \sqrt{\frac{N}{2^iL}}\left(1+ 2^{ki}L^km^k\right)$. 
Now, we claim that 
\begin{equation}\label{second anisotropy function}
\|\xi\|_{H^{1/2}}^2 \leq CN \sqrt{\lambda^2+\frac{1}{(2^iL)^2}}.
\end{equation}
We prove this in Appendix C. Applying Corollary \ref{pseudo-Gaussian moments} and absorbing the $\beta$ dependence in the constant $\beta$ regime then, we find 
\begin{equation*}
\Esp \left|\Fluct_N(e^{im \cdot}\chi(\cdot)g(\cdot))\right|\indic_{\mathcal{G}_N} \lesssim_\beta \sqrt{\frac{N}{2^iL}}\left(1+2^{3i}L^3m^3\right).
\end{equation*}
Inserting this estimate into our previous expansion, we find 
\begin{align*}
&\Esp_{\PNbeta}\left[\left|\left[-2t\int \int\frac{\psi_i(x)-\psi_i(y)}{x-y}\chi(x)(1-\chi(y))~d\fluct_N(x)d\fluct_N(y)\right]\right|\indic_{\mathcal{G}_N}\right] \\
 &\lesssim_\beta \frac{|s|}{N}\int  | \mathcal{F}(\psi_i)(\lambda)|\sqrt{\frac{N}{2^iL}}\left(1+2^{3i}L^3\lambda^3\right)~d\lambda \\
 &\lesssim_\beta \frac{|s|}{\sqrt{2^iLN}}\int |\mathcal{F}(\psi_i)(\lambda)|~d\lambda+\frac{|s|(2^iL)^3}{\sqrt{2^iLN}}\int  |\mathcal{F}(\psi_i^{(3)})(\lambda)|~d\lambda.
 \end{align*}
Estimating as before using Lemma \ref{Rescaled Transport Estimates}, $\|\mathcal{F}(\psi_i)\|_{L^1}\lesssim \frac{L}{2^iL}\|\theta\|_{C^4}$ and $\|\mathcal{F}(\psi_i^{(3)})\|_{L^1} \lesssim \frac{L}{(2^iL)^4}\|\theta\|_{C^7}$, so we conclude that 
 \begin{equation}\label{anisotropy cross terms}
 \Esp_{\PNbeta}\left[\left|\left[-t\int \int\frac{\psi_i(x)-\psi_i(y)}{x-y}\chi(x)(1-\chi(y))~d\fluct_N(x)d\fluct_N(y)\right]\right|\indic_{\mathcal{G}_N}\right] \lesssim_\beta \frac{|s|}{\sqrt{2^iLN}}\|\theta\|_{C^7}.
 \end{equation}

 \subsection*{Anisotropy Control $i=k_*+1$ and Conclusion}
 For $i =k_*+1$, we again observe 
 \begin{align*}
&\Esp_{\PNbeta}\left[\left|\left[-t\int \int\frac{\psi_i(x)-\psi_i(y)}{x-y}\chi(x)\chi(y)~d\fluct_N(x)d\fluct_N(y)\right]\right|\indic_{\mathcal{G}_N}\right] \\
& \lesssim \frac{|s|}{N}\int \int_0^1 |\lambda \mathcal{F}(\psi_i)(\lambda) |\sqrt{\Esp_{\PNbeta}\left[\left(\Fluct_N(e^{is\lambda \cdot}\chi(\cdot))\right)^2\right]}\sqrt{\Esp_{\PNbeta}\left[\left(\Fluct_N(e^{i(1-s)\lambda \cdot}\chi(\cdot))\right)^2\right]}~dsd\lambda
\end{align*}
but now cannot estimate using Theorem \ref{Uniform Bound on Fluctuations} since $e^{im \cdot}$ is not supported in the bulk. Instead we appeal to the quantitative macroscopic CLT as in \cite[(5.12)]{BLS18} and control (using a rough bound on their error terms)
\begin{equation*}
\Esp_{\PNbeta}\exp \left(s\Fluct_N(\xi)\right)\lesssim s\|\xi\|_{C^2}^2+s^2\|\xi\|_{C^4}^2+s^3\|\xi\|_{C^2}^3
\end{equation*}
with $\xi(\cdot)=e^{im \cdot}$, and thus control via (\ref{moment bound})
\begin{equation*}
\Esp_{\PNbeta}\Fluct_N(\xi)^2\lesssim \|\xi\|_{C^2}^4+\|\xi\|_{C^4}^2+\|\xi\|_{C^2}^2 \lesssim 1+m^8.
\end{equation*}
As a result, we control
 \begin{align*}
&\Esp_{\PNbeta}\left[\left|\left[-t\int \int\frac{\psi_i(x)-\psi_i(y)}{x-y}\chi(x)\chi(y)~d\fluct_N(x)d\fluct_N(y)\right]\right|\indic_{\mathcal{G}_N}\right] \\
& \lesssim \frac{|s|}{N} \int |\lambda \mathcal{F}(\psi_i)(\lambda) |(1+|\lambda|^8)~d\lambda \\
&\lesssim \frac{|s|}{N}\left(\|\mathcal{F}(\psi_i')\|_{L^1}+\|\mathcal{F}(\psi_i^{(9)})\|_{L^1}\right).
\end{align*}
Estimating as before using Lemma \ref{Rescaled Transport Estimates}, we bound $\|\mathcal{F}(\psi_i')\|_{L^1} \lesssim L\|\theta\|_{C^5}$ and $\|\mathcal{F}(\psi_i^{(9)})\|_{L^1} \lesssim L\|\theta\|_{C^{13}}$ for $i=k_*+1$. Hence, 
\begin{equation}\label{anisotropy last term}
\Esp_{\PNbeta}\left[\left|\left[-t\int \int\frac{\psi_{k_*+1}(x)-\psi_{k_*+1}(y)}{x-y}\chi(x)\chi(y)~d\fluct_N(x)d\fluct_N(y)\right]\right|\indic_{\mathcal{G}_N}\right] \lesssim \frac{|s|L}{N}\|\theta\|_{C^{13}}.
\end{equation}
Summing over $i$ and using (\ref{anisotropy main terms}), (\ref{anisotropy cross terms}) and (\ref{anisotropy last term}), we obtain 
 \begin{align*}
  &\Esp_{\PNbeta}\left|\left[-t\sum_{i=0}^{k_*+1}\int \int\frac{\psi\zeta_i(x)-\psi\zeta_i(y)}{x-y}~d\fluct_N(x)d\fluct_N(y)\right]\right|\indic_{\mathcal{G}_N} \\
  &\lesssim_\beta \left(\sum_{i=0}^{k^*}\frac{|s|}{(2^{i})LN}+\frac{|s|}{\sqrt{2^{i}LN}}+|s|L\right)\|\theta\|_{C^{13}}
  \lesssim_\beta \frac{|s|\|\theta\|_{C^{13}}}{\sqrt{LN}}
\end{align*}
uniformly in $N$. Via a Markov inequality, we have then outside of a new set $\mathcal{G}_N'$ with $\PNbeta(\mathcal{G}_N\setminus \mathcal{G}'_N) \lesssim\frac{1}{(LN)^{1/4}}$ independent of $s$ that 
 \begin{equation*}
 \left|-t\int \int\frac{\psi(x)-\psi(y)}{x-y}~d\fluct_N(x)d\fluct_N(y)\right| \lesssim_\beta  |s|\|\theta\|_{C^{11}}\frac{1}{(LN)^{1/4}}.
 \end{equation*}
 
\subsection*{Control of Second Order Error Term}
We turn now to the second order error term, which can be dealt with via a clever application of Lemma \ref{CLT Energy Estimate} exactly as we will in the proof of (\ref{second anisotropy function}) in Appendix C. Let $\{\zeta_i\}$ be the same partition of unity as above. Then, to analyze 
\begin{equation*}
\int \frac{t^2}{2}\left(\frac{\psi \zeta_i(x)-\psi\zeta_i(y)}{x-y}\right)^2~d\fluct_N(y)d\fluct_N(x)
\end{equation*}
for $1 \leq i \leq k_*$ we introduce the same cutoff function $\chi$ supported on $I_i \pm 2^{i}L$ and identically one on $I_i \pm 2^{i-1}L$ with $C^k$ norm controlled by $\frac{1}{(2^iL)^k}$ (for instance, by rescaling a compactly supported bump). We write with symmetry that
\begin{align*}
&\int \frac{t^2}{2} \left(\frac{\psi_i(x)-\psi_i(y)}{x-y}\right)^2~d\fluct_N(y)d\fluct_N(x) \\
&=\int \frac{t^2}{2}\left( \frac{\psi_i(x)-\psi_i(y)}{x-y}\right)^2\chi(x)\chi(y)~d\fluct_N(y)d\fluct_N(x) \\
&+\int t^2\left( \frac{\psi_i(x)-\psi_i(y)}{x-y}\right)^2\chi(x)(1-\chi(y))~d\fluct_N(y)d\fluct_N(x),
\end{align*}
where we have replaced $\psi \zeta_i$ with $\psi_i$ for notational ease, as before. Let's start with 
\begin{equation*}
\int \frac{t^2}{2} \left(\frac{\psi_i(x)-\psi_i(y)}{x-y}\right)^2\chi(x)\chi(y)~d\fluct_N(y)d\fluct_N(x).
\end{equation*}
We write $h(x)=\int  \left(\frac{\psi_i(x)-\psi_i(y)}{x-y}\right)^2\chi(y)~d\fluct_N(y)$, which we can bound on $\mathcal{G}_N$ using Lemma \ref{CLT Energy Estimate}, Lemma \ref{Rescaled Transport Estimates} and Theorem \ref{Local Law} by
\begin{align*}
&|h(x)|=\left|\int  \left(\frac{\psi_i(x)-\psi_i(y)}{x-y}\right)^2\chi(y)~d\fluct_N(y)\right|  \\
&\lesssim \left(\frac{1}{(2^iL)^3}\left(2^iL+N^{-3/4}(2^iL)^{1/4}\sqrt{2^iLN}\right)+\left[\sqrt{\mathfrak{h}}\frac{1}{(2^iL)^{5/2}}+\sqrt{\frac{1}{\mathfrak{h}}}\frac{1}{(2^iL)^{3/2}}\right]\sqrt{2^iLN} \right)\|\theta\|_{C^4}^2\\
&\lesssim\left( \frac{1}{(2^iL)^2}+\frac{1}{(2^iL)^{9/4}N^{1/4}}+\frac{\sqrt{N}}{(2^iL)^{3/2}}\right)\|\theta\|_{C^4}^2 \lesssim \frac{\sqrt{N}}{(2^iL)^{3/2}}\|\theta\|_{C^4}^2.
\end{align*}
We have used mean value estimates and $x \in I_i \pm 2^{i}L$ freely, and chose $\mathfrak{h}=2^iL$. The last line follows from $\frac{1}{2^iLN}\lesssim 1$, which allows us to identify $\frac{\sqrt{N}}{(2^iL)^{3/2}}$ as the dominant term. 

Similarly, observe that $h'(x)=\int  2\left(\frac{\psi_i(x)-\psi_i(y)}{x-y}\right)\left(\frac{\psi_i(x)-\psi_i(y)-\psi_i'(y)(x-y)}{(x-y)^2}\right)\chi(y)~d\fluct_N(y)$ and thus we obtain again via Lemma \ref{CLT Energy Estimate}, Lemma \ref{Rescaled Transport Estimates} and Theorem \ref{Local Law} that on the event $\mathcal{G}_N$
\begin{align*}
|h'(x)|&\lesssim \left(\frac{1}{(2^iL)^4}\left(2^iL+N^{-3/4}(2^iL)^{1/4}\sqrt{2^iLN}\right)+\left[\sqrt{\mathfrak{h}}\frac{1}{(2^iL)^{7/2}}+\sqrt{\frac{1}{\mathfrak{h}}}\frac{1}{(2^iL)^{5/2}}\right]\sqrt{2^iLN}\right)\|\theta\|_{C^5}^2 \\
&\lesssim \left(\frac{1}{(2^iL)^3}+\frac{1}{(2^iL)^{13/4}N^{1/4}}+\frac{\sqrt{N}}{(2^iL)^{5/2}} \right)\|\theta\|_{C^5}^2\lesssim \frac{\sqrt{N}}{(2^iL)^{5/2}}\|\theta\|_{C^5}^2
\end{align*}
where we have used the energy control that we have on the good event $\mathcal{G}_N$ that we have restricted ourselves to in this proposition.
We have used mean value estimates and $x \in I_i \pm 2^{i}L$ freely, and chose $\mathfrak{h}=2^iL$. The last line follows from $\frac{1}{2^iLN}\lesssim 1$, which allows us to again identify $\frac{\sqrt{N}}{(2^iL)^{5/2}}$ as the dominant term. Then, we have immediately from Lemma \ref{CLT Energy Estimate} that 
\begin{align*}
&\left|\int \frac{t^2}{2} \left(\frac{\psi_i(x)-\psi_i(y)}{x-y}\right)^2\chi(x)\chi(y)~d\fluct_N(y)d\fluct_N(x)\right| \\
&\lesssim \frac{s^2}{N^2} \| (h\chi)'(x)\|_{L^\infty(I_i)}\left(|I_i|+N^{-3/4}|I_i|^{1/4}\|\nabla u_{\rrc}\|_{L^2(\tilde{NI_i} )}\right) \\
&+\frac{s^2}{N^2}\left[\sqrt{\mathfrak{h}}\|(h\chi)'(x)\|_{L^2(I_i)}+\sqrt{\frac{1}{\mathfrak{h}}}\| h\chi(x)\|_{L^2(I_i)}\right]\|\nabla u_{\rrc}\|_{L^2(\tilde{NI_i} )} \\
&\lesssim \left(\frac{s^2}{N^2}\frac{\sqrt{N}}{(2^iL)^{5/2}}\left(2^iL+N^{-3/4}(2^iL)^{1/4}\sqrt{2^iLN}\right) +\frac{s^2}{N^2}\left[\sqrt{\mathfrak{h}}\frac{\sqrt{N}}{(2^iL)^2}+\sqrt{\frac{1}{\mathfrak{h}}}\frac{\sqrt{N}}{2^iL}\right]\sqrt{2^iLN}\right)\|\theta\|_{C^5}^2 \\
&\lesssim\left( \frac{s^2}{(2^iLN)^{3/2}}+\frac{s^2}{(2^iLN)^{7/4}}+\frac{s^2}{2^iLN}\right)\|\theta\|_{C^5}^2\lesssim \frac{s^2}{2^iLN}\|\theta\|_{C^5}^2.
\end{align*}
Next, we look at the cross terms
\begin{align*}
\int t^2\left( \frac{\psi_i(x)-\psi_i(y)}{x-y}\right)^2\chi(x)(1-\chi(y))~d\fluct_N(y)d\fluct_N(x)=\int t^2 \psi_i(x)^2 g(x)\chi(x)~d\fluct_N(x),
\end{align*}
with $g(x)=\int \frac{1-\chi(y)}{(x-y)^2}~d\fluct_N(y)$. The equality holds since $\psi_i(y)$ vanishes on the support of $1-\chi(y)$ by construction. Dealing with $g$ again exactly as was done in (\ref{second anisotropy function}), we estimate 
\begin{equation*}
|g(x)|\lesssim \frac{\sqrt{N}}{(2^iL)^{3/2}} \hspace{3mm} \text{ and }\hspace{3mm} |g'(x)|\lesssim \frac{\sqrt{N}}{(2^iL)^{5/2}}.
\end{equation*}
With these estimates in tow, we can control
\begin{align*}
&\left|\int t^2 \psi_i(x)^2 g(x)\chi(x)~d\fluct_N(x)\right|\\
&\lesssim \frac{s^2}{N^2} \| (\psi_i^2g\chi)'(x)\|_{L^\infty(I_i)}\left(|I_i|+N^{-3/4}|I_i|^{1/4}\|\nabla u_{\rrc}\|_{L^2(\tilde{NI_i} )}\right) \\
&+\frac{s^2}{N^2}\left[\sqrt{\mathfrak{h}}\|(\psi_i^2g\chi)'(x)\|_{L^2(I_i)}+\sqrt{\frac{1}{\mathfrak{h}}}\| \psi_i^2g\chi(x)\|_{L^2(I_i)}\right]\|\nabla u_{\rrc}\|_{L^2(\tilde{NI_i} )} \\
&\lesssim \left(\frac{s^2}{N^2}\frac{L^2\sqrt{N}}{(2^iL)^{9/2}}(2^iL+N^{-3/4}(2^iL)^{1/4}\sqrt{2^iLN})+\frac{s^2}{N^2}\left[\sqrt{\mathfrak{h}}\frac{L^2\sqrt{N}}{(2^iL)^4}+\sqrt{\frac{1}{\mathfrak{h}}}\frac{L^2\sqrt{N}}{(2^iL)^3}\right]\sqrt{2^iLN}\right)\|\theta\|_{C^3}^2 \\
&\lesssim \left(\frac{s^2}{2^{2i}(2^iLN)^{3/2}}+ \frac{s^2}{2^{2i}(2^iLN)^{7/4}}+ \frac{s^2}{2^{2i}(2^iLN)}\right)\|\theta\|_{C^3}^2\lesssim  \frac{s^2}{2^{2i}(2^iLN)}\|\theta\|_{C^3}^2
\end{align*}
choosing $\mathfrak{h}=2^iL$. Summing over $i$, we find 
\begin{align*}
\left|\sum_{i=1}^{k_*}\int \frac{t^2}{2}\left(\frac{\psi \zeta_i(x)-\psi\zeta_i(y)}{x-y}\right)^2~d\fluct_N(y)d\fluct_N(x)\right| \lesssim \frac{s^2}{LN}\|\theta\|_{C^5}^2.
\end{align*}
Finally, for $i=k_*+1$, since $I_{k_*+1}$ lives at order $1$ we refrain from introducing a cutoff $\chi$ and content ourselves with macroscopic estimates. The same procedure then works as was used to control
\begin{equation*}
\int \frac{t^2}{2} \left(\frac{\psi_i(x)-\psi_i(y)}{x-y}\right)^2\chi(x)\chi(y)~d\fluct_N(y)d\fluct_N(x)
\end{equation*}
for $i \leq k_*$, and we find 
\begin{equation*}
\int \frac{t^2}{2} \left(\frac{\psi_i(x)-\psi_i(y)}{x-y}\right)^2\chi(x)\chi(y)~d\fluct_N(y)d\fluct_N(x) \lesssim \frac{s^2\|\theta\|_{C^5}^2}{N}
\end{equation*}
for $i=k_*+1$. Hence
\begin{align*}
\left|\sum_{i=1}^{k_*+1}\int \frac{t^2}{2}\left(\frac{\psi \zeta_i(x)-\psi\zeta_i(y)}{x-y}\right)^2~d\fluct_N(y)d\fluct_N(x)\right| \lesssim \frac{s^2}{LN}\|\theta\|_{C^5}^2,
\end{align*}
concluding the proof.
 \end{proof}
 \subsection{Conclusion}
 Coupling Propositions \ref{vanishing Error1 bound}, \ref{vanishing Error3 bound} and \ref{vanishing Error2 bound} and intersecting the $\mathcal{G}_N'$ of Propositions \ref{vanishing Error3 bound} and \ref{vanishing Error2 bound}, we find that on our new good event $\mathcal{G}_N'\subset \mathcal{G}_N$ (sacrificing some optimality in the error terms for simplicity)
\begin{align*}
\log \mathbb{E}_{\mathbb{P}_{N,\beta}}[\exp(s\Fluct_N(\xi))\indic_{\mathcal{G}_N'}]&=\frac{s^2}{\beta}\|\xi\|_{H^{1/2}}^2 + s\left(1-\frac{1}{\beta}\right)\int\psi'(x)~d\muv(x) \\ &+O_\beta\left(\frac{|s|}{(LN)^{1/4}}\|\theta\|_{C^{13}}+\frac{s^2}{(LN)^{1/4}}\|\theta\|_{C^5}^2+\frac{|s|^3}{LN}\|\theta\|_{C^3}^3  \right).
\end{align*}
This concludes the proof of Theorem \ref{Gaussian asymptotics}.


\renewcommand\thesection{\Alph{section}}
\setcounter{section}{1}
\setcounter{prop}{0}
\setcounter{equation}{0}
\section*{Appendix A: Auxiliary Computations for Local Laws}\label{Appendix Energy Estimates}
The following is an adaptation of the fluctuation bound of \cite[Lemma B.5]{AS21}. 
\begin{lem}\label{Local Laws Energy Estimate}
Let $\zeta$ be a sufficiently smooth test function, and $\Omega$ a bounded region containing a $1$-neighborhood of the support of $\zeta$. Let $\tilde{\Omega}$ be as in (\ref{extension set}) and let $u$ solve
\begin{equation*}
-\Delta u=2\pi\left(\sum_{i=1}^N \delta_{x_i}-\mu\delta_\R\right)
\end{equation*}
in $\tilde{\Omega}$. Let $\rrc$ be as in (\ref{local minimal distance}).Then,
\begin{align*}
\left|\int \zeta~d\left(\sum_{i=1}^N \delta_{x_i}-\mu \right)\right| \leq& \|\zeta'\|_{L^\infty(\Omega)}(|\Omega|+|\Omega|^{1/4}\|\nabla u_{\rrc}\|_{L^2(\tilde{\Omega})}) \\
&+C\left(\sqrt{\mathfrak{h}}\|\zeta'\|_{L^2(\Omega)}+\frac{1}{\sqrt{\mathfrak{h}}}\|\zeta\|_{L^2(\Omega)}\right)\|\nabla u_{\rrc}\|_{L^2(\tilde{\Omega})},
\end{align*}
for arbitrary $L-1>\mathfrak{h}>0$.
\end{lem}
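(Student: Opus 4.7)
The plan is to run the standard electrostatic-extension argument: embed $\R$ into $\R^2$, pair a two-dimensional extension of $\zeta$ against the smeared potential $u_{\rrc}$, and integrate by parts. Specifically, I would introduce $\tilde\zeta(x,y):=\zeta(x)\chi(y)$, where $\chi$ is a smooth cutoff with $\chi\equiv 1$ on $[-\tfrac14,\tfrac14]$, $\chi$ supported in $[-(\tfrac14+\mathfrak{h}),\tfrac14+\mathfrak{h}]$, and $|\chi'|\lesssim 1/\mathfrak{h}$. The assumption $\mathfrak{h}<L-1$ together with the $1$-neighborhood hypothesis on $\Omega$ places $\supp\tilde\zeta$ inside $\tilde\Omega$, so integration by parts generates no boundary contribution. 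Combining this with $-\Delta u_{\rrc}=2\pi(\sum_i \delta^{(\rrc_i)}_{(x_i,0)}-\mu\delta_\R)$ yields the key identity
\begin{equation*}
\int\zeta\,d\left(\sum_{i=1}^N\delta_{x_i}-\mu\right)=\frac{1}{2\pi}\int_{\tilde\Omega}\nabla\tilde\zeta\cdot\nabla u_{\rrc}+\sum_i\Bigl[\zeta(x_i)-\mathrm{avg}_{\partial B_{\rrc_i}(x_i,0)}\tilde\zeta\Bigr].
\end{equation*}
Because $\rrc_i\leq\tfrac14$ by (\ref{local minimal distance}), each sphere $\partial B_{\rrc_i}(x_i,0)$ is contained in the region $\{|y|\leq\tfrac14\}$ where $\chi\equiv 1$. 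Thus the spherical averages collapse to $\tfrac{1}{2\pi}\int_0^{2\pi}\zeta(x_i+\rrc_i\cos\theta)\,d\theta$ and each correction is bounded by $\rrc_i\|\zeta'\|_{L^\infty(\Omega)}$ with no dependence on $\|\zeta\|_{L^\infty}$.

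For the gradient term I would apply Cauchy--Schwarz together with the direct tensor-product computation
\begin{equation*}
\|\nabla\tilde\zeta\|_{L^2(\tilde\Omega)}^2\;\lesssim\;\mathfrak{h}\|\zeta'\|_{L^2(\Omega)}^2+\frac{1}{\mathfrak{h}}\|\zeta\|_{L^2(\Omega)}^2,
\end{equation*}
which produces the second grouping of terms in the stated bound. For the smearing corrections, since only points in $\Omega$ contribute,
\begin{equation*}
\sum_i\bigl|\zeta(x_i)-\mathrm{avg}_{\partial B_{\rrc_i}(x_i,0)}\tilde\zeta\bigr|\;\leq\;\tfrac{1}{4}\|\zeta'\|_{L^\infty(\Omega)}\,\#(\XN\cap\Omega),
\end{equation*}
so it remains to estimate the point count. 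Writing $\#(\XN\cap\Omega)\leq\mu(\Omega)+|D_\Omega|\lesssim|\Omega|+|D_\Omega|$ and applying the discrepancy bound of Lemma~\ref{discrepancy estimate}, $D_\Omega^2\min\bigl(1,\sqrt{|D_\Omega|/|\Omega|}\bigr)\lesssim\|\nabla u_{\rrc}\|_{L^2(\tilde\Omega)}^2$, a short case split (using $|\Omega|\gtrsim 1$) delivers $|D_\Omega|\lesssim|\Omega|+|\Omega|^{1/4}\|\nabla u_{\rrc}\|_{L^2(\tilde\Omega)}$, which is the first grouping of terms.

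The hardest part is the arithmetic bookkeeping in the discrepancy-to-Young step: in the regime $|D_\Omega|<|\Omega|$ the discrepancy estimate gives $|D_\Omega|\lesssim|\Omega|^{1/5}\|\nabla u_{\rrc}\|^{4/5}$, and extracting precisely the exponent $|\Omega|^{1/4}$ requires noting that either $\|\nabla u_{\rrc}\|\leq|\Omega|^{-1/4}$ (so $|D_\Omega|\lesssim 1\leq|\Omega|$) or $\|\nabla u_{\rrc}\|>|\Omega|^{-1/4}$ (so $|\Omega|^{1/5}\|\nabla u_{\rrc}\|^{4/5}\leq|\Omega|^{1/4}\|\nabla u_{\rrc}\|$), while the complementary regime $|D_\Omega|\geq|\Omega|$ gives directly $|D_\Omega|\lesssim\|\nabla u_{\rrc}\|\leq|\Omega|^{1/4}\|\nabla u_{\rrc}\|$. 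Combining these with the gradient estimate completes the argument; the mild slack between $\sqrt{1+\mathfrak{h}}$ and $\sqrt{\mathfrak{h}}$ produced by $\int\chi^2$ is harmless in the regime $\mathfrak{h}\gtrsim 1$ used throughout the bootstrap and CLT sections and is absorbed into the implicit constant.
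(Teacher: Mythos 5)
Your overall skeleton matches the paper's: extend $\zeta$ by a cutoff in the vertical variable, split off the smearing corrections from the balls $B(x_i,\rrc_i)$, bound the bulk term by Cauchy--Schwarz against $\nabla u_{\rrc}$, and control the smearing sum by a point count. The cutoff construction, the tensor-product gradient estimate $\|\nabla\tilde\zeta\|_{L^2}^2\lesssim \mathfrak{h}\|\zeta'\|_{L^2}^2+\mathfrak{h}^{-1}\|\zeta\|_{L^2}^2$, and the observation that the smearing corrections cost at most $\rrc_i\|\zeta'\|_\infty$ per point are all essentially what the paper does.

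The gap is in your handling of the point count. You invoke Lemma~\ref{discrepancy estimate} in the form $D_\Omega^2\min(1,\sqrt{|D_\Omega|/|\Omega|})\lesssim\|\nabla u_{\rrc}\|_{L^2(\tilde\Omega)}^2$, but that lemma does not control the energy over $\tilde\Omega$: its conclusion bounds the discrepancy $D_\Omega$ of an interval $\Omega=B_L(a)$ by the energy on the larger set $\tilde B_{2L}(a)$, because the proof uses a cutoff equal to $1$ on a $\tfrac12$-neighborhood of $\Omega$ which then decays over an additional band of width $\alpha$, and $\alpha$ is allowed to be as large as $(L-1)/2$. Applying it to $\Omega$ (or to any interval containing $\supp\zeta+[-1/4,1/4]$) thus yields $\|\nabla u_{\rrc}\|_{L^2}$ over a region that can stick out of $\Omega$ horizontally by order $L$ and out of $[-L,L]$ vertically, so the resulting estimate is strictly weaker than the statement you are trying to prove; it cannot be promoted to a bound with $\|\nabla u_{\rrc}\|_{L^2(\tilde\Omega)}$ just by monotonicity, and in the bootstrap argument the confinement of the energy norm to $\tilde\Omega$ is essential (that is the set on which one has inductive control). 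Your subsequent exponent gymnastics (the $|\Omega|^{1/5}\|\nabla u_{\rrc}\|^{4/5}$-to-$|\Omega|^{1/4}\|\nabla u_{\rrc}\|$ case split) is correct arithmetic, but it is applied to a hypothesis you do not actually have.

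What the paper does instead is introduce a second, dedicated test function $\xi$ that is identically $1$ on a $\tfrac12$-neighborhood of $\supp\zeta$ and vanishes outside a $1$-neighborhood of $\supp\zeta$ --- hence is supported strictly inside $\Omega$ by hypothesis --- then extends it vertically by a cutoff of height $\mathfrak{h}=\sqrt{|\Omega|}\,(<L)$, computes $\|\nabla\tilde\xi\|_{L^2}^2\lesssim\sqrt{|\Omega|}$, and integrates by parts to read off directly that the number of relevant balls is $\lesssim|\Omega|+|\Omega|^{1/4}\|\nabla u_{\rrc}\|_{L^2(\tilde\Omega)}$; the $|\Omega|^{1/4}$ exponent comes out of the choice of $\mathfrak{h}$ rather than from post-hoc manipulation of the discrepancy bound. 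This is, in effect, a one-sided discrepancy estimate with a cutoff tailored to remain inside $\tilde\Omega$. To repair your argument you would need to re-derive such a localized version rather than cite Lemma~\ref{discrepancy estimate} as stated; once you do so you will essentially have reproduced the paper's proof. The remark about the $\sqrt{1+\mathfrak{h}}$ vs.\ $\sqrt{\mathfrak{h}}$ slack is a fair observation about the paper's bookkeeping and does not affect correctness in the regimes used.
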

\begin{proof}
Let $\widetilde{\chi}(x,y):=\zeta(x)\chi(y)$, where $\chi(y)$ is a function that is identically $1$ for $|y|\leq 1$ and vanishes outsides of $[-(\mathfrak{h}+1), \mathfrak{h}+1]$, whose derivative is bounded by $\frac{1}{\mathfrak{h}}$. Then
\begin{equation*}
|\nabla \widetilde{\chi}|\leq \frac{|\zeta(x)|}{\mathfrak{h}}+|\zeta'(x)|.
\end{equation*}
We estimate
\begin{align*}
\left|\int \zeta~d\left(\sum_{i=1}^N \delta_{x_i}-\mu \right)\right|
&\leq \left|\int \widetilde{\chi} ~d\left(\sum \delta_{(x_i,0)}^{(\rrc_i)}-\mu\delta_{\mathbb{R}}\right)\right|+\left|\int \widetilde{\chi} ~d\left(\sum \delta_{(x_i,0)}-\delta_{(x_i,0)}^{(\rrc_i)}\right)\right|
\end{align*}
The first integral we integrate by parts, and bound 
\begin{align*}
 \left|\int \widetilde{\chi} ~d\left(\sum \delta_{(x_i,0)}^{(\rrc_i)}-\mu\delta_{\mathbb{R}}\right)\right|&=\left| \int_{\Omega \times [-(\mathfrak{h}+1),\mathfrak{h}+1]}\nabla \widetilde{\chi} \cdot \nabla u_{\rrc}\right| \\
 &\leq \|\nabla \widetilde{\chi}\|_{L^2(\Omega \times [-(\mathfrak{h}+1),\mathfrak{h}+1])}\|\nabla u_{\rrc}\|_{L^2(\tilde{\Omega})}
\end{align*}
To bound the gradient we have 
\begin{align*}
 \|\nabla \widetilde{\chi}\|_{L^2(\Omega \times [-(\mathfrak{h}+1),\mathfrak{h}+1])}^2=\int_{\Omega}\int_{-(\mathfrak{h}+1)}^{\mathfrak{h}+1} |\nabla \widetilde{\chi}|^2~dydx&=\int_{\Omega}\int_{-(\mathfrak{h}+1)}^{\mathfrak{h}+1} |\zeta'(x)|^2+\frac{|\zeta(x)|^2}{\mathfrak{h}^2}~dy dx \\
 &\leq \mathfrak{h} \|\zeta'\|_{L^2(\Omega)}^2+\frac{1}{\mathfrak{h}}\|\zeta\|_{L^2(\Omega)}^2.
\end{align*}
Thus 
\begin{equation*}
\left|\int \widetilde{\chi} ~d\left(\sum \delta_{(x_i,0)}^{(\rrc_i)}-\mu\delta_{\mathbb{R}}\right)\right|\leq C (\sqrt{\mathfrak{h}} \|\zeta'\|_{L^2(\Omega)}+\frac{1}{\sqrt{h}}\|\zeta\|_{L^2(\Omega)})\|\nabla u_{\rrc}\|_{L^2(\tilde{\Omega})}.
\end{equation*}
We turn now to the second integral, and notice that since $\rrc_i \leq \frac{1}{4}$
\begin{align*}
\left|\int \widetilde{\chi} ~d\left(\sum \delta_{(x_i,0)}-\delta_{(x_i,0)}^{(\rrc_i)}\right)\right|&=\left|\int_{\Omega \times [-1,1]} \zeta(x) ~d\left(\sum \delta_{(x_i,0)}-\delta_{(x_i,0)}^{(\rrc_i)}\right)\right| \\
&\leq \sum_{i:B(x_i, \rrc_i)\cap \Omega \ne \emptyset} \dashint_{B(x_i, \rrc_i)}|\zeta(x_i)-\zeta(x)| \\
&\leq \|\zeta'\|_{L^\infty(\Omega)}\max{\rrc_i}\#I \leq \#I  \|\zeta'\|_{L^\infty(\Omega)}
\end{align*} 
where $\#I$ is the number of balls $B(x_i, \rrc_i)$ intersecting $\Omega$. We aim to bound $\#I$ as in \cite[Lemma B.5]{AS21}. In this vein, let $\xi$ be a test function that is identically $1$ on a $\frac{1}{2}$-neighborhood of the support of $\zeta$, and vanishes outside of a $1$-neighborhood of the support of $\zeta$. Let $\chi$ be as defined above. Then, with $\widetilde{\xi}=\xi(x)\chi(y)$ we obtain
\begin{align*}
\int |\nabla \widetilde{\xi}|^2&\leq \int_{\Omega} \int_{-(\mathfrak{h}+1)}^{\mathfrak{h}+1}\left( \frac{|\xi(x)|^2}{\mathfrak{h}^2}+|\xi'(x)|^2\right)~dydx \\
&=\int_{1+\supp(\zeta)}\int_{-(\mathfrak{h}+1)}^{\mathfrak{h}+1}\frac{|\xi(x)|^2}{\mathfrak{h}^2}~dydx + \int_{(1+\supp(\zeta))\setminus(\frac{1}{2}+\supp(\zeta))}\int_{-(\mathfrak{h}+1)}^{\mathfrak{h}+1}|\xi'(x)|^2~dydx \\
&\leq \int_{1+\supp(\zeta)} \frac{1}{\mathfrak{h}}~dx+  \int_{(1+\supp(\zeta))\setminus(\frac{1}{2}+\supp(\zeta))}\mathfrak{h}|\xi'(x)|^2 \\
&\leq C\mathfrak{h}+C\frac{|\Omega|}{\mathfrak{h}}.
\end{align*}
This is optimized by choosing $\mathfrak{h}=\sqrt{|\Omega|}$ and we obtain 
\begin{equation*}
\int |\nabla \widetilde{\xi}|^2 \leq C\sqrt{|\Omega|}.
\end{equation*}
Bounding as we did with $\widetilde{\chi}$ then, we obtain 
\begin{equation*}
\left|\int \widetilde{\xi} ~d\left(\sum \delta_{(x_i,0)}^{(\rrc_i)}-\mu\delta_{\mathbb{R}}\right)\right|\leq C|\Omega|^{1/4}\|\nabla u_{\rrc}\|_{L^2(\tilde{\Omega})}.
\end{equation*}
Now, by construction,
\begin{equation*}
\left|\int \widetilde{\xi} ~d\left(\sum \delta_{(x_i,0)}^{\rrc_i}-\mu\delta_{\mathbb{R}}\right)\right|=\left|\#I - \int \widetilde{\xi}~d\mu\delta_{\R}\right| \geq \#I-\left|\int_{\Omega}\xi ~d\mu \right| \geq \#I -C|\Omega|,
\end{equation*}
and so 
\begin{equation*}
\#I \leq C|\Omega|+C|\Omega|^{1/4}\|\nabla u_{\rrc}\|_{L^2(\tilde{\Omega})}.
\end{equation*}
We conclude that 
\begin{align*}
\left|\int \zeta~d\left(\sum_{i=1}^N \delta_{x_i}-\mu \right)\right| \leq& \|\zeta'\|_{L^\infty(\Omega)}(|\Omega|+|\Omega|^{1/4}\|\nabla u_{\rrc}\|_{L^2(\tilde{\Omega})}) \\
&+C\left(\sqrt{\mathfrak{h}}\|\zeta'\|_{L^2(\Omega)}+\frac{1}{\sqrt{\mathfrak{h}}}\|\zeta\|_{L^2(\Omega)}\right)\|\nabla u_{\rrc}\|_{L^2(\tilde{\Omega})}.
\end{align*}

\end{proof}

We will also need to make use of a point discrepancy estimate, which is an improved version of \cite[Lemma 2.2]{PS17}. We state it here for ease of reference, and improve the bound slightly by modifying its argument in the spirit of \cite[Lemma B.4]{AS21} and the above Lemma \ref{Local Laws Energy Estimate}.
\begin{lem}\label{discrepancy estimate}
Let $\Omega=B_L(a):=\left[a-\frac{L}{2},a+\frac{L}{2}\right]$ be a bounded interval in $\R$, and set
\begin{equation*}
D_\Omega=\int_{\Omega} \sum_{i=1}^N\delta_{x_i}-\mu.
\end{equation*}
Suppose that $m \leq \mu \leq \Lambda$ in $B_{2L(a)}$ and $L>2$. Then, with $u$ solving 
\begin{equation*}
-\Delta u=2\pi\left(\sum_{i=1}^N \delta_{(x_i,0)}-\mu\delta_\R\right)
\end{equation*}
in $\tilde{B}_{2L}(a)$ (\ref{extension set}) and $\rrc$ as defined in (\ref{local minimal distance}), either $D_\Omega \leq 4\Lambda$ or 
\begin{equation}\label{discrepancy bound}
\int_{\tilde{B}_{2L}(a)} |\nabla u_{\rrc}|^2 \gtrsim  D_\Omega^2 \min \left(1,\sqrt{ \frac{|D_\Omega|}{|\Omega|}}\right).
\end{equation}
\end{lem}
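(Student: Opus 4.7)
The plan is to adapt the Cauchy--Schwarz argument of \cite[Lemma 2.2]{PS17} and \cite[Lemma B.4]{AS21} but to rebalance the geometry of the cutoff in the $x$ and $y$ directions to extract the extra factor $\sqrt{|D_\Omega|/|\Omega|}$ in the small--discrepancy regime. First I would build a product cutoff $\chi(x,y) = \xi(x)\phi(y)$, where $\xi \equiv 1$ on $\Omega$ and is supported in $\Omega + [-t,t]$ with $|\xi'|\lesssim 1/t$, and $\phi \equiv 1$ on $[-\mathfrak{h}/2, \mathfrak{h}/2]$, supported in $[-\mathfrak{h}, \mathfrak{h}]$ with $|\phi'|\lesssim 1/\mathfrak{h}$; with $t \leq L/2$ and $\mathfrak{h} \leq L$ this is supported in $\tilde B_{2L}(a)$. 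Since $\chi$ vanishes on $\partial \tilde B_{2L}(a)$, an integration by parts against $-\Delta u_{\rrc} = 2\pi(\sum_i \delta^{(\rrc_i)}_{(x_i,0)} - \mu\delta_\R)$ yields
\begin{equation*}
\int \chi\, d\!\left(\sum_i \delta^{(\rrc_i)}_{(x_i,0)} - \mu\delta_\R\right) = \frac{1}{2\pi}\int_{\tilde B_{2L}(a)} \nabla \chi \cdot \nabla u_{\rrc}.
\end{equation*}

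Next I would compute $\int |\nabla \chi|^2 \lesssim \mathfrak{h}/t + L/\mathfrak{h}$ and optimize in $\mathfrak{h}$ with $t$ fixed: choosing $\mathfrak{h}=\sqrt{Lt}$ yields $\|\nabla \chi\|_{L^2}^2 \lesssim \sqrt{L/t}$. For the left--hand side, I split the integrand as $D_\Omega + E_1 - E_2$, where $E_2$ is the $\mu$--mass of the transition strip (bounded by $2\Lambda t$), and $E_1$ collects the differences $\chi(x_i,0) - \int \chi\, d\delta^{(\rrc_i)}_{(x_i,0)}$ for those $x_i$ whose smeared delta straddles a boundary of $\{\chi = 1\}$ or of $\supp \chi$. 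Since such $x_i$ must lie within $\rrc_i \leq 1/4$ of the boundary (a region of $O(1)$ length) and the intervals $[x_i - \rrc_i, x_i + \rrc_i]$ are pairwise disjoint by the definition \eqref{local minimal distance}, $\sum \rrc_i = O(1)$ over these points, giving $|E_1| \lesssim 1/\min(t,\mathfrak{h})$ by the Lipschitz bound $\|\nabla \chi\|_{L^\infty} \lesssim 1/\min(t,\mathfrak{h})$. Provided $t,\mathfrak{h} \geq 1$, this yields
\begin{equation*}
\left|\int \chi\, d\!\left(\sum_i \delta^{(\rrc_i)}_{(x_i,0)} - \mu\delta_\R\right)\right| \geq |D_\Omega| - C\Lambda t - C.
\end{equation*}

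Third, under the hypothesis $|D_\Omega| > 4\Lambda$, I would choose $t = |D_\Omega|/(C'\Lambda)$ for a large constant $C'$ so that $C\Lambda t + C \leq |D_\Omega|/2$. If this choice satisfies $t \leq L/2$ (equivalently $|D_\Omega| \leq 2\Lambda L$), Cauchy--Schwarz gives
\begin{equation*}
\tfrac{1}{4}D_\Omega^2 \lesssim \sqrt{L/t}\,\int_{\tilde B_{2L}(a)}|\nabla u_{\rrc}|^2 \lesssim \sqrt{\frac{L\Lambda}{|D_\Omega|}}\int_{\tilde B_{2L}(a)}|\nabla u_{\rrc}|^2,
\end{equation*}
which is the desired lower bound $\int |\nabla u_{\rrc}|^2 \gtrsim D_\Omega^2 \sqrt{|D_\Omega|/L}$. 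In the remaining regime $|D_\Omega| > 2\Lambda L$, I would instead cap $t = L/2$ (so the error $C\Lambda t \leq |D_\Omega|/4$), obtaining $\|\nabla\chi\|_{L^2}^2 \lesssim 1$ and hence $\int |\nabla u_{\rrc}|^2 \gtrsim D_\Omega^2$; since this is the regime $|D_\Omega|/L \gtrsim 1$, the $\min(1,\sqrt{|D_\Omega|/|\Omega|})$ in the statement reduces to $1$ here.

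The main bookkeeping obstacle is controlling the truncation error $E_1$ uniformly in $N$ without invoking a local law (since this lemma feeds \emph{into} the local law proof). This is handled by the observation above that disjointness of the balls $B(x_i,\rrc_i)$ turns $\sum_i \rrc_i$ over any $O(1)$ interval into an $O(1)$ quantity, so no a priori bound on point counts is needed; the hypothesis $L > 2$ together with $|D_\Omega| > 4\Lambda$ ensures $t,\mathfrak{h} \geq 1$ so the $\|\nabla\chi\|_{L^\infty}$ factor is bounded and $E_1$ is $O(1)$.
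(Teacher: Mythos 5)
Your construction and quantitative steps mirror the paper's: the product cutoff $\chi(x,y)=\xi(x)\phi(y)$, the optimization $\mathfrak{h}\sim\sqrt{Lt}$ giving $\|\nabla\chi\|_{L^2}^2\lesssim\sqrt{L/t}$, and the choice $t\sim|D_\Omega|/\Lambda$ capped at $L/2$ are all exactly what the paper does (with its $\alpha$ playing the role of your $t$). The gap is in the decomposition step. When you write $\int \chi\,d\bigl(\sum_i \delta^{(\rrc_i)}_{(x_i,0)}-\mu\delta_\R\bigr)=D_\Omega+E_1-E_2$ with $|E_1|\lesssim 1$ and $|E_2|\lesssim\Lambda t$, you have silently dropped the term $\sum_{x_i\notin\Omega}\chi(x_i,0)$, the $\chi$-weighted count of particles in the transition strip $(\Omega+[-t,t])\setminus\Omega$. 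That quantity is nonnegative but a priori unbounded — indeed you cannot invoke a local law here, as you correctly observe. When $D_\Omega>0$ this is harmless: one simply drops the nonnegative term in a one-sided lower bound $\int\chi\,d(\cdots)\geq D_\Omega-E_1-E_2$. But your claimed two-sided inequality $\bigl|\int\chi\,d(\cdots)\bigr|\geq|D_\Omega|-C\Lambda t-C$ fails when $D_\Omega<0$: a large particle surplus just outside $\Omega$ (the generic cause of a deficit inside) pushes $\int\chi\,d(\cdots)$ up towards zero, and your outward-supported cutoff counts those particles.

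Although the lemma as literally stated only asserts a conclusion when $D_\Omega>4\Lambda$, the paper's proof treats $D_\Omega<0$ as well, and the applications (via $|n-\mn|$ in the proof of Proposition \ref{Main Bootstrap} and $|D_A|$ in Lemma \ref{Rough L^1 Bound}) clearly use the two-sided statement, so you do need the $D_\Omega<0$ case. The fix is to use a \emph{one-sided} cutoff tailored to the sign of $D_\Omega$, as the paper does: for $D_\Omega>0$ take $\xi\equiv 1$ on a $1/2$-neighborhood of $\Omega$ (so every smeared delta $\delta^{(\rrc_i)}_{(x_i,0)}$ with $x_i\in\Omega$ is captured in full, and the error is purely the $\mu$-mass $\leq\Lambda(1+2\alpha)$ of the collar); for $D_\Omega<0$ take $\xi\equiv 1$ on the shrunken interval $B_{L-2\alpha}(a)$ and supported inside $\Omega$ (so one undercounts particles rather than overcounting, and the error is again a $\mu$-mass $\leq 2\Lambda\alpha$). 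This removes both the transition-strip particle term and the smeared-boundary bookkeeping ($E_1$) entirely, since all smeared deltas either are fully captured or fully excluded, which is why the paper does not need your disjointness argument for $\sum\rrc_i$ — though that argument is in fact valid here because, with $\rrc$ defined relative to $B_{2L}(a)$ and $L>2$, every $x_i$ near $\partial\Omega$ is at distance $>1/2$ from $\partial B_{2L}(a)$, so the clause of (\ref{local minimal distance}) that sets $\rrc_i=1/4$ unconditionally never triggers and the balls $B(x_i,\rrc_i)$ are disjoint. With these two one-sided cutoffs in hand, the rest of your argument (the choice of $t$, the $\sqrt{L/t}$ bound, and the final Cauchy--Schwarz) goes through unchanged.
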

\begin{proof}
First, suppose that $D>0$. Let $\xi$ denote the test function which is identically $1$ on a $\frac{1}{2}$-neighborhood of $\Omega$, and vanishes outside of a $(\frac{1}{2}+\alpha)$-neighborhood of $\Omega$ (with $\alpha$ to be determined). Let $\chi$ and $\tilde{\xi}$ be as in the proof of Lemma \ref{Local Laws Energy Estimate}. First, we observe that since $\rrc_i <\frac{1}{2}$
\begin{align*}
D_\Omega\leq \int \xi \sum_{i=1}^N \delta_{x_i}^{(\rrc_i)}-\int_{\Omega}\xi \mu&=\int \tilde{\xi}\left(\sum_{i=1}^N \delta_{(x_i,0)}^{(\rrc_i)}-\mu\delta_\R\right)+\int_{B_{L+1+2\alpha}(a)\setminus B_L(a)} \xi\mu \\
&\leq \frac{1}{2\pi}\int \nabla \tilde{\xi}\cdot \nabla u_{\rrc}+\Lambda(1+2\alpha) \\
&\leq \frac{1}{2\pi}\|\nabla \tilde{\xi}\|_{L^2}\|\nabla u_{\rrc}\|_{L^2(\tilde{B}_{L+1+2\alpha}(a))}+\Lambda(1+2\alpha).
\end{align*}
So long as $D_\Omega > 2\Lambda$, we can choose $\alpha>0$ such that $\Lambda(1+2\alpha) \leq \frac{D_\Omega}{2}$. Setting $0<\alpha \leq \frac{D_\Omega-2\Lambda}{2\Lambda}$ to be determined later and rearranging, we have
\begin{equation*}
D_\Omega \leq \frac{1}{\pi}\|\nabla \tilde{\xi}\|_{L^2}\|\nabla u_{\rrc}\|_{L^2(\tilde{B}_{L+1+2\alpha}(a))}.
\end{equation*}
It remains to estimate $\|\nabla \tilde{\xi}\|_{L^2}$. We find 
\begin{align*}
\int |\nabla \tilde{\xi}|^2&=\int \xi'(x)^2\chi(y)^2~dxdy+\int \xi(x)^2 \chi'(y)^2~dy \\
&\lesssim \frac{\alpha}{\alpha^2}\mathfrak{h}+(L+1+2\alpha)\frac{\mathfrak{h}}{\mathfrak{h}^2} \\
&\lesssim \sqrt{\frac{(L+1+2\alpha)}{\alpha}}
\end{align*}
by optimally setting $\mathfrak{h}=\sqrt{\alpha(L+1+2\alpha)}$. Now, choose
\begin{equation*}
\alpha:=\min \left(\frac{D_\Omega-2\Lambda}{2\Lambda}, \frac{L-1}{2}\right).
\end{equation*}
Then, if $D_\Omega \geq 4\Lambda$,
\begin{equation*}
\int |\nabla \tilde{\xi}|^2\lesssim\sqrt{2+\frac{L+1}{\min \left(\frac{D_\Omega-2\Lambda}{2\Lambda}, \frac{L-1}{2}\right)}}\lesssim \sqrt{\max \left(1, \frac{L}{|D_\Omega|}\right)}.
\end{equation*}
Rearranging yields (\ref{discrepancy bound}).

The proof for $D<0$ is analogous. Let $\xi$ denote the test function which is identically $1$ on $B_{L-2\alpha}(a)$ and vanishes outside of $\Omega$ (with $\alpha \leq \frac{|\Omega|}{4}$ to be determined). Let $\chi$ and $\tilde{\xi}$ be as in the proof of Lemma \ref{Local Laws Energy Estimate}. First, we observe that  
\begin{align*}
D_\Omega\geq \int \xi \sum_{i=1}^N \delta_{x_i}^{(\rrc_i)}-\int_{\Omega} \mu&=\int \tilde{\xi}\left(\sum_{i=1}^N \delta_{(x_i,0)}^{(\rrc_i)}-\mu\delta_\R\right)-\int_{B_{L}(a)\setminus B_{L-2\alpha}(a)} (1-\xi)\mu \\
&\geq \frac{1}{2\pi}\int \nabla \tilde{\xi}\cdot \nabla u_{\rrc}-2\Lambda \alpha \\
&\geq \frac{-1}{2\pi}\|\nabla \tilde{\xi}\|_{L^2}\|\nabla u_{\rrc}\|_{L^2(\tilde{B}_{L}(a))}-2\Lambda \alpha.
\end{align*}
Setting $0<\alpha<\frac{-D_\Omega}{4\Lambda}$ to be determined later, we find 
\begin{equation*}
D_\Omega \geq \frac{-1}{\pi}\|\nabla \tilde{\xi}\|_{L^2}\|\nabla u_{\rrc}\|_{L^2(\tilde{B}_{L}(a))}.
\end{equation*}
Estimating $\|\nabla \tilde{\xi}\|_{L^2}$ as above, we find 
\begin{equation*}
\int |\nabla \tilde{\xi}|^2=\int \xi'(x)^2\chi(y)^2~dxdy+\int \xi(x)^2 \chi'(y)^2~dy \lesssim \frac{\alpha}{\alpha^2}h+\frac{|\Omega|}{h^2}h \lesssim \sqrt{\frac{|\Omega|}{\alpha}}
\end{equation*}
choosing $h=\sqrt{\alpha |\Omega|}$. Choosing $\alpha=\min \left(\frac{-D_\Omega}{4\Lambda}, \frac{|\Omega|}{4}\right)$ and rearranging yields the result.
\end{proof}
We will also need to make use of a monotonicity in the truncation parameter. This is \cite[Lemma 2.2]{RS22}, with the notation modified for our purposes. A similar result appeared in \cite{PS17}. It is \cite[Lemma B.1]{AS21}, applied to $u$ solving 
\begin{equation*}
-\Delta u=2\pi\sum_{i=1}^N \left(\delta_{(x_i,0)}-\mu \delta_{\R}\right) \hspace{3mm} \text{in }U \times [-h,h].
\end{equation*}
\begin{lem}[Rosenzweig, Serfaty '22]\label{Monotonicity}
Suppose that $u$ solves
\begin{equation*}
-\Delta u=2\pi\sum_{i=1}^N \left(\delta_{(x_i,0)}-\mu \delta_{\R}\right) \hspace{3mm} \text{in }U \times [-h,h]
\end{equation*}
and that $u_{\vec{\alpha}}$ and $u_{\vec{\eta}}$ are two truncations as defined in $\S2$ with $\alpha_i \leq \eta_i \leq h$ for all $i$. Let $I_N=\{i: \alpha_i \ne \eta_i\}$, and suppose that for all $i \in I_N$, $B(x_i, \eta_i) \subset U \times [-h,h]$. Let
\begin{equation*}
\G^{U}(\XN,\mu, \vec{\eta}):=\frac{1}{4\pi}\left(\int_{U \times [-h,h]}|\nabla u_{\vec{\eta}}|^2-2\pi\sum_{x_i \in \Omega} \g(\eta_i)\right)-\sum_{x_i \in U} \int_{U}\f_{\eta_i}(x-x_i)d\mu(x).
\end{equation*}
Then, $\G^U(\XN, \mu, \vec{\eta}) \leq \G^U(\XN, \mu, \vec{\alpha})$.
\end{lem}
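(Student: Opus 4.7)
The plan is to compute the energy difference $\G^U(\XN,\mu,\vec\eta) - \G^U(\XN,\mu,\vec\alpha)$ explicitly and exhibit it as a manifestly non-positive quantity. The key object is $\phi_i := \overline{\f}_{\alpha_i} - \overline{\f}_{\eta_i}$, a radial function supported in $B(0,\eta_i)$. Because $\overline{\f}_\eta(Y) = (\g(Y)-\g(\eta))_+$ is pointwise non-decreasing in $\eta$, each $\phi_i \le 0$. Setting $v := u_{\vec\eta} - u_{\vec\alpha} = \sum_i \phi_i(\cdot - X_i)$, the hypothesis $B(X_i,\eta_i) \subset \tilde U := U \times [-h,h]$ for $i \in I_N$ shows that $v$ is supported compactly inside $\tilde U$ (for $i \notin I_N$ one has $\alpha_i = \eta_i$ and $\phi_i \equiv 0$).

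With $v$ in hand I would write the algebraic identity
\begin{equation*}
|\nabla u_{\vec\eta}|^2 - |\nabla u_{\vec\alpha}|^2 = \nabla(u_{\vec\eta}+u_{\vec\alpha})\cdot\nabla v
\end{equation*}
and integrate it over $\tilde U$. The compact support of $v$ strictly inside $\tilde U$ kills all boundary contributions, so Green's formula together with $-\Delta u_{\vec\eta} = 2\pi(\sum_j \delta^{(\eta_j)}_{X_j} - \mu\delta_\R)$ (and the analogous identity for $u_{\vec\alpha}$) yields
\begin{equation*}
\int_{\tilde U}|\nabla u_{\vec\eta}|^2 - \int_{\tilde U}|\nabla u_{\vec\alpha}|^2 = 2\pi \int v \left(\sum_j(\delta^{(\eta_j)}_{X_j} + \delta^{(\alpha_j)}_{X_j}) - 2\mu\delta_\R\right).
\end{equation*}

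Next I would evaluate the right-hand side by expanding $v = \sum_i \phi_i(\cdot - X_i)$ and separating diagonal ($i=j$) from off-diagonal ($i\ne j$) contributions. Because $\phi_i$ is radial the diagonal terms admit exact evaluations on the concentric spheres: one checks $\phi_i \equiv 0$ on $\partial B(0,\eta_i)$, while on $\partial B(0,\alpha_i)$ one has $\phi_i \equiv \log(\alpha_i/\eta_i) = \g(\eta_i)-\g(\alpha_i)$. Using the identification (\ref{planar truncation is 1d truncation}) one obtains $\int \phi_i(X-X_i)\,d\mu\delta_\R(X) = -\int (\f_{\eta_i}-\f_{\alpha_i})(x-x_i)\,d\mu(x)$. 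A short bookkeeping then shows that the diagonal contributions cancel exactly the truncation correction terms $-\frac{1}{2}\sum(\g(\eta_i)-\g(\alpha_i))$ and $-\sum \int (\f_{\eta_i}-\f_{\alpha_i})(x-x_i)\,d\mu(x)$ coming from the definition of $\G^U$, leaving
\begin{equation*}
\G^U(\XN,\mu,\vec\eta) - \G^U(\XN,\mu,\vec\alpha) = \frac{1}{2} \sum_{i \ne j} \int \phi_i(X - X_i)\, d\bigl(\delta^{(\eta_j)}_{X_j} + \delta^{(\alpha_j)}_{X_j}\bigr)(X).
\end{equation*}
Since each $\phi_i \le 0$ and each $\delta^{(\cdot)}_{X_j}$ is a positive measure, every summand on the right is non-positive, which yields $\G^U(\XN,\mu,\vec\eta) \le \G^U(\XN,\mu,\vec\alpha)$.

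The main technical step I expect to require care is the distributional integration by parts: both $v$ and $\nabla u_{\vec\eta}$ have jumps across the spheres $\partial B(X_i,\eta_i)$ (and $\partial B(X_i,\alpha_i)$) where the smeared Diracs concentrate, so the pairings $\int v \,\delta^{(\eta_j)}_{X_j}$ must be read as traces from a well-chosen side. Here the assumption $B(X_i,\eta_i) \subset \tilde U$ is essential, both to suppress boundary terms from $\partial \tilde U$ and to allow the sphere traces of $\phi_i$ to be evaluated from the explicit piecewise formula for $\overline{\f}_\eta$; once this bookkeeping is carried out carefully the rest of the argument reduces to the sign of $\phi_i$.
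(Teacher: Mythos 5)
Your proof is correct and follows the standard energy-comparison method used in the Coulomb/Riesz gas literature for precisely this kind of monotonicity statement; the paper does not prove the lemma itself but cites \cite[Lemma 2.2]{RS22} and \cite[Lemma B.1]{AS21}, and your argument matches that line of reasoning. The decomposition $v = \sum_i \phi_i(\cdot - X_i)$ with $\phi_i := \overline{\f}_{\alpha_i} - \overline{\f}_{\eta_i} \le 0$, the integration by parts enabled by the compact support of $v$ inside $U\times[-h,h]$, the sphere-trace evaluations $\phi_i\equiv 0$ on $\partial B(0,\eta_i)$ and $\phi_i\equiv \g(\eta_i)-\g(\alpha_i)$ on $\partial B(0,\alpha_i)$, and the resulting cancellation of all diagonal contributions against the correction terms in the definition of $\G^U$ are all carried out correctly, leaving $\G^U(\vec\eta)-\G^U(\vec\alpha) = \tfrac{1}{2}\sum_{i\ne j}\int\phi_i(\cdot-X_i)\,d(\delta^{(\eta_j)}_{X_j}+\delta^{(\alpha_j)}_{X_j})\le 0$.

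One small imprecision in your closing technical comment: $v$ itself does not have jumps. Each $\overline{\f}_\eta(Y) = (\g(Y)-\g(\eta))_+$ is continuous, and the logarithmic singularities of $\overline{\f}_{\alpha_i}$ and $\overline{\f}_{\eta_i}$ at the origin cancel in the difference $\phi_i$, so $v$ is Lipschitz. What jumps across $\partial B(X_i,\eta_i)$ is $\nabla v$ (and $\nabla u_{\vec\eta}$), not $v$. This actually removes the concern you flag: since $v$ is continuous and compactly supported strictly inside $U\times[-h,h]$, the pairings $\int v\,d\delta^{(\eta_j)}_{X_j}$ are unambiguous circle averages with no choice of trace side required, and the integration by parts against the Radon measure $-\Delta(u_{\vec\eta}+u_{\vec\alpha})$ is the standard pairing of a continuous compactly supported function with a measure, needing no further regularization.
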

We also have local energy control. This is \cite[Proposition 2.3]{RS22}, with the notation modified for our purposes. Since our integrals are only over $|y|\leq L$, we modify the result appropriately. It is analogous to the result for the Coulomb gas \cite[Lemma B.2]{AS21}:
\begin{lem}[Rosenzweig, Serfaty '22]\label{local energy control}
Let $\Omega$ be $\R$ or a closed and bounded interval, or the complement of such a set. Let $\tilde{\Omega}$ be as in (\ref{extension set}). Then, there is some $C>0$, only depending on $\|\mu\|_{L^\infty}$, such that
\begin{equation*}
\int_{\tilde{\Omega}}|\nabla u_{\rrc}|^2 \leq 8\pi\G^\Omega(X_N,\mu,U)+C \#(\{X_N\} \cap \Omega)
\end{equation*}
and
\begin{equation*}
\sum_{i:x_i \in \Omega}\g(\rrc_i)\leq 2\G^\Omega(X_N,\mu,U)+C \#(\{X_N\} \cap \Omega)
\end{equation*}
\end{lem}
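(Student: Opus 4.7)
The plan is to reduce the two bounds to one electrostatic lower bound, then prove that bound by a Sandier-Serfaty ball-growing construction adapted to the 1D log gas.

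First, directly from the definition of $\G^\Omega$ (see (\ref{true local energy})), we have the exact identity
\[
\int_{\tilde{\Omega}} |\nabla u_{\rrc}|^2 = 4\pi \, \G^\Omega(\XN,\mu) + 2\pi \sum_{x_i \in \Omega} \g(\rrc_i) + 4\pi \sum_{x_i \in \Omega} \int_\R \f_{\rrc_i}(x - x_i)\,d\mu(x).
\]
The last sum satisfies $\sum_{x_i \in \Omega} \int \f_{\rrc_i}\,d\mu \leq C \#(\XN \cap \Omega)$ thanks to the explicit computation $\int_\R \f_\eta(z)\,dz = 2\eta \leq 1/2$ (using $\rrc_i \leq 1/4$) together with $\|\mu\|_{L^\infty} < \infty$ in the bulk. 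The two stated inequalities are therefore equivalent modulo redefining $C$, and both reduce to the single electrostatic lower bound
\begin{equation*}
\int_{\tilde{\Omega}} |\nabla u_{\rrc}|^2 \geq 4\pi \sum_{x_i \in \Omega} \g(\rrc_i) - C \#(\XN \cap \Omega).
\end{equation*}

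For this lower bound, I would invoke the ball-growing / ball-merging machinery of Sandier-Serfaty, as adapted to log gases in Petrache-Serfaty \cite{PS17} and Rosenzweig-Serfaty. By the definition of $\rrc_i$, the disks $B(X_i, \rrc_i) \subset \R^2$ with $X_i := (x_i, 0)$ are pairwise disjoint and contained in $\tilde{\Omega}$, and on each such disk $\bar{\f}_{\rrc_i}(\cdot - X_i)$ exactly absorbs the log-singularity of $u$, leaving $u_{\rrc}$ Lipschitz. Growing the disks in a continuous radius parameter and merging any two that meet, one obtains on each connected component $\mathcal{B}$ of the current family (of radius $r$, containing $n_{\mathcal{B}}$ points) a differential inequality for $\int_{\mathcal{B}} |\nabla u_{\rrc}|^2$, coming from the divergence theorem applied to $-\Delta u_{\rrc} = 2\pi(\sum_i \delta_{X_i}^{(\rrc_i)} - \mu \delta_\R)$ and Cauchy-Schwarz on the flux through $\partial \mathcal{B}$. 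Integration up to a macroscopic terminal scale then yields
\[
\int_{\mathcal{B}} |\nabla u_{\rrc}|^2 \geq 2\pi n_{\mathcal{B}}^2 \, \g\bigl(\min_{x_i \in \mathcal{B}} \rrc_i\bigr) - C n_{\mathcal{B}}.
\]
Since $\min_{x_i \in \mathcal{B}} \rrc_i \leq \rrc_i$ forces $n_{\mathcal{B}} \, \g(\min \rrc_i) \geq \sum_{x_i \in \mathcal{B}} \g(\rrc_i)$ by monotonicity of $\g$, components with $n_{\mathcal{B}} \geq 2$ already yield the desired factor $4\pi$, while singleton components ($n_{\mathcal{B}}=1$) are handled by the trivial bound $\g(\rrc_i) \leq \log 4$ (since $\rrc_i = 1/4$ for isolated points), which is absorbed into the $O(\#)$ error.

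The principal technical difficulty lies in tracking the constants carefully through the successive merger events, keeping the per-point error uniformly $O(1)$ regardless of how clustered the configuration is. A secondary issue is boundary handling: the special case $\rrc_i = 1/4$ imposed when $\dist(x_i, \partial \Omega) < 1/2$ is precisely engineered to guarantee that the initial disks stay well inside $\tilde{\Omega}$ and that they may grow at least over an $O(1)$ range without being cut off, preventing boundary losses from corrupting the differential-inequality argument.
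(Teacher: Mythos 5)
Your reduction of the two inequalities to the single electrostatic lower bound
\[
\int_{\tilde{\Omega}}|\nabla u_{\rrc}|^2 \;\geq\; 4\pi\sum_{x_i \in \Omega}\g(\rrc_i) - C\,\#(\XN\cap\Omega)
\]
is correct: the identity you write down follows directly from rearranging (\ref{true local energy}), the term $\sum_i\int\f_{\rrc_i}\,d\mu$ is indeed $O(\#)$ since $\int_\R\f_\eta = 2\eta \leq 1/2$, and substituting back shows both stated inequalities are equivalent to the displayed lower bound up to a change of $C$. This part is clean. Note also that the paper does not prove this lemma at all: it is cited from Rosenzweig--Serfaty (Proposition~2.3 of the referenced work), with only a remark that the box $\tilde\Omega$ replaces the full plane. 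So your attempt is genuinely a from-scratch argument rather than a reconstruction of the paper's proof.

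The gap is in the ball-growing step. The stated inequality
\[
\int_{\mathcal{B}}|\nabla u_{\rrc}|^2 \;\geq\; 2\pi\, n_{\mathcal{B}}^2\,\g\bigl(\min_{x_i\in\mathcal{B}}\rrc_i\bigr) - C\,n_{\mathcal{B}}
\]
is false. Integrating the Cauchy--Schwarz differential inequality gives $\int \frac{2\pi\, n(r)^2}{r}\,dr$, where $n(r)$ is the size of the component at radius $r$; this quantity starts at $1$ and only reaches $n_{\mathcal{B}}$ after the last merger, so you cannot replace it by $n_{\mathcal{B}}$ over the whole range $[\min_i\rrc_i, R]$. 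Concretely, take $n_{\mathcal{B}}=3$ with $x_1=0$, $x_2=\epsilon$, $x_3=1/2$, $\epsilon$ small. Then $\rrc_1=\rrc_2=\epsilon/4$, $\rrc_3\approx 1/8$, so $\min\rrc_i=\epsilon/4$ and your bound predicts $\int_{\mathcal{B}}|\nabla u_{\rrc}|^2 \gtrsim 18\pi\log(1/\epsilon)$. But the actual truncated energy, computed from the self-terms $2\pi\sum_i\g(\rrc_i)$ plus cross-terms $4\pi\sum_{i<j}\g(|x_i-x_j|)$ plus $O(1)$ background, is $8\pi\log(1/\epsilon)+O(1)$: the balls around $x_1,x_2$ spend most of their growth as a pair ($n(r)=2$) before merging with $x_3$ at scale $\sim 1/4$, which contributes nothing at the $\log(1/\epsilon)$ order. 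Since $8\pi < 18\pi$, the claimed bound fails badly. (Reassuringly, $4\pi\sum_i\g(\rrc_i) = 8\pi\log(1/\epsilon)+O(1)$ here, so the target bound is tight in this example but your intermediate step overshoots it.) The correct ball-growing estimate must track the full merger history, which produces a sum of contributions $\sum_k 2\pi n_k^2\log(r_{k+1}/r_k)$ over the successive component sizes $n_k$, and extracting the $4\pi\sum_i\g(\rrc_i)$ lower bound from that requires a non-trivial combinatorial rearrangement (roughly, charging each merger's gain to the two truncation radii it settles) that the proposal does not supply.

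Your observation that singleton components in the final family must have $\rrc_i=1/4$ (because any nearer neighbor forces a merger before the terminal $O(1)$ scale, and the boundary case is built into the definition of $\rrc_i$) is sound and is the right way to discharge that case, so the remaining work is entirely in the clustered case.
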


\setcounter{section}{2}
\setcounter{prop}{0}
\setcounter{equation}{0}
\section*{Appendix B: Screening}\label{Appendix Screening}
This section aims to prove the so-called "screening" result of Proposition \ref{screening result}. The method essentially adapts and optimizes the approach of \cite{PS17} to the case of a one-dimensional log gas, in an analogous way to the optimization of \cite{AS21} for Coulomb gases. The approach to inner screening is a novel adaptation for the one-dimensional log gas. A similar approach can also be used to optimize \cite{PS17} for higher dimensional Riesz gases. We focus on the proof of outer screening; we will discuss the main idea of inner screening and what computational changes are necessary at the end of this section. From this point on, we denote $E_{\rrc}=\nabla w_{\rrc}$ to emphasize that we are considering the electric field. 
\setcounter{subsection}{0}
\subsection{The Setup}
The first part of the proof consists of using the energy bounds to find a good boundary outside of which to construct the screened configuration. Using a mean value argument as in \cite{PS17} and \cite{AS21}, we can find a $T \in [L-\tilde{l},L-2\tilde{l}]$ such that 
\begin{align*}
    M:=\int_{(\square_{T+4}\setminus \square_{T-4})\times[-h_2,h_1]}|E_{\rrc}|^2 &\leq \frac{S(X_n)}{\tilde{l}} \\
    \int_{\partial \square_T \times [-h_2,h_1]}|E_{\rrc}|^2 &\lesssim M 
\end{align*}
We then take $\Gamma:=\partial \square_T$,which in one dimension is simply two points on the axis. These points enclose an interval $\Old$, in which we will keep $X_n$ and the associated electric field $E$ unchanged. The modifications to the electric field will take place in the set $\New:=\square_L \setminus \Old$. We also let $M_0^+$ and $M_0^-$ be averaged electric fields on the top and bottom of our rectangular region:
\begin{align}
    M_0^+:=\frac{1}{|\New|}\int_{\Old \times \{h_1\}}E_{\rrc}\cdot \hat{n} \label{definition of M} \\
    M_0^-:=\frac{1}{|\New|}\int_{\Old \times \{-h_2\}}E_{\rrc}\cdot \hat{n} \label{definition of M - neg}
\end{align}
Often we will only care about the sum of these two, which we denote $M_0$.

With this region in tow, we partition our space (as in \cite{PS17}) into the following subregions, and solve elliptic problems in each:
\begin{enumerate}
    \item $D_0:= \Old \times [-h_2, h_1]$
    \item $D_\partial:=(\square_L\setminus \Old)\times[-h_2, h_1]$
    \item $D_1:=(\square_L \times [-L, L])\setminus (D_0 \cup D_\partial).$
\end{enumerate}
We partition $\square_L \setminus \Old$ into little intervals $H_k$ with sidelengths at scale $l$, in $\left[\frac{l}{C}, lC\right]$. Let $\tilde{H}_k$ denote the rectangles $H_k \times [-h_2, h_1]$.

Observe that the delineation of our points into old and new sets might intersect some of the ``smeared" points; these smeared regions will have to be modified appropriately. We let $I_\partial$ denote the set of charges that are smeared by the boundary $\Gamma$, i.e.
\begin{equation*}
    I_\partial:=\{i: B(x_i, \rrc_i) \cap \Gamma \ne \emptyset \}.
\end{equation*}
Set
\begin{equation*}
    n_k:=2\pi \int_{\tilde{H}_k} \sum_{i \in I_\partial}\delta_{x_i}^{\rrc_i}
\end{equation*}
to be the amount of smear in a region $\tilde{H}_k$.

We let $\N$ denote the number of smeared charges and the number of charges we want wholly unchanged in $\Old$, i.e.
\begin{equation*}
    \N:=\# I_\partial + \# (\{i:x_i \in \Old\} \setminus I_\partial).
\end{equation*}
The goal will be to place $\mn-\N$ sampled points in $\New:=\square_L\setminus \Old$.

For each constant $k$, choose constants $m_k$ such that 
\begin{equation*}
    m_k |H_k|=\int_{\partial D_0 \cap \partial \tilde{H}_k}E_{\rrc}\cdot \hat{n}+M_0|H_k|-n_k.
\end{equation*}
If $m_k$ is small enough, namely $|m_k|\leq \frac{1}{2}m$, then we can guarantee $\int_{H_k}\mu+m_k \in \mathbb{N}$. Define 
\begin{equation*}
    \tilde{\mu}:=\mu+\sum_{k}\mathbbm{1}_{H_k}m_k.
\end{equation*}
Now, 
\begin{equation*}
    \frac{1}{2\pi}\int_{\partial D_0}E_{\rrc}\cdot \hat{n}=\int_\Old d\mu-\N+\frac{1}{2\pi}\sum_{k}n_k.
\end{equation*}
Hence, by construction
\begin{align*}
    \tilde{\mu}(\New)&=\mu(\New)+\sum_k m_k|H_k| \\
    &=\mn-\mu(\Old)+\frac{1}{2\pi}\int_{\partial D_0}E_{\rrc}\cdot\hat{n}-\frac{1}{2\pi}\sum_k n_k \\
    &=\mn-n_\Old.
\end{align*}
With all of these quantities defined, we are in a position to construct a new screened field outside of $D_0$.

\subsection{Defining the Electric Field}
We define the screened electric field in each of the different subregions. 

First we have $E_1$, which completes the smeared charges. Set
\begin{equation*}
    E_1:=\sum_k \mathbbm{1}_{\tilde{H}_k}\nabla h_{1,k},
\end{equation*}
where $h_{1,k}$ solves 
\begin{equation*}
    \begin{cases}
    -\Delta h_{1,k}=2\pi \sum_{i \in I_\partial}\delta_{x_i}^{\rrc_i} & \text{in } \tilde{H}_k \\
    \partial_{n}h_{1,k}=0 & \text{on }\partial \tilde{H}_k \setminus \partial D_0 \\
    \partial_n h_{1,k}=-\frac{n_k}{|F_k|} & \text{on } F_k,
    \end{cases}
\end{equation*}
where $F_k$ is the face of $\partial \tilde{H}_k$ touching $\partial D_0$, if it exists.

$E_2$ balances the top region, $D_1$. We define
\begin{equation*}
    E_2:=\sum_k \mathbbm{1}_{\tilde{H}_k}\nabla h_{2,k},
\end{equation*}
where $h_{2,k}$ solves
\begin{equation*}
    \begin{cases}
    -\Delta h_{2,k}=m_k\delta_{\R} & \text{in } \tilde{H}_k \\
    \partial_n h_{2,k}=-M_0^+ & \text{on }H_k \times \{h_1 \} \\
    \partial_n h_{2,k}=-M_0^- & \text{on }H_k \times \{-h_2\} \\
    \partial_n h_{2,k}=g_k & \text{on the rest of }\partial \tilde{H}_k,
    \end{cases}
\end{equation*}
where $g_k\equiv 0$ if $H_k$ doesn't touch $\Gamma$, and $g_k=-E_{\rrc}\cdot \hat{n}+\frac{n_k}{|F_k|}$ otherwise, with $\hat{n}$ throughout the outward normal from $D_0$.

$E_3$ gives us the sampled configuration $Z_{\mn-n_\Old}$ in $\New$. Define 
\begin{equation*}
    E_3=\nabla h_3,
\end{equation*}
where $h_3$ solves the Neumann problem
\begin{equation*}
    \begin{cases}
    -\Delta h_3=2\pi \left(\sum_{j=1}^{\mn-\N}\delta_{z_j}-\tilde{\mu}\delta_\R\right) &\text{in }D_\partial \\
    \partial_n h_3=0 & \text{on }\partial D_\partial.
    \end{cases}
\end{equation*}

Finally, $E_4$ gives us the screened electric field in $D_1$. We define
\begin{equation*}
    E_4:=\nabla h_4,
\end{equation*}
where $h_4$ solves
\begin{equation*}
    \begin{cases}
    -\Delta h_4=0 &\text{in }D_1 \\
    \partial_n h_4 =-\phi & \text{on }\partial D_1,
    \end{cases}
\end{equation*}
where $\phi:=\mathbf{1}_{\partial D_1 \cap \partial D_0}E\cdot \hat{n}-\mathbf{1}_{\partial D_1 \cap \partial D_\partial \cap \{y>0\}}M_0^+-\mathbf{1}_{\partial D_1 \cap \partial D_\partial \cap \{y<0\}}M_0^-.$

Now, set $\Escr_{\rrc}:=(E_1+E_2+E_3)\mathbf{1}_{D_\partial}+E_4\mathbf{1}_{D_1}+E_{\rrc}\mathbf{1}_{D_0}$ and add back in the truncations
\begin{equation*}
    \Escr:=\Escr_{\rrc}+\sum_{i=1}^{\mn}\nabla f_{\overline{\mathsf r}_i}(x-y_i),
\end{equation*}
where $Y_{\mn}=(\{X_n\} \cap \Old) \cup Z_{\mn-\N}$, and $\overline{\mathsf r}$ are the (possibly changed) minimal distances for the new configuration $Y_{\mn}$. Due to the Neumann condition, no divergence is created across boundaries when we set $\Escr$ to vanish outisde of our region. By definition then, we have
\begin{equation*}
    \begin{cases}
    -\text{div}\Escr=2\pi \left(\sum_{i \in Y_{\mn}}\delta_{y_i}-\mu \right) &\text{in } \square_L\times [-L,L] \\
    \Escr \cdot \hat{n}=0 &\text{on }\partial (\square_L \times[-L,L).
    \end{cases}
\end{equation*}

\subsection{Estimating Constants}
Instead of estimating $M_0^+$ and $M_0^-$ using Cauchy-Schwarz immediately as in \cite{PS17}, we instead carry these constants through our calculations. This will allows us to be as precise as possible in our estimates of $M_0$. As we discussed above, the screening process requires that $|m_k|<\frac{m}{2}$. In order to get a strict bound on $\left|\frac{\mu-\tilde{\mu}}{\tilde{\mu}}\right|_{L^\infty(\New)}$, we will actually seek $|m_k|<\frac{m}{3}$. 

First observe as in \cite{PS17} that $\sum n_k^2 \leq CM$ and $\#I_\partial \leq CM$. Then, $n_k$ is a nonnegative integer so $n_k ^2\leq \sum _k n_k ^2$ and thus $n_k \leq C\sqrt{M}$. $|H_k| \sim l$, so we have
\begin{align*}
    |m_k| &\lesssim \frac{1}{|H_k|}\int_{\partial D_0 \cap \partial \tilde{H}_k}|E_{\rrc}\cdot \hat{n}|+(M_0^++M_0^-)+\frac{n_k}{|H_k|} \\
    &\lesssim M_0+ \frac{1}{l} \int_{\partial D_0 \cap \partial \tilde{H}_k}|E_{\rrc}\cdot \hat{n}|+\frac{M^{1/2}}{l} \\
    &\lesssim M_0+\frac{h^{1/2}M^{1/2}}{l}+\frac{M^{1/2}}{l} \\
    &\lesssim M_0+\frac{h^{1/2}M^{1/2}}{l}<\frac{m}{3},
\end{align*}
since $h>1$. Squaring, rearranging and using the bounds on $M$ we obtain the condition
\begin{equation}\label{definition of little c}
  M_0^2+\frac{hS(X_n)}{\tilde{l}l^2}\leq \mathsf{c}
\end{equation}
for some fixed constant $\mathsf{c}$.
Equivalently, we could phrase this as 
\begin{equation*}
\max\left(M_0^2, \frac{hS(X_n)}{\tilde{l}l^2}\right)\leq \mathsf{c}.
\end{equation*}

Notice that this condition yields a nice $L^\infty$ bound
\begin{equation*}
\left|\frac{\mu-\tilde{\mu}}{\tilde{\mu}}\right|_{L^\infty(H_k)}=\left|\frac{m_k}{\mu+m_k}\right|_{L^\infty(H_k)} \leq \frac{\frac{m}{3}}{m-\frac{m}{3}}=\frac{1}{2}.
\end{equation*}
Since $\tilde{\mu}$ is defined separately on the $H_k$, we then also have
\begin{equation*}
\left|\frac{\mu-\tilde{\mu}}{\tilde{\mu}}\right|_{L^\infty(\New)}\leq\frac{1}{2}.
\end{equation*}

To get the $L^1$ bound on $\mu$ vs. $\tilde{\mu}$, notice that as in \cite{AS21}, with our choice of boundary,
\begin{align*}
\left|\int_{\New}\mu-\tilde{\mu}\right| \lesssim \sum_k |m_k|l \lesssim \tilde{l}M_0+\int_{\Gamma \times [-h_2,h_1]}|E_{\rrc}|+\sum_k n_k^2 &\lesssim\tilde{l}M_0+ h^{1/2}M^{1/2}+M\\
&\lesssim h+\frac{S(X_n)}{\tilde{l}}+\tilde{l}M_0 \\
&\lesssim h+\frac{S(X_n)}{\tilde{l}},
\end{align*}
since screenability implies a uniform bound on $M_0^2$ and hence a uniform upper bound on $M_0$. We've also used the estimate $\sum_k n_k^2 \lesssim M$ from above.

Finally, to get the $L^2$ bound on $\mu$ vs $\tilde{\mu}$ we use Cauchy-Schwarz on the integral and obtain
\begin{align*}
\int_{\New}(\mu-\tilde{\mu})^2 \lesssim l\sum_k m_k^2 &\lesssim \frac{h}{l}\int_{\Gamma \times [-h_2,h_1]}|E_{\rrc}|^2+\frac{1}{l}\sum_kn_k^2+lM_0^2 \\
&\lesssim \frac{hS(X_n)}{l\tilde{l}}+lM_0^2 \\
&\lesssim l,
\end{align*}
where the last line follows from the screenability condition. We now use these estimates and typical elliptic estimates to control the screened field.

\subsection{Estimating the screened field}
We first estimate $E_1$. Since $\tilde{H}_k$ is a rectangle in $\R^2$, we can use the exact form of \cite[Lemma 6.6]{PS17} to control
\begin{equation}
    \int_{\tilde{H_k}}|\nabla h_{1,k}|^2 \lesssim n_k^2
    \end{equation}
since $\rrc_i$ is bounded below by $\frac{1}{4}$.
Hence,
\begin{equation*}
    \int_{D_\partial}|E_1|^2 \lesssim \sum_{k}n_k^2\lesssim M.
\end{equation*}
We next turn to $E_2$, which requires us to estimate the $h_{2,k}$. Notice that $g_k$ and $m_k$ are defined in such a way that 
\begin{equation}\label{main relation}
\int_{F_k}g_k+m_k|H_k|-M_0|H_k|=0,
\end{equation}
so that the above equation has a unique mean zero solution. Since $h \sim l$, we can apply \cite[Lemma 6.4]{PS17} directly on each $\tilde{H}_k$ and sum to bound
\begin{equation*}
\int_{D_\partial}|E_2|^2\lesssim l \int_{F_k^+}|g_k|^2+l\int_{F_k^-}|g_k|^2+\tilde{l}lM_0^2,
\end{equation*}
where we have denoted the $F_k$ on the left side of $\Old$ by $F_k^-$ and the one on the right by $F_k^+$. We can control this further by observing 
\begin{equation*}
 \int_{F_k^+}|g_k|^2+\int_{F_k^-}|g_k|^2\leq \int_{F_k^+}|E_{\rrc}|^2+\int_{F_k^-}|E_{\rrc}|^2+n_k^2 \left(\frac{1}{|F_k^+|}+\frac{1}{|F_k^-|}\right) \lesssim \int_{\partial \square_T}|E_{\rrc}|^2+M \lesssim M,
\end{equation*}
and so 
\begin{equation*}
\int_{D_\partial}|E_2|^2\lesssim lM+\tilde{l}lM_0^2.
\end{equation*}

We turn to bounding $E_3$ and $E_4$. For $E_3$, we obtain from the definition of $\G^{\mathrm{int}}$ and the fact that $\f_\eta$ is uniformly bounded in $L^1$ for small $\eta$ (see the computations of $\S 6$ in \cite{PS17}) that 
\begin{equation*}
    \int_{D_\partial}|\nabla h_3|^2 \leq 4\pi \left(\G^{\text{int}}(Z_{\mn-n_\Old}, \tilde{\mu}, \New)-\sum_{i=1}^{\mn-n_\Old}\GG(z_j)\right)+2\pi\sum_{i=1}^{\mn-n_\Old}\g(\rrc_j)+C(\mn-n_\Old).
\end{equation*}

Finally, to bound the top field $E_4$, we use \cite[Lemma 6.4]{PS17} at scale $L$ to obtain
\begin{align*}
    \int_{D_1}|E_4|^2 \lesssim L \int_{\partial D_1}|\phi|^2&\leq L|\New|M_0^2+ L\int_{\square_{T+1}\times \{-h_2,h_1\}}|\nabla w|^2 \\
    &\leq L|\New|M_0^2+Le(X_n) \\
    &\lesssim L\tilde{l}M_0^2+Le(X_n).
\end{align*}
It remains to put it all together and obtain the requisite screening estimate. 

We kept the original electric field fixed in $D_0$, so combining the above estimates allows us to write
\begin{align*}
\int_{\square_L \times [-L,L]}|\Escr_{\rrc}|^2 &\leq \int_{D_0}|\nabla w_{\rrc}|^2+ClM+C\tilde{l}LM_0^2+CLe(X_n)+C(\mn-n_\Old)+\\
&C \left(4\pi \left(\G^{\text{int}}(Z_{\mn-n_\Old}, \tilde{\mu}, \New)-\sum_{i=1}^{\mn-n_\Old}\GG(z_j)\right)+2\pi\sum_{i=1}^{\mn-n_\Old}\g(\rrc_j)\right).
\end{align*}
Using Lemma \ref{projection} we can replace the screened electric field with the gradient defining $\G^{\mathrm{int}}$. Using the definition of $\G^{\mathrm{int}}$ then and replacing $\rrc$ with $\overline{\mathsf{r}}$ using Lemma \ref{Monotonicity}, we can rewrite the above estimate as 
\begin{align*}
\G^{\text{int}}(Y_\mn, \square_L)&\leq \frac{1}{4\pi}\int_{\Old \times [-L,L]}|\nabla w_{\rrc}|^2-\frac{1}{2}\sum_{i=1}^{\mn}\g(\rrc_i)-\sum_{i=1}^{\mn}\int_{\R}f_{\rrc_i}(y-y_i)~d\mu(y)+C\sum_{i,j}\g(x_i-z_j)\\
&+\sum_{j=1}^{\mn-n_\Old}\GG(z_j) +ClM+C\tilde{l}LM_0^2+CLe(X_n)+C(\mn-n_\Old)+ \\
&C \left( \left(\G^{\text{int}}(Z_{\mn-n_\Old}, \tilde{\mu}, \New)-\sum_{i=1}^{\mn-n_\Old}\GG(z_j)\right)+\sum_{i=1}^{\mn-n_\Old}\g(\rrc_j)\right),
\end{align*}
where we have used that $\GG=0$ on $\square_L \setminus \Old$. Replacing the definition of $\E^{\text{int}}(w, \square_L)$ in the above expression, we find
with a uniform bound on $f_\eta$ in $L^1$ for small $\eta$ that
\begin{align*}
&\G^{\text{int}}(Y_{\mn}, \square_L)-\E^{\text{int}}(w,\square_L)\leq -\frac{1}{4\pi}\int_{\New \times [-L,L]}|\nabla w_{\rrc}|^2+\frac{1}{2}\sum_{\{i \in \{1,\dots,n\}:x_i \notin \Old\}}\g(\rrc_i)+C\sum_{j=1}^{\mn-n_\Old}\g(\rrc_j)+ClM\\
&+C\tilde{l}LM_0^2+CLe(X_n)+C\G^{\text{int}}(Z_{\mn-n_\Old}, \tilde{\mu}, \New)+C\sum_{i,j}\g(x_i-z_j)+C|n-\mn|+C(\mn-n_\Old).
\end{align*}
Next, we would like to control $\frac{1}{2}\sum_{\{i \in \{1,\dots,n\}:x_i \notin \Old\}}\g(\rrc_i) -\frac{1}{4\pi}\int_{\New \times [-L,L]}|\nabla w_{\rrc}|^2$ by the number of points not in $\Old$, but the possible blowup of $g(\rrc_i)$ presents an issue. We adjust the truncation parameter and apply Lemma \ref{Monotonicity}, but need to shrink $\Old$ a tad in order to guarantee that it does not intersect $B(x_i, \frac{1}{4})$ for all $x_i \notin \Old$. To do this, we simply observe that $\square_{T-4}\subset \Old$ and write 
\begin{align*}
&\frac{1}{2}\sum_{\{i \in \{1,\dots,n\}:x_i \notin \Old\}}\g(\rrc_i) -\frac{1}{4\pi}\int_{\New \times [-L,L]}|\nabla w_{\rrc}|^2 \\
&=\frac{1}{2}\sum_{\{i \in \{1,\dots,n\}:x_i \notin \Old\}}\g(\rrc_i) -\frac{1}{4\pi}\int_{(\square_L \setminus \square_{T-4})\times [-L,L]}|\nabla w_{\rrc}|^2+\frac{1}{4\pi}\int_{( \Old \setminus\square_{T-4})\times [-L,L]}|\nabla w_{\rrc}|^2 \\
&\leq \frac{1}{2}\sum_{\{i \in \{1,\dots,n\}:x_i \notin \square_{T-4}\}}\g(\tilde{\rrc}_i) -\frac{1}{4\pi}\int_{(\square_L \setminus \square_{T-4})\times [-L,L]}|\nabla w_{\tilde{\rrc}}|^2+\frac{1}{4\pi}\int_{(\Old \setminus \square_{T-4} )\times [-L,L]}|\nabla w_{\rrc}|^2 \\
&+\sum_{\{i: x_i \notin \square_{T-4}\}} \int_{\square_L \setminus \square_{T-4}} (\f_{\tilde{\rrc}_i}-\f_{\rrc_i})(x-x_i)~d\mu -\sum_{\{i: x_i \in \Old\setminus \square_{T-4}\}}\g(\rrc_i) \\
& \leq C(n-\N)+CM,
\end{align*}
where $\tilde{\rrc}_i$ is defined to be $\frac{1}{4}$ for $x_i \notin \Old$ and is kept fixed otherwise. This allows us to cancel all contributions of $\f$ and $g$ for $x_i \in \Old \setminus \square_{T-4}$, and bound the remaining contributions of $\f$ and $\g$ by $C(n-\N)$ since $\tilde{\rrc}_i$ is bounded below for such $i$ and $\f_\eta$ is again controlled uniformly in $L^1$ for small $\eta$. We have bounded negative contributions of $L^2$ energy by zero, and $\frac{1}{4\pi}\int_{(\Old \setminus \square_{T-4})\times [-L,L]}|\nabla w_{\rrc}|^2 \leq CM$. 
Finally, using Lemma \ref{local energy control} we can bound 
\begin{align*}
\sum_{j=1}^{\mn-n_\Old}\g(\rrc_j)&\leq C\left(\G^{\text{int}}(Z_{\mn-n_\Old}, \tilde{\mu}, \New)+\n-\N\right) \\
&\leq C\left(\G^{\text{int}}(Z_{\mn-n_\Old}, \tilde{\mu}, \New)+\tilde{l}\right),
\end{align*}
controlling $\n-\N=\tilde{\mu}(\New)\lesssim \tilde{l}$ by our $L^\infty$ control on $\tilde{\mu}$, completing the argument.

\subsection{Inner Screening}
Let us first comment on the changes to the setup that are required for screening in $(\square_L \times [-L,L])^c$. First, we choose our good boundary $\Gamma$ exterior to $\square_L$; namely, we find $T \in [L+\tilde{l}, L+2\tilde{l}]$ such that
\begin{align*}
    M:=\int_{(\square_{T+4}\setminus \square_{T-4})\times[-h_2,h_1]}|E_{\rrc}|^2 &\leq \frac{S(X_n)}{\tilde{l}} \\
    \int_{\partial \square_T \times [-h_2,h_1]}|E_{\rrc}|^2 &\lesssim M 
\end{align*}
and again take $\Gamma=\partial \square_T$. We will leave the configuration unchanged in $\Old=\square_T^c$, and only place new points in $\New=\square_L^c \setminus \square_T^c$. We now partition space so that we only change the field near $\partial (\square_L \times [-L,L])$. Namely, we define
\begin{enumerate}
    \item $D_0:= (\square_T \times [-h_2,h_1])^c$
    \item $D_\partial:=\New \times[-L, L]$
    \item $D_1:=\square_T \times ([-h_2, h_1]\setminus [-L,L]).$
\end{enumerate}
We partition $\New$ into little intervals $H_k$ with sidelengths at scale $l$, in $\left[\frac{l}{C}, lC\right]$, and let $\tilde{H}_k$ denote the rectangles $H_k \times [-L, L]$. Then, we set 
\begin{align*}
    M_0^+:=\frac{1}{|\New|}\int_{(\partial D_1 \cap \{y>0\}) \setminus (\square_T \times \{L\})}E_{\rrc}\cdot \hat{n}  \\
    M_0^+:=\frac{1}{|\New|}\int_{(\partial D_1 \cap \{y<0\}) \setminus (\square_T \times \{-L\})}E_{\rrc}\cdot \hat{n}
\end{align*}
and denote their sum $M_0$. $n_k$, $I_\partial$, $n_\Old$, $m_k$ and $\tilde{\mu}$ are all then defined analogously to the outer screening, as is the screenability condition. $E_1$, $E_2$, $E_3$ and $E_4$ are defined in exactly the same manner as in the outer screening. Setting $\Escr_{\rrc}:=(E_1+E_2+E_3)\mathbf{1}_{D_\partial}+E_4\mathbf{1}_{D_1}+E_{\rrc}\mathbf{1}_{D_0}$ and adding back the truncations we have
\begin{equation*}
    \Escr:=\Escr_{\rrc}+\sum_{i=1}^{\mn}\nabla f_{\overline{\mathsf r}_i}(x-y_i),
\end{equation*}
where $Y_{\mn}=(\{X_n\} \cap \Old) \cup Z_{\mn-\N}$, and $\overline{\mathsf r}$ are the (possibly changed) minimal distances for the new configuration $Y_{\mn}$. Due to the Neumann condition, no divergence is created across boundaries when we set $\Escr$ to vanish outisde of our region. By definition then, we have
\begin{equation*}
    \begin{cases}
    -\text{div}\Escr=2\pi \left(\sum_{i \in Y_{\mn}}\delta_{y_i}-\mu \right) &\text{in } (\square_L \times [-L,L])^c \\
    \Escr \cdot \hat{n}=0 &\text{on }\partial (\square_L \times [-L,L]).
    \end{cases}
\end{equation*}
Since the geometry of $D_0$, $D_\partial$ and $D_1$ are unchanged and the equations the same as with outer screening, all of our estimates almost carry through as with outer screening. The only change comes in estimating $E_4$, since we now have contributions from the vertical sides of the boundary. However, these are estimated by $LM \sim l \frac{S(X_n)}{\tilde{l}}$, so the error terms remain unchanged. Thus,
\begin{align*}
\int_{(\square_L \times [-L,L])^c}|\Escr_{\rrc}|^2 &\leq \int_{D_0}|\nabla w_{\rrc}|^2+ClM+C\tilde{l}LM_0^2+CLe(X_n)+C(\mn-n_\Old)+\\
&C \left(4\pi \left(\G^{\text{int}}(Z_{\mn-n_\Old}, \tilde{\mu}, \New)-\sum_{i=1}^{\mn-n_\Old}\GG(z_j)\right)+2\pi\sum_{i=1}^{\mn-n_\Old}\g(\rrc_j)\right).
\end{align*}
Using Lemma \ref{projection} we can replace the screened electric field with the gradient defining $\G$ as before to find
\begin{align*}
\G^{\text{ext}}(Y_\mn, \square_L^c)&\leq \frac{1}{4\pi}\int_{D_0}|\nabla w_{\rrc}|^2-\frac{1}{2}\sum_{i=1}^{\mn}\g(\rrc_i)-\sum_{i=1}^{\mn}\int_{\R }f_{\rrc_i}(y-y_i)~d\mu(y)+C\sum_{i,j}\g(x_i-z_j)\\
&+\sum_{j=1}^{\mn-n_\Old}\GG(z_j) +ClM+C\tilde{l}LM_0^2+CLe(X_n)+C(\mn-n_\Old)+ \\
&C \left( \left(\G^{\text{int}}(Z_{\mn-n_\Old}, \tilde{\mu}, \New)-\sum_{i=1}^{\mn-n_\Old}\GG(z_j)\right)+\sum_{i=1}^{\mn-n_\Old}\g(\rrc_j)\right),
\end{align*}
Replacing the definition of $\E^{\text{ext}}(w, \square_L^c)$ in the above expression, we find
with the same uniform bound on $f_\eta$ in $L^1$ for small $\eta$ that
\begin{align*}
&\G^{\text{ext}}(Y_{\mn}, \square_L^c)-\E^{\text{ext}}(w,\square_L^c)\leq -\frac{1}{4\pi}\int_{D_\partial \cup D_1}|\nabla w_{\rrc}|^2+\frac{1}{2}\sum_{\{i \in \{1,\dots,n\}:x_i \notin \Old\}}\g(\rrc_i)+C\sum_{j=1}^{\mn-n_\Old}\g(\rrc_j)+ClM \\
&+C\tilde{l}LM_0^2+CLe(X_n)+C\G^{\text{int}}(Z_{\mn-n_\Old}, \tilde{\mu}, \New)+C\sum_{i,j}\g(x_i-z_j)+C|n-\mn|+C(\mn-n_\Old).
\end{align*}
Next, we would like to control $\frac{1}{2}\sum_{\{i \in \{1,\dots,n\}:x_i \notin \Old\}}\g(\rrc_i) -\frac{1}{4\pi}\int_{D_\partial \cup D_1}|\nabla w_{\rrc}|^2$ by the number of points not in $\Old$, but the possible blowup of $g(\rrc_i)$ presents an issue. We adjust the truncation parameter and apply Lemma \ref{Monotonicity}, but again need to shrink $\Old$ a tad in order to guarantee that it does not intersect $B(x_i, \frac{1}{4})$ for all $x_i \notin \Old$. To do this, we simply observe that $\square_{T+4}^c\subset \Old$ and write 
\begin{align*}
&\frac{1}{2}\sum_{\{i \in \{1,\dots,n\}:x_i \notin \Old\}}\g(\rrc_i) -\frac{1}{4\pi}\int_{D_\partial \cup D_1}|\nabla w_{\rrc}|^2 \leq \frac{1}{2}\sum_{\{i \in \{1,\dots,n\}:x_i \notin \Old\}}\g(\rrc_i) -\frac{1}{4\pi}\int_{D_\partial }|\nabla w_{\rrc}|^2 \\
&=\frac{1}{2}\sum_{\{i \in \{1,\dots,n\}:x_i \notin \Old\}}\g(\rrc_i) -\frac{1}{4\pi}\int_{(\square_L^c \setminus \square_{T+4}^c)\times [-L,L]}|\nabla w_{\rrc}|^2+\frac{1}{4\pi}\int_{( \square_{T+4})\setminus \New\times [-L,L]}|\nabla w_{\rrc}|^2 \\
&\leq \frac{1}{2}\sum_{\{i \in \{1,\dots,n\}:x_i \notin \square_{T+4}^c\}}\g(\tilde{\rrc}_i) -\frac{1}{4\pi}\int_{(\square_L^c \setminus \square_{T+4}^c)\times [-L,L]}|\nabla w_{\tilde{\rrc}}|^2+\frac{1}{4\pi}\int_{(  \square_{T+4}\setminus \New)\times [-L,L]}|\nabla w_{\rrc}|^2 \\
&+\sum_{\{i: x_i \notin \square_{T+4}^c\}} \int_{\square_L^c \setminus \square_{T+4}^c} (\f_{\tilde{\rrc}_i}-\f_{\rrc_i})(x-x_i)~d\mu -\sum_{\{i: x_i \in \Old\setminus \square_{T+4}^c\}}\g(\rrc_i) \\
& \leq C(n-\N)+CM,
\end{align*}
where $\tilde{\rrc}_i$ is defined to be $\frac{1}{4}$ for $x_i \notin \Old$ and is kept fixed otherwise. This allows us to cancel all contributions of $\f$ and $g$ for $x_i \in \Old \setminus \square_{T+4}^c$, and bound the remaining contributions of $\f$ and $\g$ by $C(n-\N)$ since $\tilde{\rrc}_i$ is bounded below for such $i$ and $\f_\eta$ is again controlled uniformly in $L^1$ for small $\eta$. We have bounded negative contributions of $L^2$ energy by zero, and $\frac{1}{4\pi}\int_{(\square_{T+4} \setminus \New)\times [-L,L]}|\nabla w_{\rrc}|^2 \leq CM$. 
Finally, using Lemma \ref{local energy control} we can bound 
\begin{align*}
\sum_{j=1}^{\mn-n_\Old}\g(\rrc_j)&\leq C\left(\G^{\text{int}}(Z_{\mn-n_\Old}, \tilde{\mu}, \New)+\n-\N\right) \\
&\leq C\left(\G^{\text{int}}(Z_{\mn-n_\Old}, \tilde{\mu}, \New)+\tilde{l}\right),
\end{align*}
controlling $\n-\N=\tilde{\mu}(\New)\lesssim \tilde{l}$ by our $L^\infty$ control on $\tilde{\mu}$, completing the argument.

\setcounter{section}{3}
\setcounter{subsection}{0}
\setcounter{prop}{0}
\setcounter{equation}{0}

\section*{Appendix C: Transport Estimates}\label{Appendix Transport Estimates}
For direct estimates, we use the formula from Lemma \ref{inverting the master operator}
\begin{equation*}
\psi(x)=\begin{cases}
-\frac{1}{\pi^2S(x)}\int_{\Sigma_V}\frac{\xi(y)-\xi(x)}{\sigma(y)(y-x)}~dy & \text{if } x \in \Sigma_V; \\
\frac{\int \frac{\psi(y)\muv(y)}{x-y}~dy+\xi(x)+c_{\xi}}{\int_{\Sigma_V}\frac{\muv(y)}{x-y}~dy-V'(x)} & \text{if }x \in U \setminus \Sigma_V.
\end{cases}
\end{equation*}
For a detailed proof we refer the reader to Appendix B of \cite{BLS18}, which utilizes formulae and theory from \cite{M92}.
We start by considering the transport for a rescaled test function.
\subsection{Rescaled Test Functions}
Without loss of generality, assume $\xi(\cdot)=\xi_{z,L}(\cdot)$ with $z=0$; the proof is exactly analogous for other values of $z \in \B$. We also write $\supp(\theta)\subset[-c,d]$ and assume it is nice and smooth enough on that interval. Then, $\supp(\xi)\subset[-cL, dL]$. We assume also that $N$ is large enough such that $[-cL,dL]\subset \frac{1}{4}[-a,b]$, where $0 \in [-a,b]$; if this is not the case for $N_c$, we can simply increase it since $L \downarrow 0$ (this is purely for ease of writing the computations and is of course not necessary).

\begin{proof} [Proof of Lemma \ref{Rescaled Transport Estimates}]
We start by estimating the transport function, i.e. $k=0$. First, for $x \in \supp(\xi)$ we will use that $\sigma(y)$ is bounded below by a constant independent of $L$ in $2\supp(\xi)$. Furthermore, outside of $[-a,b]$, $\frac{|\sigma(y)|}{|x-y|}$ is also bounded independently of $L$. We find 
\begin{align*}
|\psi(x)|&\lesssim \int_{2\supp(\xi)}\frac{dy}{\sigma(y)}|\xi'|_{L^\infty}+\left|\int_{\Sigma_V \setminus 2 \supp(\xi)}\frac{dy}{\sigma(y)}\frac{\xi(x)}{y-x}\right| \\
&\lesssim \frac{L}{L}|\theta'|_{L^\infty}+\|\theta\|_{L^\infty}+\left|\int_{[-a,b] \setminus 2 \supp(\xi)}\frac{dy}{\sigma(y)}\frac{\xi(x)}{y-x}\right|.
\end{align*}
We only need to estimate 
\begin{equation*}
\left|\int_{(-a,b) \setminus 2 \supp(\xi))}\frac{dy}{\sigma(y)}\frac{\xi(x)}{y-x}\right|\leq \|\theta\|_{L^\infty}\left|\int_{-a}^{-2cL}\frac{dy}{\sigma(y)(y-x)}+\int_{2dL}^b \frac{dy}{\sigma(y)(y-x)}\right|.
\end{equation*} 
Since the singularities near $-a$ and $b$ are integrable, we can bound those by a constant, and we're left needing to bound
\begin{equation*}
\left|\int_{-a/2}^{-2cL}\frac{dy}{\sigma(y)(y-x)}+\int_{2dL}^{b/2} \frac{dy}{\sigma(y)(y-x)}\right|.
\end{equation*}
$\frac{1}{\sigma(y)}$ and its derivatives are nicely bounded on this domain of integration, so we call $1/\sigma=f$ and treat the above like a Cauchy principal value:
\begin{align*}
\left|\int_{-a/2}^{-2cL}\frac{dy}{\sigma(y)(y-x)}+\int_{2dL}^{b/2} \frac{dy}{\sigma(y)(y-x)}\right|
&=\left|\int_{2dL-x}^{\frac{b}{2}-x}\frac{f(x+u)}{u}~du-\int_{2cL+x}^{\frac{a}{2}+x}\frac{f(x-u)}{u}~du\right|.
\end{align*}
Recall that $x \in (-cL, dL)$, so that all of the above limits of integration are positive. To keep things clean, we write $\min(2dL-x, 2cL+x):=\alpha_1$, $\max(2dL-x, 2cL+x):=\alpha_2$, $\min\left(\frac{b}{2}-x, \frac{a}{2}+x\right):=\beta_1$ and $\max\left(\frac{b}{2}-x, \frac{a}{2}+x\right):=\beta_2$. Then we can write (dropping all constants independent of $L$)
\begin{align*}
\left|\int_{2dL-x}^{\frac{b}{2}-x}\frac{f(x+u)}{u}~du-\int_{2cL+x}^{\frac{a}{2}+x}\frac{f(x-u)}{u}~du\right|& \leq \|f\|_{L^\infty}\int_{\alpha_1}^{\alpha_2}\frac{du}{u}+\left|\int_{\alpha_2}^{\beta_1}\frac{f(x+u)-f(x-u)}{u}~du\right| \\
&\leq  \|f\|_{L^\infty}\ln \frac{\alpha_2}{\alpha_1}+2\|f'\|_{L^\infty}(\beta_1-\alpha_2)+\|f\|_{L^\infty}\ln \frac{\beta_2}{\beta_1} \\
&\lesssim 1+\ln \frac{\alpha_2}{\alpha_1}.
\end{align*}
Finally, $\alpha_2 \leq \max(2d+c, 2c+d)L$ and $\alpha_1 \geq \min(c,d)L$ since $x \in (-cL,dL)$, so 
\begin{equation*}
\ln \frac{\alpha_2}{\alpha_1} \leq \ln \frac{\max(2d+c, 2c+d)L}{\min(c,d)L} = \ln \frac{\max(2d+c, 2c+d)}{\min(c,d)},
\end{equation*}
which is a constant only dependent on $\supp(\theta)$. Thus, 
\begin{equation*}
\left|\int_{(-a,b) \setminus 2 \supp(\xi)L)}\frac{dy}{\sigma(y)}\frac{\xi(x)}{y-x}\right| \lesssim \|\theta\|_{L^\infty},
\end{equation*}
completing the estimate for $|\psi(x)|$ in $\supp(\xi)$. For $x \in 2\supp(\xi)\setminus \supp(\xi)$ we can just use a mean value argument and see that, since $\sigma(y)$ is still bounded below independently of $L$,
\begin{equation*}
|\psi(x)|=\left|\frac{-1}{2\pi^2S(x)}\int_{\supp(\xi)}\frac{\xi(y)-0}{\sigma(y)(y-x)}~dy \right| \lesssim L \|\xi'\|_{L^\infty}\sim \|\theta'\|_{L^\infty} \lesssim \frac{L}{|x|}\|\theta'\|_{L^\infty}
\end{equation*}
since we chose $|x|\sim L$. Finally, for $x \in \supp(\muv)\setminus 2\supp(\xi)$, we have $|y-x| \gtrsim |x|$ in the domain of integration, which is just $\supp(\xi)$. Thus
\begin{equation*}
|\psi(x)|=\left|\frac{-1}{2\pi^2S(x)}\int_{\supp(\xi)}\frac{\xi(y)-0}{\sigma(y)(y-x)}~dy \right| \lesssim \frac{L}{|x|}\|\theta\|_{L^\infty},
\end{equation*}
completing the estimates for $k=0$.

Next, we approach the derivatives. First, we run estimates for $x \in \supp(\xi)$. $\psi$ is the product of $\frac{1}{S}$ and the integral 
\begin{equation*}
\int_{\Sigma_V}\frac{\xi(y)-\xi(x)}{\sigma(y)(y-x)}~dy,
\end{equation*}
so all terms of the product rule expansion are derivatives of $1/S$ and terms of the form $\frac{d^n}{dx^n}\int_{\Sigma_V}\frac{\xi(y)-\xi(x)}{\sigma(y)(y-x)}~dy$ We start by looking at $\frac{-1}{2\pi^2S(x)}\frac{d^k}{dx^k}\int_{\Sigma_V}\frac{\xi(y)-\xi(x)}{\sigma(y)(y-x)}~dy$. Controlling $S(x)$ from below, we just need to focus on the integral. Bringing the derivatives inside assuming sufficient regularity, we need to differentiate $\frac{\xi(y)-\xi(x)}{y-x}$ $k$ times, which a computation shows is given by 
\begin{equation*}
\frac{k!(\xi(y)-P_k^x(y))}{(y-x)^{k+1}},
\end{equation*}
where $P_k^x(y)$ is the $k$th Taylor polynomial about $x$, i.e. $\xi(x)+\xi'(x)(y-x)+\cdots+\frac{\xi^{(k)}(x)}{k!}(y-x)^k$. 
Now with this formula we can estimate using Taylor's theorem and find 
\begin{align*}
&\left|\int_{\Sigma_V}\frac{1}{\sigma(y)}\frac{d^k}{dx^k}\frac{\xi(y)-\xi(x)}{(y-x)}~dy\right|= \left|\int_{\Sigma_V}\frac{k!}{\sigma(y)}\frac{\xi(y)-P_k^x(y)}{(y-x)^{k+1}}~dy\right|\\
&\lesssim \int_{2\supp(\xi)}\frac{1}{\sigma(y)}\left|\frac{\xi(y)-P_k^x(y)}{(y-x)^{k+1}}\right|~dy+\left|\int_{\Sigma_V\setminus 2\supp(\xi)}\frac{k!}{\sigma(y)}\frac{\xi(y)-P_k^x(y)}{(y-x)^{k+1}}~dy\right| \\
&\lesssim \frac{1}{L^k}\|\theta^{(k+1)}\|_{L^\infty}+\left|\int_{\Sigma_V\setminus 2\supp(\xi)}\frac{k!}{\sigma(y)}\sum_{n=0}^k\frac{-\xi^{(n)}(x)}{n!}\frac{1}{(y-x)^{k+1-n}}~dy\right|.
\end{align*}
Everything is nicely bounded outside of $(-a,b)$ and the singularity of $1/\sigma(y)$ is again integrable at $-a$ and $b$, so we really only need to consider 
\begin{align*}
\left|\int_{(-a/2,b/2)\setminus 2\supp(\xi)}\frac{k!}{\sigma(y)}\sum_{n=0}^k\frac{-\xi^{(n)}(x)}{n!(y-x)^{k+1-n}}~dy\right|&\lesssim \sum_{n=0}^{k-1}\|\xi^{(n)}\|_{L^\infty}\int_{(-a/2,b/2)\setminus 2\supp(\xi)}\frac{dy}{|y-x|^{k+1-n}} \\
&+\|\xi^{(k)}\|_{L^\infty} \left|\int_{(-a/2,b/2)\setminus 2\supp(\xi)}\frac{dy}{\sigma(y)(y-x)}\right|
\end{align*}
Notice that by our above work, the last integral is bounded! Hence we bound
\begin{align*}
&\sum_{n=0}^{k-1}\|\xi^{(n)}\|_{L^\infty}\int_{(-a/2,b/2)\setminus 2\supp(\xi)}\frac{dy}{|y-x|^{k+1-n}}+\|\xi^{(k)}\|_{L^\infty} \left|\int_{(-a/2,b/2)\setminus 2\supp(\xi)}\frac{dy}{\sigma(y)(y-x)}\right| \\
& \lesssim \sum_{n=0}^{k-1}\frac{\|\theta^{(n)}\|_{L^\infty}}{L^n}\frac{1}{L^{k-n}}+\frac{\|\theta^{(k)}\|_{L^\infty}}{L^k}\lesssim \frac{\|\theta\|_{C^{k+1}}}{L^k},
\end{align*}
as desired. Notice that the same argument controls any term which is a product of a derivative of $1/S$ and $\frac{d^m}{dx^m}\int_{\Sigma_V}\frac{\xi(y)-\xi(x)}{\sigma(y)(y-x)}~dy$ for $m<k$ by $\frac{\|\theta\|_{C^{m+1}}}{L^m}$, so the requisite estimate is established.

Next, we turn our attention to the decay from $\supp(\xi)$. Using the Leibniz rule on the definition of $\psi$, we have 
\begin{align*}
|\psi^{(k)}(x)|&\lesssim \sum_{n=0}^k {k \choose n}\frac{d^{k-n}}{dx^{k-n}}\left(\frac{-1}{2\pi^2S(x)}\right)\int_{\supp(\xi)}\left|\frac{\xi(y)}{\sigma(y)}\frac{d^n}{dx^n}\frac{1}{y-x}~dy\right|\\
&\lesssim \sum_{n=0}^k \int_{\supp(\xi)}\left|\frac{\xi(y)}{(y-x)^{n+1}}~dy\right|.
\end{align*}
Now, for $x$ outside of $2\supp(\xi)$, in the integrand we have $|y-x|\gtrsim |x|$ and hence 
\begin{equation*}
|\psi^{(k)}(x)|\lesssim \sum_{n=0}^k L\|\xi\|_{L^\infty}\frac{1}{|x|^{n+1}}=L\|\theta\|_{L^\infty} \sum_{n=0}^k \frac{1}{|x|^{n+1}} \lesssim L\|\theta\|_{L^\infty}\left(\frac{1}{|x|}+\frac{1}{|x|^{k+1}}\right).
\end{equation*}
The above estimate won't work in $2\supp(\xi)\setminus \supp(\xi)$, but that's okay because we can use the Taylor approximation approach if we assume that $\xi$ is $C^k$ past the boundary (or that the $k$ derivatives vanish at the boundary and take a $C^k$ extension). In that case, we can take the previous bounds and write
\begin{align*}
|\psi^{(k)}(x)|
 \lesssim \sum_{n=0}^k \int_{\supp(\xi)}\left|\frac{\xi(y)-P_n^x(y)}{(y-x)^{n+1}}~dy\right| \lesssim \sum_{n=0}^k L\|\xi^{(n+1)}\|_{L^\infty}\lesssim \|\theta\|_{C^{k+1}}\left(1+\frac{1}{L^k}\right),
\end{align*}
which, with $L \sim |x|$, is exactly $\sim L\|\theta\|_{C^{k+1}} \left(\frac{1}{|x|}+\frac{1}{|x|^{k+1}}\right)$, as desired.

We now look at extending these estimates outside of $\Sigma_V$. Our goal will be to show that $|\psi^{(n)}(x)|\lesssim L\|\theta\|_{C^{n+2}}$; then for computational ease we can certainly write the estimate as on $\Sigma_V$ since $|x|\gtrsim 1$, which  handles our purpose. 

Recall from Lemma 3.1 and 3.3 in \cite{BLS18} that $\psi$ is given on $U\setminus \Sigma_V$ by 
\begin{equation*}
\psi(x)=\frac{\int \frac{\psi(y)\muv(y)}{x-y}~dy +\frac{\xi(x)}{2}+c_\xi}{\int \frac{\muv(y)}{x-y}~dy-V'(x)}=\frac{\int \frac{\psi(y)\muv(y)}{x-y}~dy +\xi(x)+c_\xi}{-\zeta_V'(x)},
\end{equation*}
where $\zeta_V'(x)=S(x)\sigma(x)$ for some sufficiently smooth $S$. Using this formula directly presents a problem for estimating derivatives of $\psi$. The $\sigma(x)$ in the denominator has a $1/\sqrt{x}$ singularity, whose derivatives are not integrable - successive differentiations will introduce a term that blows up with powers of $1/\dist(x,\Sigma_V)$ as $x \rightarrow \partial \Sigma_V$.  To deal with this, we rewrite the formula for the transport as in \cite[Appendix B]{BLS18} to respect boundary regularity. Namely, after choosing a suitably regular $C^m$ extension $\tilde{\psi}$ we can write 
\begin{equation*}
\psi(x)=\tilde{\psi}(x)+\frac{\xi(x)+c_\xi-\Xi_V[\tilde{\psi}](x)}{-\zeta_V'(x)}=\frac{h(x)}{\sqrt{x-\beta}}\left(\Xi_V[\tilde{\psi}(x)]-c_\xi-\xi(x)\right)
\end{equation*}
with $h(x)$ sufficiently smooth and $x$ near the boundary point $\beta$ of $[\alpha,\beta]\subset \Sigma_V$. Without loss of generality, we have assumed $x>\beta$; the computation is analogous for $x<\alpha$. Notice also that we're welcome to choose $\|\tilde{\psi}(x)\|_{C^m\left(U \setminus \left[-\frac{1}{4}a,\frac{1}{4}b\right]\right)}\lesssim \|\psi(x)\|_{C^m\left(\Sigma_V \setminus\left[-\frac{1}{4}a,\frac{1}{4}b\right]\right)}\lesssim L\|\theta\|_{C^{m+1}}$. Differentiating this formula, we find 
\begin{align*}
|\psi^{(n)}(x)|&=\left|\tilde{\psi}^{(n)}(x)+\sum_{k=0}^n {n \choose k}\frac{d^{n-k}}{dx^{n-k}}\left(\frac{h(x)}{\sqrt{x-\beta}}\right)\frac{d^k}{dx^k}\left(\Xi_V[\tilde{\psi}](x)-c_\xi-\frac{\xi(x)}{2}\right)\right| \\
&\lesssim |\tilde{\psi}^{(n)}(x)|+\sum_{k=0}^n (x-\beta)^{-\frac{1}{2}-(n-k)}\left|\left(\Xi_V[\tilde{\psi}]\right)^{n+1}(z_k)\right||x-\beta|^{n-k+1} \\
&= |\tilde{\psi}^{(n)}(x)|+\sum_{k=0}^n\sqrt{x-\beta}\left|\left(\Xi_V[\tilde{\psi}]\right)^{n+1}(z_k)\right|
\end{align*}
for some collection $z_k \in [\beta,x]$, where in the penultimate line above we have used a Taylor expansion of the $k$th derivative of $\Xi_V[\tilde{\psi}]-c_\xi-\frac{\xi}{2}$ to $(n-k)+1$ order about $\beta$, removing all terms besides the error term using that $\Xi_V[\tilde{\psi}]-c_\xi-\frac{\xi(x)}{2}$ and its derivatives vanish at arbitrary order on $\Sigma_V$.  We have also removed the derivatives of $\xi(x)$ since it vanishes outside of $\left[-\frac{1}{4}a,\frac{1}{4}b\right]\subset \Sigma_V$. We turn now to estimating $\Xi_V[\tilde{\psi}]^{n+1}(z)$ for $z \in [\beta,x]$. Recalling the definition 
\begin{equation*}
\Xi_V[\tilde{\psi}](z)=-\frac{1}{2}\tilde{\psi}(z)V'(z)+\int \frac{\tilde{\psi}(z)-\tilde{\psi}(y)}{z-y}~d\muv(y),
\end{equation*}
we need only estimate derivatives of the integral term since $U$ has compact closure and so $V'(z)$ and its derivatives are controlled. Computing explicitly, we obtain 
\begin{equation*}
\frac{d^m}{dz^m}\int \frac{\tilde{\psi}(z)-\tilde{\psi}(y)}{z-y}~d\muv(y)=\int \frac{m!(\psi(y)-P_m^z(y))}{(y-z)^{m+1}}~d\muv(y),
\end{equation*}
where we have used $\tilde{\psi}=\psi$ on $\Sigma_V$ and recall that $P_m^z(y)=\tilde{\psi}(z)+\tilde{\psi}'(z)(y-z)+\cdots+\frac{\tilde{\psi}^{(m)}(z)}{m!}(y-z)^m$. Let us first consider the integral over intervals $[e,f]$, where $[e,f]$ is neither $[\alpha,\beta]$ nor the interval $[-a,b]$ containing zero (which, of course may be the same). On these, we can simply bound $|y-z|$ from below uniformly and $|\psi(y)|\lesssim \frac{L\|\theta\|_{C^1}}{|y|}\lesssim L\|\theta\|_{C^1}$ to obtain 
\begin{equation*}
\left|\int_e^f \frac{m!(\psi(y)-P_m^z(y))}{(y-z)^{m+1}}~d\muv(y)\right|\lesssim L\|\theta\|_{C^1}+\sum_{k=0}^m|\tilde{\psi}^{(k)}(z)|.
\end{equation*}
Now, for the remaining intervals we first assume $[\alpha,\beta]\ne [-a,b]$. On $[-a,b]$, we can bound all terms besides $\psi(y)$ in the same way to obtain 
\begin{equation*}
\left|\int_{-a}^b \frac{m!P_m^z(y)}{(y-z)^{m+1}}~d\muv(y)\right|\lesssim \sum_{k=0}^m|\tilde{\psi}^{(k)}(z)|.
\end{equation*}
The $\psi(y)$ term is a bit trickier, since the $\frac{1}{|y|}$ decay from the bulk a priori will give us a pesky logarithm. However, we are free to use the previous definition of $\psi$ on $\Sigma_V$. First, we observe that on $2\supp(\xi)$ that 
\begin{equation*}
\int_{2\supp(\xi)}\frac{|\psi(y)\muv(y)|}{|y-z|^{m+1}}~dy \lesssim \frac{L\|\theta\|_{C^1}}{|z|^{m+1}} \lesssim L\|\theta\|_{C^1},
\end{equation*}
simply from a uniform bound on $\psi$, and using $|z-y|\gtrsim |z|$ and that the domain of integration is of size $L$. Away from this interval, we use the definition of $\psi$ in $\Sigma_V$ to write 
\begin{align*}
\int_{[-a,b]\setminus2\supp(\xi)}\frac{\psi(y)\muv(y)}{(y-z)^{m+1}}~dy&=\int_{[-a,b]\setminus 2\supp(\xi)}\frac{S(y)\sigma(y)}{(y-z)^{m+1}}\frac{(-1)}{2\pi^2S(y)}\int_{\Sigma_V}\frac{\xi(w)-\xi(y)}{\sigma(w)(w-y)}~dwdy \\
&= \frac{(-1)}{2\pi^2}\int_{[-a,b]\setminus 2\supp(\xi)}\int_{\Sigma_V}\frac{\sigma(y)}{\sigma(w)}\frac{1}{(y-z)^{m+1}}\frac{\xi(w)-\xi(y)}{w-y}~dwdy.
\end{align*}
Notice that by construction, $\xi(y)$ is never supported on the domain of integration and $\xi(w)$ is only supported on $\supp(\xi)$. Thus, we obtain
\begin{align*}
\int_{[-a,b]\setminus2\supp(\xi)}\frac{\psi(y)\muv(y)}{(y-z)^{m+1}}~dy&= \frac{(-1)}{2\pi^2}\int_{[-a,b]\setminus 2\supp(\xi)}\int_{\Sigma_V}\frac{\sigma(y)}{\sigma(w)}\frac{1}{(y-z)^{m+1}}\frac{\xi(w)}{w-y}~dwdy \\
&= \frac{(-1)}{2\pi^2}\int_{[-a,b]\setminus 2\supp(\xi)}\int_{\supp(\xi)}\frac{\sigma(y)}{\sigma(w)}\frac{1}{(y-z)^{m+1}}\frac{\xi(w)}{w-y}~dwdy.
\end{align*} 
Everything is integrable by construction, so we can use Fubini and write
\begin{align*}
\left|\int_{[-a,b]\setminus2\supp(\xi)}\frac{\psi(y)\muv(y)}{(y-z)^{m+1}}~dy\right| &=\left|  \frac{(-1)}{2\pi^2}\int_{\supp(\xi)}\int_{[-a,b]\setminus 2\supp(\xi)}\frac{\sigma(y)}{\sigma(w)}\frac{1}{(y-z)^{m+1}}\frac{\xi(w)}{w-y}~dydw\right| \\
&\lesssim \int_{\supp(\xi)} \frac{|\xi(w)|}{\sigma(w)}\left|\int_{[-a,b]\setminus 2\supp(\xi)}\frac{\sigma(y)}{(y-z)^{m+1}}\frac{1}{w-y}~dy\right|dw.
\end{align*}
$\frac{\sigma(y)}{(y-z)^{m+1}}$ and its derivatives are nicely bounded on this domain of integration, so we set this function equal to $f$ and notice that we are trying to estimate
\begin{equation*}
\left|\int_{[-a,b]\setminus 2\supp(\xi)}f(y)\frac{1}{w-y}~dy\right|,
\end{equation*}
which by the same Cauchy principal value estimates as in the proof of our estimates on $\Sigma_V$ is simply $\lesssim 1$. Bounding $\sigma(w)$ from below then, we immediately obtain 
\begin{equation*}
\left|\int_{[-a,b]\setminus2\supp(\xi)}\frac{\psi(y)\muv(y)}{(y-z)^{m+1}}~dy\right| \lesssim L\|\theta\|_{C^1}.
\end{equation*}
We conclude that 
\begin{equation*}
\left|\int_{-a}^b\frac{\psi(y)\muv(y)}{(y-z)^{m+1}}~dy\right| \lesssim L\|\theta\|_{C^1}
\end{equation*}
and hence
\begin{equation*}
\left|\int_{-a}^b \frac{m!(\psi(y)-P_m^z(y))}{(y-z)^{m+1}}~d\muv(y)\right|\lesssim L\|\theta\|_{C^1}+\sum_{k=0}^m|\tilde{\psi}^{(k)}(z)|.
\end{equation*}
Finally, we look at $[\alpha,\beta]$. We cannot bound $|y-z|$ uniformly from below (this was the problem with using the original formula directly!) but using Taylor's theorem we can write 
\begin{align*}
\left|\int_{\alpha}^\beta \frac{m!(\psi(y)-P_m^z(y))}{(y-z)^{m+1}}~d\muv(y)\right|&
\lesssim \int_\alpha^\beta \frac{\|\tilde{\psi}^m\|_{L^\infty(\alpha,z)}|y-z|^m}{|y-z|^{m+1}}~d\muv(y) \lesssim \|\tilde{\psi}^m\|_{L^\infty(\alpha,z)}.
\end{align*}
If $[\alpha,\beta]=[-a,b]$, then we can simply split the interval into two pieces; one near zero that we control as $[-a,b]$ above, and one near $b$ that we can control as $[\alpha,\beta]$ above. Of course, we could also do the same if $[\alpha,\beta]$ is not $[-a,b]$; take $[\alpha,\beta]$ and truncate it into $\left[\alpha, \frac{\beta}{2}\right]$ which is dealt with as $[e,f]$, and $\left[\frac{\beta}{2},\beta\right]$, which is dealt with as $[\alpha,\beta]$. This allows us to more easily rewrite the above estimate as 
\begin{equation*}
\left|\frac{d^m}{dz^m}\Xi_V[\tilde{\psi}](z)\right|\lesssim L\|\theta\|_{C^1}+\sum_{k=0}^m|\tilde{\psi}^{(k)}(z)|+\|\tilde{\psi}^m\|_{L^\infty\left(U \setminus \left[-\frac{1}{4}a, \frac{1}{4}b\right]\right)} \lesssim L\|\theta\|_{C^{m+1}}
\end{equation*}
by construction of $\tilde{\psi}$. Thus
\begin{align*}
|\psi^{(n)}(x)|&= |\tilde{\psi}^{(n)}(x)|+\sum_{k=0}^n\sqrt{x-\beta}\left|\left(\Xi_V[\tilde{\psi}]\right)^{n+1}(z_k)\right| \\
&\lesssim L\|\theta\|_{C^{n+1}}+L\|\theta\|_{C^{n+2}}\sqrt{x-\beta} \lesssim L\|\theta\|_{C^{n+2}}
\end{align*}
which completes the argument.
\end{proof}

We now turn our attention to the test functions of interest to the local laws.
\subsection{Local Laws Test Functions}
\begin{proof}[Proof of Lemma \ref{Local Laws Transport Estimates}]
First, we observe that it suffices to show the requisite estimates for $\zeta_{0,h}(Nx)$ and $\kappa_{0,h}(Nx)$ up to adjusting constants. In what follows, we write $\zeta_{0,h}=\zeta_L$ and $\kappa_{0,h}=\kappa_L$, to indicate more explicitly the analogy with the previous subsection.

Let us start with the transport for $\zeta_L$. Assuming $S(x)$ is sufficiently smooth and bounded away from zero, we need only estimate the following quantities for $0\leq k \leq n$:
\begin{equation*}
\int_{\Sigma_V}\frac{dy}{\sigma(y)}\frac{d^k}{dx^k}\left(\frac{\zeta_L(y)-\zeta_L(x)}{y-x}\right).
\end{equation*}
Computing the difference quotient exactly, we have
\begin{align}
\label{zeta derivative expansion}&\int_{\Sigma_V}\frac{dy}{\sigma(y)}\frac{d^k}{dx^k}\left(\frac{\zeta_L(y)-\zeta_L(x)}{y-x}\right) \\
\nonumber&=\frac{d^k}{dx^k}\left[\frac{x}{x^2+(Lh)^2}\right]\int_{\Sigma_V}\frac{dy}{\sigma(y)}\frac{2\pi hL^2}{y^2+(Lh)^2}+\frac{d^k}{dx^k}\left[\frac{1}{x^2+(Lh)^2}\right]\int_{\Sigma_V}\frac{dy}{\sigma(y)}\frac{2\pi hL^2y}{y^2+(Lh)^2}.
\end{align}
We first estimate the integrals, since those estimates are independent of $k$. First, we consider $\int_{\Sigma_V}\frac{dy}{\sigma(y)}\frac{2\pi hL^2}{y^2+(Lh)^2}$. We split the integral into three separate kinds of parts, as usual. First, considering the piece between $-Lh$ and $Lh$ we use that $\sigma(y)$ is nicely bounded below to estimate
\begin{equation*}
\int_{-Lh}^{Lh}\frac{dy}{\sigma(y)}\frac{2\pi hL^2}{y^2+(Lh)^2} \lesssim hL^2 \frac{Lh}{(Lh)^2}=L.
\end{equation*}
Next, we use that $\sigma(y)$ is nicely bounded away from the endpoints of $\Sigma_V$ and that $\frac{1}{\sigma(y)}$ has integrable singularities at those endpoints to bound the rest of the integral on $(-a,b)$ as 
\begin{align*}
\int_{(-a,b)\setminus(-Lh,Lh)}\frac{dy}{\sigma(y)}\frac{2\pi hL^2}{y^2+(Lh)^2}&\lesssim hL^2 \int_{(-a,b)\setminus \left(-\frac{a}{2},\frac{b}{2}\right)}\frac{dy}{\sigma(y)y^2}+hL^2\int_{\left(-\frac{a}{2},\frac{b}{2}\right)\setminus (-Lh,Lh)}\frac{dy}{y^2} \\
&\lesssim hL^2+L.
\end{align*}
By construction, $hL \lesssim 1$, so the entire above quantity we estimate by $\lesssim L$. Finally, for any interval of $\Sigma_V$ $(e,f)$ that does not contain zero, we use that $y$ is nicely bounded and the singularity of $1/\sigma(y)$ is nicely integrable to find 
\begin{equation*}
\int_e^f \frac{dy}{\sigma(y)}\frac{2\pi hL^2}{y^2+(Lh)^2}\lesssim hL^2 \int_e^f\frac{dy}{\sigma(y)}\lesssim hL^2 \lesssim L.
\end{equation*}
Thus, we conclude 
\begin{equation}\label{first integral}
\int_{\Sigma_V}\frac{dy}{\sigma(y)}\frac{2\pi hL^2}{y^2+(Lh)^2}\lesssim L.
\end{equation}
Next, we consider $\int_{\Sigma_V}\frac{dy}{\sigma(y)}\frac{2\pi hL^2y}{y^2+(Lh)^2}$. Again, first considering the piece between $-Lh$ and $Lh$ we use that $\sigma(y)$ is nicely bounded below to estimate 
\begin{equation*}
\left|\int_{-Lh}^{Lh}\frac{dy}{\sigma(y)}\frac{2\pi hL^2y}{y^2+(Lh)^2}\right| \lesssim hL^2\frac{Lh}{(Lh)^2}Lh=\frac{L^4h^3}{L^2h^2}=hL^2.
\end{equation*}
On $\left(-\frac{a}{2},\frac{b}{2}\right)\setminus (-Lh, Lh)$, we need to use a Cauchy principal value argument to avoid a logarithmic error. We find, with $f=1/\sigma$ sufficiently smooth and bounded on the interval,
\begin{align*}
&\left|\int_{\left(-\frac{a}{2},\frac{b}{2}\right)\setminus (-Lh, Lh)}\frac{dy}{\sigma(y)}\frac{2\pi hL^2y}{y^2+(Lh)^2}\right| \\
&\lesssim hL^2 \left|\int_{-\frac{a}{2}}^{-Lh}\frac{f(y)y}{y^2+(Lh)^2}~dy+\int_{Lh}^{\frac{b}{2}}\frac{f(y)y}{y^2+(Lh)^2}~dy\right| \\
&\lesssim hL^2 \int_{Lh}^{\frac{1}{2}\min(a,b)}\left|\frac{f(y)-f(-y)}{y}\right|~dy+hL^2 \int_{\frac{1}{2}\min(a,b)}^{\frac{1}{2}\max(a,b)}\frac{|f|_{L^\infty \left(\frac{1}{2}\min(a,b), \frac{1}{2}\max(a,b)\right)}}{|y|}~dy \\
&\lesssim hL^2.
\end{align*}
On the rest of $(-a,b)$, $y$ is bounded from below and the singularity in $1/\sigma$ is integrable at the boundaries, so we find 
\begin{equation*}
\left|\int_{(-a,b)\setminus\left(-\frac{a}{2},\frac{b}{2}\right)}\frac{dy}{\sigma(y)}\frac{2\pi hL^2y}{y^2+(Lh)^2}\right|\lesssim hL^2 \int_{-a}^b \frac{dy}{\sigma(y)}\frac{1}{|y|} \lesssim hL^2 \int_{-a}^b \frac{dy}{\sigma(y)}\lesssim hL^2.
\end{equation*}
Finally, on any interval $(e,f)$ of $\Sigma_V$ that does not contain zero, we use that the singularity in $1/\sigma$ is integrable at the endpoints and that $y$ is bounded below to estimate 
\begin{equation*}
\left|\int_e^f \frac{dy}{\sigma(y)}\frac{2\pi hL^2 y}{y^2+(Lh)^2}\right|\lesssim hL^2 \int_e^f \frac{dy}{\sigma(y)}\frac{1}{|y|}\lesssim hL^2 \int_e^f \frac{dy}{\sigma(y)}\lesssim hL^2.
\end{equation*}
Hence, we conclude 
\begin{equation}\label{second integral}
\left|\int_{\Sigma_V}\frac{dy}{\sigma(y)}\frac{2\pi hL^2 y}{y^2+(Lh)^2}\right| \lesssim hL^2.
\end{equation}
Inserting (\ref{first integral}) and (\ref{second integral}) into (\ref{zeta derivative expansion}), we find
\begin{align}
\label{zeta derivative bound}\left|\int_{\Sigma_V}\frac{dy}{\sigma(y)}\frac{d^k}{dx^k}\left(\frac{\zeta_L(y)-\zeta_L(x)}{y-x}\right)\right| 
&\lesssim  \left|\frac{d^k}{dx^k}\left[\frac{x}{x^2+(Lh)^2}\right]\right|L+\left|\frac{d^k}{dx^k}\left[\frac{1}{x^2+(Lh)^2}\right]\right|hL^2
\end{align}
and we need only estimate the derivatives.

To do this, we use a recursive formula for the derivatives. Computing using induction and the quotient rule, one can verify that the derivatives satisfy
\begin{align}
\label{first derivative formula}\frac{d^k}{dx^k}\left[\frac{1}{x^2+(Lh)^2}\right]&=\frac{p_k^1(x)}{(x^2+(Lh)^2)^{k+1}} \\
\label{second derivative formula}\frac{d^k}{dx^k}\left[\frac{x}{x^2+(Lh)^2}\right]&=\frac{p_k^2(x)}{(x^2+(Lh)^2)^{k+1}},
\end{align}
where $p_k^1$ and $p_k^2$ both satisfy the recursive formula 
\begin{equation*}
p_k^i(x)=-2kxp_{k-1}^i(x)+(x^2+(Lh)^2)(p_{k-1}^i)'(x)
\end{equation*}
for $k \geq 1$, with $p_0^1(x)=1$ and $p_0^2(x)=x$. 
Since at each iteration of the recursion the degree of the polynomial is increased by $1$, it is easy to see that $\deg(p_k^1)\leq k$ and $\deg(p_k^2)\leq k+1$. We can actually say a bit more - namely, that the polynomials are nice polynomials in both $x$ and $Lh$. Using induction again, we can show that 
\begin{align*}
p_n^1(x)&=\sum_{\substack{k \text{ even} \\k \leq n}} c_{n,k}(Lh)^kx^{n-k} \\
p_n^2(x)&=\sum_{\substack{k \text{ even} \\k \leq n+1}} d_{n,k}(Lh)^kx^{n+1-k}.
\end{align*}
and hence
\begin{align}
\label{first derivative bound}|p_n^1(x)|&\lesssim (Lh)^n+|x|^n \\
\label{second derivative bound}|p_n^2(x)|&\lesssim (Lh)^{n+1}+|x|^{n+1}.
\end{align}
This is fabulous, because now we can conclude. Substituting (\ref{first derivative formula}) and (\ref{second derivative formula}) into (\ref{zeta derivative bound}) and using (\ref{first derivative bound}) and (\ref{second derivative bound}), we find
\begin{align*}
\left|\int_{\Sigma_V}\frac{dy}{\sigma(y)}\frac{d^k}{dx^k}\left(\frac{\zeta_L(y)-\zeta_L(x)}{y-x}\right)\right| 
&\lesssim L \frac{(Lh)^{k+1}+|x|^{k+1}}{(x^2+(Lh)^2)^{k+1}}+hL^2\frac{(Lh)^k+|x|^k}{(x^2+(Lh)^2)^{k+1}}.
\end{align*}
For $|x|\leq Lh$, this yields 
\begin{align*}
\left|\int_{\Sigma_V}\frac{dy}{\sigma(y)}\frac{d^k}{dx^k}\left(\frac{\zeta_L(y)-\zeta_L(x)}{y-x}\right)\right|&
\lesssim\frac{1}{h}\frac{1}{(Lh)^k},
\end{align*}
and for $|x|>Lh$, this yields
\begin{align*}
\left|\int_{\Sigma_V}\frac{dy}{\sigma(y)}\frac{d^k}{dx^k}\left(\frac{\zeta_L(y)-\zeta_L(x)}{y-x}\right)\right|&
\lesssim \frac{L}{|x|^{k+1}}.
\end{align*}
Thus, we have for all $k \geq 0$,
\begin{equation*}
\left|\int_{\Sigma_V}\frac{dy}{\sigma(y)}\frac{d^k}{dx^k}\left(\frac{\zeta_L(y)-\zeta_L(x)}{y-x}\right)\right|\lesssim \begin{cases}
\frac{1}{h}\frac{1}{(Lh)^k} & \text{if }|x|\leq Lh, \\
\frac{L}{|x|^{k+1}} & \text{if }|x|>Lh.
\end{cases}
\end{equation*}
This immediately yields for $|x|\leq Lh$
\begin{equation*}
|\psi^{(n)}(x)|\lesssim \sum_{k=0}^n \left|\int_{\Sigma_V}\frac{dy}{\sigma(y)}\frac{d^k}{dx^k}\left(\frac{\zeta_L(y)-\zeta_L(x)}{y-x}\right)\right| \lesssim \sum_{k=0}^n \frac{1}{h}\frac{1}{(Lh)^k} \lesssim \frac{1}{h}\frac{1}{(Lh)^n}
\end{equation*}
since $Lh$ is small, and for $|x|>Lh$
\begin{equation*}
|\psi^{(n)}(x)|\lesssim \sum_{k=0}^n \left|\int_{\Sigma_V}\frac{dy}{\sigma(y)}\frac{d^k}{dx^k}\left(\frac{\zeta_L(y)-\zeta_L(x)}{y-x}\right)\right| \lesssim \sum_{k=0}^n \frac{L}{|x|^{k+1}} \lesssim L \left(\frac{1}{|x|}+\frac{1}{|x|^{n+1}}\right),
\end{equation*}
where the dominant term depends on whether $|x|$ is larger than one or not.

Next we turn to the transport for $\kappa_L$, and will observe that our estimates above transfer almost exactly. Again assuming $S(x)$ is sufficiently smooth and bounded away from zero, we need only estimate the following quantities for $0\leq k \leq n$:
\begin{equation*}
\int_{\Sigma_V}\frac{dy}{\sigma(y)}\frac{d^k}{dx^k}\left(\frac{\kappa_L(y)-\kappa_L(x)}{y-x}\right).
\end{equation*}
Computing the difference quotient exactly, we have
\begin{align*}
&\int_{\Sigma_V}\frac{dy}{\sigma(y)}\frac{d^k}{dx^k}\left(\frac{\kappa_L(y)-\kappa_L(x)}{y-x}\right) \\
&=\frac{d^k}{dx^k}\left[\frac{1}{x^2+(Lh)^2}\right]\int_{\Sigma_V}\frac{dy}{\sigma(y)}\frac{2\pi h^2L^3}{y^2+(Lh)^2}-\frac{d^k}{dx^k}\left[\frac{x}{x^2+(Lh)^2}\right]\int_{\Sigma_V}\frac{dy}{\sigma(y)}\frac{2\pi Ly}{y^2+(Lh)^2}.
\end{align*}
Substituting (\ref{first derivative formula}) and (\ref{second derivative formula}) and estimating with (\ref{first derivative bound}) and (\ref{second derivative bound}),
\begin{align*}
\left|\int_{\Sigma_V}\frac{dy}{\sigma(y)}\frac{d^k}{dx^k}\left(\frac{\kappa_L(y)-\kappa_L(x)}{y-x}\right)\right| 
&\lesssim hL^2\frac{(Lh)^k+|x|^k}{(x^2+(Lh)^2)^{k+1}}+ L \frac{(Lh)^{k+1}+|x|^{k+1}}{(x^2+(Lh)^2)^{k+1}}.
\end{align*}
Exactly as above, this yields 
\begin{equation*}
\left|\int_{\Sigma_V}\frac{dy}{\sigma(y)}\frac{d^k}{dx^k}\left(\frac{\kappa_L(y)-\kappa_L(x)}{y-x}\right)\right|\lesssim \begin{cases}
\frac{1}{h}\frac{1}{(Lh)^k} & \text{if }|x|\leq Lh, \\
\frac{L}{|x|^{k+1}} & \text{if }|x|>Lh,
\end{cases}
\end{equation*}
which in turn yields the same estimates on the transport of $\kappa_L$.

Finally, we can use these estimates for the transport in $\Sigma_V$ to obtain transport estimates on $U \setminus \Sigma_V$ as well. As with Lemma \ref{Rescaled Transport Estimates}, it suffices to show that $|\psi^{(n)}(x)|\lesssim L$ outside of $\Sigma_V$; since $|x|\gtrsim 1$ outside of $\Sigma_V$, this will be sufficient for our purposes. Let $\xi$ denote either $\zeta_L$ or $\kappa_L$, and let $[\alpha,\beta]$ denote the interval of $\Sigma_V$ closest to $x$; without loss of generality, let $x>\beta$. Exactly as in the proof of Lemma \ref{Rescaled Transport Estimates} we can choose a $C^m$ extension $\tilde{\psi}$ with $\|\tilde{\psi}\|_{C^m\left(\left[-\frac{1}{4}a,\frac{1}{4}b\right]\right)}\lesssim\|\psi\|_{C^m\left(\left[-\frac{1}{4}a,\frac{1}{4}b\right]\right)}\lesssim L$ and bound 
\begin{align*}
|\psi^{(n)}(x)|&=\left|\tilde{\psi}^{(n)}(x)+\sum_{k=0}^n {n \choose k}\frac{d^{n-k}}{dx^{n-k}}\left(\frac{h(x)}{\sqrt{x-\beta}}\right)\frac{d^k}{dx^k}\left(\Xi_V[\tilde{\psi}](x)-c_\xi-\frac{\xi(x)}{2}\right)\right| \\
&\lesssim |\tilde{\psi}^{(n)}(x)|+\sum_{k=0}^n (x-\beta)^{-\frac{1}{2}-(n-k)}\left|\left(\Xi_V[\tilde{\psi}]\right)^{n+1}(z_k)+\frac{\xi^{(n+1)}(z_k)}{2}\right\|x-\beta|^{n-k+1} \\
&\lesssim L+\sum_{k=0}^n\sqrt{x-\beta}\left|\left(\Xi_V[\tilde{\psi}]\right)^{n+1}(z_k)+\frac{\xi^{(n+1)}(z_k)}{2}\right|
\end{align*}
for some collection $z_k \in [\beta,x]$, where in the penultimate line above we have used a Taylor expansion of the $k$th derivative of $\Xi_V[\tilde{\psi}]-c_\xi-\frac{\xi}{2}$ to $(n-k)+1$ order about $\beta$, removing all terms besides the error term using that $\Xi_V[\tilde{\psi}]-c_\xi-\frac{\xi(x)}{2}$ and its derivatives vanish at arbitrary order on $\Sigma_V$.

Let's look first at $\frac{\xi^{m)}(z)}{2}$ for $z \in U \setminus \Sigma_V$ and $m \geq 1$. Using (\ref{first derivative formula}), when $\xi=\zeta_L$ we obtain 
\begin{equation*}
\frac{d^m}{dz^m}\xi(z)=-2\pi hL^2 \frac{d^m}{dz^m}\left(\frac{1}{z^2+(Lh)^2}\right)=-2\pi hL^2 \frac{p_m^1(z)}{(z^2+(Lh)^2)^{m+1}}
\end{equation*}
and thus bound using (\ref{first derivative bound})
\begin{align*}
\left|\frac{\xi^{m)}(z)}{2}\right|&\lesssim hL^2 \frac{|p_m^1(z)|}{(z^2+(Lh)^2)^{m+1}} \lesssim hL^2 \frac{|z|^m+(Lh)^m}{(z^2+(Lh)^2)^{m+1}} \lesssim \frac{hL^2}{|z|^{m+2}}\lesssim L
\end{align*}
since $Lh <\min \left(\frac{a}{4}, \frac{b}{4}\right) \leq |z|$. Similarly, when $\xi=\kappa_L$ we obtain with (\ref{second derivative formula})
\begin{equation*}
\frac{d^m}{dz^m}\xi(z)=2\pi L \frac{d^m}{dz^m}\left(\frac{z}{z^2+(Lh)^2}\right)=2\pi L \frac{p_m^2(z)}{(z^2+(Lh)^2)^{m+1}}
\end{equation*}
and thus bound using (\ref{second derivative bound})
\begin{align*}
\left|\frac{\xi^{m)}(z)}{2}\right|&\lesssim L \frac{|p_m^2(z)|}{(z^2+(Lh)^2)^{m+1}} \lesssim L \frac{|z|^{m+1}+(Lh)^{m+1}}{(z^2+(Lh)^2)^{m+1}} \lesssim \frac{L}{|z|^{m+1}}\lesssim L
\end{align*}
since $Lh <\min \left(\frac{a}{4}, \frac{b}{4}\right) \leq |z|$.

We turn now to estimating $\Xi_V[\tilde{\psi}]^{n+1}(z)$ for $z \in U \setminus \Sigma_V$. Recalling the definition 
\begin{equation*}
\Xi_V[\tilde{\psi}](z)=-\frac{1}{2}\tilde{\psi}(z)V'(z)+\int \frac{\tilde{\psi}(z)-\tilde{\psi}(y)}{z-y}~d\muv(y),
\end{equation*}
we need only estimate derivatives of the integral term since $U$ is compact and so $V'(z)$ and its derivatives are controlled by $\|\tilde{\psi}^{(m)}\|_{C^m \left(\left[-\frac{1}{4}a, \frac{1}{4}b\right]\right)}\lesssim L$. Computing explicitly as before, we obtain 
\begin{equation*}
\frac{d^m}{dz^m}\int \frac{\tilde{\psi}(z)-\tilde{\psi}(y)}{z-y}~d\muv(y)=\int \frac{m!(\psi(y)-P_m^z(y))}{(y-z)^{m+1}}~d\muv(y),
\end{equation*}
where we have used $\tilde{\psi}=\psi$ on $\Sigma_V$ and recall that $P_m^z(y)=\tilde{\psi}(z)+\tilde{\psi}'(z)(y-z)+\cdots+\frac{\tilde{\psi}^{(m)}(z)}{m!}(y-z)^m$. Let us first consider the integral over intervals $[e,f]$, where $[e,f]$ is neither $[\alpha,\beta]$ nor the interval $[-a,b]$ containing zero (which, of course may be the same). On these, we can simply bound $|y-z|$ from below uniformly and $|\psi(y)|\lesssim L$ to obtain 
\begin{equation*}
\left|\int_e^f \frac{m!(\psi(y)-P_m^z(y))}{(y-z)^{m+1}}~d\muv(y)\right|\lesssim L+\sum_{k=0}^m|\tilde{\psi}^{(k)}(z)| \lesssim L.
\end{equation*}
Now, for the remaining intervals we first assume $[\alpha,\beta]\ne [-a,b]$. On $[-a,b]$, we can bound all terms besides $\psi(y)$ in the same way to obtain 
\begin{equation*}
\left|\int_{-a}^b \frac{m!P_m^z(y)}{(y-z)^{m+1}}~d\muv(y)\right|\lesssim \sum_{k=0}^m|\tilde{\psi}^{(k)}(z)| \lesssim L.
\end{equation*}
The $\psi(y)$ term is a bit trickier, since the $\frac{1}{|y|}$ decay from the bulk a priori will, as in the proof of Lemma \ref{Rescaled Transport Estimates}, give us a pesky logarithm. However, we are free to compute directly with the definition of $\psi$ on $\Sigma_V$. 

Let's start with $\zeta_L$. Computing explicitly, we find 
\begin{align*}
&\int_{-a}^b \frac{\psi(y)}{(y-z)^{m+1}}~d\muv(y)=\int_{-a}^b \frac{S(y)\sigma(y)}{(y-z)^{m+1}}\frac{-1}{2\pi^2 S(y)}\int_{\Sigma_V}\frac{\zeta_L(w)-\zeta_L(y)}{w-y}~dwdy \\
&=-\frac{hL^2}{\pi}\int_{-a}^b \frac{\sigma(y)}{(y-z)^{m+1}}\int_{\Sigma_V} \frac{1}{\sigma(w)}\frac{w+y}{(w^2+(Lh)^2)(y^2+(Lh)^2)}~dwdy \\
&=-\frac{c_1hL^2}{\pi}\int_{-a}^b\frac{\sigma(y)}{(y-z)^{m+1}(y^2+(Lh)^2)}~dy-\frac{c_2hL^2}{\pi}\int_{-a}^b \frac{\sigma(y)y}{(y-z)^{m+1}(y^2+(Lh)^2)}~dy,
\end{align*}
where 
\begin{equation*}
c_1=\int_{\Sigma_V}\frac{w}{\sigma(w)(w^2+(Lh)^2)}~dw \text{ and } c_2=\int_{\Sigma_V}\frac{1}{\sigma(w)(w^2+(Lh)^2)}~dw.
\end{equation*}
Modifying (\ref{first integral}) and (\ref{second integral}), we can control $c_1\lesssim1$ and $c_2 \lesssim \frac{1}{hL}$. Bounding $|y-z|\gtrsim |z|\gtrsim 1$, we can also control 
\begin{equation*}
\left|\int_{-a}^b\frac{\sigma(y)}{(y-z)^{m+1}(y^2+(Lh)^2)}~dy\right|\lesssim \frac{1}{hL}.
\end{equation*}
Using that the derivatives of $\frac{1}{(y-z)^{m+1}}$ are well controlled on the domain of integration, we can modify the principal value estimates in (\ref{second integral}) to control
\begin{equation*}
\left|\int_{-a}^b \frac{\sigma(y)y}{(y-z)^{m+1}(y^2+(Lh)^2)}~dy\right|\lesssim 1.
\end{equation*}
Combining all of these estimates yields 
\begin{equation*}
\left|\int_{-a}^b \frac{m!\psi(y)}{(y-z)^{m+1}}~d\muv(y)\right|\lesssim L
\end{equation*}
and hence 
\begin{equation*}
\left|\int_{-a}^b \frac{m!(\psi(y)-P_m^z(y))}{(y-z)^{m+1}}~d\muv(y)\right|\lesssim L.
\end{equation*}
We run the same computation for $\kappa_L$. Computing explicitly, we find 
\begin{align*}
&\int_{-a}^b \frac{\psi(y)}{(y-z)^{m+1}}~d\muv(y)=\int_{-a}^b \frac{S(y)\sigma(y)}{(y-z)^{m+1}}\frac{-1}{2\pi^2 S(y)}\int_{\Sigma_V}\frac{\kappa_L(w)-\kappa_L(y)}{w-y}~dwdy \\
&=\frac{-1}{\pi}\int_{-a}^b \frac{\sigma(y)}{(y-z)^{m+1}}\int_{\Sigma_V} \frac{1}{\sigma(w)}\frac{h^2L^3-Lwy}{(w^2+(Lh)^2)(y^2+(Lh)^2)}~dwdy \\
&=-\frac{c_2h^2L^3}{\pi}\int_{-a}^b\frac{\sigma(y)}{(y-z)^{m+1}(y^2+(Lh)^2}~dy+\frac{c_1L}{\pi}\int_{-a}^b \frac{\sigma(y)y}{(y-z)^{m+1}(y^2+(Lh)^2)}~dy,
\end{align*}
with $c_1$ and $c_2$ as above.
Using the above control on the integrals, we obtain 
\begin{equation*}
\left|\int_{-a}^b \frac{\sigma(y)y}{(y-z)^{m+1}(y^2+(Lh)^2)}~dy\right|\lesssim \frac{h^2L^3}{hL}\frac{1}{hL}+L\lesssim L
\end{equation*}
and hence 
\begin{equation*}
\left|\int_{-a}^b \frac{m!(\psi(y)-P_m^z(y))}{(y-z)^{m+1}}~d\muv(y)\right|\lesssim L.
\end{equation*}
Finally, we look at $[\alpha,\beta]$. We cannot bound $|y-z|$ uniformly from below (this was the problem with using the original formula directly!) but using Taylor's theorem as in the proof of Lemma \ref{Rescaled Transport Estimates} we can write for either $\kappa_L$ or $\zeta_L$ 
\begin{align*}
\left|\int_{\alpha}^\beta \frac{m!(\psi(y)-P_m^z(y))}{(y-z)^{m+1}}~d\muv(y)\right|
&\lesssim \int_\alpha^\beta \frac{\|\tilde{\psi}^m\|_{L^\infty(\alpha,z)}|y-z|^m}{|y-z|^{m+1}}~d\muv(y)
 \lesssim \|\tilde{\psi}^m\|_{L^\infty(\alpha,z)}.
\end{align*}
If $[\alpha,\beta]=[-a,b]$, then we can simply split the interval into two pieces; one near zero that we control as $[-a,b]$ above, and one near $b$ that we can control as $[\alpha,\beta]$ above. Of course, we could also do the same if $[\alpha,\beta]$ is not $[-a,b]$; take $[\alpha,\beta]$ and truncate it into $\left[\alpha, \frac{\beta}{2}\right]$ which is dealt with as $[e,f]$, and $\left[\frac{\beta}{2},\beta\right]$, which is dealt with as $[\alpha,\beta]$. This allows us to more easily rewrite the above estimate as 
\begin{equation*}
\left|\frac{d^m}{dz^m}\Xi_V[\tilde{\psi}](z)\right|\lesssim L+\|\tilde{\psi}^m\|_{L^\infty\left(U \setminus \left[-\frac{1}{4}a, \frac{1}{4}b\right]\right)} \lesssim L
\end{equation*}
by construction of $\tilde{\psi}$. Thus
\begin{align*}
|\psi^{(n)}(x)|&= |\tilde{\psi}^{(n)}(x)|+\sum_{k=0}^n\sqrt{x-\beta}\left|\left(\Xi_V[\tilde{\psi}]\right)^{n+1}(z_k)+\frac{\xi^{(n+1)}(z_k)}{2}\right| \lesssim L,
\end{align*}
which completes the argument.
\end{proof}


\subsection{Energy Transport Errors}
Using the above estimates on the transport $\psi$, we seek to obtain the requisite control on $\tau_t$. By directly exploiting the symmetry of the definition, we can improve upon the estimate found in \cite{BLS18}. 

\begin{proof}[Proof of Lemma \ref{Energy Difference Estimate}]
The argument begins with a Taylor expansion. We write 
\begin{align*}
\tau_t(x)&=\int -\log |\phi_t(x)-\phi_t(y)|~d\muv(y)+V_t(\phi_t(x))-\int -\log |x-y|~d\muv(y)-V(x)+\tilde{c}_t \\
&=\int -\log \frac{|\phi_t(x)-\phi_t(y)|}{|x-y|}~d\muv(y)+V(\phi_t(x))-V(x)+t\xi(\phi_t(x))+\tilde{c}_t
\end{align*}
Note that 
\begin{align*}
-\log \frac{|\phi_t(x)-\phi_t(y)|}{|x-y|}
&=-t\frac{\psi(x)-\psi(y)}{x-y}+\frac{t^2}{2\alpha_x(y)^2}\left(\frac{\psi(x)-\psi(y)}{x-y}\right)^2,
\end{align*}
where $\alpha_x(y)$ is some number between $1$ and $1+t\frac{\psi(x)-\psi(y)}{x-y}$, which is guaranteed to be in the interval $[0.5,1.5]$ for large $N$ by our assumptions on $t\psi$. Similarly, for the potential term we find
\begin{align*}
V_t(\phi_t(x))-V(x)&=V(x+t\psi(x))+t\xi(x+t\psi(x))-V(x) \\
&=t\psi(x)V'(x)+\frac{V''(\beta(x))}{2}t^2\psi(x)^2+t\xi(x)+t^2\psi(x)\xi'(\gamma(x)),
\end{align*}
where $\beta(x)$ and $\gamma(x)$ are numbers between $x$ and $x+t\psi(x)$. Finally, recall that we can write 
\begin{align*}
\tilde{c}_t=tc_\xi&=t\Xi_V[\psi]-t\xi(x) =-t\psi(x)V'(x)+t\int \frac{\psi(x)-\psi(y)}{x-y}~d\muv(y)-t\xi(x)
\end{align*}
for $x \in U$. Putting all of this together, we find that 
\begin{equation*}
\tau_t(x)=\int \frac{t^2}{2\alpha_x(y)^2}\left(\frac{\psi(x)-\psi(y)}{x-y}\right)^2~d\muv(y)+\frac{V''(\beta(x))}{2}t^2\psi(x)^2+\xi'(\gamma(x))t^2\psi(x)
\end{equation*}
for $x \in U$. 
Since $U$ has compact closure, this immediately yields (\ref{general tau bound}).

We next use (\ref{general tau bound}) to obtain the desired decay estimates on $\tau_t$ in the mesoscopic case, (\ref{rescaled tau bound}). We assume again without loss of generality that $\xi=\xi_{0,L}$ (the proof is the same for other values of $z \in \B$) and make the same assumptions that we did in the proof of Lemma \ref{Rescaled Transport Estimates}. For $x \in 2\supp(\xi)=[-2cL, 2dL]$ we immediately estimate using Lemma \ref{Rescaled Transport Estimates}
\begin{align*}
|\tau_t(x)|&\lesssim t^2 \int \left(\frac{\psi(x)-\psi(y)}{x-y}\right)^2~d\muv(y)+t^2\psi(x)^2+\frac{\|\theta\|_{C^1}}{L}t^2\psi(x) \\
&\lesssim t^2L\|\psi'\|_{L^\infty}^2+t^2\|\psi\|_{L^\infty}^2\frac{1}{L}+t^2\|\psi\|_{L^\infty}^2+\frac{\|\theta\|_{C^1}}{L}t^2\|\psi\|_{L^\infty} \lesssim \frac{t^2}{L}\max(\|\theta\|_{C^1}, \|\theta\|_{C^2}^2)
\end{align*}
Next, we turn to decay estimates. Without loss of generality, we assume $x>2dL$ and $x \in U$, since the argument for $x<-2cL$ is symmetric. The reason for choosing $x>2dL$ is that since $\frac{1}{LN}\rightarrow 0$ and $\gamma(x)$ is between $x$ and $x+t\psi(x)$, we can conclude that $\gamma(x)\notin \supp(\xi)$ and so the $\xi'(\gamma(x))$ term vanishes for large enough $N$. Furthermore, we can bound the $t^2\psi(x)^2$ using Lemma \ref{Rescaled Transport Estimates} by $\lesssim \frac{L^2t^2}{|x|^2}\|\theta\|_{C^1}^2$. The integral term is a bit more delicate, and we turn to that now. First,
\begin{align*}
t^2\int_{-\infty}^{-cL}\left(\frac{\psi(x)-\psi(y)}{x-y}\right)^2~d\muv(y)&\lesssim t^2 \int_{y \leq -cL}\frac{\psi(x)^2}{(x-y)^2}~dy+t^2 \int_{y \leq -cL}\frac{\psi(y)^2}{(x-y)^2}~dy \\
&\lesssim \frac{t^2L^2\|\theta\|_{C^1}^2}{|x|^2}\int_{y \leq -cL}\frac{dy}{y^2}\lesssim \frac{t^2L \|\theta\|_{C^1}^2}{|x|^2}.
\end{align*}

Next, a bit more roughly (but only for a piece of size $L$!)
\begin{align*}
t^2\int_{-cL}^{dL}\left(\frac{\psi(x)-\psi(y)}{x-y}\right)^2~d\muv(y)&\lesssim t^2 \int_{-cL}^{-dL}\frac{\psi(x)^2}{(x-y)^2}~dy+t^2 \int_{-cL}^{dL}\frac{\psi(y)^2}{(x-y)^2}~dy \\
&\lesssim \frac{t^2L^3\|\theta\|_{C^1}^2}{|x|^4}+\frac{t^2 L \|\theta\|_{C^1}^2}{|x|^2} \lesssim \frac{t^2 L \|\theta\|_{C^1}^2}{|x|^2} 
\end{align*}
since $|x|\gtrsim L$. Finally, we estimate $t^2 \int_{dL}^\infty \left(\frac{\psi(x)-\psi(y)}{x-y}\right)^2~d\muv(y)$. First, removing an interval of size $\frac{|x|}{2}$ around $x$ (this stays in the domain of integration since $\frac{|x|}{2}>dL$) we have for some $\beta(y)$ with $|\beta(y)-x|\leq \frac{|x|}{2}$ that 
\begin{align*}
t^2 \int_{\{|y-x|\leq \frac{|x|}{2}\}} \left(\frac{\psi(x)-\psi(y)}{x-y}\right)^2~d\muv(y)&= t^2 \int_{|y-x|\leq \frac{|x|}{2}}(\psi'(\beta(y))^2)~d\muv(y)  \\
&\lesssim t^2L^2\|\theta\|_{C^2}^2\left(\frac{1}{|x|^2}+\frac{1}{|x|^4}\right)|x|
\end{align*}
since $|\psi'(\beta(y))|\lesssim L\|\theta\|_{C^1}\left(\frac{1}{|\beta(y)|}+\frac{1}{|\beta(y)|^2}\right)\lesssim L\|\theta\|_{C^1}\left(\frac{1}{|x|}+\frac{1}{|x|^2}\right)$.  Furthermore, for large enough $N$ we have $|x|\lesssim \frac{1}{L}$ and we always have $|x|\gtrsim L$, so we can bound
\begin{equation*}
t^2 \int_{\{|y-x|\leq \frac{|x|}{2}\}} \left(\frac{\psi(x)-\psi(y)}{x-y}\right)^2~d\muv(y)\lesssim t^2L^2\|\theta\|_{C^2}^2\left(\frac{1}{|x|^2}+\frac{1}{|x|^4}\right)|x| \lesssim  \frac{t^2L\|\theta\|_{C^2}^2}{|x|^2}.
\end{equation*}
The rest we can simply estimate as with $y \leq -cL$:
\begin{align*}
&t^2 \int_{\{|y-x|\geq \frac{|x|}{2}\}\cap \{y \geq dL\}} \left(\frac{\psi(x)-\psi(y)}{x-y}\right)^2~d\muv(y) \lesssim t^2 \int_{\{|y-x|\geq \frac{|x|}{2}\}\cap \{y \geq dL\}}\frac{\psi(x)^2+\psi(y)^2}{(x-y)^2}~d\muv(y) \\
&\lesssim t^2 \frac{L^2\|\theta\|_{C^1}^2}{|x|^2L}+t^2 \int_{\{|y-x|\geq \frac{|x|}{2}\}\cap \{y \geq dL\}}\frac{L^2\|\theta\|_{C^1}^2}{(x-y)^2y^2}~d\muv(y) \\
&\lesssim t^2 \frac{L\|\theta\|_{C^1}^2}{|x|^2}.
\end{align*}
This establishes the requisite bound, Equation \ref{rescaled tau bound}.

The same argument works to achieve the desired control on $|\tau_t(x)|$ for the local laws test functions, since all of the arguments above mostly use information only about the transport. Again, we assume that $a=0$ without loss of generality. Let's start from the general estimate (\ref{general tau bound}). By our assumptions on $N$, $|\gamma(x)|\lesssim Lh$. When $\xi=\zeta_L$ we can estimate using (\ref{first derivative formula}) and (\ref{first derivative bound})
\begin{equation*}
|\xi'(z)|=2\pi hL^2 \left|\frac{d}{dz} \frac{1}{z^2+(Lh)^2}\right|\lesssim hL^2 \frac{|p_1^1(z)|}{(z^2+(Lh)^2)^2} \lesssim hL^2 \frac{Lh+|z|}{(z^2+(Lh)^2)^2}
\end{equation*}
and when $\xi=\kappa_L$ we can estimate with (\ref{second derivative formula}) and (\ref{second derivative bound})
\begin{equation*}
|\xi'(z)|=2\pi L\left|\frac{d}{dz} \frac{z}{z^2+(Lh)^2}\right| \lesssim L \frac{|p_1^2(z)|}{(z^2+(Lh)^2)^2} \lesssim L \frac{(Lh)^2+|z|^2}{(z^2+(Lh)^2)^2}.
\end{equation*}
Let's first look at $|x|\leq 2Lh$. In either case, since $|\gamma(x)|\lesssim Lh$ we have $|\xi'(\gamma(x))|\lesssim \frac{1}{Lh^2}$. Coupling the estimates of Lemma \ref{Local Laws Transport Estimates} with (\ref{general tau bound}) we find 
\begin{align*}
|\tau_t(x)|&\lesssim t^2 \int \left(\frac{\psi(x)-\psi(y)}{x-y}\right)^2~d\muv(y)+t^2\psi(x)^2+|\xi'(\gamma(x)|t^2\psi(x) \\
&\lesssim t^2\int_{-2Lh}^{2Lh}\|\psi'\|_{L^\infty}^2+t^2\int_{\Sigma_V\setminus [-2Lh,2Lh]}\frac{\|\psi\|_{L^\infty}^2}{(x-y)^2}~dy+t^2\|\psi\|_{L^\infty}^2+\frac{t^2}{Lh^2}\|\psi\|_{L^\infty} \\
&\lesssim \frac{t^2Lh}{L^2h^4}+\frac{t^2}{h^2Lh}+\frac{t^2}{Lh^3} \lesssim \frac{t^2}{Lh^3}.
\end{align*}
Now, decay estimates. Without loss of generality, we assume $x>2Lh$ and $x \in U$, since the argument for $x<-2Lh$ is symmetric. The reason for choosing $x>2Lh$ is that since $\gamma(x)$ is between $x$ and $x+t\psi(x)$ and $\|t\psi\|_{L^\infty}<Lh$, we can conclude that $\gamma(x)\geq Lh$ and $\gamma(x) \geq \frac{x}{2}$. In particular, using the above estimates on $|\xi'|$ for $\zeta_L$ and $\kappa_L$, we can conclude immediately here that $|\xi'(\gamma(x))| \lesssim \frac{L}{|x|^2}$ and so $\frac{|\xi'(\gamma(x))|}{2}t^2|\psi(x)|\lesssim \frac{t^2L}{|x|^2}\frac{L}{|x|} \lesssim \frac{t^2L^2}{|x|^3}$. Since $|x|\gtrsim Lh$, we can just bound this as $\frac{t^2L}{h|x|^2}$. Furthermore, we can bound $t^2\psi(x)^2$ using Lemma \ref{Local Laws Transport Estimates} by $\lesssim \frac{L^2t^2}{|x|^2}\lesssim \frac{Lt^2}{h|x|^2}$ as well, since $Lh \ll1$. 

We next turn to the integral term, which is estimated exactly as with the rescaled test function. First,
\begin{align*}
t^2\int_{-\infty}^{-Lh}\left(\frac{\psi(x)-\psi(y)}{x-y}\right)^2~d\muv(y)&\lesssim t^2 \int_{y \leq -Lh}\frac{\psi(x)^2}{(x-y)^2}~dy+t^2 \int_{y \leq -Lh}\frac{\psi(y)^2}{(x-y)^2}~dy \\
&\lesssim \frac{t^2L^2}{|x|^2}\int_{y \leq -cL}\frac{dy}{y^2}\lesssim \frac{t^2L }{h|x|^2}.
\end{align*}

Next, a bit more roughly (but again only for a piece of size $Lh$!)
\begin{align*}
t^2\int_{-Lh}^{Lh}\left(\frac{\psi(x)-\psi(y)}{x-y}\right)^2~d\muv(y)&\lesssim t^2 \int_{-Lh}^{-Lh}\frac{\psi(x)^2}{(x-y)^2}~dy+t^2 \int_{-Lh}^{Lh}\frac{\psi(y)^2}{(x-y)^2}~dy \\
&\lesssim \frac{t^2L^3h}{|x|^4}+\frac{t^2 Lh}{h^2|x|^2} \lesssim \frac{t^2 L }{h|x|^2} 
\end{align*}
since $|x|\gtrsim Lh$. Finally, we estimate $t^2 \int_{Lh}^\infty \left(\frac{\psi(x)-\psi(y)}{x-y}\right)^2~d\muv(y)$. First, removing an interval of size $\frac{|x|}{2}$ around $x$ (this stays in the domain of integration since $\frac{|x|}{2}>Lh$) we have for some $\beta(y)$ with $|\beta(y)-x|\leq \frac{|x|}{2}$ that 
\begin{align*}
t^2 \int_{\{|y-x|\leq \frac{|x|}{2}\}} \left(\frac{\psi(x)-\psi(y)}{x-y}\right)^2~d\muv(y)&= t^2 \int_{|y-x|\leq \frac{|x|}{2}}(\psi'(\beta(y))^2)~d\muv(y)  \\
&\lesssim t^2L^2\left(\frac{1}{|x|^2}+\frac{1}{|x|^4}\right)|x|
\end{align*}
since $|\psi'(\beta(y))|\lesssim L\left(\frac{1}{|\beta(y)|}+\frac{1}{|\beta(y)|^2}\right)\lesssim L\left(\frac{1}{|x|}+\frac{1}{|x|^2}\right)$.  Furthermore, for large enough $N$ we have $|x|\lesssim \frac{1}{Lh}$ and we always have $|x|\gtrsim Lh$, so we can bound
\begin{equation*}
t^2 \int_{\{|y-x|\leq \frac{|x|}{2}\}} \left(\frac{\psi(x)-\psi(y)}{x-y}\right)^2~d\muv(y)\lesssim t^2L^2\left(\frac{1}{|x|^2}+\frac{1}{|x|^4}\right)|x| \lesssim  \frac{t^2L}{h|x|^2}.
\end{equation*}
The rest we can simply estimate as with $y \leq -Lh$:
\begin{align*}
&t^2 \int_{\{|y-x|\geq \frac{|x|}{2}\}\cap \{y \geq Lh\}} \left(\frac{\psi(x)-\psi(y)}{x-y}\right)^2~d\muv(y)\lesssim t^2 \int_{\{|y-x|\geq \frac{|x|}{2}\}\cap \{y \geq Lh\}}\frac{\psi(x)^2+\psi(y)^2}{(x-y)^2}~d\muv(y) \\
&\lesssim t^2 \frac{L^2}{|x|^2Lh}+t^2 \int_{\{|y-x|\geq \frac{|x|}{2}\}\cap \{y \geq Lh\}}\frac{L^2}{(x-y)^2y^2}~d\muv(y) \\
&\lesssim t^2 \frac{L}{h|x|^2}.
\end{align*}
This establishes (\ref{local laws tau bound}).
\end{proof} 

In the case of the rescaled test function, we also need good control on $|\tau_t'(x)|$. For our purposes, a rough $L^\infty$ bound suffices.

\begin{proof}[Proof of Lemma \ref{Energy Difference Derivative Estimate}]
We recall from the proof of Lemma \ref{Energy Difference Estimate} that we can expand $\tau_t(x)$ as 
\begin{align*}
\tau_t(x)=f_1(x)+f_2(x)+f_3(x),
\end{align*}
where 
\begin{align*}
f_1(x)&=\int -\log \frac{|\phi_t(x)-\phi_t(y)|}{|x-y|}+t\frac{\psi(x)-\psi(y)}{x-y}~d\muv(y) \\
f_2(x)&=V(\phi_t(x))-V(x)-t\psi(x)V'(x) \\
f_3(x)&=t\xi(\phi_t(x))-t\xi(x).
\end{align*}
We estimate each term separately, differentiating and making use of a Taylor expansion. For $f_1(x)$, notice that $\frac{\phi_t(x)-\phi_t(y)}{x-y}=1+t\frac{\psi(x)-\psi(y)}{x-y}$, so that 
\begin{equation*}
f_1(x)=\int h(1+tz_y(x))-h'(1)tz_y(x)~d\muv(y),
\end{equation*}
with $h(z)=-\log z$ and $z_y(x)=\frac{\psi(x)-\psi(y)}{x-y}$. Differentiating, we have 
\begin{equation*}
f_1'(x)=\int (h'(1+tz_y(x))tz_y'(x)-h'(1)tz_y'(x))~d\muv(y),
\end{equation*}
so that with a Taylor expansion we observe 
\begin{equation*}
|f_1'(x)|\leq \int |tz_y'(x)\|h'(1+tz_y(x))-h'(1)|~d\muv(y) \lesssim \int t^2 |z_y'(x)\|z_y(x)|~d\muv(y)
\end{equation*}
since $|tz_y(x)|$ is small by assumption and so $h''$ is well controlled. Using our computations from the proof of Lemma \ref{Rescaled Transport Estimates}, we control
\begin{align*}
|f_1'(x)| \lesssim t^2 \int \left|\frac{\psi(y)-\psi(x)}{y-x}\right|\left|\frac{\psi(y)-(\psi(x)+\psi'(x)(y-x))}{(y-x)^2}\right|~d\muv(y).
\end{align*}
Using a mean value argument for small $|x-y|$ and simple $L^\infty$ bounds for large $|x-y|$, we control using Lemma \ref{Rescaled Transport Estimates}
\begin{align*}
|f_1'(x)|&\lesssim t^2 \int_{|x-y|\leq L} \|\psi'\|_{L^\infty}|\psi''|_{L^\infty}~d\muv(y)+t^2\int_{|x-y|>L}\frac{\|\psi\|_{L^\infty}^2}{|y-x|^3}+\frac{\|\psi\|_{L^\infty}\|\psi'\|_{L^\infty}}{|y-x|^2}~d\muv(y) \\
&\lesssim t^2L\frac{\|\theta\|_{C^{3}}\|\theta\|_{C^{4}}}{L^3}+t^2 \int_{|x-y|>L} \frac{\|\theta\|_{C^2}^2}{|y-x|^3}+\frac{\|\theta\|_{C^2}\|\theta\|_{C^3}}{L|y-x|^2}~dy \\
&\lesssim t^2 \frac{\|\theta\|_{C^{4}}^2}{L^2}.
\end{align*}
A similar argument handles $f_2'$ and $f_3'$. Differentiating $f_2$ yields 
\begin{align*}
f_2'(x)&=\frac{V'(x+t\psi(x))(1+t\psi'(x))-V'(x)-t\psi'(x)V'(x)-t\psi(x)V''(x)}{2} \\
&=\frac{(V'(x+t\psi(x))-V'(x)-V''(x)t\psi(x))+t\psi'(x)(V'(x+t\psi(x))-V'(x))}{2}.
\end{align*}
Using a mean value argument and Lemma \ref{Rescaled Transport Estimates} yields 
\begin{align*}
|f_2'(x)|&\lesssim t^2\psi(x)^2+t^2|\psi'(x)\|\psi(x)|\lesssim t^2\|\theta\|_{C^2}^2+t^2\frac{\|\theta\|_{C^3}\|\theta\|_{C^2}}{L},
\end{align*}
since the compact closure of $U$ guarantees that $V$ is controlled. This is of course also controlled by $t^2 \frac{\|\theta\|_{C^{4}}^2}{L^2}$. Finally, we can differentiate $f_3$ to find
\begin{align*}
f_3'(x)&=\frac{t\xi'(x+t\psi(x))(1+t\psi'(x))-t\xi'(x)}{2} \\
&=\frac{t(\xi'(x+t\psi(x))-\xi'(x))+t^2\xi'(x+t\psi(x))\psi'(x)}{2}.
\end{align*}
Again using a mean value argument and Lemma \ref{Rescaled Transport Estimates}, we find 
\begin{align*}
|f_3'(x)|&\lesssim t^2\|\xi''\|_{L^\infty}|\psi(x)|+t^2\|\xi'\|_{L^\infty}|\psi'(x)| \lesssim t^2 \frac{\|\theta\|_{C^2}^2}{L^2}+t^2 \frac{\|\theta\|_{C^1}\|\theta\|_{C^3}}{L^2},
\end{align*}
which is of course again controlled by $t^2 \frac{\|\theta\|_{C^{4}}^2}{L^2}$, completing the argument.
\end{proof}

\subsection{Technical Estimates for $\textsf{Error}_2$}
Here we prove control of the various $H^{1/2}$ norms required to estimate $\textsf{Error}_2$ in $\S \ref{Section CLT Estimate}$ (see (\ref{first anisotropy function})). Recall that $H^{1/2}$ is defined in Equation (\ref{H1/2 definition}) as the $L^2$ norm of the gradient of the harmonic extension, which is all the minimum $L^2$ gradient of all extensions. Thus, in particular, we could define an extension $\tilde{\chi}(x,y)=\xi(x)\chi(y)$ where $\chi(y)$ is a function that is identically $1$ for $|y|\leq 1$ and vanishes outside of $[-(\mathfrak{h}+1),\mathfrak{h}+1]$ with derivative bounded by $\frac{1}{\mathfrak{h}}$. Using the minimality of the harmonic extension and estimating the gradient of this extension in $L^2$, we have the bound
\begin{equation}\label{extension estimate}
\|\xi\|_{H^{1/2}}\lesssim\frac{1}{\sqrt{\mathfrak{h}}}\|\xi\|_{L^2}+\sqrt{\mathfrak{h}}\|\xi'\|_{L^2}.
\end{equation}
Optimizing this estimate over $h$ is often a convenient way of estimating a test function in $H^{1/2}$. 

First, we prove (\ref{first anisotropy function}).
\begin{proof}[Proof of (\ref{first anisotropy function})]
Here we are interesting in controlling 
\begin{equation*}
\xi(\cdot)=e^{im \cdot}\chi(\cdot),
\end{equation*}
where $\chi(x)$ is supported on $I_i \pm 2^{i}L$ (where $I_i=\square_{2^iL}$)and identically one on $I_i \pm 2^{i-1}L$ with $C^k$ norm controlled by $\frac{1}{(2^iL)^k}$ (for instance, by rescaling a compactly supported bump)
We use (\ref{extension estimate}) on real and imaginary parts, allowing us to write 
\begin{equation*}
\left\|e^{im \cdot}\chi(\cdot)\right\|_{H^{1/2}}\leq \frac{1}{\sqrt{\mathfrak{h}}}\|e^{im \cdot}\chi(\cdot)\|_{L^2}+\sqrt{\mathfrak{h}}\|(e^{im \cdot}\chi(\cdot))'\|_{L^2}.
\end{equation*}
Computing, we find 
\begin{equation*}
\|e^{im \cdot}\chi(\cdot)\|^2_{L^2} \lesssim \int_{-2^iL}^{2^iL} |\chi(x)|^2~dx \lesssim 2^iL
\end{equation*}
and 
\begin{equation*}
\|(e^{im \cdot}\chi(\cdot))'\|^2_{L^2} \lesssim \int_{-2^iL}^{2^iL} m^2~dx+\int_{-2^iL}^{2^iL} |\chi'(x)|^2~dx \lesssim (2^iL)m^2+\frac{1}{2^iL}=\frac{1+m^2(2^iL)^2}{2^iL}.
\end{equation*}
Choosing $\mathfrak{h}=\frac{\sqrt{2^iL}}{(1+m^2L^2)^{1/4}}$ and squaring, we obtain (\ref{first anisotropy function}).
\end{proof}

We turn now to proving the control (\ref{second anisotropy function}). 
 
\begin{proof}[Proof of (\ref{second anisotropy function})]
Here we are interested in controlling
\begin{equation*}
\xi(\cdot)=e^{i\lambda \cdot}\chi(\cdot)g(\cdot)
\end{equation*}
where $g$ is given by 
\begin{equation*}
g(x)=\int \frac{1-\chi(y)}{x-y}~d\fluct_N(y),
\end{equation*}
and $\chi$ is as in the previous proof. The tricky part of this computation lies in controlling $g$ and $g'$, which will require using the local energy estimate of Lemma \ref{CLT Energy Estimate}. Let $\{\zeta_i\}$ denote the same partition of unity that we have been using from Propositions \ref{easy Error2 bound - local laws} and \ref{easy Error2 bound - rescaled test function}; we will use this to respect the decay over scales in our estimate. So,
\begin{equation*}
|g(x)|\leq \sum_{k=1}^{k_*} \left|\int_{I_k} \frac{1-\chi(y)}{x-y}\zeta_k(y)~d\fluct_N(y)\right|+\left|\int_{I_{k_*+1}} \frac{1-\chi(y)}{x-y}\zeta_{k_*+1}(y)~d\fluct_N(y)\right|.
\end{equation*}
Let us start with the sum over $k \leq k_*$. Notice that $1-\chi(y)$ vanishes on $I_i \pm 2^{i-1}L$, so we can both remove the interval $I_i$ and also observe that $|x-y|\gtrsim 2^{i-1}L$, which will be of use when estimating the contributions to the fluctuation integral from $I_{i-1}$ and $I_{i+1}$. Furthermore, $1-\chi(y) \equiv 1$ for all $I_k$ with $|k-i|\geq 2$. Hence, using the energy estimate Lemma \ref{CLT Energy Estimate} and our energy control on $\mathcal{G}_N$, we find
\begin{align}
\label{first k}&\sum_{k=1}^{k_*} \left|\int_{I_k} \frac{1-\chi(y)}{x-y}\zeta_k(y)~d\fluct_N(y)\right| \\
\nonumber&\lesssim \sum_{k \ne i} \left\|\left(\frac{1-\chi(y)}{x-y}\zeta_k(y)\right)'\right\|_{L^\infty(I_k)}(|I_k|+N^{-3/4}|I_k|^{1/4}\sqrt{2^kLN})+ \\
\nonumber&\sum_{k \ne i} \left(\sqrt{\mathfrak{h}_k}\left\|\left(\frac{1-\chi(y)}{x-y}\zeta_k(y)\right)'\right\|_{L^2(I_k)}+\frac{1}{\sqrt{\mathfrak{h}_k}}\left\|\frac{1-\chi(y)}{x-y}\zeta_k(y)\right\|_{L^2(I_k)}\right)\sqrt{2^kLN}.
\end{align}
For $k=i \pm1$, we bound crudely 
\begin{equation*}
\left\|\left(\frac{1-\chi(y)}{x-y}\zeta_k(y)\right)'\right\|_{L^\infty(I_k)}\lesssim \frac{1}{(2^iL)^2}
\end{equation*}
and 
\begin{equation*}
\left\|\frac{1-\chi(y)}{x-y}\zeta_k(y)\right\|_{L^2(I_k)}^2 \lesssim \frac{2^iL}{(2^iL)^2}=\frac{1}{2^iL}, \hspace{3mm} \left\|\left(\frac{1-\chi(y)}{x-y}\zeta_k(y)\right)'\right\|_{L^2(I_k)}^2 \lesssim \frac{2^iL}{(2^iL)^4}=\frac{1}{(2^iL)^3}.
\end{equation*}
For $|k-i| \geq 2$, we observe that $1-\chi(y)\equiv 1$ and correspondingly control 
\begin{equation*}
\left\|\frac{1-\chi(y)}{x-y}\zeta_k(y)\right\|_{L^2(I_k)}^2\lesssim \int_{I_k}\frac{1}{(x-y)^2}~dy \lesssim \frac{1}{\dist(x,I_k)}\lesssim\frac{1}{\max(2^{i+1}, 2^{k-1})L}
\end{equation*}
since $x \in I_i$, and 
\begin{equation*}
\left\|\left(\frac{1-\chi(y)}{x-y}\zeta_k(y)\right)'\right\|_{L^2(I_k)}^2\lesssim \int_{I_k}\frac{1}{(x-y)^4}~dy \lesssim \frac{1}{\dist(x,I_k)^3}\lesssim\frac{1}{\max(2^{i+1}, 2^{k-1})^3L^3}.
\end{equation*}
Furthermore, 
\begin{equation*}
\left\|\left(\frac{1-\chi(y)}{x-y}\zeta_k(y)\right)'\right\|_{L^\infty(I_k)}\lesssim \frac{1}{\dist(x,I_k)^2}\lesssim \frac{1}{(\max(2^{i+1}, 2^{k-1})L)^2}.
\end{equation*}
We control then the sum \ref{first k} over $k$ in pieces. We first have
\begin{align*}
& \sum_{k=0}^{i+1} \frac{1}{(2^iL)^2}(|2^kL|+N^{-3/4}|2^kL|^{1/4}\sqrt{2^kLN})+\sum_{k=0}^{i+1} \left(\sqrt{\mathfrak{h}_k}\frac{1}{(2^iL)^{3/2}}+\frac{1}{\sqrt{\mathfrak{h}_k}}\frac{1}{\sqrt{2^iL}}\right)\sqrt{2^kLN} \\
&\lesssim \frac{1}{2^iL}+\frac{1}{(2^iL)^{5/4}N^{1/4}}+\frac{\sqrt{N}}{\sqrt{2^iL}}
\end{align*}
by choosing $\mathfrak{h}_k=2^{i}L$, since the $2^iL$ terms are dominant in the sum. The middle term can be neglected, since $\frac{1}{(2^iL)^{5/4}N^{1/4}}=\frac{1}{2^iL}\frac{1}{(2^iLN)^{1/4}}\lesssim \frac{1}{2^iL}$ for large enough $N$, as $\frac{1}{LN}\rightarrow 0$. Next, we have 
\begin{align*}
& \sum_{k=i+1}^{k_*} \frac{1}{(2^kL)^2}(|2^kL|+N^{-3/4}|2^kL|^{1/4}\sqrt{2^kLN})+\sum_{k=i+1}^{k_*} \left(\sqrt{\frac{h_k}{N}}\frac{1}{(2^kL)^{3/2}}+\frac{1}{\sqrt{\frac{h_k}{N}}}\frac{1}{\sqrt{2^kL}}\right)\sqrt{2^kLN} \\
&\lesssim \sum_{k=i+1}^{\log (1/L)}\left( \frac{1}{2^kL}+\frac{1}{(2^kL)^{5/4}N^{1/4}}+\frac{\sqrt{N}}{\sqrt{2^kL}}\right) \lesssim \frac{1}{2^iL}+\frac{\sqrt{N}}{\sqrt{2^iL}}
\end{align*}
by choosing $h_k=2^kLN$, since $k_* \sim \log \left(\frac{1}{L}\right)$. Finally, $\frac{1}{LN}\rightarrow 0$ so for large enough $N$ we have for any $i$ that $\frac{1}{\sqrt{2^iLN}}<1$, and in particular $\frac{1}{2^iL}=\frac{\sqrt{N}}{\sqrt{2^iL}\sqrt{2^iLN}}<\frac{\sqrt{N}}{\sqrt{2^iL}}$. 

For $k_*+1$ we cannot use the local laws of Theorem \ref{Local Law}, but we have already reached the macroscale so we can accomplish our goals with global estimates. Using Lemma \ref{CLT Energy Estimate} and the global energy bound, we have
\begin{align*}
&\left|\int_{I_{k_*+1}} \frac{1-\chi(y)}{x-y}\zeta_{k_*+1}(y)~d\fluct_N(y)\right| \\
&\lesssim  \left\|\left(\frac{1-\chi(y)}{x-y}\zeta_{k_*+1}(y)\right)'\right\|_{L^\infty(I_{k_*+1})}(|I_{k_*+1}|+N^{-3/4}|I_{k_*+1}|^{1/4}\sqrt{N})+ \\
& \left(\sqrt{\mathfrak{h}_{k_*+1}}\left\|\left(\frac{1-\chi(y)}{x-y}\zeta_{k_*+1}(y)\right)'\right\|_{L^2(I_{k_*+1})}+\frac{1}{\sqrt{\mathfrak{h}_{k_*+1}}}\left\|\frac{1-\chi(y)}{x-y}\zeta_{k_*+1}(y)\right\|_{L^2(I_{k_*+1})}\right)\sqrt{N} \\
&\lesssim 1
\end{align*}
choosing $\mathfrak{h}_{k_*+1}=1$, since all norms are controlled at order $1$. This is smaller than the other terms in \ref{first k}, so we conclude
\begin{equation}\label{control of g}
|g(x)|\lesssim \frac{\sqrt{N}}{\sqrt{2^iL}}.
\end{equation}

Similarly, we observe that $g'(x)=\int -\frac{1-\chi(y)}{(x-y)^2}~d\fluct_N(y)$. Exactly as above, we can write
\begin{equation*}
|g'(x)|\leq \sum_{k=1}^{k_*} \left|\int_{I_k} -\frac{1-\chi(y)}{(x-y)^2}\zeta_k(y)~d\fluct_N(y)\right|+\left|\int_{I_{k_*+1}} -\frac{1-\chi(y)}{(x-y)^2}\zeta_{k_*+1}(y)~d\fluct_N(y)\right|.
\end{equation*}
Estimating exactly as above, we can write on $\mathcal{G}_N$ using Lemma \ref{CLT Energy Estimate} that the first $k_*$ terms are controlled by 
\begin{align}
\label{first k round two}&\sum_{k=1}^{k_*} \left\|\left(\frac{1-\chi(y)}{(x-y)^2}\zeta_k(y)\right)'\right\|_{L^\infty(I_k)}(|I_k|+N^{-3/4}|I_k|^{1/4}\sqrt{2^kLN})+ \\
\nonumber&\sum_{k=1}^{k_*} \left(\sqrt{\mathfrak{h}_k}\left\|\left(\frac{1-\chi(y)}{(x-y)^2}\zeta_k(y)\right)'\right\|_{L^2(I_k)}+\frac{1}{\sqrt{\mathfrak{h}_k}}\left\|\frac{1-\chi(y)}{(x-y)^2}\zeta_k(y)\right\|_{L^2(I_k)}\right)\sqrt{2^kLN}.
\end{align}
For $k=i \pm1$, we bound crudely 
\begin{equation*}
\left\|\left(\frac{1-\chi(y)}{(x-y)^2}\zeta_k(y)\right)'\right\|_{L^\infty(I_k)}\lesssim \frac{1}{(2^iL)^3}
\end{equation*} 
and 
\begin{equation*}
\left\|\frac{1-\chi(y)}{(x-y)^2}\zeta_k(y)\right\|_{L^2(I_k)}^2 \lesssim \frac{2^iL}{(2^iL)^4}=\frac{1}{(2^iL)^3},\hspace{3mm} \left\|\left(\frac{1-\chi(y)}{(x-y)^2}\zeta_k(y)\right)'\right\|_{L^2(I_k)}^2 \lesssim \frac{2^iL}{(2^iL)^6}=\frac{1}{(2^iL)^5}.
\end{equation*} 
For $|k-i| \geq 2$, we observe that $1-\chi(y)\equiv 1$ and correspondingly control 
\begin{equation*}
\left\|\frac{1-\chi(y)}{(x-y)^2}\zeta_k(y)\right\|_{L^2(I_k)}^2\lesssim \int_{I_k}\frac{1}{(x-y)^4}~dy \lesssim \frac{1}{\dist(x,I_k)^3}\lesssim\frac{1}{(\max(2^{i+1}, 2^{k-1})L)^3}
\end{equation*}
since $x \in I_i$, and 
\begin{equation*}
\left\|\left(\frac{1-\chi(y)}{(x-y)^2}\zeta_k(y)\right)'\right\|_{L^2(I_k)}^2\lesssim \int_{I_k}\frac{1}{(x-y)^6}~dy \lesssim \frac{1}{\dist(x,I_k)^5}\lesssim\frac{1}{\max(2^{i+1}, 2^{k-1})^5L^5}.
\end{equation*}
Furthermore, $\left\|\left(\frac{1-\chi(y)}{(x-y)^2}\zeta_k(y)\right)'\right\|_{L^\infty(I_k)}\lesssim \frac{1}{\dist(x,I_k)^3}\lesssim \frac{1}{(\max(2^{i+1}, 2^{k-1})L)^3}$. We control then the sum (\ref{first k round two}) over $k$ in pieces. We first have
\begin{align*}
& \sum_{k=0}^{i+1} \frac{1}{(2^iL)^3}(|2^kL|+N^{-3/4}|2^kL|^{1/4}\sqrt{2^kLN})+\sum_{k=0}^{i+1} \left(\sqrt{\mathfrak{h}_k}\frac{1}{(2^iL)^{5/2}}+\frac{1}{\sqrt{\mathfrak{h}_k}}\frac{1}{(2^iL)^{3/2}}\right)\sqrt{2^kLN} \\
&\lesssim \frac{1}{(2^iL)^2}+\frac{1}{(2^iL)^{9/4}N^{1/4}}+\frac{\sqrt{N}}{(2^iL)^{3/2}}
\end{align*}
by choosing $\mathfrak{h}_k=2^{i}L$, since the $2^iL$ terms are dominant in the sum. The middle term can be neglected, since $\frac{1}{(2^iL)^{9/4}N^{1/4}}=\frac{1}{(2^iL)^2}\frac{1}{(2^iLN)^{1/4}}\lesssim \frac{1}{(2^iL)^2}$ for large enough $N$, as $\frac{1}{LN}\rightarrow 0$. Next, we have 
\begin{align*}
& \sum_{k=i+1}^{k_*} \frac{1}{(2^kL)^3}(|2^kL|+N^{-3/4}|2^kL|^{1/4}\sqrt{2^kLN})+\sum_{k=i+1}^{k_*} \left(\sqrt{\mathfrak{h}_k}\frac{1}{(2^kL)^{5/2}}+\frac{1}{\sqrt{\mathfrak{h}_k}}\frac{1}{(2^kL)^{3/2}}\right)\sqrt{2^kLN} \\
&\lesssim \sum_{k=i+1}^{\log (1/L)}\left( \frac{1}{(2^kL)^2}+\frac{1}{(2^kL)^{9/4}N^{1/4}}+\frac{\sqrt{N}}{(2^kL)^{3/2}}\right) \lesssim \frac{1}{(2^iL)^2}+\frac{\sqrt{N}}{(2^iL)^{3/2}}.
\end{align*}
by choosing $\mathfrak{h}_k=2^kL$, since $k_*\sim \log \left(\frac{1}{L}\right)$. Finally, $\frac{1}{LN}\rightarrow 0$ so for large enough $N$ we have for any $i$ that $\frac{1}{\sqrt{2^iLN}}<1$, and in particular $\frac{1}{2^iL}=\frac{\sqrt{N}}{\sqrt{2^iL}\sqrt{2^iLN}}<\frac{\sqrt{N}}{\sqrt{2^iL}}$. 

For $k_*+1$ we again cannot use the local laws of Theorem \ref{Local Law}, but we have already reached the macroscale so we can accomplish our goals with global estimates. Using Lemma \ref{CLT Energy Estimate} and the global energy bound, we have
\begin{align*}
&\left|\int_{I_{k_*+1}} -\frac{1-\chi(y)}{(x-y)^2}\zeta_{k_*+1}(y)~d\fluct_N(y)\right| \\
&\lesssim  \left\|\left(\frac{1-\chi(y)}{(x-y)^2}\zeta_{k_*+1}(y)\right)'\right\|_{L^\infty(I_{k_*+1})}(|I_{k_*+1}|+N^{-3/4}|I_{k_*+1}|^{1/4}\sqrt{N})+ \\
& \left(\sqrt{\mathfrak{h}_{k_*+1}}\left\|\left(\frac{1-\chi(y)}{(x-y)^2}\zeta_{k_*+1}(y)\right)'\right\|_{L^2(I_{k_*+1})}+\frac{1}{\sqrt{\mathfrak{h}_{k_*+1}}}\left\|\frac{1-\chi(y)}{(x-y)^2}\zeta_{k_*+1}(y)\right\|_{L^2(I_{k_*+1})}\right)\sqrt{N} \\
&\lesssim 1
\end{align*}
choosing $\mathfrak{h}_{k_*+1}=1$, since all norms are controlled at order $1$. This is smaller than the other terms in \ref{first k}, so we conclude
\begin{equation}\label{control of g'}
|g'(x)|\lesssim \frac{\sqrt{N}}{(2^iL)^{3/2}}.
\end{equation}
With (\ref{control of g}) and (\ref{control of g'}) in tow, we can control 
\begin{equation*}
\|\xi\|_{L^2}^2 \lesssim \int_{-2^iL}^{2^iL} |g(x)|^2\lesssim 2^iL \frac{N}{2^iL}=N
\end{equation*}
and 
\begin{equation*}
\|\xi'\|_{L^2}^2 \lesssim \lambda ^2 \int_{-2^iL}^{2^iL} |g(x)|^2+\frac{1}{(2^iL)^2}\int_{-2^iL}^{2^iL} |g(x)|^2+\int_{-2^iL}^{2^iL}|g'(x)|^2 \lesssim \left(\lambda^2+\frac{1}{(2^iL)^2}\right)N.
\end{equation*}
Using the extension estimate (\ref{extension estimate})
\begin{equation*}
\|\xi\|_{H^{1/2}} \leq \sqrt{\mathfrak{h}}\|\xi\|_{L^2}+\frac{1}{\sqrt{\mathfrak{h}}}\|\xi'\|_{L^2}
\end{equation*}
and choosing $\mathfrak{h}=\left(\lambda^2+\frac{1}{(2^iL)^2}\right)^{1/4},$ we obtain (\ref{second anisotropy function}).
\end{proof}

\setcounter{section}{4}
\setcounter{subsection}{0}
\setcounter{prop}{0}
\setcounter{equation}{0}
\section*{Appendix D: Proof of Corollary \ref{pseudo-Gaussian moments}}\label{Appendix Moment Bounds}
In this section we prove Corollary \ref{pseudo-Gaussian moments}. We first prove that if there are constants $a,b,c>0$ and an event $\mathcal{G}$ such that for all $s$
\begin{equation}\label{pseudo-Gaussian}
\Esp \left[\exp(sX)\indic_{\mathcal{G}}\right] \leq \exp \left(a|s|+bs^2+c|s|^3\right)
\end{equation}
then for all $k \in \mathbb{N}$,
\begin{equation}\label{moment bound}
\Esp|X|^k\indic_{\mathcal{G}} \lesssim a^k+ b^{\frac{k}{2}}+c^{\frac{k}{3}}
\end{equation}
for a constant only dependent on $k$. This type of control and proof is standard for sub-Gaussian random variables, but we cannot find it for our exact case so we include it here for completeness. 

First, 
\begin{align*}
\Esp[|X|^k\indic_{\mathcal{G}}]&=\int_0^\infty \P(|X|^k\indic_{\mathcal{G}} >t)~dt =\int_0^\infty \P(\{|X|>t^{1/k} \}\cap \mathcal{G})~dt.
\end{align*}
Now let $t$ be arbitrary, and we attempt to recover the typical subGaussian tails. Using a Chernoff bound, we write for any $s>0$ that
\begin{align*}
\PNbeta(\{X>t \} \cap \mathcal{G})&=\PNbeta\left(\left\{\exp \left(sX\right)>e^{st} \right\} \cap \mathcal{G}\right) \\
&\leq \exp \left(as+bs^2+cs^3-st\right).
\end{align*}
We have three separate cases.

If $t \leq a$, then we just bound $\PNbeta(\{X>t \} \cap \mathcal{G}) \leq 1$.

Next, if $s \leq \frac{b}{c}$, then we can bound 
\begin{equation*}
\PNbeta(\{X>t \} \cap \mathcal{G}) \leq \exp (2bs^2-s(t-a))
\end{equation*}
and optimizing with $s=\frac{t-a}{4b}$ we find 
\begin{equation*}
\PNbeta(\{X>t \} \cap \mathcal{G}) \leq \exp \left(-\frac{(t-a)^2}{8b}\right).
\end{equation*}
We have $s \leq \frac{b}{c}$ though only for $t \leq a+ \frac{4b^2}{c}$. Thus, we use this bound for $a \leq t \leq a+\frac{4b^2}{c}$. 

For $t>a+\frac{4b^2}{c}$, we choose $s=\sqrt{\frac{t-a}{4c}}$ and observe that as soon as $t>a+\frac{4b^2}{c}$ we have $s>\frac{b}{c}$ and can estimate
\begin{equation*}
as+bs^2+cs^3-st \leq 2cs^3-s(t-a)=\frac{2c(t-a)^{\frac{3}{2}}}{(4c)^{\frac{3}{2}}}-\frac{(t-a)^{\frac{3}{2}}}{\sqrt{4c}}=-\frac{(t-a)^{\frac{3}{2}}}{4\sqrt{c}}.
\end{equation*}
Thus, with $s=\sqrt{\frac{t-a}{4c}}$
\begin{equation*}
\PNbeta(\{X>t \} \cap \mathcal{G}) \leq \exp (as+bs^2+cs^3-st) \leq \exp \left(-\frac{(t-a)^{\frac{3}{2}}}{4\sqrt{c}}\right).
\end{equation*}
The analogous computation works as well for $X<-t$. So, inserting these estimates into the distributional integral above, we find
\begin{align*}
&\int_0^\infty \PNbeta(\{|X|>t^{1/k} \}\cap \mathcal{G})~dt \\
&\leq 2\int_0^{a^k}~dt+ 2\int_{a^k}^{\left(a+\frac{4b^2}{c}\right)^k}\exp \left(-\frac{\left(t^{\frac{1}{k}}-a\right)^2}{8b}\right)~dt+\int_{\left(a+\frac{4b^2}{c}\right)^k}^\infty  \exp \left(-\frac{\left(t^{\frac{1}{k}}-a\right)^{\frac{3}{2}}}{4\sqrt{c}}\right)~dt \\
&\lesssim a^k+ \int_0^{\infty}\frac{4bk(a+\sqrt{8bu})^{k-1}}{\sqrt{8bu}}e^{-u}~du+\int_0^\infty \frac{8k\sqrt{c}\left((a+(4w\sqrt{c})^{\frac{2}{3}}\right)^{k-1}}{3(4w\sqrt{c})^{\frac{1}{3}}}~dw \\
&\lesssim a^k+a^{k-1}\sqrt{b}\int_0^\infty \frac{e^{-u}}{\sqrt{u}}~du+b^{\frac{k}{2}}\int_0^\infty u^{\frac{k-2}{2}}e^{-u}~du+a^{k-1}c^{\frac{1}{3}}\int_0^\infty \frac{e^{-w}}{w^{\frac{1}{3}}}~dw+c^{\frac{k}{3}}\int_0^\infty w^{\frac{2}{3}k-1}e^{-w}~dw.
\end{align*}
via substituting $u=\frac{\left(t^{\frac{1}{k}}-a\right)^2}{8b}$ and $w=\frac{\left(t^{\frac{1}{k}}-a\right)^{\frac{3}{2}}}{4\sqrt{c}}$. Since all of the integrals in the final line are finite, we have with Young's inequality that 
\begin{align*}
\int_0^\infty \PNbeta(\{|X|>t^{1/k} \}\cap \mathcal{G})~dt \lesssim a^k+a^{k-1}\sqrt{b}+b^{\frac{k}{2}}+a^{k-1}c^{\frac{1}{3}}+c^{\frac{k}{3}} \lesssim a^k+b^{\frac{k}{2}}+c^{\frac{k}{3}}.
\end{align*}
This immediately implies (\ref{moment bound}).

\begin{proof}[Proof of Corollary \ref{pseudo-Gaussian moments}]

Notice that $X=\Fluct_N(\xi)$ satisfies (\ref{pseudo-Gaussian}) with $\mathcal{G}=\mathcal{G}_N$ and
\begin{align*}
a&=C\left(\|\theta\|_{C^3}\left(1+\left|1-\frac{1}{\beta}\right|\right)\right), \\
b&=\frac{\|\xi\|_{H^{1/2}}^2+\max(\|\theta\|_{C^1}, \|\theta\|_{C^3}^2)}{\beta}, \\
c&=\frac{\max(\|\theta\|_{C^2}^2, \|\theta\|_{C^3}^3)}{LN\beta^2}.
\end{align*}
Rearranging (\ref{moment bound}) yields
\begin{align*}
\Esp \left|\Fluct_N(\xi)\right|^k\indic_{\mathcal{G}_N} &\lesssim \left(\|\theta\|_{C^3}\left(1+\left|1-\frac{1}{\beta}\right|\right)\right)^k+\left(\frac{\|\xi\|_{H^{1/2}}^2+\max(\|\theta\|_{C^1}, \|\theta\|_{C^3}^2)}{\beta}\right)^{\frac{k}{2}} \\
&+\left(\frac{\max(\|\theta\|_{C^2}^2, \|\theta\|_{C^3}^3)}{LN\beta^2}\right)^{\frac{k}{3}},
\end{align*}
as desired.
\end{proof} 

\bibliographystyle{amsplain}
\bibliography{Final_Draft}{}

\end{document}